\documentclass[a4paper,12pt]{article}
\usepackage{amsmath,amsthm,amsfonts,amssymb,bm,mathrsfs}
\usepackage[protrusion=true,expansion=true]{microtype}
\usepackage{graphicx}
\usepackage{enumerate}
\usepackage{wrapfig}
\usepackage{bibentry,natbib}
\usepackage[T1]{fontenc}
\usepackage[section]{placeins}
\usepackage{footmisc}
\usepackage{sgame}
\usepackage{color}
\usepackage{multicol}
\usepackage{tikz}
\usepackage{subfloat}
\usetikzlibrary{decorations.pathreplacing}
\usepackage{subcaption}
\usepackage{fancyref}
\usepackage[colorlinks=true, linkcolor=blue, urlcolor=blue, citecolor=blue]%
{hyperref}
\usepackage{amsmath}
\usepackage{amsfonts}
\usepackage{amssymb}
\usepackage{lmodern}
\usepackage{enumitem}
\usepackage{setspace}

\usepackage{appendix}
\usepackage{multibib}
\newcites{Appendix}{References for Online Appendix}%

\providecommand{\U}[1]{\protect\rule{.1in}{.1in}}
\allowdisplaybreaks
\usetikzlibrary{matrix,arrows,decorations.pathmorphing}

\newtheorem{theorem}{Theorem}

\newtheorem{definition}{Definition}
\newtheorem{example}{Example}

\newtheorem{lemma}{Lemma}

\newtheorem{proposition}{Proposition}
\newtheorem{corollary}{Corollary}

\newcommand{\cM}{\mathcal{M}}

\newcommand{\RNum}[1]{\uppercase\expandafter{\romannumeral #1\relax}}

\newcommand{\Rnum}[1]{\lowercase\expandafter{\romannumeral #1\relax}}

\DeclareMathOperator*{\argmax}{arg\,max}
\DeclareMathOperator*{\argmin}{arg\,min}

\usepackage{geometry}
\newcommand{\nc}{\newcommand}
\nc{\jl}[1]{\textcolor{blue}{[Jiangtao: #1]}}
\nc{\tr}[1]{\textcolor{blue}{[Rui: #1]}}
\nc{\mz}[1]{\textcolor{red}{[Zhang: #1]}}

\begin{document}
	
\title{\Large \textbf{Associative Networks in Decision Making}\thanks{We are grateful to the co-editors and three anonymous referees, whose comments and suggestions have substantially improved the paper. We also thank David Ahn, Paul Cheung, David Dillenberger, Pawel Dziewulski, Amanda Friedenberg, Faruk Gul, Marina Halac, Junnan He, Navin Kartik, Matthew Kovach, Jay Lu,  Yusufcan Masatlioglu, Paulo Natenzon, Pietro Ortoleva, Erkut Ozbay, Wolfgang Pesendorfer, John K.-H. Quah, Satoru Takahashi, Leeat Yariv, and Junjie Zhou for helpful discussions. Li and Tang acknowledge the financial support from the Lee Heng Fellowship of the Hong Kong University of Science and Technology. Part of this work was completed during Zhang’s visit to the Cowles Foundation and Yale University. Zhang gratefully acknowledges their hospitality. All remaining errors are our own.}}
\author{
Jiangtao Li\thanks{Department of Economics, The Hong Kong University of Science and Technology, jiangtaoli@ust.hk}
\and Rui Tang\thanks{Department of Economics, The Hong Kong University of Science and Technology, ruitang@ust.hk}
\and Mu Zhang\thanks{Department of Economics, University of Michigan, muzhang@umich.edu}
}


\date{\today}

\maketitle

\thispagestyle{empty}

\begin{abstract}

We present a model of associative networks that captures how decision makers expand their consideration set through mental associations between alternatives. Our model provides a tractable approach to study how associations shape choice when some alternatives are available and others are merely observable but unavailable. We characterize the model within a random attention framework and demonstrate unique identification of all  parameters. This framework delivers a unified account of several prominent choice anomalies, including classic menu effects and their ``phantom'' counterparts. We  illustrate how associative links serve as a strategic variable in applications such as branding, imitation, and platform design.

\medskip

\textit{Keywords}: associative network, random attention, consideration set, random choice, availability and observability  

\medskip

\textit{JEL}: D01, D91

\end{abstract}

\newpage

\pagenumbering{arabic} 





\baselineskip=18pt
	
\section{Introduction}

\label{sec:intro}

Memory and attention are fundamental cognitive processes essential for decision making \citep{qje1955behavioral, book1993adaptive, jep1997behavioral}. The impact of these processes on decision making has been extensively studied in recent years \citep{qje2020bordalo, are2022salience}. Among the various memory patterns, mental association plays a key role, linking the recall of one item to another based on an individual's prior experience or learning.  In this paper, we adopt a choice-theoretical approach to explore the impact of mental association on decision making and develop a choice model to capture the effects of this cognitive process.  

	
To fix ideas and highlight some of the motivations behind our analysis, we consider the recent launch of the Xiaomi SU7, which has sparked considerable online debate due to its striking resemblance to the Porsche Taycan.\footnote{Official images are available on the manufacturers' websites for the \href{https://www.mi.com/global/discover/article?id=3095&ref=upstract.com}{Xiaomi SU7} and the \href{https://www.porsche.com/pap/_hong-kong_/accessoriesandservice/exclusive-manufaktur/uniqueness/taycan/}{Porsche Taycan}.}
The Xiaomi SU7 has drawn comparisons to the Porsche Taycan from internet users, with some even dubbing it the ``Mi Porsche.'' This strategic move by Xiaomi to boost its brand recognition through the association with Porsche is evident, and it is easy to understand why this could be valuable for the company. Potential buyers who admire the exterior design of the Porsche Taycan but find it prohibitively expensive might turn to the Xiaomi SU7 as an alternative. It might not be immediately obvious why Porsche would not be concerned about the release of the Xiaomi SU7. Indeed, in an interview during the 4th China International Consumer Goods Fair, Porsche China President and CEO Michael Kirsch addressed the issue for the first time, stating:``As for the similarities between the Xiaomi SU7 and Porsche, I think it's probably that good design always has something in mind.''\footnote{See \href{https://news.futunn.com/en/flash/16955274/porsche-china-president-responds-to-michele-for-the-first-time?level=1&data_ticket=1724900751818322}{link} for a news report on this.}

Our model of associative networks offers one possible explanation for this. For some buyers, factors such as premium branding outweigh cost in their decision-making process. While many models are affordable, these buyers might not initially consider all available options due to limited attention. During the initial launch period of the Xiaomi SU7, the press conference, news coverage, and the controversy surrounding its resemblance to the Porsche Taycan likely drew significant attention to the Xiaomi SU7. This attention could lead consumers to also consider the Porsche Taycan. Even though this association was not intentionally created by Porsche, it could increase the chance that the Porsche Taycan---but perhaps not other competitors like Maserati---ends up in consumers' final consideration set. Ultimately, this could work to Porsche's advantage.\footnote{In Section \ref{subsec:imitation}, we formalize this intuition within a strategic entry game to characterize when imitation is profitable for the entrant, and under what conditions the resulting associative attention spillovers are  substantial enough that the incumbent optimally accommodates the imitation rather than pursuing legal enforcement.}

	
Mental association is a cognitive process that allows the decision maker (DM) to expand her consideration set by linking relevant alternatives that may not have been initially considered. We represent this cognitive process through associative networks, a conceptual model first introduced in cognitive psychology \citep{book1973memory, book1983architecture, PR1981search} and widely applied in the marketing literature \citep{JM1993brand, PM2010brand, IJMR2011brand, JCR2015brand}. In an associative network, objects (nodes) are connected based on their semantic or conceptual relationships. When a specific node or input is activated, the network retrieves related nodes by spreading activation through the links. In our study, we employ the associative network as a descriptive model that captures how the attention to one alternative can trigger the DM to consider another alternative, abstracting away from the underlying conceptual similarities between alternatives that result in the association. More specifically, in our model, a link from alternative $x$ to $y$  indicates that the attention to $x$ can prompt the DM to further consider $y$.


Notably, the process of mental association is not limited to available alternatives; it can also be triggered by \textit{unavailable but observable} alternatives. In the case of Xiaomi SU7 versus Porsche Taycan, even before the Xiaomi SU7 was officially launched and became available for purchase, the attention it received could still prompt consideration of the Porsche Taycan. The importance of including unavailable but observable alternatives into the choice-theoretical framework has been highlighted by recent experimental studies. For instance, \cite{soltani2012range} demonstrate that decoy alternatives can induce choice reversals, such as the attraction effect and the compromise effect, even when these alternatives are not available for selection.\footnote{\cite{Science2015irrationality} report similar choice reversals in the mate choice of T{\'u}ngara frogs.} In an experimental design with differentiated products and objective payoffs, \cite{ee2021relevance} find that the presence of unavailable alternatives can lead to suboptimal decisions and longer decision times, with participants willing to pay significant amounts to avoid being exposed to these alternatives.

Throughout the paper, we study a novel choice domain in which not all observable alternatives are available. Each menu is composed of two distinct sets of alternatives and is represented as a pair $(A, B)$. While all the alternatives in $A$ and $B$ are observable, only the alternatives in $A$ are available. Choice scenarios are abundant where some alternatives are  unavailable but observable. A particularly salient example arises when goods are out of stock or available only in limited quantities. Limited-edition sneakers, luxury handbags, or newly released electronics are often highly publicized and prominently displayed even after they are no longer obtainable through standard channels. Consumers frequently encounter product pages with ``sold out'' labels or displays of waitlisted items in both physical and digital storefronts. These goods remain visible and salient despite being temporarily or effectively inaccessible. Unavailability can also arise for other reasons. Some goods are financially out of reach, making them observable but unattainable for many buyers. Geographic constraints can restrict access to region-specific goods or streaming content, even when such items are widely discussed online. Regulatory and legal barriers limit access to certain products---such as prescription medications---regardless of their visibility. Access may also be restricted based on consumer status, such as subscription-only content, invitation-only events, or loyalty-gated luxury products.

Although the DM cannot choose any alternative from $B$, the presence of these unavailable alternatives can still influence the DM's attention through association, thereby affecting her choices. By incorporating unavailable but observable alternatives, our primitive---a random choice rule---is a function that maps every menu $(A,B)$ to a distribution over $A$ and a default option. This distribution represents the DM's choice probabilities of alternatives in $A$ when presented with the menu $(A, B)$. 

Section \ref{sec:model} introduces our choice model. When presented with a menu $(A,B)$, the DM initially considers a random subset of alternatives from $A \cup B$, which we refer to as her initial consideration set. Following \cite{ecta2014consideration} (henceforth MM14), we assume that each alternative has a fixed probability of being initially considered by the DM, and that the DM directs her attention to each alternative independently. The DM then associates relevant alternatives in $A\cup B$ with those in her initial consideration set, a process captured by the DM's associative network, which is represented by a directed graph over the alternatives. Each link in this associative network is an ordered pair of alternatives $(x,y)$, indicating that the consideration of $x$ prompts the DM to further consider $y$. The association process continues until no further alternatives in $A \cup B$ can be linked to those already considered, resulting in a final consideration set $C$. The DM then selects her most preferred alternative among the available alternatives in $C$---she chooses her most preferred alternative in $A\cap C$ if it is not empty; otherwise, the default option is selected. We refer to the random choice rule induced by this choice procedure as the \textbf{A}ssociation \textbf{B}ased \textbf{C}onsideration rule (ABC). 

Section \ref{sec:characterization} presents the axioms that characterize ABCs. These axioms separately address the underlying attention distribution, the association procedure, the associative network, and the revealed preference relation within our choice model. Specifically: 
\begin{itemize} \vspace{-1ex}
\item Axiom 1 specifies the attention distribution: The DM  directs her initial attention to each alternative independently. \vspace{-1ex}
\item Axiom 2 states that if an unavailable but observable alternative $x$ is revealed to prompt the DM to consider some available alternative in a given menu, then making $x$ available does not affect the choice probability of the default option. This axiom implies that the DM's association process depends solely on the observability of alternatives, not their availability. Consequently, as long as  attention to $x$ leads to the consideration of some available alternative,  the choice of the default option is blocked, no matter whether $x$ is available or not. \vspace{-1ex}
\item Axioms 3 and 4 characterize the associative network. Axiom 3 states that the DM can associate more alternatives with a given alternative when there are more observable alternatives. Axiom 4 states that  if the DM associates $z$ with $x$ when $y$ is observable but fails to do so when $y$ is not observable, then she must associate $z$ with $x$ via the intermediate alternative $y$. \vspace{-1ex}
\item Axiom 5 characterizes the underlying preference relation identified from the DM's choices: For two given alternatives, if the availability of one alternative affects the choice probability of the other in some menu, then the inverse does not occur in any menu where both alternatives are observable. Essentially, the axiom posits that an inferior alternative can only affect the choice probability of a better alternative through its observability but not its availability. \vspace{-1ex}
\end{itemize}
Notably, all the relevant parameters of an ABC can be uniquely identified. 

In Section~\ref{subsec:limit:data:observability}, we characterize the ABC model on a restricted domain where all alternatives are always observable but the set of available alternatives varies. This setting is less demanding on the required choice data, and captures many real-life environments such as online shopping or ordering from a restaurant menu, where out-of-stock items remain visible to the consumer.  Within this restricted domain, we show that the DM’s preference ordering remains uniquely identifiable, and that the associative network is uniquely identified up to the transitive closure. Although the initial attention probabilities are not point-identified---since attention to mutually associated alternatives cannot be disentangled---we show that the aggregate attention probability for such alternatives is uniquely identified.

In Section \ref{subsec:empirical}, we connect our choice model to a broad range of empirical findings, highlighting its advantages over existing approaches. We first demonstrate that the model delivers a unified account of several well-documented menu effects---including the attraction, aspiration, and repulsion effects---as well as their \textit{phantom} counterparts, where merely observable but unavailable alternatives systematically shift choices. To the best of our knowledge, no existing model accommodates this entire suite of effects simultaneously. We then show how our model can capture \textit{diminishing menu effects}, whereby the incremental impact of adding an option weakens as the menu expands. Finally, we discuss how the same associative mechanism is consistent with evidence on \textit{recommendation spillovers}, where drawing attention to some products increases demand for other non-recommended items.

In Section \ref{sec:extension}, we develop three extensions of our baseline framework. First, we incorporate preference heterogeneity by embedding random preferences into the model, so that the choice rule can be interpreted as aggregate choice behavior generated by DMs who differ in their tastes. Crucially, our framework achieves a clean separation between the consideration set formation channel and the preference channel: The identification of the preference distribution proceeds exactly  as it would in the standard full-attention setting. Consequently, random utility  models with unique identification can be embedded in our framework without sacrificing their identification advantages. Second, we allow for a preference-dependent association process where the expansion of the consideration set is directly mediated by the DM's underlying preference. Consequently, two DMs with the same associative network but different preferences may arrive at different final consideration sets even when they start from the same initial consideration set. This extension also yields novel predictions regarding how choices change when the preference   shifts but the associative network remains fixed. Third, we relax the assumption that the set of unavailable but observable alternatives is perfectly recorded in the data. We provide two specific models to accommodate such data limitations, demonstrating in Online Appendix OA-2 that both models preserve robust identification properties.   

Section \ref{sec:app} illustrates how our framework applies to a range of market environments in which associative links are not merely a cognitive primitive but also a strategic variable. We first analyze a multi-product firm’s branding strategy to highlight the tradeoff between expanding demand and facilitating price discrimination. Brand extension exploits attention spillovers but forces the firm to leave information rents to high-valuation consumers. Conversely, by opting for sub-branding, a firm could fracture its associative network to  segment consideration sets, sacrificing some visibility to price discriminate and prevent internal cannibalization. We then analyze an entry game in which a potential entrant can either compete directly or imitate an incumbent’s design, where imitation endogenously creates an associative link and generates attention spillovers. When products remain sufficiently differentiated, this induced association can raise both firms’ profits by expanding attention within each firm’s segment. Finally, we briefly highlight broader design implications of our framework for firms and platforms, exploring how digital platforms optimize associative structures, why luxury brands construct cognitive ``walled gardens,'' and how the strategic deployment of unavailable but observable ``phantom'' products acts as an attention funnel to redirect consumer demand.

Section \ref{sec:conclusion} concludes the paper with a discussion of the underlying assumptions and limitations of our approach, outlining several promising avenues for future research.  The Appendix contains the proofs of our main results. Extended analytical results, proofs, and generalizations are deferred to the Online Appendix.

\subsection{Related Literature}

\label{sec:literature}
	
Our paper belongs to the growing literature on choices with limited attention or limited consideration.\footnote{See, for instance,  \cite{aer2012attention}, MM14, \cite{ecta2016feasibility},   \cite{jet2017attention}, \cite{jet2017more},  \cite{jpe2020random},  \cite{ecta2020inferring},  and \cite{wp2021attention}.} In particular, our approach is closely related to that of MM14, as both models assume that the DM allocates her initial attention randomly and independently. However, our model differs from the model of MM14 in that our DM has a follow-up procedure through which she continues to expand her consideration set via mental association. When the DM does not engage in any mental association, our model reduces to that of MM14. When all alternatives are mutually associated, our model reduces to the rational choice model, where the DM always selects the best available alternative whenever she initially pays attention to at least one observable option.  
	
	
In a concurrent paper, \cite{wp2023memory}  consider  a two-stage stochastic consideration set formation process where the first stage follows MM14. For a given initial consideration set (which they refer to as the awareness set), the DM observes options sequentially and may forget previously observed ones due to limited memory \citep{jet2022memory}. As a result, the final consideration set is a \textit{subset} of the initial one. By contrast, we focus on the mental association process, and the final consideration set is a \textit{superset} of the initial one. 

Our model makes three novel contributions to the literature on limited attention. First, we examine the cognitive process of mental association, which is a fundamental mechanism in forming the DM's consideration set. We provide a concrete procedure (and a random version of it) for how this process operates. Second, our model incorporates bottom-up attention (initial random attention) and top-down attention (mental association), both of which have been shown to be influential factors in decision making  \citep{corbetta2002control, pp2005spatial, gazzaley2012top}.\footnote{Bottom-up attention involves the automatic processing of sensory stimuli in the environment, such as sudden loud noises or bright lights, that capture an individual's attention involuntarily \citep{treisman1980feature, itti2001computational}. By contrast, top-down attention usually refers to the deliberate allocation of attention that is guided by the individual \citep{posner1980orienting, wolfe1989guided}. See also \cite{theeuwes2010top} for a review. In our model, the DM's initial attention is more likely to be bottom-up, as the DM is randomly attracted by the stimuli or salient features of the options. The second-stage mental association is a mixture of bottom-up and top-down attention, as some associated alternatives may come to  mind unintentionally, and individuals may also direct their attention towards options that are relevant to what they have considered in certain dimensions.} Third, we investigate the impact of unavailable but observable alternatives on the DM's  choices.  While those alternatives, which are also called ``phantom'' options \citep{MS1993phantom}, have been studied theoretically in the literature \citep{jet2018aspiration, jpe2019learning},\footnote{See Section \ref{subsec:empirical} for a more detailed discussion of \cite{jet2018aspiration} and \cite{jpe2019learning}.} we focus on investigating how those alternatives affect the DM's attention and obtain a unique identification of our model through those alternatives. In Section \ref{subsec:empirical}, we further show that our framework provides a unified accommodation of several prominent choice anomalies that, to the best of our knowledge, no single existing model can jointly address. 
	
There are a few papers studying the role of networks in individual decisions \citep{te2013search,et2021network, wp2021markov, wp2022subjective}. Among them the most related paper to ours is \cite{et2021network}. In their model, the DM is endowed with an \textit{undirected} associative network. When faced with a menu of available products and an exogenous starting point, she forms her consideration set by including objects that are connected to the starting point through a path (with a potential cap on the length of the path). They also study an extension in which the starting point is unknown to the  analyst.\footnote{In the working paper version of \cite{et2021network}, the authors also consider random networks in which the DM is assumed to consider the alternatives that are directly linked to the starting point.} By comparison, our model studies the combination of both initial random attention and mental association through a directed associative network, and investigates the role played by unavailable but observable alternatives in this process. \cite{te2013search} study a general model of how behavioral search affects the formation of consideration sets. In their model, the connections among the alternatives that determine the search order of the DM can be represented as a network. \cite{wp2021markov} models the exploration of the choice set as a discrete-time Markov chain in which DMs search sequentially by making stochastic pairwise comparisons.  \cite{wp2022subjective} use a directed acyclic network to represent the DM's subjective causal model. 	
 
	
More broadly, our paper contributes to the literature on random choices. Various models have been proposed to rationalize random choice behavior, including the possibility that the DM has random utilities, leading to stochastic choices as a result of utility maximization \citep{ecta1960rum, jmp1978rum, ecta2006randomeu, ecta2014attribute},\footnote{Among the most influential random utility models are the multinomial logit \citep{lu59} and nested logit models \citep{ben1973structure, mcfadden1977modelling}, which are widely used in structural estimations. See also \cite{jpe2022logit} for their behavioral foundations.} and the possibility that the DM randomizes deliberately \citep{aer2019deliberately, aea2022revealed}. While the randomness in our DM's choice behavior is driven by random attention,  our work emphasizes the importance of mental associations in the formation of consideration sets. 
	
Our work also relates to the literature on how choices are influenced by factors beyond the choice menu. These factors can include frames \citep{res2008frame, wp2021recommendation}, the DM's reference points or status quo \citep{jet2005rational, res2014canonical, wp2021reference}, and recommendations from external sources \citep{wp2024recommendation}, among others. While our approach shares some similarities with the work of \cite{wp2021reference} which examines how reference points can shape the DM's attention,  our study focuses on understanding how unavailable but observable alternatives prompt the DM to pay attention to available alternatives through mental association. 	 
	

	
	
\section{Preliminaries}

\label{sec:preliminaries}


There is a nonempty finite set of alternatives $X$, with generic elements denoted by $x, y, z$, etc.  Denote by $\cM$ the collection of all subsets of $X$, with generic elements denoted by $A$, $B$, $C$, etc. Let the default  option be $a^*\not\in X$. We assume that the DM can always choose the default option, and interpret it as choosing nothing or not choosing from a particular menu of alternatives. Examples include walking away from the shop, abstaining from voting, and so on.\footnote{For recent work on allowing a default option  in a random choice setting, see  MM14,  \cite{ecta2016feasibility}, \cite{ecta2020inferring}, \cite{ms2021designing}, and \cite{wp2024recommendation}, among others. It may be hard to observe ``choosing nothing'' outside the laboratory. However, as noted by \cite{ecta2016feasibility}, this concern can be alleviated if the researcher is interested in consumer choice within a class of alternatives. For instance, if the question of interest is the commuting choice of public transportation, the default option could be interpreted as ``driving/walking to work.''}  When there is no confusion, we write $AB$ for $A\cup B$, $Ax$ for $A\cup \{x\}$, and $A \backslash x$ for $A \backslash \{x\}$.
		

A menu consists of two distinct sets of alternatives and is represented by a pair $(A,B) \in \mathcal{M} \times \mathcal{M}$ with $A \cap B = \emptyset$.  While all the alternatives in $AB$ are observable, only those in $A$ are available to the DM. In other words, the DM can pay attention to alternatives in $AB$ but can only choose from $A$ or choose the default option $a^*$. We refer to $A$ as the set of available alternatives, or simply the available set, and $B$ as the set of unavailable but observable alternatives. Let $\mathcal{E}$ denote the collection of all menus.


	

Let $\mathcal{X} =\{(x,x): x \in X\}$. A binary relation on $X$ is a subset $\mathcal{R} \subseteq X\times X$. Fix a binary relation $\mathcal{R}$, and we introduce the following notation. The symmetric part of $\mathcal{R}$ is a binary relation $\mathcal{I}$ such that $x \mathcal{I} y$ if and only if $x\mathcal{R}y$ and $y\mathcal{R}x$. We say that $\mathcal{R}$ is {reflexive} if $\mathcal{X} \subseteq \mathcal{R}$. For all $x,y \in X$, we write $x\mathcal{R}y$ if $(x,y) \in \mathcal{R}$ and use these two notations interchangeably. For all $x \in X$, let $\mathcal{R}(x) = \{y \in X: x\mathcal{R}y\}$, and for all nonempty $A \subseteq X$, let $\mathcal{R}(A) = \cup_{x \in A} \mathcal{R}(x)$. Let $\mathcal{R}^0 = \mathcal{X}$. For all $k \in \mathbb{N}_+$, define $\mathcal{R}^k$ such that $x\mathcal{R}^k y$ if and only if there exists $1 \le t \le k$ and sequence $(x_m)_{m=1}^{t+1}$ in $X$ such that $x_1 = x$, $x_{t + 1} = y$, and for all $m \in \{1, 2, \ldots, t\}$, $x_m \mathcal{R} x_{m + 1}$. Define the transitive closure of $\mathcal{R}$ as $\mathcal{R}^{+} :=\cup_{k=1}^{+\infty} \mathcal{R}^k$.   


A random choice rule is a map $\rho: X\times \mathcal{E} \rightarrow [0,1]$ such that for all $(A,B) \in \mathcal{E}$, (i) $A\neq \emptyset$ implies $\sum_{x \in A} \rho(x, (A, B)) \in (0,1)$, and (ii) $\rho(x, (A,B))>0$ implies $x \in A$. For ease of notation, we write $\rho(x|A, B)$ rather than $\rho(x,(A,B))$. Define $\Phi_{\rho}(A,B) = 1 - \sum_{x \in A} \rho(x|A, B)$. The interpretation is that (1) $\rho(x|A, B)$ denotes the probability that the DM chooses the alternative $x$ in the menu $(A,B)$, and (2) $\Phi_{\rho}(A,B)$ denotes the probability that the DM chooses her default option $a^*$ when facing the menu $(A,B)$.\footnote{Alternatively, we can explicitly incorporate the default option by defining the random choice rule as a map $\rho: (X\cup\{a^*\})\times \mathcal{E} \rightarrow [0,1]$ such that  $\rho(a^*|A, B)\in (0,1)$ whenever $A\neq \emptyset$.}

In the definition of the random choice rule above, condition (i) states that the probabilities of choosing an available alternative and the default option are both positive when $A$ is nonempty; the latter happens if the DM ultimately considers no alternative in $A$. When $A$ is empty, the default option $a^*$ is always chosen and we have $\Phi_{\rho}(A, B)=1$.     

A preference ordering $\succ$ is a strict total order defined on $X$. We use $\max (A; \succ)$ to denote the $\succ$-maximal alternative in $A$ whenever $A$ is not empty. 

\section{Association Based Consideration}

\label{sec:model}


In this section, we formally introduce our choice model in which a DM forms her consideration set through mental association. \bigskip 



\textbf{Initial consideration set.} Following MM14, we assume that each alternative has a fixed probability of being initially considered by the DM, and that the DM attributes her attention to each alternative independently. The attention probability of each alternative is given by the function $\pi: X \rightarrow (0,1)$. For a given $\pi$, define $\mathring{\pi}: X\rightarrow (0,1)$ such that $\mathring{\pi}(x)=1-\pi(x)$ for all $x \in X$. To simplify the notation, we write $\pi_x$ for $\pi(x)$, $\mathring{\pi}_x$ for $\mathring{\pi}(x)$, $\pi_A$ for $\prod_{x \in A} \pi(x)$, and $\mathring{\pi}_{A}$ for $\prod_{x \in A} \mathring{\pi}(x)$. We use the convention that $\pi_A=\mathring{\pi}_A=1$ when $A$ is empty.   


In a given  menu $(A,B)$, the DM  initially considers a subset of $AB$. Since the DM attributes her attention to each alternative independently, the DM initially pays attention to some $C \subseteq AB$ with probability  $\pi_{\scriptscriptstyle\! C} \mathring{\pi}_{\scriptscriptstyle\! (AB)\backslash C}.$ \bigskip  



\textbf{Associative network and the final consideration set.} An \textit{associative network} is a reflexive binary relation on $X$. We use $\mathcal{N}$, $\mathcal{W}$, $\mathcal{U}$, and $\mathcal{V}$ to denote generic associative networks.  The DM expands her initial consideration set through her associative network $\mathcal{N}$. If $(x,y) \in \mathcal{N}$, then $y$ is \textit{directly associated} with $x$, meaning that the attention to $x$ will prompt the DM to further consider $y$.   


In a given menu $(A,B)$, the DM's mental association process only depends on the restricted associative network $\mathcal{N}_{\!\scriptscriptstyle AB}$ of $\mathcal{N}$ on $AB$, where $\mathcal{N}_{\!\scriptscriptstyle AB}=\{(x,y) \in \mathcal{N}: x,y \in AB\}$.  With $\mathcal{N}_{\!\scriptscriptstyle AB}$, the DM's mental association process works as follows. For each alternative $x \in AB$ that she initially considers, she includes every alternative $y$ in $\mathcal{N}_{\!\scriptscriptstyle AB}(x)$ into her consideration set.\footnote{Note that $\mathcal{N}_{\!\scriptscriptstyle AB}$ is a binary relation and $\mathcal{N}_{\!\scriptscriptstyle AB} (x) = \{y \in X: x \mathcal{N}_{\!\scriptscriptstyle AB} y\}$.} For each such alternative $y$, she then expands her consideration set by including each alternative $z$ in $\mathcal{N}_{\!\scriptscriptstyle AB}(y)$. The process terminates when there are no more alternatives in $AB$ that are associated with what the DM already considers.  


Formally, the association procedure described above is modeled as follows.  Consider the transitive closure $\mathcal{N}_{\!\scriptscriptstyle AB}^+$ of $\mathcal{N}_{\!\scriptscriptstyle AB}$.\footnote{Throughout the paper,  $\mathcal{N}^{+}_{\!\scriptscriptstyle AB}$ denotes the transitive closure of $\mathcal{N}_{\!\scriptscriptstyle AB}$ but \textbf{not} the transitive closure of $\mathcal{N}$ restricted on $AB$. Similarly, $\mathcal{N}^{k}_{\!\scriptscriptstyle AB}$ denotes  $(\mathcal{N}_{\!\scriptscriptstyle AB})^k$ but \textbf{not}   $(\mathcal{N}^k)_{\!\scriptscriptstyle AB}$.} If $(x,y)\in \mathcal{N}_{\!\scriptscriptstyle AB}^+$, then there exists a path, i.e., a sequence of alternatives $x_1, x_2, \ldots, x_{n+1} \in AB$ such that $x_1 = x$, $x_{n+1} = y$, and $x_{k+1}$ is directly associated with $x_k$ for all $k \in \{1, 2, \ldots, n\}$. Therefore, the DM can finally consider $y$ as long as she considers $x$. For a given  menu $(A,B)$, we say that $y$ is \textit{associated with} $x$ \textit{in} $(A,B)$ if $(x,y) \in \mathcal{N}_{\!\scriptscriptstyle AB}^+$, and when there is no confusion about the menu $(A,B)$, we simply say that $y$ is \textit{associated with} $x$. Note that if the DM initially considers $x$, the set $\mathcal{N}_{\!\scriptscriptstyle AB}^+(x)$ will be included in her final consideration set. Thus, an initial consideration set $C\subseteq AB$ leads to the final consideration set $\mathcal{N}_{\!\scriptscriptstyle AB}^+(C).$ Figure \ref{fig:association-amplifier} illustrates this process.

	\begin{figure}[hbt!]\centering \begin{tikzpicture}[scale=3]
			\node[circle,draw=black, inner sep=0pt,minimum size=2pt] (c1) at (0,0) {$x_1$};
			
			\node[circle,draw=black, inner sep=0pt,minimum size=2pt] (c2) at (1,0) {$y_1$};
			
			\node[circle,draw=black, inner sep=0pt,minimum size=2pt] (c3) at (0,1) {$x_2$};
			
			\node[circle,draw=black, inner sep=0pt,minimum size=2pt] (c4) at (1,1) {$y_2$};
			
			\node[circle,draw=black, inner sep=0pt,minimum size=2pt] (c5) at (-0.5,0.5) {$x_3$};
			
			\node[circle,draw=black, inner sep=0pt,minimum size=2pt] (c6) at (1.5,0.5) {$y_3$};
			
			\node[circle,draw=black, inner sep=0pt,minimum size=2pt] (c7) at (-0.5,-0.5) {$z_1$};
			
			\node[circle,draw=black, inner sep=0pt,minimum size=2pt] (c8) at (1.5,-0.5) {$z_2$};
			
			\draw [-stealth] (c1) -- (c2);
			\draw [-stealth] (c2) -- (c1);
			\draw [-stealth] (c1) -- (c3);
			\draw [-stealth] (c2) -- (c4);
			\draw [-stealth] (c3) -- (c5);
			\draw [-stealth] (c4) -- (c6);
			\draw [-stealth] (c7) -- (c1);
			\draw [-stealth] (c8) -- (c2);
			\draw [-stealth] (c2) -- (c6);

			\draw (0.5, 0) ellipse (0.8cm and 0.2cm);
			\draw (0.5, 0.4) ellipse (1.5cm and 0.8cm);

		\end{tikzpicture} 
		\caption{The  menu contains 8 alternatives. The associative network is given by the arrows. The initial consideration set of the DM is $\{x_1, y_1\}$, and after association, her final consideration set is $\{x_1, x_2, x_3, y_1, y_2, y_3\}$. Alternatives $z_1$ and $z_2$ are not considered because they are not directly associated with any considered alternative. To illustrate the distinction between $\mathcal{N}$, $\mathcal{N}_{\!\scriptscriptstyle AB}$, and $\mathcal{N}_{\!\scriptscriptstyle AB}^+$, let $A=\{x_1\}$ and $B=\{x_2,x_3\}$. Then $\mathcal{N}$ is the full associative network depicted in the figure; $\mathcal{N}_{\!\scriptscriptstyle AB}$ is the subnetwork consisting of the links $(x_1,x_2)$ and $(x_2,x_3)$; and $\mathcal{N}_{\!\scriptscriptstyle AB}^+$ is the transitive closure of $\mathcal{N}_{\!\scriptscriptstyle AB}$, consisting of $(x_1,x_2)$, $(x_2,x_3)$, and $(x_1,x_3)$.}
		\label{fig:association-amplifier}
	\end{figure}



One important feature of our model is that the DM's association process depends on unavailable yet observable alternatives.\footnote{The importance of understanding how such alternatives affect decision-making has also been highlighted by \cite{ee2021relevance}.}  In our model, such alternatives can help the DM to consider more available alternatives through association. To see this, consider two  menus $(\{z\},\{x\})$ and $(\{z\}, \{x,y\})$, and assume that $y$ is directly associated with $x$ and $z$ is directly associated with $y$, but  $z$ is not directly associated with $x$. If the DM's initial consideration set is $\{x\}$, then she cannot further consider $z$ in menu $(\{z\},\{x\})$. By contrast, in menu $(\{z\},\{x,y\})$, the observable intermediate alternative $y$ allows the DM to reach $z$ through the associative path from $x$ to $y$ and then from $y$ to $z$. 

This feature highlights the importance of allowing the associative network $\mathcal{N}$ to be non-transitive. If $\mathcal{N}$ were transitive, then the links $(x,y)$ and $(y,z)$ would imply a direct link from $x$ to $z$. In that case, considering $x$ would prompt the DM to consider $z$ regardless of whether the intermediate alternative $y$ is observable.\footnote{To further illustrate this point, consider two subnetworks $\mathcal{N}^1$ and $\mathcal{N}^2$ in Figure \ref{fig:association-amplifier}. The subnetwork $\mathcal{N}^1$ is the restriction of $\mathcal{N}$ to alternatives $x_1,x_2$, and $x_3$, while $\mathcal{N}^2$ is the restriction of $\mathcal{N}$ to $y_1,y_2$, and $y_3$. The two subnetworks differ in that $\mathcal{N}^2$ is transitive whereas $\mathcal{N}^1$ is not. Thus, the association from $x_1$ to $x_3$ may arise only when the intermediate alternative $x_2$ is observable, while the association from $y_1$ to $y_3$ persists regardless of the observability of $y_2$.}	

We note that our model can also accommodate the case in which the DM is able to store the associated alternative $y$ in her memory and use it for further association. By doing so, she can directly associate  $z$ with $x$ even in   the menu $(\{z\},\{x\})$. This seems  incompatible with our model because an alternative that is not observable serves as the intermediate alternative in the DM's association process. In fact, our model can accommodate such a case: If the DM can associate $z$ with $x$ through $y$ without $y$ being observable, then it is as if the DM can directly associate $z$ with $x$. The two interpretations lead to the same choice behavior of the DM, and therefore, we do not distinguish them in our choice model. \bigskip 


\textbf{Preference and choice.} In the menu $(A,B)$, if the DM's final consideration set is $C \subseteq AB$, then she chooses $\max(A\cap C; \succ)$ if $A \cap C$ is not empty. Otherwise, she chooses the default option $a^*$.  

\begin{definition} \label{def:ABC}
	
A random choice rule $\rho$ is an  \textbf{A}ssociation \textbf{B}ased \textbf{C}onsideration rule  \emph{(}ABC\emph{)} if there exists a tuple $(\pi, \mathcal{N}, \succ)$, where $\pi$ is an attention probability function, $\mathcal{N}$ is an associative network, and $\succ$ is a preference ordering (i.e., a strict total order over $X$), such that 
\begin{equation}\label{eq:def:ABC}
~\rho(x|A,B) = \sum_{C \subseteq AB:\, x=\max (\mathcal{N}_{\!\scriptscriptstyle AB}^+(C) \cap A;\, \succ) }  \pi_{\scriptscriptstyle\! C} \mathring{\pi}_{\scriptscriptstyle\! (AB)\backslash C}
\end{equation}
for any $(A,B) \in \mathcal{E}$ and $x \in A$. The tuple $(\pi, \mathcal{N}, \succ)$ is said to represent $\rho$ as an ABC.

\end{definition}

With an ABC,  the choice probability of $x$ in the menu $(A,B)$ is the probability with which $x$ is the best alternative in the final consideration set. The default option $a^*$ is chosen only when no available alternative belongs to the final consideration set.\footnote{ Following MM14, we assume a unique preference ordering. An interesting extension is to allow heterogeneous preferences (\citealt{jet2022attention}, \citealt{qe2023random}, \citealt{wp2021attention}, \citealt{wp2024recommendation}). We discuss it in Section \ref{subsec:rp}.}

\section{Axioms and Representation Theorem}

\label{sec:characterization} 

Section \ref{sec:reformulation} reformulates the ABC model to highlight some of its key properties. Section \ref{subsec:axiom} presents the axioms and the representation theorem, discusses the model's comparative statics, and compares it with other random attention models. Section \ref{subsec:limit:data:observability} studies the ABC model in a restricted domain. In Section \ref{subsec:empirical}, we explore the connection of our model to empirical findings, with a particular focus on menu effects.

\subsection{Reformulation}

\label{sec:reformulation}

A feature of our model  is that not every available alternative in a menu is chosen with positive probability. In particular, if the attention to some alternative $x$ always prompts the DM to consider another better alternative, then $x$ is never chosen.   The following proposition characterizes the set of alternatives that are chosen with a positive probability  in a given  menu.  

\begin{proposition} \label{prop:chosen_alternative}
	
Consider an ABC $\rho$ that is represented by $(\pi, \mathcal{N}, \succ)$. For all $(A,B) \in \mathcal{E}$ with $A \neq \emptyset$, $\rho(x|A,B) > 0$ if and only if $x=\max(\mathcal{N}_{\!\scriptscriptstyle AB}^+(x) \cap A; \succ)$.

\end{proposition}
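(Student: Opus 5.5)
The proof splits into the two directions, each handled by looking at the sum in \eqref{eq:def:ABC}. Every term $\pi_{\scriptscriptstyle\! C}\mathring{\pi}_{\scriptscriptstyle\! (AB)\backslash C}$ is strictly positive because $\pi$ takes values in $(0,1)$. Hence $\rho(x|A,B)>0$ holds if and only if there is at least one $C\subseteq AB$ with $x=\max(\mathcal{N}_{\!\scriptscriptstyle AB}^+(C)\cap A;\succ)$. The statement therefore reduces to a purely combinatorial claim: such a $C$ exists if and only if $x=\max(\mathcal{N}_{\!\scriptscriptstyle AB}^+(x)\cap A;\succ)$.

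For sufficiency, take $C=\{x\}$. Then $\mathcal{N}_{\!\scriptscriptstyle AB}^+(C)=\mathcal{N}_{\!\scriptscriptstyle AB}^+(x)$. Because $\mathcal{N}$ is reflexive and $x\in A\subseteq AB$, we have $(x,x)\in\mathcal{N}_{\!\scriptscriptstyle AB}\subseteq\mathcal{N}_{\!\scriptscriptstyle AB}^+$, so $x\in\mathcal{N}_{\!\scriptscriptstyle AB}^+(x)\cap A$ and this set is nonempty. Its $\succ$-maximum is $x$ by hypothesis, so $C=\{x\}$ contributes the positive term $\pi_x\mathring{\pi}_{\scriptscriptstyle\! (AB)\backslash x}$ to the sum.

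For necessity, suppose some $C\subseteq AB$ satisfies $x=\max(\mathcal{N}_{\!\scriptscriptstyle AB}^+(C)\cap A;\succ)$. Then $x\in\mathcal{N}_{\!\scriptscriptstyle AB}^+(C)$, so there is $c\in C$ with $(c,x)\in\mathcal{N}_{\!\scriptscriptstyle AB}^+$. The key step is that $\mathcal{N}_{\!\scriptscriptstyle AB}^+$ is transitive, since concatenating two paths in $AB$ gives a path in $AB$. Hence $\mathcal{N}_{\!\scriptscriptstyle AB}^+(x)\subseteq\mathcal{N}_{\!\scriptscriptstyle AB}^+(c)\subseteq\mathcal{N}_{\!\scriptscriptstyle AB}^+(C)$. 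Intersecting with $A$, we get $x\in\mathcal{N}_{\!\scriptscriptstyle AB}^+(x)\cap A\subseteq\mathcal{N}_{\!\scriptscriptstyle AB}^+(C)\cap A$. Since $x$ is the $\succ$-maximum of the larger set and lies in the smaller one, it is also the $\succ$-maximum of $\mathcal{N}_{\!\scriptscriptstyle AB}^+(x)\cap A$.

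There is no real obstacle here. The only points that need care are that reflexivity of $\mathcal{N}$ guarantees $x\in\mathcal{N}_{\!\scriptscriptstyle AB}^+(x)$, and that transitivity is used for the closure $\mathcal{N}_{\!\scriptscriptstyle AB}^+$, not for $\mathcal{N}$ itself, which need not be transitive.
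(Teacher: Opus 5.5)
Your proof is correct and follows the paper's argument: necessity by noting $x\in\mathcal{N}^+_{\!\scriptscriptstyle AB}(C)$ forces $\mathcal{N}^+_{\!\scriptscriptstyle AB}(x)\subseteq\mathcal{N}^+_{\!\scriptscriptstyle AB}(C)$, and sufficiency by taking $C=\{x\}$ to get $\rho(x|A,B)\ge\pi_x\mathring{\pi}_{\scriptscriptstyle\!(AB)\backslash x}>0$. You also spell out explicitly the transitivity of the closure $\mathcal{N}^+_{\!\scriptscriptstyle AB}$ and the role of reflexivity, which the paper's proof uses implicitly.
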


In words,  an alternative $x$ is chosen with a positive probability in the menu $(A,B)$  if and only if there is no association path from $x$ to any available alternative $y$ that is better than it. Otherwise, the attention to $x$ always prompts the attention to $y$ in this menu, which blocks the choice of $x$.   


Next, we investigate the choice probability of each chosen alternative. For a given menu $(A, B)$, define  $H_{\!\scriptscriptstyle \mathcal{N}}(A,B):=\{x \in AB: \mathcal{N}_{\!\scriptscriptstyle AB}^+(x) \cap A \neq \emptyset\}$ as the set of alternatives $x$ in the menu $(A, B)$ such that some alternative in $A$ is associated with $x$. Since $\mathcal{N}$ is reflexive, we have $A \subseteq H_{\mathcal{N}}(A, B)$.  

\begin{proposition} \label{prop:reformulation}

Consider an ABC $\rho$ that is represented by $(\pi, \mathcal{N}, \succ)$. For all $(A, B) \in \mathcal{E}$,  
\begin{equation} \label{eq:prop:reform:2}
\Phi_{\rho}(A,B)=\mathring{\pi}_{\!\scriptscriptstyle H_{\!\scriptscriptstyle \mathcal{N}}(A,B)}.
\end{equation}
Furthermore, if $\{x\in A: \rho(x|A,B)>0 \}=\{x_1, x_2, \ldots, x_n\}$ with $x_1 \succ x_2 \succ \ldots \succ x_n$, then 
\begin{equation} \label{eq:prop:reform:3}
\begin{split}
\rho(x_1|A,B)= 1- \mathring{\pi}_{\!\scriptscriptstyle C_1} ~\text{and} ~	  \rho(x_k|A,B)=\left(1-\mathring{\pi}_{\!\scriptscriptstyle C_k}\right) \prod_{t=1}^{k-1}\mathring{\pi}_{\!\scriptscriptstyle  C_t },  \, \forall k \ge 2,  
\end{split}
\end{equation}
where for each $k \in \{1, 2, \ldots, n\}$,  $C_k = \{y \in AB:  x_k = \max(\mathcal{N}_{\!\scriptscriptstyle AB}^+(y) \cap A; \succ)\}$.
 
\end{proposition}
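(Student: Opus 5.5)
The plan is to reduce everything to independence of the initial attention draws over disjoint sets of alternatives. Fix a menu $(A,B)$ and write $f(y) := \max(\mathcal{N}_{\!\scriptscriptstyle AB}^+(y)\cap A;\succ)$ for each $y \in H_{\!\scriptscriptstyle\mathcal{N}}(A,B)$.

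The first observation is that the final consideration set is a union over initially considered alternatives, $\mathcal{N}_{\!\scriptscriptstyle AB}^+(C)=\bigcup_{y\in C}\mathcal{N}_{\!\scriptscriptstyle AB}^+(y)$. It follows that $\mathcal{N}_{\!\scriptscriptstyle AB}^+(C)\cap A=\emptyset$ exactly when $C\cap H_{\!\scriptscriptstyle\mathcal{N}}(A,B)=\emptyset$. Since the DM attends to each alternative independently, this event has probability $\mathring{\pi}_{H_{\!\scriptscriptstyle\mathcal{N}}(A,B)}$. The alternatives in $AB\setminus H_{\!\scriptscriptstyle\mathcal{N}}(A,B)$ are unconstrained, and their probabilities sum to one. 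This gives \eqref{eq:prop:reform:2}.

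For \eqref{eq:prop:reform:3}, the key step is a second consequence of the union observation: whenever $C \cap H_{\!\scriptscriptstyle\mathcal{N}}(A,B)\neq\emptyset$, we have $\max(\mathcal{N}_{\!\scriptscriptstyle AB}^+(C)\cap A;\succ)=\max\{f(y):y\in C\cap H_{\!\scriptscriptstyle\mathcal{N}}(A,B)\}$. The sets $C_k=f^{-1}(x_k)$ are therefore pairwise disjoint subsets of $H_{\!\scriptscriptstyle\mathcal{N}}(A,B)$.

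I must also check that every $y\in H_{\!\scriptscriptstyle\mathcal{N}}(A,B)$ lies in some $C_k$, that is, that $f(y)$ is chosen with positive probability. Let $z=f(y)$. Transitivity of $\mathcal{N}_{\!\scriptscriptstyle AB}^+$ gives $\mathcal{N}_{\!\scriptscriptstyle AB}^+(z)\subseteq\mathcal{N}_{\!\scriptscriptstyle AB}^+(y)$. Hence $z=\max(\mathcal{N}_{\!\scriptscriptstyle AB}^+(z)\cap A;\succ)$, and Proposition \ref{prop:chosen_alternative} yields $\rho(z|A,B)>0$. So $z$ is some $x_k$.

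With this partition in hand, $x_k$ is chosen exactly when two conditions hold: $C$ contains no element of $C_1\cup\dots\cup C_{k-1}$, since those map to better alternatives, and $C$ meets $C_k$. Elements of later $C_t$ map to worse alternatives and are irrelevant. Elements outside $H_{\!\scriptscriptstyle\mathcal{N}}(A,B)$ are irrelevant as well. By independence across the disjoint sets $C_t$, the probability is $(1-\mathring{\pi}_{C_k})\prod_{t<k}\mathring{\pi}_{C_t}$. For $k=1$ the product is empty, which gives $1-\mathring{\pi}_{C_1}$.

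The only delicate step is the partition claim, namely that the $C_k$ exhaust $H_{\!\scriptscriptstyle\mathcal{N}}(A,B)$. That claim rests on transitivity of the closure together with Proposition \ref{prop:chosen_alternative}. Everything else is routine bookkeeping of independent events.
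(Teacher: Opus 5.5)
Your proof is correct and follows essentially the same route as the paper. Both identify the default event with the initial consideration set missing $H_{\mathcal{N}}(A,B)$, and the event that $x_k$ is chosen with the initial set meeting $C_k$ while missing $C_1,\ldots,C_{k-1}$; independence over these disjoint sets then gives both formulas. Your explicit check that the $C_k$ exhaust $H_{\mathcal{N}}(A,B)$, via transitivity of $\mathcal{N}^+_{AB}$ and Proposition~\ref{prop:chosen_alternative}, fills in a step the paper leaves implicit when it equates ``$x_k$ is chosen'' with $\mathcal{N}^+_{AB}(D)\cap\{x_1,\ldots,x_k\}=\{x_k\}$.
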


Equation (\ref{eq:prop:reform:2}) says that the probability for the default option to be chosen is equal to the probability that none of the initially considered alternatives lead to the consideration of any available alternative. Equation (\ref{eq:prop:reform:3}) is a reformulation of our choice rule: The probability for an alternative to be chosen is equal to the probability that it is finally considered through the association process while all better available alternatives are not considered.	
	
\subsection{Characterization}

\label{subsec:axiom}

The first axiom captures the independent attention distribution of the DM.

\bigskip
	
\noindent \textbf{Axiom 1\textemdash Default Independence:} For all $x \in X$ and $A,B\in \mathcal{M}$ with $x \in A\cap B$, $$\frac{\Phi_{\rho}(A,\emptyset)}{\Phi_{\rho}(A\backslash x,\emptyset)} = \frac{\Phi_{\rho}(B,\emptyset)}{\Phi_{\rho}(B\backslash x,\emptyset)}.$$  

\smallskip

Axiom 1 follows from  the I-Independence axiom of MM14.\footnote{The I-Independence axiom is stronger than Axiom 1. It additionally requires that for all $x,y \in X$ and $A,B \in \mathcal{M}$ with $x,y \in A\cap B$ and $x \neq y$, $\frac{\rho(x|A\backslash y,\emptyset)}{\rho(x|A,\emptyset)}=\frac{\rho(x|B\backslash y,\emptyset)}{\rho(x|B,\emptyset)}$.} When all observable alternatives are available, the DM's choice of the default option  depends on whether her initial consideration set is empty or not. The effect of removing an available alternative on the probability of choosing the default option is determined by the extent to which it attracts the DM's initial attention. Axiom 1 posits that the DM has a constant probability of initially considering a given alternative.

\begin{definition} \label{def:network}
	
For all $x \in X$, $A, B \in \mathcal{M}$ with $A \cap B =\emptyset$,  some alternative in $A$ is associated with $x$ through $B$, denoted by $x \xrightarrow[\rho]{B} A$, if either $x \in A$ or $\Phi_{\rho}(A,B) \neq \Phi_{\rho}(A,B\backslash x)$.

\end{definition}

When there is no confusion about $\rho$, we write $x \xrightarrow[]{B} A$ for $x \xrightarrow[\rho]{B} A$. To understand Definition \ref{def:network}, note that if $x$ is in $A$, then some alternative in $A$ (i.e., $x$) is associated with $x$.  If $x$ is not in $A$, then the condition $\Phi_{\rho}(A,B) \neq \Phi_{\rho}(A,B\backslash x)$ indicates $x \in  B$. Furthermore, since removing $x$ from $B$ affects the choice probability of the default option, it follows that the attention of $x$ must lead to the choice of some alternative in $A$, i.e., some alternative in $A$ is associated with $x$, possibly through some intermediate alternatives in $B$.  

\bigskip 

\noindent \textbf{Axiom 2\textemdash Idempotence:} For all $x \in X$ and $(A,B) \in \mathcal{E}$ with $x \in B$,  if $x \xrightarrow[]{B} A$, then $\Phi_{\rho}(Ax, B\backslash x)=\Phi_{\rho}(A, B)$.

\bigskip 

Axiom 2 states that if some available alternative in $A$ is associated with some unavailable but observable alternative $x$,  then the DM's attention to $x$ will result in the selection of an available alternative and block the choice of the default option. Therefore, whether $x$ is available or unavailable has no effect on the choice probability of the default option.

 \bigskip

\noindent\textbf{Axiom 3\textemdash Expansion:} For all $x \in X$ and $(A,B), (C,D) \in \mathcal{E}$, if $C \subseteq A$ and $CD \subseteq AB$,  then $x \xrightarrow[]{D} C$ implies $x \xrightarrow[]{B} A$.

\bigskip

Axiom 3 states that if the attention to $x$ leads to the consideration of some available alternative in a given menu,  then the same would be true in a less restrictive menu with more observable alternatives and more available alternatives. In other words, when there are more intermediate cues in the form of observable alternatives, more alternatives become associated with $x$.

\bigskip 

\noindent\textbf{Axiom 4\textemdash Path Connectedness:} For all $x,y \in X$ and $(A,B) \in \mathcal{E}$ with $x \neq y$, if $x \xrightarrow[]{B} A$ and not $x \xrightarrow[]{B\backslash y} A\backslash y$, then $x \xrightarrow[]{B\backslash y} \{y\}$ and $y \xrightarrow[]{B\backslash x} A$.

\bigskip

Axiom 4 captures the key feature of the associative network: Alternatives are associated with one another via paths in the network. To see this, note that if the DM can associate some alternative $z$ in $A$ with $x$ through $B$ but cannot do so when $y$ is not observable, then either $y = z$ or $y$ is an intermediate alternative for this association process. 

\bigskip 

\noindent\textbf{Axiom 5\textemdash Association Asymmetry:} For all $x,y \in X$ and $(A,B), (C,D) \in \mathcal{E}$ with $x\neq y$ and $x,y \in A\cap C$,  \[\rho(y|A,B) \neq \rho(y|A\backslash x,Bx) \Rightarrow \rho(x|C,D) =\rho(x|C\backslash y,Dy).\]

\smallskip 

Axiom 5 is similar to the I-Asymmetry axiom of MM14.\footnote{The I-Asymmetry axiom states that for all distinct $x,y \in X$ and $A,B\in\mathcal{M}$, if $\rho(y|A,\emptyset) \neq \rho(y|A\backslash x,\emptyset)$, then $\rho(x|B,\emptyset) =\rho(x|B\backslash y, \emptyset)$.}  It states that if the availability of an alternative $x$ affects the choice probability of alternative $y$, then the inverse will not occur. To understand this axiom, note that if the set of observable alternatives is kept the same, so will be the distribution of the final consideration sets. Under such circumstances,  the availability of $x$ affects the choice of $y$ only when $x$ is better than $y$, and thus the inverse never occurs.  

Together, Axioms 1-5 fully characterize our ABC model. 

\begin{theorem}\label{thm:main}
	
A random choice rule $\rho$ is an ABC if and only if  it satisfies Axioms 1-5. The tuple $(\pi, \mathcal{N}, \succ)$ that represents $\rho$ as an ABC is unique.

\end{theorem}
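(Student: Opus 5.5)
Necessity follows from Propositions \ref{prop:chosen_alternative} and \ref{prop:reformulation}, so the real work is sufficiency and uniqueness; I would do them together by recovering each parameter from $\rho$.

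\textbf{Necessity.} By Proposition \ref{prop:reformulation}, $\Phi_\rho(A,\emptyset)=\mathring{\pi}_A$, since $H_{\mathcal N}(A,\emptyset)=A$. Axiom 1 then holds because both ratios equal $\mathring{\pi}_x$. Next I would prove a lemma: in an ABC, $x\xrightarrow[]{B}A$ if and only if $x\in H_{\mathcal N}(A,B)$, that is, $\mathcal N^+_{AB}(x)\cap A\neq\emptyset$. This holds because $H_{\mathcal N}(A,B\backslash x)\cup\{x\}\supseteq H_{\mathcal N}(A,B)$, with equality of the remaining elements, and $\mathring{\pi}_x<1$.
\begin{itemize}
\item Axiom 2 follows because $H_{\mathcal N}(Ax,B\backslash x)=H_{\mathcal N}(A,B)$ whenever $x\in H_{\mathcal N}(A,B)$.
\item Axiom 3 is monotonicity of paths in larger observable sets.
\item Axiom 4 holds because every path from $x$ to $A$ within $AB$ must use $y$, either as an intermediate node or as the endpoint. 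Cutting the path at the first occurrence of $y$ gives $x\xrightarrow[]{B\backslash y}\{y\}$, and the tail gives $y\xrightarrow[]{B\backslash x}A$.
\item Axiom 5 holds because moving $x$ from $A$ to $B$ leaves the final-consideration distribution unchanged. Hence $\rho(y|\cdot)$ changes only if $x\succ y$, which rules out the reverse effect.
\end{itemize}

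\textbf{Sufficiency.}
\begin{enumerate}
\item Define $\mathring{\pi}_x=\Phi_\rho(\{x\},\emptyset)$, which lies in $(0,1)$. Axiom 1 then yields $\Phi_\rho(A,\emptyset)=\mathring{\pi}_A$ by induction on $|A|$.
\item Define $x\mathcal N y$ if and only if $x=y$ or $x\xrightarrow[\rho]{\{x\}}\{y\}$, that is, if and only if $\Phi(\{y\},\{x\})\neq\Phi(\{y\},\emptyset)$.
\item Prove the key lemma: $x\xrightarrow[\rho]{B}A$ if and only if $\mathcal N^+_{AB}(x)\cap A\neq\emptyset$. I would argue by induction on $|AB|$.
 \begin{itemize}
 \item The ``if'' direction: take a shortest path from $x$ to $A$. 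Each link $(x_k,x_{k+1})$ gives $x_k\xrightarrow[]{\{x_k\}}\{x_{k+1}\}$. Axiom 3 enlarges the menu, and Axiom 2 lets me transfer availability along the path. Chaining these links is where I expect to need Axiom 2 in the form of an equality of defaults.
 \item The ``only if'' direction: given $x\xrightarrow[]{B}A$, remove an element $y\neq x$. Either the relation persists, and induction applies, or Axiom 4 splits it into $x\xrightarrow[]{B\backslash y}\{y\}$ and $y\xrightarrow[]{B\backslash x}A$. Both pieces live on smaller menus, so induction gives paths, which concatenate.
 \end{itemize}
\item Show $\Phi_\rho(A,B)=\mathring{\pi}_{H(A,B)}$. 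Remove the elements of $B$ one by one. If $x\notin H$, Definition \ref{def:network} gives invariance of $\Phi$. If $x\in H$, Axiom 2 moves $x$ into $A$ without changing $\Phi$. This ends at $\Phi_\rho(H\cap AB,\emptyset)$.
\item Define $x\succ y$ if $\rho(y|A,B)\neq\rho(y|A\backslash x,Bx)$ for some menu, and complete it using Axiom 5. I would argue that for $x\neq y$ exactly one direction must be revealed by testing the menu $(\{x,y\},\emptyset)$. There, $\rho(y|\{x,y\},\emptyset)=\rho(y|\{y\},\{x\})$ would force both singletons to be unaffected, which is impossible unless the default probabilities contradict step 4. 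Transitivity would then be derived from probabilities on three-element menus.
\item Derive each $\rho(x|A,B)$ from defaults. Since defaults are pinned down, $\rho(x_1|A,B)=\Phi(A\backslash x_1,Bx_1)-\Phi(A,B)$ recursively for the $\succ$-top element. This matches Proposition \ref{prop:reformulation}.
\end{enumerate}

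\textbf{Uniqueness.} In any representation, $\pi$ is pinned down by singletons, $\mathcal N$ by the default probabilities in step 2, and $\succ$ by Axiom 5's revealed relation.

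The main obstacle is step 5 together with its use in step 6: ensuring that the revealed $\succ$ is total and transitive. The fallback if the direct construction stalls is to construct $\succ$ from the orderings of menus $(\{x,y\},\emptyset)$ in which both alternatives have positive probability. Those pairs where association blocks one alternative would be handled by Proposition \ref{prop:chosen_alternative}-style reasoning on $(\{x,y\},\emptyset)$ versus $(\{y\},\{x\})$.
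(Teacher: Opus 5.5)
Necessity, Steps 1--4, and the completeness half of Step 5 are essentially sound, apart from two small slips noted at the end. Your Step 3 proves the full biconditional, which the paper avoids by moving $H_{\mathcal N}(A,B)$ into $A$ layer by layer. Your ``if'' direction does go through: transfer $x_{n-1},\dots,x_1$ into the available set in both $(A,B)$ and $(A,B\backslash x)$ using Axioms 3 and 2, then apply Axiom 3 to the link $x\mathcal N x_1$.

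\textbf{Step 6 is wrong as stated.} The identity $\rho(x_1|A,B)=\Phi_\rho(A\backslash x_1,Bx_1)-\Phi_\rho(A,B)$ for the $\succ$-top element is false. It would require that making $x_1$ unavailable leaves $\rho(x_k|\cdot)$ unchanged for every $k\ge 2$. Under your own definition, $x_1\succ x_k$ says exactly that this can fail; Axiom 5 only protects the \emph{better} alternative when a worse one is made unavailable. For example, take $A=\{x_1,x_2\}$, $B=\emptyset$, $\mathcal N=\mathcal X$. The left side is $\pi_{x_1}$, while the right side is $\mathring{\pi}_{x_2}-\mathring{\pi}_{x_1}\mathring{\pi}_{x_2}=\pi_{x_1}\mathring{\pi}_{x_2}$. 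You must peel from the bottom instead: $\rho(x_n|A,B)=\Phi_\rho(A\backslash x_n,Bx_n)-\Phi_\rho(A,B)$, then recurse on $(A\backslash x_n,Bx_n)$. This works because each $\rho(x_k|\cdot)$ with $k<n$ is invariant under that move. Equivalently, as in the paper, use $\sum_{t\le k}\rho(x_t|A,B)=1-\Phi_\rho(A_k,B_k)$ with $A_k=\{x_1,\dots,x_k\}$ and $B_k=B\cup\{x_{k+1},\dots,x_n\}$. Step 4 then shows that $\rho$ and the ABC generated by $(\pi,\mathcal N,\succ)$ solve the same system.

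\textbf{Transitivity of $\succ$ is not proved.} You flag it but do not resolve it. Binary menus (your fallback) cannot detect a cycle, and ``probabilities on three-element menus'' needs a concrete mechanism. The missing idea is an inclusion--exclusion contradiction (the paper's Lemma 1):
\begin{itemize}
\item Suppose $x\succ y\succ z\succ x$ and let $D=\{x,y,z\}$. By asymmetry (Axiom 5), in every menu $\rho(x|\cdot)$ is invariant when $y$ becomes unavailable, $\rho(y|\cdot)$ when $z$ does, and $\rho(z|\cdot)$ when $x$ does.
\item Hence $\rho(x|D,\emptyset)=\rho(x|\{x,z\},\{y\})=\Phi_\rho(\{z\},\{x,y\})-\Phi_\rho(\{x,z\},\{y\})$, and similarly for $y$ and $z$ under the cyclic relabeling.
\item Summing gives $\Phi_\rho(D,\emptyset)=1-\sum_{a\in D}\Phi_\rho(\{a\},D\backslash a)+\sum_{E\subseteq D,\,|E|=2}\Phi_\rho(E,D\backslash E)$.
\item Step 4 says $\Phi_\rho(E,D\backslash E)$ is the probability that the final consideration set $F=\mathcal N^+_D(C)$ misses $E$, for the random initial set $C\subseteq D$.
\item Inclusion--exclusion therefore yields the same right-hand side minus $\Pr(D\subseteq F)\ge\pi_x\pi_y\pi_z>0$, a contradiction.
\end{itemize}

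\textbf{Two smaller slips.}
\begin{itemize}
\item In Step 3's ``only if'' induction, removing $y$ when $A=\{y\}$ makes Axiom 4 return $x\xrightarrow{B}\{y\}$, which is the original statement, so the induction on $|AB|$ stalls. You must remove some $y\in B\backslash x$ in that case; the paper first shrinks $A$ to a singleton and then inducts on $|B|$.
\item In your necessity lemma, $H_{\mathcal N}(A,B\backslash x)\cup\{x\}\supseteq H_{\mathcal N}(A,B)$ is false in general, since alternatives that reach $A$ only through $x$ drop out. What you need, and what is true, is $H_{\mathcal N}(A,B\backslash x)\subseteq H_{\mathcal N}(A,B)$, with $x$ in the difference if and only if $x\in H_{\mathcal N}(A,B)$.
\end{itemize}
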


\smallskip 

\noindent \textbf{Identification of the parameters.} The identification of the attention probability function $\pi$ is the same as that in MM14. For every $x \in X$,   $\pi_x=\rho(x|\{x\},\emptyset)$.\footnote{Here we adopt the standard practice of identification in decision theory by exploiting variations in the set of available (observable) alternatives. We discuss the case with limited data in Section \ref{subsec:limit:data:observability} and Online Appendix OA-1. To identify attention probabilities in MM14, the empirical literature makes weaker assumptions by imposing exclusion restrictions \citep{ms2021designing} or utilizing asymmetric demand responses \citep{qje2021consumers}. It would be interesting to study whether similar techniques can be applied to identify parameters in our more general model, and we leave it for future research.}    

The preference ordering $\succ$ can be identified through the DM's choice probabilities in binary menus. Consider two alternatives $x$ and $y$ and assume $x \succ y$. Following the interpretation of Axiom 5, the choice probabilities of $x$ are the same in the two menus $(\{x,y\}, \emptyset)$ and $(\{x\}, \{y\})$. However, the choice probabilities of $y$ must differ in the two menus $(\{x,y\}, \emptyset)$ and $(\{y\}, \{x\})$: In the menu $(\{x,y\}, \emptyset)$, the choice probability of $y$ is less than or equal to the probability of the initial consideration set being $\{y\}$; in the menu $(\{y\}, \{x\})$, the choice probability of $y$ is greater than or equal to the probability of the initial consideration set being either $\{y\}$ or $\{x,y\}$. Since the probability of the initial consideration set being $\{x,y\}$ is positive, $y$ is chosen more frequently in the menu $(\{y\}, \{x\})$ than in the menu $(\{x,y\}, \emptyset)$. Therefore, $x \succ y$ if and only if $\rho(y|\{x,y\}, \emptyset) \neq \rho(y|\{y\}, \{x\})$. Note that the identification of the preference only relies on the assumption that for any given menu, every binary subset of it  has a positive probability of being the initial consideration set and not on the independence of the initial attention.

For the associative network $\mathcal{N}$, note that $x$ is associated with $y$ if and only if $\rho(x|\{x\}, \{y\}) \neq \rho(x|\{x\},\emptyset)$, i.e., removing the unavailable but observable alternative $y$ affects the choice probability of $x$. Again, the identification of the associative network relies on a weaker assumption than the independence of the initial attention; as we formally establish in Online Appendix OA-5, our general identification strategy for $\mathcal{N}$ remains valid under menu-dependent attention. To summarize, the tuple $(\pi, \mathcal{N}, \succ)$ can be uniquely identified based on choices made in menus $(A,B)$ with $|AB|\leq 2$.
\bigskip

\noindent \textbf{More on the default option.} In Online Appendix OA-6, we discuss a natural companion of our ABC model that does not postulate a default option. As in MM14 and \cite{ecta2016feasibility}, we no longer have unique identification once the default option is removed. However, we show that the associative network and the preference can still be uniquely identified under a mild condition on the choice rule. 

\bigskip 

\noindent \textbf{Proof sketch of Theorem \ref{thm:main}.}  We focus on the sufficiency part. With the identified parameters $(\pi, \mathcal{N}, \succ)$, we briefly demonstrate how our axioms lead to the desired representation.  In Step 1, we show that for each menu $(A, B)$ and alternative $x \in B$,  $x \xrightarrow[]{B} A$ implies that some alternative in $A$ is associated with $x$, i.e., $\mathcal{N}^+_{\!\scriptscriptstyle AB}(x) \cap A \neq \emptyset$. We further show that $\mathcal{N}(x) \cap A \neq \emptyset$ implies $x \xrightarrow[]{B} A$. We then focus on the subset $C \subseteq B$ of alternatives with which some alternative in $A$ is associated, i.e., $C=\{y \in B:\mathcal{N}^+_{\!\scriptscriptstyle AB}(y) \cap A \neq \emptyset\}$. By Axiom 2 and the definition of $\xrightarrow[]{B}$, we can inductively show that  $\Phi_{\rho}(A,B)=\Phi_{\rho}(AC, B\backslash C)=\Phi_{\rho}(AC,\emptyset)$. Note that the set $AC$ is equal to $H_{
\scriptscriptstyle\!\mathcal{N}}(A,B)$, and we can show that $\Phi_{\rho}(A,B)=\mathring{\pi}_{\scriptscriptstyle\! H_{\scriptscriptstyle\! \mathcal{N}}(A,B)}$.

In Step 2, we show that for any distinct $x,y \in X$, either $x \succ y$ or $y \succ x$, and that $\succ$ is asymmetric and transitive, i.e., it is indeed a preference ordering.\footnote{The binary relation $\succ$ is asymmetric if for all $x,y \in X$, $x \succ y$ implies not $y \succ x$, and is transitive if for all $x,y,z \in X$, $x \succ y$ and $y \succ z$ imply $x \succ z$.}  

The final step is to show that for any given menu, each alternative is chosen with the  probability specified by the model. To illustrate, we provide a simple example. Consider a menu $(\{x,y,z,w\}, \{r\})$ such that
\begin{equation*}
\begin{split}
& x \succ y \succ z \succ w \succ r ~~~\hbox{and} ~~~\mathcal{N}_{\scriptscriptstyle\!\{x,y,z,w,r\}} =\mathcal{X}_{\scriptscriptstyle\!\{x,y,z,w,r\}} \cup \{(w,y), (r,z)\}.
\end{split}
\end{equation*} Since $r\mathcal{N}z$, by Step 1, we have \begin{equation}\label{eq:thm1:eq1}
\begin{split}
\sum_{\hat{x} \in \{x,y,z,w\}}\rho(\hat{x}|\{x,y,z,w\}, \{r\}) & = 1-\Phi_{\rho}(\{x,y,z,w\}, \{r\}) \\
& = 1-\Phi_{\rho}(\{x,y,z,w, r\}, \emptyset) = 1-\mathring{\pi}_{\scriptscriptstyle\!\{x,y,z,w, r\}}. 
\end{split}
\end{equation}
Since for all $\hat{x} \in \{x,y,z\}$, $\hat{x} \succ w$, by the Association Asymmetry axiom (Axiom 5), making $w$  unavailable does not affect the choice probabilities of $x$, $y$ and $z$. Since $r\mathcal{N}z$ and $w\mathcal{N}y$, again by Step 1, we have
\begin{equation}\label{eq:thm1:eq2}
\begin{split}
\sum_{\hat{x} \in \{x,y,z\}}\rho(\hat{x}|\{x,y,z,w\}, \{r\}) & = \sum_{\hat{x} \in \{x,y,z\}}\rho(\hat{x}|\{x,y,z\}, \{w, r\})  \\ 
& =  1-\Phi_{\rho}(\{x,y,z\}, \{w, r\}) \\
& = 1-\Phi_{\rho}(\{x,y,z,w, r\}, \emptyset) =1- \mathring{\pi}_{\scriptscriptstyle\!\{x,y,z,w, r\}}. 
\end{split}
\end{equation}
Consecutively, we can move $z$ and $y$ to the unavailable but observable set and obtain 
\begin{equation}\label{eq:thm1:eq3}
\begin{split}
\sum_{\hat{x} \in \{x,y\}}\rho(\hat{x}|\{x,y,z,w\}, \{r\}) & = \sum_{\hat{x} \in \{x,y\}}\rho(\hat{x}|\{x,y\}, \{z, w, r\})  \\ 
& =  1-\Phi_{\rho}(\{x,y\}, \{z, w, r\}) \\
& = 1-\Phi_{\rho}(\{x,y, w\}, \emptyset) = 1-\mathring{\pi}_{\scriptscriptstyle\! \{x,y,w\}}. 
\end{split}
\end{equation} \begin{equation}\label{eq:thm1:eq4}
\begin{split}
\rho(x|\{x,y,z,w\}, \{r\}) & =  \rho(x|\{x\}, \{y, z, w, r\})  \\ 
& =  1-\Phi_{\rho}(\{x\}, \{y, z, w, r\}) \\
& = 1-\Phi_{\rho}(\{x\}, \emptyset) =1- \mathring{\pi}_{x}. 
\end{split}
\end{equation}
Note that equations (\ref{eq:thm1:eq1})-(\ref{eq:thm1:eq4}) pin down the choice probabilities of $x,y,z$ and $w$ in the menu $(\{x,y,z,w\}, \{r\})$, which are consistent with equation (\ref{eq:prop:reform:3}) in Proposition \ref{prop:reformulation}. 

\bigskip 

\noindent \textbf{Comparative Statics.} Our framework allows us to directly compare two DMs' associative networks without imposing any restriction on the alignment of their preferences or attention probabilities. The following proposition directly follows from the construction of $\mathcal{N}$, and its proof is omitted. 

\begin{proposition} \label{prop:comparative}

For any two ABCs $\rho_1$ and $\rho_2$ that are represented by $(\pi, \mathcal{N}, \succ)$ and $(\pi', \mathcal{W}, \succ')$ respectively,  the following  statements are equivalent:  \medskip 

\emph{(i)} For all $x \in X$ and $(A,B) \in \mathcal{E}$ with $x \notin AB$,  $$\Phi_{\scriptscriptstyle\!\rho_1}(A,B)\neq \Phi_{\scriptscriptstyle\!\rho_1}(A,Bx) \Rightarrow \Phi_{\scriptscriptstyle\!\rho_2}(A,B)\neq \Phi_{\scriptscriptstyle\!\rho_2}(A,Bx).$$  


\emph{(ii)} The associative network $\mathcal{N}$ is a subset of $\mathcal{W}$. 
\end{proposition}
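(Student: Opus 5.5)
The plan is to translate condition (i) into a statement about the networks using the formula $\Phi_{\rho}(A,B)=\mathring{\pi}_{H_{\mathcal{N}}(A,B)}$ from Proposition \ref{prop:reformulation}. The first step is a lemma: for an ABC represented by $(\pi,\mathcal{N},\succ)$ and $x\notin AB$, we have $\Phi_\rho(A,B)\neq\Phi_\rho(A,Bx)$ if and only if $\mathcal{N}^+_{ABx}(x)\cap A\neq\emptyset$.

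To prove the lemma, we compare $H_{\mathcal{N}}(A,Bx)$ with $H_{\mathcal{N}}(A,B)$. Adding the observable $x$ can only create new paths, so $H_{\mathcal{N}}(A,B)\subseteq H_{\mathcal{N}}(A,Bx)$. Moreover, $H_{\mathcal{N}}(A,Bx)$ contains $x$ exactly when $\mathcal{N}^+_{ABx}(x)\cap A\neq\emptyset$.

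If $x\notin H_{\mathcal{N}}(A,Bx)$, then no path from any $y$ to $A$ in $ABx$ passes through $x$. Such a path would continue from $x$ to $A$, which would put $x$ in $H$. Hence the two $H$ sets coincide and $\Phi$ is unchanged.

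If instead $x\in H_{\mathcal{N}}(A,Bx)$, then $H$ gains at least $x$. Since $\pi_x\in(0,1)$, the product $\mathring{\pi}_H$ strictly decreases, so $\Phi$ changes.

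With the lemma in hand, (ii)$\Rightarrow$(i) is immediate. A path from $x$ to $A$ in $\mathcal{N}_{ABx}$ is also a path in $\mathcal{W}_{ABx}$ because $\mathcal{N}\subseteq\mathcal{W}$. Applying the lemma to $\rho_2$ then shows that $\Phi_{\rho_2}$ changes.

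For (i)$\Rightarrow$(ii), take $(x,y)\in\mathcal{N}$ with $x\neq y$; reflexive pairs lie in both networks trivially. Consider the menu $(\{y\},\emptyset)$ and add $x$ as an observable alternative. The direct link $(x,y)\in\mathcal{N}$ gives $\mathcal{N}^+_{xy}(x)\ni y$, so $\Phi_{\rho_1}$ changes by the lemma. Condition (i) then says $\Phi_{\rho_2}$ changes as well, so by the lemma again $\mathcal{W}^+_{\{x,y\}}(x)\ni y$. On the two-point set $\{x,y\}$ the only possible path from $x$ to $y$ is the direct link, so $(x,y)\in\mathcal{W}$.

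The only delicate step is the lemma's claim that the two $H$ sets coincide when $x$ leads to no available alternative. It requires checking that $x$ cannot act as an intermediate node on a path to $A$ unless $x$ itself reaches $A$, and this holds because any path through $x$ contains a subpath starting at $x$ and ending in $A$.
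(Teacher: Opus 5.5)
Your proof is correct and is essentially the argument the paper has in mind when it says the result ``directly follows from the construction of $\mathcal{N}$''. Your lemma is Lemma~\ref{lm:necessary:arrow} applied to the menu $(A,Bx)$, proved the same way via $\Phi_\rho=\mathring{\pi}_{H_{\mathcal{N}}}$, and together with monotonicity of paths it gives (ii)$\Rightarrow$(i). Your (i)$\Rightarrow$(ii) step is exactly the binary-menu identification $x\mathcal{N}y \iff \rho(y|\{y\},\emptyset)\neq\rho(y|\{y\},\{x\})$ that the paper uses to define $\mathcal{N}$ in the proof of Theorem~\ref{thm:main}.
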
 


\bigskip

\noindent \textbf{Connection to other random attention models.} When all observable alternatives are available, our ABC model induces an attention rule that determines the probabilities of final consideration sets in each menu $(A,\emptyset)$. This makes it possible to compare our model with general random attention models in the literature. In Online Appendix OA-3, we characterize  this attention rule with two properties and show that  it satisfies monotone attention in \cite{jpe2020random} and violates attention overload in \cite{wp2021attention} generically. Moreover, thanks to the additional structure, our model primitives  are uniquely identified, while those in \cite{jpe2020random} and \cite{wp2021attention} remain only partially identified in general.\footnote{ To see this, note that the unique identification of parameters in our ABC model is shown in  Theorem \ref{thm:main}. In the random attention models of \cite{jpe2020random} and \cite{wp2021attention}, the identification of the preference is essentially based on regularity violations, and the revealed preference relation is not unique in general.  As discussed in \cite{jpe2020random}, examples with three alternatives can be constructed in which data cannot reveal the entire preference. For the same reason, the distribution over consideration sets is also only partially identified. We note that this comparison is not surprising because we impose parametric assumptions on the  formation of consideration sets, while \cite{jpe2020random} and \cite{wp2021attention} take a nonparametric approach.}


\bigskip

\noindent \textbf{Discussion of the data requirement.} We acknowledge that  axiomatization and identification of the ABC model demand a rich dataset involving unavailable but observable alternatives. Two cases of limited data are investigated: one in which the DM observes all alternatives with some of them potentially unavailable (Section \ref{subsec:limit:data:observability}), and one in which every observable alternative is available (Online Appendix OA-1). Given the difficulty in perfectly documenting all unavailable but observable alternatives, we examine the extension in which the DM may privately observe additional alternatives that are not recorded in the data in Section \ref{subsec:unobserve} and Online Appendix OA-2. 

There are recent developments in designing controlled choice experiments with unavailable but observable alternatives (``phantom options''). Human subjects are often presented with all observable alternatives, with certain options labeled as unavailable \citep{PW2007phantom, ee2021relevance} or   removed later in the experiment \citep{soltani2012range}. Designing analogous experiments for non-human subjects requires additional nuance; for example,  \cite{Science2015irrationality} study mate choice in T{\'u}ngara frogs using phantom options created by ceiling-placed speakers, and \cite{ac2011phantom} investigates food choice in cats using bowls sealed with transparent plastic lids. Such paradigms are promising for future empirical investigations of the ABC model. 

We also note that certain real-world marketplaces may provide suitable data to test our model. For instance, 
out-of-stock products, due to limited stock or flash sales,  remain physically observable in vending machines and online platforms  \citep{ms2019effectiveness, ms2021designing,conlon2013demand}. 
In online job boards, ``ghost'' jobs---job listings posted by employers with no immediate intent to hire---are prevalent.\footnote{For instance, a recent \href{https://www.bbc.com/news/articles/clyzvpp8g3vo}{report} from BBC cites a study showing that up to $22\%$ of jobs advertised online across the US, UK, and Germany  in 2024 were positions listed with no intent to hire.} 
In college admission, students select from the schools that have extended admission offers (available options), which is a subset of the schools to which they have applied (observable options).
Even though the system is decentralized, data may be available given the growing popularity of online platforms.\footnote{For example, a \href{https://www.commonapp.org/about/reports-and-insights/end-season-report-2024-2025-first-year-application-trends}{report} from Common App---a unified online platform covering over $1100$ universities and colleges---shows that around $1.5$ million students used it in the application cycle 2024-2025.}


\subsection{Characterization with Limited Data}

\label{subsec:limit:data:observability}

In many applications, we cannot observe the DM's choices for every possible menu. Instead, we only observe the random choice rule $\rho$ on a restricted collection of menus $\mathcal{E}' \subseteq \mathcal{E}$. In this section, we focus on the restriction
$\mathcal{E}^O := \{(A,B)\in\mathcal{E} : A\cup B = X\}$. That is, the DM always observes all alternatives, while only a subset of them is available. This setting captures situations such as choosing from a restaurant menu, choosing from a vending machine, or shopping online, where all products are displayed but some may be out of stock. 

We first introduce some notation and terminology. For any subset $A \subseteq X$, denote by $A^c$ the complement of $A$ in $X$, i.e., $A^c = X\backslash A$. For any nonempty $\hat{\mathcal{E}} \subseteq \mathcal{E}$, a $\hat{\mathcal{E}}$-choice rule $\rho$ is defined as a map $\rho: X \times \hat{\mathcal{E}} \rightarrow [0,1]$ such that for all $(A,B) \in \hat{\mathcal{E}}$, (i) $A \neq \emptyset$ implies $\sum_{x \in A} \rho(x|A,B) \in (0,1)$, and (ii) $\rho(x|A,B) > 0$ implies $x \in A$. The definition of $\Phi_{\rho}(A,B)$ remains unchanged. We have the following definition.

\begin{definition} \label{def:fullob}
A $\hat{\mathcal{E}}$-choice rule $\rho$ is a $\hat{\mathcal{E}}$-ABC if there exists a tuple $(\pi, \mathcal{N}, \succ)$, where $\pi$ is an attention probability function, $\mathcal{N}$ is an associative network, and $\succ$ is a preference ordering, such that for all $(A,B) \in \hat{\mathcal{E}}$ and $x \in A$, equation (\ref{eq:def:ABC}) holds. The tuple $(\pi, \mathcal{N}, \succ)$ is said to represent $\rho$ as a $\hat{\mathcal{E}}$-ABC. Furthermore, if $\mathcal{N}$ is transitive, then $\rho$ is said to be a transitive $\hat{\mathcal{E}}$-ABC, and the tuple $(\pi, \mathcal{N}, \succ)$ is said to represent $\rho$ as a transitive $\hat{\mathcal{E}}$-ABC. 
\end{definition}

Note that when the restricted domain is $\mathcal{E}^O$, the DM always observes all alternatives in $X$. Consequently, for any $x,y,z \in X$, if $z$ is associated with $y$ and $y$ is associated with $x$, then initial consideration of $x$ leads to the consideration of $z$. Hence, on the restricted domain $\mathcal{E}^O$, it is behaviorally impossible to distinguish whether $z$ is directly associated with $x$ or only indirectly through $y$, and what matters for the DM's choices is the transitive closure $\mathcal{N}^+$ of the associative network $\mathcal{N}$. In other words, the ABCs represented by $(\pi,\mathcal{N},\succ)$ and $(\pi,\mathcal{N}^+,\succ)$ coincide on $\mathcal{E}^O$. For this reason, we introduce transitive $\mathcal{E}^O$-ABCs in Definition \ref{def:fullob} and focus on them when discussing the uniqueness properties of the model. We begin by characterizing $\mathcal{E}^O$-ABCs below.

\begin{definition}\label{def:asso:proof}
A set $A \in \mathcal{M}$ is association-proof if for every $x \in A^c$, $\Phi_{\rho}(A,A^c)\neq \Phi_{\rho}(Ax,A^c\backslash x)$.    
\end{definition}

If $A$ is association-proof, then no alternative $x$ outside $A$ can direct the DM's attention to $A$. Otherwise, merely observing such an alternative $x$ would already block the choice of the default option, and making it further available would not change the default's choice probability.

\bigskip

\noindent \textbf{Axiom L1\textemdash Association-proof Independence:}  For any association-proof subsets $A,\hat{A}, B, \hat{B} \in \mathcal{M}$, if $\hat{A} \subseteq A$, $\hat{B} \subseteq B$, and $A\backslash \hat{A}= B\backslash \hat{B}$, then $$\frac{\Phi_{\rho}(A,A^c)}{\Phi_{\rho}(\hat{A},\hat{A}^c)} = \frac{\Phi_{\rho}(B,B^c)}{\Phi_{\rho}(\hat{B},\hat{B}^c)}.$$

When $A$ is association-proof, whether the DM chooses the default option from the menu $(A,A^c)$ depends entirely on whether she initially pays attention to some alternative in $A$, and not on her initial attention to alternatives in $A^c$. Hence, Axiom L1 admits an interpretation analogous to that of Axiom 1: The DM has a constant probability of initially considering a given alternative.
\bigskip

\noindent \textbf{Axiom L2\textemdash Monotonicity:} For all $A,B \in \mathcal{M}$ with $A \subseteq B$, $\Phi_{\rho}(A,A^c) \ge \Phi_{\rho}(B,B^c)$. \bigskip

Axiom L2 states that when the set of available alternatives expands, the choice probability of the default option weakly declines. With more alternatives available, the DM is more likely to end up paying attention to at least one available alternative.
\bigskip

\noindent \textbf{Axiom L3\textemdash Decomposability:} For all $A,B,D \in \mathcal{M}$ and $x \in A^c$ with $A=BD$, $$\Phi_{\rho}(A,A^c) \neq \Phi_{\rho}(Ax,A^c\backslash x) \Longleftrightarrow   \Phi_{\rho}(B,B^c) \neq \Phi_{\rho}(Bx,B^c\backslash x) \text{~and~} \Phi_{\rho}(D,D^c) \neq \Phi_{\rho}(Dx,D^c\backslash x).$$

Axiom L3 can be interpreted as capturing the combined logic of Axioms 3 and 4 on the restricted domain $\mathcal{E}^O$. To see this, first note that if $\Phi_{\rho}(A,A^c) \neq \Phi_{\rho}(Ax,A^c \backslash x)$, then initial attention to $x$ cannot lead the DM to pay attention to alternatives in $A$. Consequently, initial attention to $x$ cannot lead the DM to pay attention to alternatives in the smaller sets $B$ and $D$. This corresponds to the logic of Axiom 3. 

Conversely, if $\Phi_{\rho}(A,A^c) = \Phi_{\rho}(Ax,A^c \backslash x)$, then initial attention to $x$ can lead the DM to pay attention to some alternative in $A$---say $y$---via an association path in $A^c$. The same path is also feasible in the larger sets $B^c$ and $D^c$, and $y$ must belong to either $B$ or $D$ since $A=BD$. Therefore, attention to $x$ leads the DM to further consider an available alternative in at least one of the two menus $(B,B^c)$ or $(D,D^c)$, implying that either $\Phi_{\rho}(B,B^c) = \Phi_{\rho}(Bx,B^c \backslash x)$ or $\Phi_{\rho}(D,D^c) = \Phi_{\rho}(Dx,D^c \backslash x)$. This corresponds to the logic of Axiom 4.
\medskip

The next axiom directly weakens Axiom 5 by restricting its statement to $\mathcal{E}^O$. 
\bigskip

\noindent \textbf{Axiom L4\textemdash Weak Association Asymmetry:} For all $x,y \in X$ and $A,B \in \mathcal{M}$ with $x \neq y$ and $x,y \in A\cap B$, $$\rho(y|A,A^c) \neq \rho(y|A\backslash x, (A\backslash x)^c) \Rightarrow \rho(x|B,B^c) = \rho(x|B\backslash y, (B\backslash y)^c).$$ 
 
Axioms L1-L4 fully characterize $\mathcal{E}^O$-ABCs. We state this result, together with the corresponding uniqueness result, in the following theorem.

\begin{theorem} \label{thm:main:limdata}
A $\mathcal{E}^O$-choice rule is a $\mathcal{E}^O$-ABC if and only if it satisfies Axioms L1-L4. Furthermore, every $\mathcal{E}^O$-ABC is a transitive $\mathcal{E}^O$-ABC. Both tuples $(\pi, \mathcal{N}, \succ)$ and $(\pi', \mathcal{N}', \succ')$ represent $\rho$ as a transitive $\mathcal{E}^O$-ABC  if and only if $\mathcal{N}'=\mathcal{N}$, $\succ' = \succ$, and for all $x \in X$, $$\prod\limits_{y: \,y\mathcal{I}x} (1-\pi_y)= \prod\limits_{y: \,y\mathcal{I}x} (1-\pi'_y),$$ where $\mathcal{I}$ is the symmetric part of $\mathcal{N}$.\footnote{Recall that 
the symmetric part $\mathcal{I}$ of $\mathcal{N}$ is defined such that $(x,y)\in\mathcal{I}$ if and only if $(x,y),(y,x)\in \mathcal{N}$.}  \end{theorem}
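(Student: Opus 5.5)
The plan is to reduce everything to one object: the default probability on full-observability menus. Write $\Phi(A):=\Phi_\rho(A,A^c)$. For $x\notin A$, say that $x$ \emph{reaches} $A$ if $\Phi(A)=\Phi(Ax)$, and by convention every $x\in A$ reaches $A$.

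\textbf{Necessity.} For an ABC on $\mathcal{E}^O$ the final consideration set from an initial set $C$ is $\mathcal{N}^+(C)$. By Proposition \ref{prop:reformulation}, $\Phi(A)=\mathring{\pi}_{H(A)}$ with $H(A)=\{x:\mathcal{N}^+(x)\cap A\neq\emptyset\}$.

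I would first check that $x\notin A$ reaches $A$ exactly when $x\in H(A)$.
\begin{itemize}
\item If $x\in H(A)$, then $H(Ax)=H(A)$, so $\Phi(Ax)=\Phi(A)$.
\item If $x\notin H(A)$, then $x\in H(Ax)\setminus H(A)$, and $\pi_x>0$ gives $\Phi(Ax)<\Phi(A)$.
\end{itemize}

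With this, the axioms follow.
\begin{itemize}
\item L1: $A$ is association-proof iff $H(A)=A$. For such sets both ratios equal $\mathring{\pi}_{A\setminus\hat A}$.
\item L2: $H$ is monotone in $A$.
\item L3: $H(BD)=H(B)\cup H(D)$.
\item L4: the distribution of final consideration sets does not depend on $A$ within $\mathcal{E}^O$. The MM14 asymmetry argument (equivalently, Axiom 5 restricted to $\mathcal{E}^O$) then applies.
\end{itemize}

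\textbf{Sufficiency, network.} Define $x\,\mathcal{N}\,y$ iff $x$ reaches $\{y\}$; this relation is reflexive.

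Step 1 uses L3 and induction on $|A|$ to show that $x$ reaches $A$ iff $x$ reaches some $y\in A$.

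Step 2 shows that $\mathcal{N}$ is transitive, using L2. Suppose $y$ reaches $x$ and $x$ reaches $A$, with $y\notin A$. By Step 1 $y$ reaches $Ax$, so
\[
\Phi(A)=\Phi(Ax)=\Phi(Axy)\le\Phi(Ay)\le\Phi(A),
\]
and hence $y$ reaches $A$. The same sandwich argument, iterated, shows two further facts:
\begin{itemize}
\item $\Phi(A)=\Phi(H(A))$, where $H(A)$ is now the set of all elements reaching $A$;
\item $H(A)$ is association-proof.
\end{itemize}
Step 1 gives $H(A)=\{x:\mathcal{N}(x)\cap A\neq\emptyset\}$, which matches the model.

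\textbf{Sufficiency, attention.} Association-proof sets are exactly the $\mathcal{N}$-predecessor-closed sets. They form a lattice closed under union and intersection. L1 says that $\Phi$ restricted to this lattice has ratios depending only on set differences. Using chains of predecessor-closed sets that add one $\mathcal{I}$-class at a time, I will show that $\Phi$ is a product of class weights $w_K\in(0,1)$: L2 together with the definition of reaching gives strict decrease, so each $w_K<1$. Then I choose any $\pi$ with $\prod_{y\in K}(1-\pi_y)=w_K$ on each $\mathcal{I}$-class $K$. This gives $\Phi(A)=\mathring{\pi}_{H(A)}$ for every $A$.

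\textbf{Sufficiency, preference and choice.} Define $x\succ y$ iff $\rho(y|A,A^c)\neq\rho(y|A\setminus x,(A\setminus x)^c)$ for some $A$. Totality and transitivity come from L4 together with the default formula, following Step 2 of Theorem \ref{thm:main}. Choice probabilities are then pinned down by moving the worst available alternatives into the unavailable set one at a time, exactly as in equations (\ref{eq:thm1:eq1})--(\ref{eq:thm1:eq4}). L4 keeps the probabilities of better alternatives unchanged at each move, and the cumulative sums equal $1-\mathring{\pi}_{H(\cdot)}$, which yields (\ref{eq:prop:reform:3}).

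\textbf{Uniqueness.} The relation $\mathcal{N}^+$ is read off from "reaches", and $\succ$ from L4-type differences. The class products $\prod_{y\mathcal{I}x}(1-\pi_y)$ are ratios of $\Phi$ on association-proof sets, so they are identified. Individual $\pi_y$ within a class never enter $\Phi$ separately, because $H(A)$ always contains whole classes, so they are not identified.

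\textbf{Main obstacle.} I expect the hardest step to be the attention step: turning the ratio condition L1, which is available only on predecessor-closed sets, into a genuine product form with well-defined class weights that are independent of the chain used. The approach I would try first is to show path-independence by comparing any two chains through their common refinement under union and intersection, with L1 applied to each pair of consecutive sets.
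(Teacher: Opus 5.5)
Your proposal is correct and follows essentially the same route as the paper. The shared steps are: the same definition of $\mathcal{N}$; the same L3-plus-L2 sandwich for transitivity and for $\Phi(A)=\Phi(H(A))$; L1 on association-proof (predecessor-closed) sets to obtain the class products; the preference read off from L4-type differences and made a total order via Lemma \ref{lm:general:preference}; and the cumulative-sum argument for the choice probabilities. The ``main obstacle'' you anticipate does not actually arise: L1 already applies to every nested pair of association-proof sets with the same difference, so each class weight is well defined independently of the base set, and the telescoping product along any one-class-at-a-time chain (which the paper builds by repeatedly peeling off a class with no outgoing link inside the current set) automatically equals the product of the class weights.
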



By the uniqueness result in Theorem \ref{thm:main:limdata}, we can uniquely identify the transitive associative network $\mathcal{N}$ and the DM's preference $\succ$ on the restricted domain $\mathcal{E}^O$. To see this, note that the preference ranking between two alternatives $x$ and $y$ can be recovered as in the full-domain case. Specifically, for any distinct $x,y \in X$, one can show that 
$\rho(y|\{y\},\{y\}^c) \neq \rho(y|\{x,y\},\{x,y\}^c)$ if and only if $x \succ y$. This is because when $y$ is the only available alternative, any initial consideration set that leads to the final consideration of $y$ results in choosing $y$, whereas when $x$ is also available, we must exclude at least the event that the initial consideration set includes both $x$ and $y$, which occurs with positive probability. For the transitive associative network $\mathcal{N}$, we have $x\mathcal{N}y$ if and only if the choice probabilities of the default option are identical in the two menus $(\{x,y\},\{x,y\}^c)$ and $(\{y\},\{y\}^c)$.

The attention probability function $\pi$ is identified only up to the symmetric part of $\mathcal{N}$. Intuitively, if both $x\mathcal{N}y$ and $y\mathcal{N}x$ hold, then attending to either $x$ or $y$ leads the DM to consider both options. Because all alternatives are always observable, one cannot distinguish whether the final consideration of $x$ originates from initial attention to $x$ or to $y$. Hence, initial attention probabilities for $x$ and $y$ can only be jointly identified.

\subsection{Connection to Empirical Findings}

\label{subsec:empirical}

This section explores how our choice model connects to empirical findings and discusses its comparative advantages relative to existing models. 

First, we demonstrate that our choice model can provide a unified accommodation of  well-documented choice anomalies. Suppose $X\subseteq \mathbb{R}^2_+$, where each alternative $x=(x_1,x_2)$ is identified with two attributes.\footnote{The analysis extends straightforwardly to settings with more than two attributes.} We say that $x$ \textit{dominates} $y$ if $x \geq y$ and $x \neq y$. When $x$ dominates $y$, we assume $x\succ y$, i.e., the DM strictly prefers a dominating alternative to a dominated one.  We begin with three choice anomalies concerning the effects that arise when an option $z$ is added to a binary menu $\{x,y\}$ with $z$ being dominated by (or dominating) $x$ but not $y$:

\begin{enumerate}
\item[(1)] \textit{the attraction effect}, where the availability of the dominated option $z$ increases the choice probability of the dominating option $x$ \citep{huber1982adding, highhouse1996phantom, PW2000phantom}, 
\item[(2)] \textit{the phantom attraction effect}, which is similar to the attraction effect except that $z$ is unavailable but observable \citep{PW2007phantom, soltani2012range, ms2016choosing}, and
\item[(3)] \textit{the aspiration effect}, where the presence of an unavailable but observable option $z$ increases the choice probability of option $x$ that is  similar to and dominated  by option $z$ \citep{BASS1992phantom,MS1993phantom,highhouse1996phantom,PW2007phantom, trueblood2017phantom}. 
\end{enumerate}

To see how our model accounts for the attraction effect and the phantom attraction effect, suppose that $x\succ y$ and $x$ dominates $z$.\footnote{The same argument applies when $y\succ x$ and $x$ dominates $z$. In fact, by choosing $\pi$ appropriately, our model can accommodate choice reversals in which $y$ is chosen more frequently than $x$ from $(\{x,y\}, \emptyset)$, but less frequently than $x$ once $z$ becomes observable, regardless of whether $z$ is available or not.} We assume that the dominance relation is salient such that the only mental association among the three alternatives is that attention to either \( x \) or \( z \) prompts the DM to consider both of them. In our model, the probability of choosing $x$ in the menu \( (\{x,y\}, \emptyset) \) is $\pi_x$, whereas in both menus \( (\{x,y,z\}, \emptyset) \) and \( (\{x,y\}, \{z\}) \), the probability increases to \( 1 - \mathring{\pi}_{\{x,z\}} \). Thus, adding \( z \) to the menu increases the choice probability of \( x \). Moreover, when $z$ is unavailable but observable (i.e., the menu is  \( (\{x,y\}, \{z\}) \)), the choice probability of $x$ increases regardless of whether $z$ dominates $x$ or is dominated by $x$, due to the mental association. The former case accommodates the aspiration effect. Notably, \cite{highhouse1996phantom} finds no statistically significant difference in the magnitudes of the attraction and aspiration effects, which our model can potentially explain by the \textit{bidirectional} nature of the associative link between the dominated and the dominating alternatives.

To the best of our knowledge, existing models face significant challenges in accommodating the three aforementioned anomalies simultaneously. For instance, the Random Utility Model (RUM) satisfies regularity, which precludes the possibility that the choice probability of alternative 
$x$
 increases when the menu is expanded.\footnote{A choice rule is called a RUM if there exists a distribution $\mu$ over preference orderings such that, for every menu and every alternative therein, the alternative's choice probability equals the probability (under $\mu$) that the realized preference ranks it best. A choice rule $\rho$ satisfies regularity if, for any pair of menus $(A,\emptyset)$ and $(B,\emptyset)$ with $A \subseteq B$, we have $\rho(x|A,\emptyset) \ge \rho(x|B,\emptyset)$ for all $x \in A$.}
As discussed in Section \ref{sec:literature}, a substantial body of work on context-dependent choice models is able to account for phenomena such as the attraction effect; however, these models generally do not explicitly address the effect of unavailable but observable alternatives. Two recent exceptions are presented in \cite{jet2018aspiration} and \cite{jpe2019learning}. In the model proposed by \cite{jet2018aspiration}, the DM identifies the most preferred observable option as an aspiration. If the aspiration is available, the DM selects it; if not, the DM chooses the most similar available alternative. While this model can explain the aspiration effect, it does not account for the (phantom) attraction effect, as it predicts that the introduction of a dominated option has no impact on choice behavior.

By comparison, \cite{jpe2019learning} introduces a Bayesian probit model in which the DM perceives each choice problem as a collection of options whose values are independently drawn ex ante from a common Gaussian distribution. The DM can glean additional information from all observable alternatives, regardless of their availability.  When $x$ dominates $z$, the addition of $z$ to the menu provides information that favors the dominating option $x$, thereby increasing its choice probability---a result consistent with the (phantom) attraction effect. When $z$ dominates $x$, the same mechanism implies that including $z$ dissuades choice of the dominated option $x$, thus decreasing its choice probability. This prediction, however, runs counter to the aspiration effect.\footnote{To better illustrate this, consider the leading example in \cite{jpe2019learning}  where the DM's prior belief assigns probability $1/6$ to each strict ranking involving $x,y$, and $z$. Suppose that she learns that the event $z\succ x$ occurred and nothing else. Updating the prior with this information will result in assigning a probability of $2/3$ to the event that $y\succ x$, the opposite of what the aspiration effect would predict.}

When the formation of associative links is not solely determined by the dominance relation, ABCs can also accommodate other well-documented menu effects. Recall the previous example in which $x\succ y$ and $x$ dominates $z$. We have assumed an associative link from $z$ to $x$, but it is also behaviorally plausible for the link to be from $z$ to $y$ instead when $z$ and $x$ are similar, because the presence of $z$ may taint the nearby region of the attribute space \citep{simonson2014vices} and prompt the DM's attention to the more distant option $y$. The corresponding ABC model predicts that the choice probability of $y$ increases after the addition of $z$.\footnote{The probability of choosing $y$ in the menu \( (\{x,y\}, \emptyset) \) is $\mathring{\pi}_{x}\pi_y$, whereas in the menu \( (\{x,y,z\}, \emptyset) \), this probability increases to $\mathring{\pi}_{x}(1 - \mathring{\pi}_{\{y,z\}})$.} This prediction accounts for the recent robust evidence of the \textit{repulsion effect} \citep{aaker1991negative,frederick2014limits,spektor2018good}, as well as its phantom version.\footnote{The repulsion effect refers to situations in which adding a dominated option to the menu increases the choice probability of the non-dominating option  rather than that of the dominating one.} Our choice model can also generate behavior consistent with the \textit{(phantom) compromise effect} \citep{simonson1989choice,Science2015irrationality} when the DM associates the compromise option with the extreme options but not vice versa.\footnote{The compromise effect refers to situations in which an option's choice probability increases when the choice set expands so that the option becomes an intermediate choice between more extreme options.}


Second, our model can capture \textit{diminishing menu effects}: If an alternative $y$ boosts the choice probability of $x$ in a smaller menu, then this effect weakens when more alternatives are included. Formally, for all menus $(A,\emptyset)$ and $(B,\emptyset)$ with $A \subseteq B$, and all $x \in A$ and $y \in X\backslash B$, the following condition holds when the associative network $\mathcal{N}$ is transitive: 
\begin{equation*}
\rho(x|Ay, \emptyset) > \rho(x|A, \emptyset) \Rightarrow \rho(x|By, \emptyset) - \rho(x|B, \emptyset) \le \rho(x|Ay, \emptyset) - \rho(x|A, \emptyset). 
\end{equation*}

To see this, consider two distinct alternatives \( x \) and \( y \), and two menus \( (A, \emptyset) \) and \( (B, \emptyset) \) such that \( x \in A \subseteq B \) and \( y \notin B \). Suppose adding \( y \) to \( (A, \emptyset) \) increases the choice probability of \( x \), i.e.,
$$ \rho(x|A, \emptyset) < \rho(x|Ay, \emptyset).$$
Under our model, this implies that attention to \( y \) prompts attention to \( x \) in \( (Ay, \emptyset) \), and the same applies in the larger menu \( (By, \emptyset) \). The increase in choice probability, \( \rho(x|Ay, \emptyset) - \rho(x|A, \emptyset) \), equals $\pi_y$---the probability that \( y \) is initially considered---multiplied by the probability that neither \( x \) nor any better alternative is finally considered in \( (A, \emptyset) \). In the larger menu, \( \rho(x|By, \emptyset) - \rho(x|B, \emptyset) \) is \textit{at most equal to} $\pi_y$ multiplied by the probability that neither \( x \) nor any better alternative is finally considered in \( (B, \emptyset) \). Since the probability that neither \( x \) nor any better alternatives are considered weakly decreases with menu size, we have:
\[ \rho(x|Ay, \emptyset) - \rho(x|A, \emptyset) \geq \rho(x|By, \emptyset) - \rho(x|B, \emptyset). \]
Thus, our model aligns with empirical and experimental evidence showing that menu effects diminish in larger or more complex menus \citep{lee2017influence, park2022more, stanley2024impact}. 

Lastly, the way we incorporate association into consideration is consistent with the \textit{spillover effect} found in \cite{ms2019effectiveness, ms2021designing}. Using a field experiment with vending machines, they demonstrate that recommendations have strong positive spillover effects on the sales of non-recommended products in the menu. One potential mechanism is that recommendations could enhance consumers' attention to the recommended products, thereby increasing their attention to other associated products in the menu. This is empirically validated by the structural estimation of \cite{ms2021designing}. Although we do not model recommendations explicitly, our associative network effectively captures such spillover of attention.




\section{Extensions}

\label{sec:extension}

In this section, we study three natural extensions of the ABC model:
(1) preferences may be random; (2) the association process may depend on the DM’s preference; and (3) the DM’s set of observable alternatives may be imperfectly recorded in the data.\footnote{We discuss additional extensions---such as random networks and general models of initial attention---in the Online Appendix.} For each extension, our primary focus is the identification property of the model. 

\subsection{Random Preference}

\label{subsec:rp}

In this section, we extend the ABC framework to accommodate random preferences. This generalization is particularly useful for modeling preference uncertainty at the individual level or heterogeneous demand in a market environment. In the extended model,  both the attention probability function $\pi$ and the association network $\mathcal{N}$ remain uniquely identified, and our focus is to show that the distribution of preferences can be separated from the attention and association channels. In other words, our framework neither complicates nor simplifies the identification of the DM's random preferences, and RUMs with strong identification properties can be freely embedded into our framework without sacrificing  their identification advantages.
 

Formally, let $\mathscr{P}$ be the set of all possible preference orderings over $X$. A \textit{random preference distribution} is a probability distribution $\tau$ over  $\mathscr{P}$. 
We say that a random choice rule $\rho$ is a \textbf{R}andom \textbf{U}tility \textbf{M}aximizer with \textbf{A}ssociation \textbf{B}ased  \textbf{C}onsideration rule (RUMABC) if there exists a tuple $(\pi, \mathcal{N}, \tau)$, where $\pi$ is an attention probability function, $\mathcal{N}$ is an associative network, and $\tau$ is a random preference distribution, such that for all  $(A,B) \in \mathcal{E}$ and $x \in A$, \begin{equation}\label{eq:def:RUMABC}
\rho(x|A,B) = \sum_{\succ \in \mathscr{P}} \tau(\succ) \left( \sum_{C \subseteq AB:\, x=\max (\mathcal{N}_{\!\scriptscriptstyle AB}^+(C) \cap A; \,\succ) }  \pi_{C} \mathring{\pi}_{(AB)\backslash C} \right). 
\end{equation} The tuple $(\pi, \mathcal{N}, \tau)$ is said to represent $\rho$ as a RUMABC.

For a given RUMABC, the unique identification of the attention probability function $\pi$ and the associative network $\mathcal{N}$ proceeds as in the ABC model. To identify the random preference distribution $\tau$, note that our framework allows us to recover, for each alternative, its probability of being the best within any given set of available alternatives. Specifically, for each nonempty $A\in\mathcal{M}$ and each $x\in A$, we can pin down the probability that $x$ is the best alternative in $A$ under $\tau$, i.e., we can identify
$$Z(x,A):=\sum_{\succ\in\mathscr{P}:\, x=\max(A;\succ)} \tau(\succ).$$  

To see how $Z(\cdot,\cdot)$ can be obtained, note that once $\pi$ and $\mathcal{N}$ are identified, we can recover the distribution of the DM's final consideration sets for each menu. Specifically, for each $A\in\mathcal{M}$ and each $B\subseteq A$, the probability that $B$ is the final consideration set in the menu $(A,\emptyset)$ is $$ \mathcal{F}(B|A):= \sum_{C \subseteq A: \mathcal{N}_{\!\scriptscriptstyle A}^+(C)= B} \pi_{C}\mathring{\pi}_{A\backslash C}.$$ Given $\mathcal{F}(\cdot|\cdot)$, we can derive $Z(\cdot,\cdot)$ by induction. First, for each $A\in\mathcal{M}$ with $|A|=1$, we have $Z(x,A)=1$ for all $x\in A$. Next, fix $n\in\mathbb{N}_+$ and suppose that $Z(x,A)$ has been obtained for all $A\in\mathcal{M}$ and $x\in A$ with $|A|\le n$. Finally, take any $A\in\mathcal{M}$ and $x\in A$ with $|A|=n+1$. Since $$\rho(x|A,\emptyset) = \sum_{B \subseteq A:\, x \in B} \mathcal{F}(B|A) Z(x,B),$$ it follows that
$$Z(x,A)=\frac{\rho(x|A,\emptyset) - \left(\sum\limits_{B \subsetneq A:\, x \in B} \mathcal{F}(B|A) Z(x, B)\right)}{\mathcal{F}(A|A)}.$$ Therefore, $Z(\cdot,\cdot)$ is uniquely pinned down. 

It is well known that, in general, the random preference distribution that generates a given $Z(\cdot,\cdot)$ is only \textit{partially} identified via the Block-Marschak polynomials \citep{ecta1960rum}. In particular, we can find $Z(\cdot,\cdot)$ that is generated by two distinct random preference distributions.\footnote{More specifically, given $Z(\cdot,\cdot)$, if it is induced by some random preference distribution $\tau$, then for every $x\in X$ and every partition $\{D,D'\}$ of $X\backslash\{x\}$, one can uniquely identify the probability that $x$ is ranked above all alternatives in $D$ and below all alternatives in $D'$. Any other random preference distribution $\tau'$ that induces the same collection of such probabilities also generates $Z(\cdot,\cdot)$.}  Nevertheless, because our framework uniquely identifies $Z(\cdot,\cdot)$, it does not weaken the identification of the random preference distribution $\tau$: The extent of identification for $\tau$ is exactly the same as in the corresponding RUM, where the DM always pays attention to all available alternatives.\footnote{A similar observation is made in \cite{wp2024recommendation}.}

Notably, for structured RUMs with stronger identification properties, those advantages are preserved in our framework. Prominent examples include Luce's model \citep{lu59} and the nested logit model \citep{ben1973structure,mcfadden1977modelling}, both of which can be embedded into our framework without sacrificing their unique identification.  Specifically, in Luce's model, the parameter is a utility function $u:X\to\mathbb{R}_{++}$ such that for any set $A$ of available alternatives, the choice probability of $x \in A$ is $$Z(x,A)=\frac{u(x)}{\sum_{y\in A}u(y)}.$$ Such $Z(\cdot,\cdot)$ can be generated by a random preference distribution, and given $Z(\cdot,\cdot)$, the parameter $u$ is uniquely identified up to rescaling. In the nested logit model, the parameters include a partition $\{X_k\}_{k=1}^n$ of $X$, a utility function $u:X\to\mathbb{R}_{++}$, and $\{\beta_k\}_{k=1}^n\subseteq\mathbb{R}_{++}$ such that when the available set is $A$, the probability of choosing $x\in A\cap X_t$ is
$$
Z(x,A)=\frac{u(x)}{\sum_{y\in A\cap X_t}u(y)}
\cdot
\frac{\left(\sum_{z\in A\cap X_t}u(z)\right)^{\beta_t}}{\sum_{k=1}^n\left(\sum_{w\in A\cap X_k}u(w)\right)^{\beta_k}}.
$$
The literature typically assumes $\beta_k\le 1$ for each $k$; this condition ensures that $Z(\cdot,\cdot)$ can be generated by some random preference distribution. The identification of the nested logit model---and its generalization---is established in \cite{jpe2022logit}.\footnote{Unique identification also holds for the single-crossing RUMs \citep{ecta2017singlecrossing, aer2021discrete} and other subclasses of random preferences \citep{jet2024branching}.}   

\subsection{Preference-Dependent Association}

\label{subsec:preference:dependent}

A key feature of our baseline model is that the association process operates independently of the preference. In this section, we introduce an extension that incorporates both associative networks and preferences into the formation of the final consideration set. 

We begin with some notation. For any $k \in \mathbb{N}_{+}$, $A \in \mathcal{M}$, and preference ordering $\succ$ on $X$, let $\max[k](A;\succ)$ denote the set of the top $k$ alternatives in $A$ according to $\succ$. Specifically, if $|A|\le k$, then $\max[k](A;\succ)=A$, and if $|A|>k$, then $\max[k](A;\succ)$ is the unique subset of $A$ of cardinality $k$ such that $x\succ y$ for  all $x\in \max[k](A;\succ)$ and $y\in A\backslash \max[k](A;\succ)$. 

We consider a DM who expands her consideration set by forming associations only from the current top $k$ alternatives. This captures limited cognitive effort: The DM restricts mental association to the most promising options rather than processing the entire set. Formally, let $\mathcal{N}$ and $\succ$ denote the DM's associative network and preference ordering, respectively, and fix a  menu $(A,B)$. Starting from an initial consideration set $C_0$, the DM iteratively expands her consideration set as follows. In each round $t\ge 0$, given $C_t$, the next-round consideration set is
$C_{t+1} = C_t \cup \mathcal{N}_{AB}(\max[k](C_t; \succ)).$ This procedure terminates in finitely many rounds and results in the  final consideration set denoted by $\mathcal{N}^+_{AB}(C_0|k,\succ)$. The DM then chooses the $\succ$-maximal alternative from $\mathcal{N}^+_{AB}(C_0|k,\succ) \cap A$, provided that the set is not empty.

\begin{definition}
A choice rule $\rho$ is a top-$k$ \textbf{a}ssociation-\textbf{b}ased \textbf{c}onsideration rule \emph{(}k-ABC\emph{)} if there is a tuple $(\pi, \mathcal{N}, \succ)$, where  $\pi$ is an attention probability function, $\mathcal{N}$ is an associative network, and $\succ$ is a preference ordering, such that for all $(A,B) \in \mathcal{E}$ and all $x \in A$, $$\rho(x|A,B)= \sum_{C \subseteq AB: \, x=\max(\mathcal{N}^{+}_{AB}(C|k, \succ)\cap A; \succ )}  \pi_{C} \mathring{\pi}_{(A B)\backslash C} .$$ The rule $\rho$ is said to be represented by $(\pi, \mathcal{N}, \succ)$ as a $k$-ABC. 
\end{definition}

The class of ABCs can be viewed as a special case of $k$-ABCs when $k$ is sufficiently large---specifically, when $k \ge |X|$---so that $\max[k](A; \succ) = A$ for all $A \in \mathcal{M}$. At the other extreme, when 
$k=1$, the DM expands her consideration set by including only those alternatives associated with the best option currently under consideration. This procedure  corresponds to the Markovian choice by iterative search (CIS) model studied by \cite{te2013search} in the deterministic setting.


This generalized model accommodates greater heterogeneity in the formation of consideration sets across DMs. To illustrate, let $X=\{x,y,z\}$ and consider two DMs who share the same associative network
$\mathcal{N}=\mathcal{X}\cup\{(x,y),(y,x),(y,z),(z,y)\}.$ DM1's preference is $z \succ_1 y \succ_1 x$, while DM2's preference is $z \succ_2 x \succ_2 y$. Suppose their choice rules $\rho_1$ and $\rho_2$ are represented, respectively, by $(\pi,\succ_1,\mathcal{N})$ and $(\pi,\succ_2,\mathcal{N})$ as $1$-ABCs, where $\pi$ assigns initial attention probability $1/2$ to each alternative. When the menu is $(\{z\},\{x,y\})$, DM1 ends up attending to $z$ whenever her initial consideration set is nonempty; by contrast, DM2 ends up attending to $z$ only when her initial consideration set contains $z$ or equals $\{y\}$. Consequently, DM1 chooses $z$ more frequently than DM2. 

We also note that preference-dependent association can rationalize choice behavior that no preference-independent attention procedure can capture. Returning to the previous example, interpret $x$, $y$, and $z$ as a mechanical keyboard of brand 1, a mechanical keyboard of brand 2, and a magnetic keyboard of brand 2, respectively. Consider a DM with attention probability function $\pi$, associative network $\mathcal{N}$, and preference $\succ_1$ as above. If her choice rule is described by a $1$-ABC, then in the menu $(\{z\},\{x,y\})$, she chooses $z$ with probability $7/8$. If the price of $y$ increases so that her preference changes to $\succ_2$, then she chooses $z$ with probability $5/8$. This decline in the choice probability of $z$ cannot arise under any preference-independent attention procedure, because such procedures require the distribution of final consideration sets to be invariant to the DM's preference and hence imply a constant probability of choosing $z$.

We next highlight a novel menu effect generated by the $k$-ABC model: Adding an unavailable but observable alternative $x$ to a menu $(A,B)$ can \textit{reduce} the choice probability of \textit{every} available alternative in $A$. To see this, let $A=\{y_1,y_2\}$ and $B=\{y_3,y_4\}$. Consider a DM who follows a $1$-ABC with attention distribution $\pi$, preference $x \succ y_1 \succ y_2 \succ y_3 \succ y_4$, and an associative network $\mathcal{N}$ satisfying $\mathcal{N}(x)=\{x\}$, $\mathcal{N}(y_1)=\{y_1\}$, $\mathcal{N}(y_2)=\{y_2\}$, $\mathcal{N}(y_3)=\{y_1,y_3\}$, and $\mathcal{N}(y_4)=\{y_2,y_4\}$. The new  observable but unavailable alternative $x$ disrupts the association process whenever it is considered together with $y_3$ or $y_4$ (or both). If the DM initially considers $\{y_3\}$ (respectively, $\{y_4\}$), she expands her consideration set to include $y_1$ (respectively, $y_2$). However, if $x$ is also in the initial consideration set, it becomes the best option among those considered. The DM then focuses attention on $x$, and the association process terminates. Intuitively, an unavailable but observable alternative that is desirable yet isolated can impede mental association: It attracts attention because it is highly valued, but it fails to lead the DM toward feasible substitutes.

While the $k$-ABC model can accommodate a richer set of choice patterns, its association process and hence the choice probabilities coincide with those of the ABC model in menus with at most two observable alternatives. Hence,  $\pi$,  $\mathcal{N}$, and $\succ$ can all be uniquely identified in the same way as in the ABC model. 


\subsection{Imperfectly Recorded Set of Observable Alternatives}

\label{subsec:unobserve}

An essential assumption in our baseline analysis is that every alternative observable to the DM is perfectly recorded in the data. In practice, however, the DM may privately observe additional alternatives that are not recorded. In this section, we maintain the assumption that the data record the set of \emph{available} alternatives precisely, but we allow the set of \emph{unavailable but observable} alternatives to be imperfectly recorded.

To accommodate the possibility that some observable alternatives are missing from the data, consider a mapping $\Lambda$ that assigns to each recorded menu $(A,B)$ a probability distribution $\Lambda(\cdot|A,B)$ over $\mathcal{E}$ such that, for all $(C,D)\in\mathcal{E}$, $\Lambda(C,D|A,B)>0$ implies $C=A$ and $B\subseteq D$. The distribution $\Lambda(\cdot|A,B)$ is interpreted as the (true) distribution of menus the DM faces when the recorded menu is $(A,B)$: Availability is correctly recorded (so $C=A$), but the true set of unavailable but observable alternatives may strictly contain the recorded set (so $B$ can be a proper subset of $D$).

Accordingly, if the DM's underlying choice rule is an ABC $\rho$, then her observed choice rule $\hat{\rho}$ satisfies, for every $(A,B)\in\mathcal{E}$ and every $x\in A$,
\[
\hat{\rho}(x|A,B)
\;=\;
\sum_{(A,D)\in\mathcal{E}} \Lambda(A,D|A,B)\,\rho(x|A,D).
\]

When the recorded menu lies in $\mathcal{E}^O$, i.e., it takes the form $(A,A^c)$, the menu actually faced by the DM must also be $(A,A^c)$. Hence, by the analysis in Section \ref{subsec:limit:data:observability}, the DM's preference $\succ$ remains uniquely identified. However, without additional restrictions on $\Lambda$, the observed choice rule $\hat{\rho}$ typically lacks necessary information for identifying the associative network $\mathcal{N}$ and the attention probability function $\pi$. In the worst case, identification collapses to what could be achieved if we only observed choices under menus in $\mathcal{E}^O$. To see this, suppose $\rho$ satisfies that for all $(A,B)\in\mathcal{E}$ and all $x\in A$, $\rho(x|A,B)=\rho(x|A,A^c).$ Then one can always rationalize the data by assuming that the DM in fact observes all unavailable alternatives (i.e., faces $(A,A^c)$) even when the recorded menu is much smaller. In this case, the DM's observed choices contain no additional information beyond that characterized by Theorem \ref{thm:main:limdata} in Section \ref{subsec:limit:data:observability}.


We discuss two specific assumptions on $\Lambda$ in Online Appendix OA-2 that yield stronger identification results. First, we consider the case in which extra observable alternatives arise from exogenous sources (e.g., the consumer may encounter a product elsewhere, which is not available at the point of purchase). We capture this via an observability probability function $\eta:X\to(0,1)$. When the documented menu is $(A,B)$, the actual menu faced by the DM is $(A,D)$ with probability
$\left(\prod_{x\in X\backslash (AD)} \bigl(1-\eta(x)\bigr)\right)
\left(\prod_{y\in D\backslash B} \eta(y)\right)$ for each $D\supseteq B$. That is, each alternative $y\in X\backslash (AB)$ can be observable independently with probability $\eta(y)$. We show that the associative network can be uniquely identified. Second, we study the case in which extra observable alternatives come from the DM's past experiences or memory. We model this by introducing an additional set $P\subseteq X$, interpreted as the DM's set of privately observable alternatives. When the documented menu is $(A,B)$, the actual menu faced by the DM is $(A, (BP)\backslash A)$. We axiomatize this extension and provide a general identification strategy for the model parameters, including the case in which the choice rule is  only observed on menus of the form $(A,\emptyset)$.\footnote{A natural further extension is to consider a distribution $\gamma$ over subsets of $X$, where $\gamma(P)$ represents the probability that the DM's additional observable set of alternatives is $P$. An interesting question is the extent to which $\gamma$ can be identified, together with the behavioral implications of the resulting model. We leave this extension for future research.}

\section{Applications}

\label{sec:app}

This section illustrates the applied value of our framework across a range of market environments. We begin by analyzing a multi-brand firm that leverages sub-branding to intentionally sever associative links; this strategic decoupling allows the firm to segment consumers' consideration sets and sustain price discrimination across quality tiers. Next, we examine an entry game in which a new competitor challenges an incumbent premium brand. We demonstrate how an entrant's decision to imitate, rather than compete conventionally, endogenously forges an associative link that drives cross-product attention spillovers. Finally, we explore additional settings, such as association design by platforms, the strategic creation of cognitive ``walled gardens'' by luxury brands, and the deployment of ``phantom'' goods to funnel consumer attention. Together, these applications highlight that associative links are not merely a cognitive primitive of consumer behavior, but also a critical strategic variable that fundamentally alters competitive dynamics, market structure, and optimal pricing.

\subsection{Brand Extension versus Sub-Branding}

\label{subsec:brand}


The strategic choice between brand extension and sub-branding plays a pivotal role in how firms expand their product portfolios and capture new market segments \citep[see, for example,][]{sood2012effects,peng2023meta}. Brand extension allows a company to introduce new products under the well-known parent brand, thereby benefiting from associations with existing products and the resulting high demand. Examples include Pepsi's launch of Pepsi Max and Coca-Cola's introduction of Coke Zero. By contrast, sub-branding creates a distinct brand identity to partially decouple a new (and often high-end) product from the parent brand in consumers' minds. This cognitive separation expands the scope for price discrimination, enabling the firm to extract more revenue from premium segments. For example, Toyota created the Lexus sub-brand for luxury vehicles, and Nestl\'{e} established Nespresso as an upscale alternative. In what follows, we formalize this tradeoff by studying a firm that sells a low-end product in a competitive market and a high-end product in a monopolistic market. Although the setting is stylized, our analysis can readily extend to richer environments.

Consider a firm that produces a low-end product, $L$, and is deciding whether to introduce a new high-end product, $H$, to target a different market segment. We take the price of the low-end product $p_L$ as exogenously given, reflecting mainstream markets with many close substitutes where individual firms have little pricing power. In contrast, we model the high-end segment as monopolistic, allowing the firm to choose $p_H$; this is natural when the premium offering is highly differentiated or shielded by brand loyalty, technological advantages, and other kinds of entry barriers. This framework captures the firm's dual role as a price taker in the mainstream segment and a price setter in the premium segment, allowing for a transparent analysis of entry and pricing decisions in the high-end market. We assume that the marginal cost of production for each product is $0$ and that the introduction of the new product requires a fixed cost $C\ge 0$.

To capture heterogeneous demand, we model the market using a RUMABC $\rho$,  represented by $(\pi, \mathcal{N}, \tau)$. Consumers share an associative network $\mathcal{N}$, which the firm strategically dictates via its branding choice, but exhibit heterogeneous preferences  captured by $\tau$. Let $\pi_{H},\pi_{L}\in(0,1)$ denote consumers' initial attention probabilities for products $H$ and $L$, respectively. There are two types of consumers. A fraction $\alpha \in (0,1)$ of consumers (type 1) value product $H$ at $v_H$ and $L$ at $v_L$, while the remaining fraction $1-\alpha$ of consumers (type 2) value both products at $b$. We assume that $v_H > v_L > b > 0$. This implies that type-1 consumers have a higher valuation for both products, and their utility premium for the high-end product is strictly positive ($v_H - v_L > 0$), whereas the utility premium is zero for type-2 consumers.\footnote{The assumption that high-type (type-1) consumers possess a higher marginal utility for quality is standard in the screening literature. We assume a zero utility premium for low-type (type-2) consumers purely for analytical simplicity.}  For simplicity, we analyze the case in which $p_L = b$. Our analysis can be readily extended to other values of $p_L$. A consumer who purchases a product of value $v$ at price $p$ obtains  utility $v - p$. As is standard, we assume consumers break ties in favor of the firm.
Each consumer has an outside option with utility $0$, and paying attention to any product prompts her to consider the outside option.\footnote{Note that the outside option differs from the default option $a^*$ in Section \ref{sec:preliminaries}, which is not chosen even if all available products in the final consideration set deliver negative  utility.} Finally,  we assume that $\alpha v_H > b$ to ensure that the premium segment is sufficiently lucrative, meaning the firm strictly prefers to maintain a premium price for the high-end product rather than dropping its price to serve the mass market.

We consider the firm’s decision of whether to introduce the high-end product and, conditional on entry, how to determine the price $p_H$ and the branding strategy.  Under brand extension, the firm engineers a bidirectional associative link between the products. Attention to either the existing low-end product $L$ or the new high-end product $H$ naturally triggers consideration of the other; once a consumer pays initial attention to one product, her final consideration set includes both. This gives brand extension a demand advantage through attention spillovers.
 

By contrast, sub-branding intentionally severs the associative links between the products, ensuring that each enters consideration independently. To see why cognitive isolation can be profitable, consider a type-1 consumer whose initial consideration set contains only the high-end product $H$. Under brand extension, associative recall brings $L$ into consideration, so the firm would have to leave the consumer with information rent $v_L-p_L$ to prevent her from choosing the low-end product. Under sub-branding, however, $L$ does not enter the consumer's    consideration set, allowing the firm to extract the full surplus by setting $p_H = v_H$. Thus, while brand extension raises demand by exploiting associative links, sub-branding facilitates price discrimination and surplus extraction by segmenting consideration sets. Proposition \ref{prop_branding} below characterizes when the firm introduces $H$ and identifies the optimal branding strategy.

\begin{proposition}\label{prop_branding}
The firm introduces the high-end product $H$ if and only if \begin{equation}\label{eq:app:subbrand1}
C\leq \max\left\{ \alpha \pi_H(1-\pi_L)v_H,\, \alpha \left(\pi_H+\pi_L-\pi_H\pi_L\right)(v_H-v_L) + \pi_H(1-\pi_L)p_L \right\}.
\end{equation}
Given that the firm introduces $H$, it chooses   sub-branding if and only if 
\begin{equation}\label{eq:app:subbrand2}
v_L-\frac{p_L}{\alpha} \ge \frac{\pi_L}{(1-\pi_L) \pi_H }(v_H-v_L).
\end{equation}
\end{proposition}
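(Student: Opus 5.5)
The plan is to compute the firm's incremental profit from introducing $H$ under each branding strategy. The baseline without $H$ is profit $\pi_L p_L=\pi_L b$: every consumer who considers $L$ buys it, since type 1 gets $v_L-b>0$ and type 2 gets $0$, with ties broken in the firm's favour. Under either strategy, for a fixed $p_H$ each consumer's behaviour depends only on her type and on her final consideration set within $\{H,L\}$. So demand for $H$ is piecewise constant in $p_H$, and the only prices that can be optimal are the kinks of this step function. These are
\[
p_H=v_H \quad(\text{full extraction from type 1 when $L$ is not considered}),
\]
\[
p_H=v_H-v_L+b \quad(\text{type 1 indifferent between $H$ and $L$}),
\]
\[
p_H=b \quad(\text{type 2 also willing to buy $H$}).
\]
I would first dispose of $p_H=b$ and, more generally, of any price below type 1's reservation levels. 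Such a price yields at most $b$ per unit on a demand of mass $1$ times the probability of consideration. Using the standing assumption $\alpha v_H>b$, this is dominated by the premium price serving type 1 only.

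Under sub-branding, $H$ and $L$ enter the consideration set independently. With $p_H=v_H$, a type-1 consumer buys $H$ only when she considers $H$ but not $L$; if she considers both, she prefers $L$ because $v_L-b>0$. Type 2 never buys $H$, and sales of $L$ are otherwise unchanged. The increment is therefore $\alpha\pi_H(1-\pi_L)v_H$.

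With $p_H=v_H-v_L+b$, type 1 buys $H$ whenever she considers it, and this cannibalises $L$ sales when she considers both. The increment is
\[
\alpha\pi_H(v_H-v_L)+\alpha\pi_H(1-\pi_L)b .
\]

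Under brand extension, the final consideration set is $\{H,L\}$ with probability $q:=\pi_H+\pi_L-\pi_H\pi_L$ and is empty otherwise. Here $p_H=v_H$ is useless, since type 1 would choose $L$ instead. At $p_H=v_H-v_L+b$, type 1 buys $H$ and type 2 buys $L$, so the increment is
\[
\alpha q(v_H-v_L+b)+(1-\alpha)qb-\pi_L b=\alpha q(v_H-v_L)+\pi_H(1-\pi_L)b .
\]

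Because $q\ge\pi_H$ and $\alpha<1$, the second sub-branding option is weakly dominated by brand extension at the same price. The firm's best increment is therefore the maximum in \eqref{eq:app:subbrand1}, and the firm enters exactly when $C$ is at most this maximum.

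For \eqref{eq:app:subbrand2}, I would compare the two remaining candidates directly. Subtracting the brand-extension increment from the sub-branding one gives
\[
\alpha\pi_H(1-\pi_L)v_L-\alpha\pi_L(v_H-v_L)-\pi_H(1-\pi_L)b .
\]
Dividing by $\alpha\pi_H(1-\pi_L)>0$ shows that this difference is nonnegative if and only if \eqref{eq:app:subbrand2} holds, with ties resolved in favour of sub-branding.

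The main obstacle is making the kink enumeration airtight. This means checking every tie-breaking case, including type 1 indifferent between $H$ and the outside option and type 2 indifferent between $L$ and not buying. It also means verifying that the pooling price $p_H=b$ and intermediate prices never beat the candidates. That step is where $\alpha v_H>b$ and $v_H>v_L>b$ are used.
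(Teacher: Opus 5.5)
Your proof is correct and follows the same route as the paper. The paper computes the no-entry profit $\pi_L p_L$, the brand-extension profit at $p_H=v_H-v_L+p_L$, and the sub-branding profit at $p_H=v_H$, then compares the three; your observation that sub-branding at the intermediate price $v_H-v_L+b$ is strictly dominated by brand extension at that same price is what justifies the paper's unexplained step ``it suffices to focus on $p_H=v_H$'', and the step is genuinely needed, since for $\pi_L>(v_L-b)/(v_H-b)$ the intermediate price beats $p_H=v_H$ within sub-branding alone.
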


Since the price of product $L$ is exogenously given, it is straightforward to show that the optimal pricing strategy under brand extension is to sell product $H$ to type-1 consumers at the price $v_H - v_L + p_L$. Since  $\alpha v_H > b$, it can be shown that the optimal strategy under sub-branding is to extract full surplus from type-1 consumers who only pay attention to $H$. This corresponds to selling product $H$ at the price $v_H$. The tradeoff between the two branding strategies can be illustrated through the following inequality:
\begin{equation}\label{eq:app:subbrand3}
\alpha \pi_H(1-\pi_L) [v_H - (v_H - v_L + p_L) ] \geq \alpha \pi_L [(v_H - v_L + p_L) - p_L] + (1-\alpha) \pi_H(1-\pi_L)  p_L,
\end{equation} 
and (\ref{eq:app:subbrand2}) follows by rearranging (\ref{eq:app:subbrand3}). The left-hand side of inequality (\ref{eq:app:subbrand3}) captures the advantage of sub-branding over brand extension arising from improved price discrimination. Type-1 consumers who pay attention only to the high-end product $H$ will purchase $H$ at the price of $v_H$ under sub-branding compared to $v_H - v_L + p_L$ under brand extension. The opportunity cost of sub-branding---reduced demand---is reflected in the two terms on the right-hand side of (\ref{eq:app:subbrand3}). Specifically, the first term represents the revenue loss from type-1 consumers who initially pay attention to the low-end product, regardless of whether they initially pay attention to the high-end product. Under brand extension, these consumers also consider $H$ due to mental association and eventually purchase $H$ at the price $v_H - v_L + p_L$. Under sub-branding, they purchase the low-end product $L$ at the price $p_L$. The second term captures the revenue loss from type-2 consumers who pay attention only to the high-end product: Under brand extension mental association leads them to also consider the low-end product $L$ and purchase it at the price of $p_L$, whereas under sub-branding they do not buy from the firm. Finally, inequality (\ref{eq:app:subbrand1}) follows from comparing the firm's optimal profit under the two branding strategies (net of the fixed cost $C$) with its profit when it does not introduce the high-end product.


Inequality (\ref{eq:app:subbrand2}) delivers clear comparative statics. Sub-branding is more attractive when the low-end product is a strong fallback (high $v_L$ and relatively low utility premium $v_H - v_L$), because associative recall of $L$ under brand extension would otherwise force the firm to reduce the high-end price to deter down-trading. Sub-branding is also more likely when the low-end product is less likely to be the consumer's entry point (low $\pi_L$) while the high-end product can attract sufficient attention on its own (high $\pi_H$), and when the premium segment is relatively large (high $\alpha$).

The comparative statics above are consistent with observed branding choices in practice. When Toyota introduced its luxury line Lexus, it adopted a sub-branding strategy. This aligns with the model’s prediction because Toyota Camry offers a relatively high standalone value, while the utility premium of Lexus over  Camry is not excessively large.\footnote{Similarly, in the hospitality industry, strong mid-tier offerings provide a credible fallback option, which motivates separation at the top; for example, Ritz-Carlton (part of Marriott) and Waldorf Astoria (part of Hilton) maintain distinct luxury identities rather than extending their parent brands upward.} In contrast, when Toyota introduced the ultra-luxury Toyota Century,  it opted for brand extension rather than creating a separate sub-brand. This choice is also consistent with the theory, as the utility premium of  Century relative to Toyota’s other models is exceptionally large. Such a substantial quality gap justifies positioning the product as the pinnacle of the existing brand, rather than launching a new luxury marque.\footnote{Existing research posits that brand extension is advantageous when the new product aligns well with the parent brand and does not lead to brand dilution \citep{aaker1990consumer, loken1993diluting, john1998negative}. However, these traditional channels alone do not fully explain Toyota's divergent branding decisions. Both Lexus and Century are premium offerings, making it difficult to attribute the sub-branding of Lexus to concerns about diluting mainstream models such as Camry. Furthermore, Lexus appears at least as ``fit-consistent'' with Toyota's core passenger-car competence as Century.}

Turning to attention patterns, sub-branding is more likely to be optimal when the high-end product commands high baseline attention on its own (high $\pi_H$) and does not require the mass-market line to serve as an associative gateway (low $\pi_L$). In the spirits industry, for instance,  Diageo's single malts, such as Lagavulin and Talisker, are often sought directly by consumers actively searching for high-end options. As a result, their demand is less reliant on being discovered through a flagship brand with broad mass-market visibility, such as Johnnie Walker. Conversely, when a new variant suffers from low standalone visibility (low $\pi_H$) and must rely on the parent brand to capture consumer attention (high $\pi_L$), brand extension is more common. This reliance drives the ubiquitous use of line extensions in consumer packaged goods. For instance, Coca-Cola launches Diet Coke and Coca-Cola Zero under the corporate umbrella, and P\&G introduces Tide variants (e.g., Pods, Free and Gentle) under the primary Tide name, precisely because these new products benefit from the parent brand's shelf presence and consumer awareness.

\subsection{Imitation versus Normal Competition}

\label{subsec:imitation}

In this section, we analyze an entry game in which a potential entrant can opt to imitate the incumbent firm's product. This strategy enhances the entrant’s perceived (or actual) quality and, crucially for our purposes, endogenously establishes a bidirectional associative link between the two products: Once a consumer pays attention to one, she is prompted to consider both. Imitation thus entails a sharp strategic tradeoff---it not only boosts attention and perceived quality but also heightens competition and risks provoking enforcement actions (such as litigation) by the incumbent. Our aim is to formalize the conditions under which imitation is profitable for the entrant, and to identify when the resulting attention spillovers are substantial enough that the incumbent optimally accommodates the imitation rather than initiating enforcement actions.

We consider a market with two firms.  Firm 1 is an established incumbent and firm 2 is a potential entrant. We normalize the quality of firm 1's product to $q_1=1$ and assume that firm 2's product possesses a baseline quality $q_2 \in(0,1)$. Both firms are assumed to have zero cost of production. To model heterogeneous demand, we assume a continuum of consumers whose type $\theta$ is uniformly distributed on $[0,1]$. A consumer of type $\theta$ who purchases a product of quality $q$ at price $p$ obtains  utility   $\theta q - p$. As in Section \ref{subsec:brand}, each consumer possesses an outside option with utility $0$, which automatically enters her consideration set whenever she pays attention to any product. Each consumer's initial attention probabilities for products 1 and 2 are given by $\pi_1$ and $\pi_2$, respectively.

Because firm 2 has zero cost, it always enters the market. The strategic focus is entirely on whether it adopts an imitation strategy. Specifically, if firm 2 chooses to imitate, the quality of product 2 improves to $\bar{q}_2 \in [q_2,1)$, and the imitation successfully forges a bidirectional associative link between the two products. Consequently, a fraction $\pi_1+\pi_2-\pi_1\pi_2$ of consumers form a consideration set containing both products simultaneously (while the remaining consumers consider neither). The two firms then set prices $p_1$ and $p_2$, competing directly for this shared pool of consumers. 

Conversely, if firm 2 opts for normal competition (i.e., it does not imitate), no associative link is formed. The firms operate as a standard duopoly, subject to random attention. A fraction $\pi_1(1-\pi_2)$ of consumers pay attention only to product 1, and a fraction $\pi_2(1-\pi_1)$ of consumers pay attention only to product 2. Because a fraction $\pi_1\pi_2$ of consumers pay attention to both products, the firms face a pricing tradeoff between extracting surplus from their captive audiences and competing for the overlapping segment. 

We compare the potential entrant’s equilibrium payoffs under these two entry strategies, focusing on the regime where its baseline quality $q_2$ is low. This case is of primary economic interest, because a low baseline quality is precisely what makes imitation an attractive strategy for an entrant. Denote $$J(\pi_1, \pi_2):= \frac{(2-\pi_1)^2-1}{\pi_1} - \left( \frac{1}{\pi_2} -1\right).$$
The following proposition characterizes the equilibrium under imitation and identifies the conditions under which the incumbent actually benefits from being imitated.

\begin{proposition}

\label{prop:app:imitation}

Under imitation, there exists a unique pair $(p_1, p_2)$ that constitutes a pure-strategy equilibrium. In this equilibrium, firm 1's  profit is weakly higher than its monopoly profit without firm 2's entry if and only if \begin{equation}\label{eq:app:xiaomi1}
1 + \frac{1-\pi_1}{\pi_1}\pi_2 \geq \frac{(4-\bar q_2)^2}{16(1-\bar q_2)}.
\end{equation} 

\noindent Furthermore, for each given $\pi_1$ and $\pi_2$, there exists $q^* \in (0,1)$ such that the following statements hold for every $q_2 < q^*$:

\noindent \emph{(1)} Under normal competition, there exists a unique pair $(\hat p_1, \hat p_2)$ that constitutes a pure-strategy equilibrium. 

\noindent \emph{(2)} If $J(\pi_1, \pi_2) > 0$, then 
there exist   $\bar{q}^*_1<\bar{q}^*_2\in(q_2,1)$ such that $\bar q_2\in[\bar{q}^*_1,\bar{q}^*_2]$ if and only if firm 2's equilibrium profit under imitation is higher  than that under normal competition.

\noindent \emph{(3)} If $J(\pi_1, \pi_2) < 0$, then there exists $\bar{q}^*_3 \in(q_2,1)$ such that $\bar q_2\in [q_2,\bar{q}^*_3]$ if and only if firm 2's equilibrium profit under imitation is higher  than that under normal competition.

\end{proposition}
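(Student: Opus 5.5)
The plan is to solve the two pricing subgames in closed form, or asymptotically as $q_2\to 0$ where needed, and then compare firm 2's profits.

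\emph{Imitation subgame and condition (\ref{eq:app:xiaomi1}).} Under imitation, every consumer who considers anything considers both products, and this mass is $M:=\pi_1+\pi_2-\pi_1\pi_2$. The subgame is therefore a standard vertically differentiated duopoly with qualities $1>\bar q_2$, scaled by $M$. Given $(p_1,p_2)$ with $p_2/\bar q_2\le (p_1-p_2)/(1-\bar q_2)\le 1$:
\begin{itemize}
\item type $\theta$ buys product 1 if $\theta\ge (p_1-p_2)/(1-\bar q_2)$;
\item type $\theta$ buys product 2 if $p_2/\bar q_2\le\theta<(p_1-p_2)/(1-\bar q_2)$.
\end{itemize}
Both profit functions are strictly concave in own price on the relevant region. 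I will check that corner deviations (undercutting to take the whole market, or pricing out) are unprofitable. The first-order conditions then give the unique equilibrium
\[
p_1=\frac{2(1-\bar q_2)}{4-\bar q_2},\qquad p_2=\frac{\bar q_2(1-\bar q_2)}{4-\bar q_2}.
\]
The corresponding profits are
\[
\Pi_1^I=\frac{4M(1-\bar q_2)}{(4-\bar q_2)^2},\qquad \Pi_2^I=\frac{M\bar q_2(1-\bar q_2)}{(4-\bar q_2)^2}.
\]
Without entry, firm 1 is a monopolist on mass $\pi_1$ and earns $\pi_1/4$. Comparing $\Pi_1^I\ge \pi_1/4$ and dividing by $\pi_1$, using $M/\pi_1=1+\frac{1-\pi_1}{\pi_1}\pi_2$, yields (\ref{eq:app:xiaomi1}) directly.

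\emph{Normal-competition subgame for small $q_2$.} This is the step I expect to be the main obstacle. Firm 1 faces a captive mass $\pi_1(1-\pi_2)$ and an overlap mass $\pi_1\pi_2$; firm 2 faces a captive mass $\pi_2(1-\pi_1)$ and the same overlap. The payoffs are piecewise concave with kinks, where one firm either cedes the overlap or starts capturing all of it. I will guess the equilibrium regime for small $q_2$: firm 1 prices near its monopoly level, with $\hat p_1\to 1/2$, and firm 2 sets an interior price of order $q_2$ that trades off captive surplus against overlap share. I will solve the two interior first-order conditions in that regime. I then plan to rule out global deviations (firm 1 undercutting to grab the whole overlap, firm 2 pricing at its captive monopoly level $q_2/2$) by showing that, for $q_2$ below some $q^*$, each such deviation loses an amount of order one or of order $q_2$ with a definite sign. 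That threshold gives part (1). Uniqueness will follow because, for small $q_2$, only this regime admits mutual best responses.

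\emph{Parts (2) and (3).} Let $g(q_2)$ denote firm 2's normal-competition profit, which is independent of $\bar q_2$. Let $f(\bar q_2):=M\bar q_2(1-\bar q_2)/(4-\bar q_2)^2$ denote its imitation profit. A direct derivative computation shows $f$ is strictly increasing and then strictly decreasing on $[0,1)$, with $f(0)=f(1)=0$. Hence $\{\bar q_2\in[q_2,1): f(\bar q_2)\ge g(q_2)\}$ is an interval. It is nonempty and bounded away from $1$ because $g(q_2)=O(q_2)\to 0$ while $\max f>0$.

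Whether the interval starts at $q_2$ (case (3)) or strictly above it (case (2)) depends on the sign of $f(q_2)-g(q_2)$ for small $q_2$. I will show that
\[
\lim_{q_2\to 0}\frac{g(q_2)-f(q_2)}{q_2}
\]
is a positive multiple of $\pi_2\,J(\pi_1,\pi_2)$, using the explicit first-order expansion of $g$ from the previous step and $f(q_2)=Mq_2/16+o(q_2)$. Possibly shrinking $q^*$, the sign of $f(q_2)-g(q_2)$ then equals $-\operatorname{sign}J$. This delivers $\bar q^*_1<\bar q^*_2$ when $J>0$ and $\bar q^*_3$ when $J<0$.
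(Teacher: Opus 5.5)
Your proposal is correct and follows essentially the same route as the paper. The paper likewise uses a closed-form imitation equilibrium: its cutoffs $\theta_1^{**}=2(1-\bar q_2)/(4-\bar q_2)$ and $\theta_2^{**}=(1-\bar q_2)/(4-\bar q_2)$ are exactly your prices under $p_1=\theta_1$, $p_2=\bar q_2\theta_2$. Under normal competition it solves the interior first-order conditions, with $\theta_1^*\to 1/2$ and $\theta_2^*\to(2-\pi_1)/4$, and checks deviations for small $q_2$. It then compares $g(q_2)\approx \pi_2(2-\pi_1)^2q_2/16$ with the single-peaked imitation profit, and your limit $\lim_{q_2\to 0}(g(q_2)-f(q_2))/q_2=\pi_1\pi_2 J(\pi_1,\pi_2)/16$ is exactly the paper's sign comparison of $(2-\pi_1)^2\pi_2$ with $\pi_1+\pi_2-\pi_1\pi_2$.
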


We focus on pure-strategy equilibria. The proof of Proposition \ref{prop:app:imitation} is provided in Online Appendix OA-8, where we also derive the necessary and sufficient  condition for the existence of a pure-strategy equilibrium under  normal competition  and show that it must be  unique if it exists. Notably, for given $\pi_1$ and $\pi_2$, such an equilibrium always exists when $q_2$ is sufficiently small. This ensures tractability and allows us to characterize firm 2's entry decision when its product has low baseline quality.

Condition (\ref{eq:app:xiaomi1}) characterizes when the incumbent benefits from the entrant’s imitation relative to the monopoly benchmark in which firm 2 does not enter. It also serves as a sufficient condition for non-enforcement: When condition (\ref{eq:app:xiaomi1}) holds, the incumbent optimally accommodates the imitator. Conversely, if the condition is violated, then the incumbent may  have incentives to enforce its rights against imitation---for example, by suing firm 2 for trade dress infringement to deter imitation or induce exit---in order to restore monopoly profits, provided that the litigation costs are low and the probability of winning the lawsuit is high.

Intuitively, condition (\ref{eq:app:xiaomi1}) says that accommodation is more likely when (i) the incumbent's baseline attention is low (low $\pi_1$) while the entrant's is high (high $\pi_2$), allowing the incumbent to harvest substantial attention spillovers via the associative link, and (ii) the imitated quality $\bar{q}_2$ is not too high, which prevents intense competition (noting that the right-hand side of condition (\ref{eq:app:xiaomi1}) is strictly increasing in $\bar{q}_2$).

While we do not claim that condition (\ref{eq:app:xiaomi1}) is the sole determinant of whether an incumbent sues an imitative entrant, it implies comparative statics that strongly align with observed market behavior. Returning to our motivating example, Xiaomi has a large consumer base and therefore attracts consumer attention easily; moreover, its SU7 targets a different market segment than Porsche. Consequently, even if Porsche views Xiaomi's strategy as imitation, the resulting attention spillovers generate sufficient  demand advantages to offset competitive losses, weakening Porsche's incentive to litigate. By contrast, Apple’s landmark lawsuit against Samsung centered on an entrant whose flagship smartphones closely matched the highly salient iPhone in both quality and user experience. Because this dispute featured a high $\bar{q}_2$ and a highly visible incumbent, condition (\ref{eq:app:xiaomi1}) is less likely to hold, leading to stronger incentives for litigation. 

Turning to the entrant’s strategic decision, the first message of Proposition \ref{prop:app:imitation} is that firm 2 does not choose to imitate when the quality after imitation $\bar{q}_2$ is too high, even if its baseline quality $q_2$ is low. As $\bar{q}_2$ approaches $1$, the products become nearly homogeneous, triggering fierce price competition that drives profits to zero for both firms. 

When $q_2$ is low and the quality improvement is marginal ($\bar{q}_2$ is close to $q_2$), the entrant’s imitation decision hinges on the sign of $J(\pi_1,\pi_2)$. Note that $J(\pi_1,\pi_2)$ is increasing in $\pi_2$ and decreasing in $\pi_1$. Intuitively, when the incumbent is highly salient (high $\pi_1$) and the entrant is obscure (low $\pi_2$), imitation allows the entrant to free-ride on the incumbent's attention capital. In this case, even when $\bar{q}_2=q_2$---so that imitation does not directly improve firm 2’s product quality---firm 2 can still have a strict incentive to imitate solely to forge the associative link. Moreover, for small $\bar q_2$, condition (\ref{eq:app:xiaomi1}) also holds. In this case, not only does firm 2 prefer imitation even without a quality improvement, but firm 1 also prefers not to adopt enforcement actions, since the competitive pressure from a low-quality entrant is negligible relative to the additional attention it brings.

An important channel through which imitation can benefit both firms---beyond any direct improvement in product 2’s quality---is that the association created by imitation helps the two firms better complement each other in serving different consumer segments. In particular, firm 1 primarily serves  consumers with high valuations, whereas firm 2 primarily serves  consumers with low valuations. Hence, the incentive for both firms to accommodate imitation can arise beyond the case in which $q_2$ is sufficiently low; the same logic can still apply for moderate $q_2$.\footnote{However, when $q_2$ is high, imitation is less likely to be jointly favored. Since a high $q_2$ intensifies competition, ensuring firm 1's incentive to accommodate imitation requires low $\pi_1$ and high $\pi_2$ so that product 1 gains sufficient additional attention; but then product 2’s attention gain is relatively small, weakening firm 2’s incentive to imitate. Moreover, when $q_2$ is already high, the scope for quality improvement via imitation (i.e., $\bar{q}_2-q_2$) is limited, further undermining firm 2’s incentive to imitate.} For instance, when $q_2=\bar{q}_2=0.29$, $\pi_1=0.48$, and $\pi_2=0.2$, both imitation and normal competition admit a unique pure-strategy equilibrium. Firm 1’s equilibrium profit under imitation exceeds its monopoly profit, and firm 2’s equilibrium profit under imitation exceeds that under normal competition.\footnote{The detailed analysis is included in Online Appendix OA-8.}

The main predictions in Proposition \ref{prop:app:imitation} are consistent with the recent empirical evidence in the study of imitation strategies \citep{yilmaz2023does,wang2023performance}. For instance, \cite{yilmaz2023does} demonstrate that imitation operates via both discovery and substitution effects, where the discovery effect---akin to increased mental associations and attention---may increase the demand of the incumbent. Similarly, \cite{wang2023performance} provide experimental and field evidence that imitation can enhance consumer perceptions of the incumbent, provided that the imitator is sufficiently vertically differentiated from the incumbent. These findings provide empirical support for our central insight that imitation can function as a cognitive catalyst, generating  associations for both products when quality differentiation reduces direct competition and initial brand salience is low, thereby creating mutual benefits rather than zero-sum rivalry.

Although our analysis focuses on the scenario in which the incumbent possesses a higher quality than the entrant, it readily extends to the reverse case. Mirroring the logic of Proposition \ref{prop:app:imitation}, we can characterize the conditions under which a high-quality entrant optimally imitates a mainstream incumbent's design, generating attention spillovers that the incumbent strictly prefers to accommodate. This extension explains why Dr. Martens actively protects the intellectual property of its iconic boot design against fast-fashion competitors of comparable quality, but not against high-fashion brands such as Prada, where extreme vertical differentiation prevents direct competition  and instead yields positive associative attention.

\subsection{Discussion on Additional Applications}

\label{subsec:moreapp}

We have presented two stylized applications demonstrating how firms strategically add or remove associative links. More broadly, manipulating these links is not merely a branding exercise; it is a powerful design lever that shapes which products enter consumers' consideration sets, how attention is allocated across options, and ultimately, how competitive dynamics unfold. Moreover, forming or severing links within an associative network often co-moves with other product attributes---such as perceived quality, positioning, and salience---so the overall effect is typically multifaceted rather than purely attentional. This perspective naturally opens the door to a broader set of environments in which associative networks are likely to play a pivotal role. Below, we briefly outline several possibilities.

\paragraph{Platform Network Design.} 
A natural application of our framework concerns digital platforms that can engineer associative links---e.g., through ``customers also viewed'' modules, sponsored placements, bundles, and cross-page hyperlinks. Such design choices redirect attention across products and, when some items are out of stock, shape which available alternatives enter consumers' final consideration sets. This is particularly relevant for hybrid platforms such as Amazon that both sell their own products and match buyers with third-party sellers. A key complication here is that platform payoffs are product-specific: Margins, fees, and strategic benefits differ significantly depending on whether an item is supplied by a third-party seller or by the platform itself. In  Online Appendix OA-7, we study a special case with our RUMABC framework, analyzing how adding a single associative link  affects the sales volume of a target product. We characterize the optimal link as a function of the existing associative network and the distribution of menus.

\paragraph{The Walled Garden: Dyson’s Strategic Decoupling.}
Our framework clearly illuminates the strategic imperative for a premium brand like Dyson to sever associative links with lower-end competitors. The core issue arises from the potential association flowing from Dyson to budget alternatives: A consumer initially noticing Dyson (perhaps via an advertisement) may, through the associative network, recall a cheaper and functional substitute, thereby expanding her consideration set to include both Dyson and the budget option. Such simultaneous presence invites a direct cost comparison, prompting the consumer to rationalize the ``good enough'' cheaper option as the more sensible choice, which may result in a lost sale for Dyson.  Consequently, Dyson’s strategic goal is to establish  a mental ``walled garden.'' By severing the associative link to the low-end segment---through distinctive design, luxury branding, and marketing that avoids direct functional comparisons---Dyson ensures that consumers who initially consider Dyson do not have their attention diverted to cheaper substitutes. In this context, the association becomes a liability because it empowers inferior alternatives to ``poach'' potential sales.

\paragraph{The Attention Funnel: Popmart’s Strategic Scarcity.}
Our model also illustrates how brands can strategically manipulate associative networks using unavailable yet observable alternatives---a strategy exemplified by Popmart’s management of its flagship character, Labubu. By releasing Labubu in limited editions or consistently maintaining ``sold out'' statuses, Popmart ensures that specific figures remain highly visible yet unattainable, thereby transforming Labubu into a marketing asset fueled by scarcity. 
Crucially,  our model posits that it is the observability of an alternative, rather than its availability, that drives the association process. 
Popmart leverages this insight by featuring these ``phantom'' Labubu figures prominently across social media, physical displays, and marketing campaigns. 
In this framework, Labubu functions as a central hub: When a consumer thinks about collectible toys, their initial consideration triggers an associative link to the highly salient Labubu, effectively capturing attention from the broader product category. 
The strong mental connection with the unavailable Labubu then prompts consideration of other available Popmart products. In effect, the unavailable Labubu acts as an ``attention funnel,'' drawing consumers away from competitors and channeling them toward Popmart's purchasable inventory.\footnote{Introducing unavailable but observable alternatives is not without risk. For instance, 
the pre-announcement of a ``phantom'' product can backfire. While intended to build anticipation, such announcements can inadvertently activate specific features in the consumer's mind; if competitors already offer products with these features, the announcement serves as a ``cognitive bridge'' to immediately available rival options. This creates a competitive diversion that intensifies the traditional Osborne effect:  Not only does the firm lose sales to its future self, but it also inadvertently subsidizes the sales of its current rivals.}




\section{Conclusion} \label{sec:conclusion} 

In this paper, we develop and characterize a random choice model  in which a DM expands her consideration set through mental associations between alternatives. The framework provides a tractable way to study how association shapes choice, particularly when some alternatives are merely observable but unavailable. We demonstrate that our model provides a unified explanation for several prominent choice anomalies and illustrate the role of associative links as a critical strategic variable in applied settings such as branding, imitation, and platform design. We conclude the paper with a discussion of several assumptions and limitations of our approach, as well as promising avenues for future research.

First, the data requirement for operationalizing our model can be demanding. Researchers therefore need to assess the empirical context carefully, as the practicality of this requirement differs significantly across empirical settings. In laboratory environments, the researcher has considerable control over the data-generating process, as illustrated by the experimental designs of \cite{soltani2012range}, \cite{ee2021relevance}, and \cite{Science2015irrationality}.  In field contexts, we point to the vending-machine studies by \cite{conlon2013demand} and \cite{ms2019effectiveness, ms2021designing}. Although these papers have a different primary focus and do not explicitly study mental association, they demonstrate that the relevant type of data collection is feasible: They track stock-outs and exploit variation in product availability as part of their empirical strategy.  This observation extends to broader environments in which consumers frequently encounter product pages with ``sold out'' labels or displays of waitlisted items, both in physical and digital storefronts. We caution, however, that some settings require greater care. For example, when the ``observable but unavailable'' status arises because an item is unaffordable---as in some car-market applications---the analyst must understand what is affordable or unaffordable for the individual decision maker. By contrast, measurement is often more straightforward when unavailability is structurally defined. Geographic constraints, for example, may restrict access to region-specific goods or streaming content, even when those items remain visible or widely discussed online. Thus, while the data requirement is substantial, it is reasonable in appropriately targeted settings, provided that the researcher carefully accounts for the source and observability of unavailability.

Second, our analysis remains silent regarding the underlying mechanisms by which associative links are established; we either adopt a revealed preference approach or rely on intuitive connections between firms' strategic actions (such as branding and imitation strategies) and consumers' mental associations. It would be both theoretically interesting and practically relevant to develop microfoundations for the formation of associative networks---identifying the specific factors and behavioral processes that drive the emergence and structure of associations among alternatives. Such foundations would enhance the explanatory power of the model and anchor it more deeply in economic and psychological theory.


Third, we view our model as complementing and extending recent advances in the literature on consideration sets and associative processes. For instance, the network structure in \cite{et2021network} is assumed to be undirected (i.e., associative links are bidirectional). Our framework naturally extends their analysis to the more general scenario of directed networks. In many economic contexts, associations are inherently asymmetric: Attention flows from one option to another, but not necessarily in the reverse direction. This generalization provides a more accurate depiction of behavioral phenomena   and clarifies the strategic implications of network design. A highly promising direction for future research is to extend our framework to further integrate random networks with random attention. While Online Appendix OA-4 outlines one specific approach and our initial explorations provide some preliminary insights, a comprehensive analysis spanning multiple modeling frameworks would significantly advance the literature. Fully characterizing this interplay remains an important and open direction for future research, as also highlighted by \cite{et2021network}.

Fourth, in the applications presented in Section \ref{sec:app} and Online Appendix OA-7, we assume that decisions regarding associative links are costless. When an association forms between the products of two competing firms, a particularly promising direction for future research is to examine costly link formation, where both the existence and the strength of connections depend endogenously on the joint strategic actions of both firms. This extension would bridge our framework with the literature on weighted networks, shifting the analytical focus from the mere existence of a link to the intensity and cost of competitive associations.

Finally, while we follow the convention in decision theory by focusing on choice data, non-choice data---such as eye-tracking, mouse-tracking, or search process logs---would be highly complementary to our current approach. These data have the potential to reveal how individuals cognitively connect alternatives by tracing visual fixations and search patterns that correspond to associative links in real time. For example, systematically focusing first on $x$ and then on $y$ across different menus would provide empirical support for an associative link from $x$ to $y$. Incorporating such non-choice data can help to validate---and potentially recover---the dynamic nature of mental associations.




\appendix

\section{Appendix}

\begin{proof}[Proof of Proposition \ref{prop:chosen_alternative}]
	
If $\rho(x|A,B) > 0$, then there exists $C \subseteq AB$ such that $x=\max(\mathcal{N}_{\!\scriptscriptstyle AB}^+(C) \cap A;\succ)$. Since $x \in \mathcal{N}_{\!\scriptscriptstyle AB}^+(C)$, we have $\mathcal{N}_{\!\scriptscriptstyle AB}^+(x) \subseteq \mathcal{N}_{\!\scriptscriptstyle AB}^+(C)$, and thus $x=\max(\mathcal{N}_{\!\scriptscriptstyle AB}^+(x) \cap A; \succ)$.  Conversely, if $x=\max(\mathcal{N}_{\!\scriptscriptstyle AB}^+(x) \cap A; \succ)$, then $x$ is chosen when the initial consideration set is $\{x\}$. Thus, $\rho(x|A,B) \ge \pi_{\scriptscriptstyle\!x} \mathring{\pi}_{\scriptscriptstyle\! (AB)\backslash x} > 0$.   
\end{proof}

\begin{proof}[Proof of Proposition \ref{prop:reformulation}]
For equation (\ref{eq:prop:reform:2}), note that for any initial consideration set $C \subseteq AB$, we have  $\mathcal{N}_{\!\scriptscriptstyle AB}^+(C) \cap A \neq \emptyset$ if and only if $\mathcal{N}_{\!\scriptscriptstyle AB}^+(x) \cap A \neq \emptyset$ for some $x \in C$. Thus, the default option is chosen if and only if any alternative $x$ that satisfies $\mathcal{N}_{\!\scriptscriptstyle AB}^+(x) \cap A \neq \emptyset$ is not initially considered. This leads to equation (\ref{eq:prop:reform:2}).\medskip

For equation (\ref{eq:prop:reform:3}),  we just need to show that it holds for each $k \in \{2,...,n\}$. When the initial consideration set is $D$, $x_k$ is chosen if and only if $\mathcal{N}_{\!\scriptscriptstyle AB}^+(D) \cap \{x_1,...,x_k\}=\{x_k\}$. That is, $D \cap \{y \in AB:  x_k = \max(\mathcal{N}_{\!\scriptscriptstyle AB}^+(y) \cap A; \succ)\} \neq \emptyset$ and for all $m\le k-1$, $D \cap \{y \in AB:  x_m = \max(\mathcal{N}_{\!\scriptscriptstyle AB}^+(y) \cap A; \succ)\} = \emptyset$. This leads to  equation (\ref{eq:prop:reform:3}).
\end{proof}

\begin{proof}[Proof of Theorem \ref{thm:main}] We first prove the following lemma, which will also be useful for proving Theorem \ref{thm:main:limdata} and  Theorem \ref{thm:oa:pabc:axiom}.

\begin{lemma}\label{lm:general:preference}
Consider a random choice rule $\rho$ and  a nonempty set $D \subseteq X$. Define a binary relation $\succ_D$ on $D$ such that for all $x, y \in D$, $x \succ_D y$ if $x \neq y$ and for some $A \subseteq D$,  $$\rho(y|A,D\backslash A) \neq \rho(y|A\backslash x, D\backslash (A\backslash x)).$$ Let $\sigma$ be a distribution over subsets of $D$ which satisfies that for all $A \subseteq D$ with $|A| \le 3$, there is $B \subseteq D$ such that $A \subseteq B$ and $\sigma(B) > 0$. If $\succ_D$ is asymmetric, and for all $A \subseteq D$,  $$\Phi_{\rho}(A,D\backslash A)=\sum_{B \subseteq D: B\cap A = \emptyset} \sigma(B),$$  then $\succ_D$ is transitive, and for all distinct $x,y \in D$, either $x \succ_D y$ or $y \succ_D x$ holds. 
\end{lemma}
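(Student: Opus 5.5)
The plan is to turn statements about $\succ_D$ into statements about the total mass of available choices, and then contradict the coverage assumption on $\sigma$. All menus in the argument have the form $(A,D\backslash A)$, so the observable set is always $D$. Write $M(A):=1-\Phi_\rho(A,D\backslash A)=\sum_{x\in A}\rho(x|A,D\backslash A)$, and abbreviate $\rho(x|A):=\rho(x|A,D\backslash A)$. By hypothesis, $M(A)=\sigma(\{B: B\cap A\neq\emptyset\})$.

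The basic tool is the negation of $\succ_D$: if not $u\succ_D v$, then $\rho(v|A)=\rho(v|A\backslash u)$ for every $A\subseteq D$. In words, making $u$ unavailable (while keeping it observable) never changes the choice probability of $v$.

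\emph{Completeness.} Take distinct $x,y\in D$ and suppose neither $x\succ_D y$ nor $y\succ_D x$.
\begin{itemize}
\item Applying the tool with $A=\{x,y\}$ gives $\rho(x|\{x,y\})=\rho(x|\{x\})=M(\{x\})$ and $\rho(y|\{x,y\})=M(\{y\})$.
\item Summing, $M(\{x,y\})=M(\{x\})+M(\{y\})$.
\item By inclusion--exclusion for $\sigma$, this says $\sigma(\{B:B\supseteq\{x,y\}\})=0$.
\end{itemize}
This contradicts the assumption that some $B\supseteq\{x,y\}$ has $\sigma(B)>0$.

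\emph{Transitivity.} Suppose $x\succ_D y$ and $y\succ_D z$ but not $x\succ_D z$; then $x,y,z$ are distinct, since $x=z$ is excluded by asymmetry. By completeness $z\succ_D x$, so we have a cycle $x\succ_D y\succ_D z\succ_D x$. Asymmetry then gives not $y\succ_D x$, not $z\succ_D y$, and not $x\succ_D z$. I will use these six equalities:
\begin{itemize}
\item In $\{x,y,z\}$: removing $y$ leaves $x$ unchanged, removing $z$ leaves $y$ unchanged, and removing $x$ leaves $z$ unchanged. Hence $\rho(x|\{x,y,z\})=\rho(x|\{x,z\})$, $\rho(y|\{x,y,z\})=\rho(y|\{x,y\})$ and $\rho(z|\{x,y,z\})=\rho(z|\{y,z\})$.
\item In the pairs: $\rho(x|\{x,y\})=M(\{x\})$, $\rho(y|\{y,z\})=M(\{y\})$ and $\rho(z|\{x,z\})=M(\{z\})$.
\end{itemize}
Combining them,
\[
M(\{x,y,z\})=\bigl[M(\{x,z\})-M(\{z\})\bigr]+\bigl[M(\{x,y\})-M(\{x\})\bigr]+\bigl[M(\{y,z\})-M(\{y\})\bigr].
\]

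Now express every $M$ through $\sigma$ by inclusion--exclusion. Let $s_u=\sigma(\{B:u\in B\})$ and $s_{uv}=\sigma(\{B:B\supseteq\{u,v\}\})$. Then $M(\{u,v\})=s_u+s_v-s_{uv}$, and
\[
M(\{x,y,z\})=\sum s_u-\sum s_{uv}+\sigma(\{B:B\supseteq\{x,y,z\}\}).
\]
The right-hand side of the displayed identity equals $2\sum s_u-\sum s_{uv}-\sum s_u=\sum s_u-\sum s_{uv}$. Comparing the two expressions forces $\sigma(\{B:B\supseteq\{x,y,z\}\})=0$, which contradicts the coverage assumption for $|A|=3$. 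Hence $\succ_D$ is transitive.

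The step I expect to be the main obstacle is choosing the right removals in the three-element menu. The equalities must be routed around the cycle so that the pairwise terms telescope exactly into singletons; the choice above is what makes the inclusion--exclusion cancellation work.
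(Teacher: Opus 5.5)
Your proof is correct and matches the paper's argument. Completeness: incomparability of $x$ and $y$ forces $M(\{x,y\})=M(\{x\})+M(\{y\})$, which by inclusion--exclusion gives zero $\sigma$-mass to supersets of $\{x,y\}$. Transitivity: you assume the cycle $x\succ_D y\succ_D z\succ_D x$, use the three implied non-relations to get $M(\{x,y,z\})=\sum M(\text{pairs})-\sum M(\text{singletons})$, and again contradict the positive mass on supersets of $\{x,y,z\}$. The only difference is presentational. You spell out exactly which removal equalities are routed around the cycle, while the paper just says this step is ``similar to'' the two-element case.
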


\begin{proof}
First, the assumption on $\sigma$ implies that $\Phi_{\rho}(\emptyset,D) =1 =\sum_{B \subseteq D} \sigma(B)$. Consider distinct $x,y \in D$. If $x \not\succ_D y$ and $y \not\succ_D x$, then \begin{equation}\label{eq:general:preference1}
\begin{split}
\Phi_{\rho}(\{x,y\}, D\backslash \{x,y\}) & = 1- \rho(x|\{x,y\}, D\backslash \{x,y\})-\rho(y|\{x,y\}, D\backslash \{x,y\}) \\
& = 1- \rho(x|\{x\}, D\backslash x)-\rho(y|\{y\}, D\backslash y) \\ 
& = 1-(1-\Phi_{\rho}(\{x\}, D\backslash x))-(1-\Phi_{\rho}(\{y\}, D\backslash y)) \\
& = \Phi_{\rho}(\{x\}, D\backslash x) + \Phi_{\rho}(\{y\}, D\backslash y) - 1; \text{~and}  
\end{split}
\end{equation}   \begin{equation}\label{eq:general:preference2}
    \begin{split}
    ~~\, \Phi_{\rho}(\{x,y\}, D\backslash \{x,y\}) & = \sum_{A\subseteq D: \{x,y\} \cap A = \emptyset} \sigma(A) = 1- \sum_{A\subseteq D: \{x,y\} \cap A \neq \emptyset} \sigma(A) \\ 
    & = 1- \sum_{A\subseteq D: x \in A} \sigma(A) -  \sum_{B\subseteq D: y \in B} \sigma(B) + \sum_{C\subseteq D: x,y \in C} \sigma(C) \\
    & = \Phi_{\rho}(\{x\}, D\backslash x) + \Phi_{\rho}(\{y\}, D\backslash y) + \sum_{C\subseteq D: x,y \in C} \sigma(C) - 1. 
    \end{split}
\end{equation}
By our assumption on $\sigma$,  we have  $\sum\limits_{C\subseteq D:\, x,y \in C} \sigma(C) > 0$. It then follows that equations (\ref{eq:general:preference1}) and (\ref{eq:general:preference2}) are contradictory to each other. Thus, we have either $x \succ_D y$ or $y \succ_D x$. \medskip

Next, consider distinct $x,y,z \in D$ such that $x \succ_D y$ and $y \succ_D z$. We show $x \succ_D z$. Suppose to the contrary that $z \succ_D x$. Let $A=\{x,y,z\}$.  Similar to equation (\ref{eq:general:preference1}), by the conditions that $y \not\succ_D x$, $z \not\succ_D y$, and $x \not\succ_D z$, we have \begin{equation}\label{eq:general:preference3}
    \Phi_{\rho}(A, D\backslash A) = 1 - \sum_{a \in A} \Phi_{\rho}(\{a\}, D\backslash a) + \sum_{B \subseteq A: |B|=2 }\Phi_{\rho}(B, D\backslash B).
\end{equation} However, similar to equation (\ref{eq:general:preference2}),  we have \begin{equation}\label{eq:general:preference4}
\begin{split}
 & ~~~\, \Phi_{\rho}(A, D \backslash A)  = 
 \sum_{B \subseteq D: B\cap A = \emptyset} \sigma(B) = 1-\sum_{B \subseteq D: B\cap A \neq \emptyset} \sigma(B) \\ 
  & = 1- \left( \sum_{a \in A} \left( \sum_{B\subseteq D: a \in B} \sigma(B) \right) -  \sum_{\{b,c\} \subseteq A} \left( \sum_{C\subseteq D: \{b,c\} \subseteq C} \sigma(C) \right) + \sum_{E\subseteq D: A \subseteq E} \sigma(E) \right)\\
  & = 1 - \sum_{a \in A} \Phi_{\rho}(\{a\}, D\backslash a) + \sum_{B \subseteq A: |B|=2 }\Phi_{\rho}(B, D\backslash B) -  \sum_{E\subseteq D: A \subseteq E} \sigma(E), 
\end{split}    
\end{equation} where the last equality holds since for all $a \in A$ and $\{b,c\} \subseteq A$, we have $\sum_{B\subseteq D: a \in B} \sigma(B) = 1- \Phi_{\rho}(\{a\},D\backslash a)$ and $\sum_{C\subseteq D: \{b,c\} \subseteq C} \sigma(C) =  1-\Phi_{\rho}(\{b\}, D\backslash b) + 1-\Phi_{\rho}(\{c\}, D\backslash c)- (1-\Phi_{\rho}(\{b,c\}, D\backslash \{b,c\})).$ 
By our assumption on $\sigma$, $\sum\limits_{E\subseteq D: A \subseteq E} \sigma(E)$ is strictly positive, and thus equations (\ref{eq:general:preference3}) and (\ref{eq:general:preference4}) are contradictory to each other. It follows that $\succ_D$ is transitive.  \end{proof}

\noindent \textbf{(Necessity)} Consider an ABC $\rho$ that is represented by $(\pi, \mathcal{N}, \succ)$. We show that it satisfies Axioms 1-5. For Axiom 1, note that for all $A \in \mathcal{M}$, $\Phi_{\rho}(A,\emptyset)=\mathring{\pi}_{\!\scriptscriptstyle A}$. Thus, for any $x \in A$, the ratio $\frac{\Phi_{\rho}(A,\emptyset)}{\Phi_{\rho}(A\backslash  x,\emptyset)}=\mathring{\pi}_{\!\scriptscriptstyle x}$ is independent of $A$. Next, we prove the following lemma.

\begin{lemma}\label{lm:necessary:arrow} Consider an ABC $\rho$ represented by $(\pi, \mathcal{N}, \succ)$. For all $(A,B) \in \mathcal{E}$ and $x \in X$, $x \xrightarrow[]{B} A$ if and only if $\mathcal{N}^+_{\scriptscriptstyle\! AB}(x) \cap A \neq \emptyset$.  \end{lemma}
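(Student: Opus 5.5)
We will prove Lemma \ref{lm:necessary:arrow} by reducing both sides to statements about the set $H_{\!\scriptscriptstyle \mathcal{N}}(A,B)$, using the closed form for the default probability in Proposition \ref{prop:reformulation}, namely $\Phi_{\rho}(A,B)=\mathring{\pi}_{\scriptscriptstyle H_{\mathcal{N}}(A,B)}$. The case $x\in A$ is immediate: $x\xrightarrow[]{B}A$ holds by Definition \ref{def:network}, and $x\in\mathcal{N}^+_{\scriptscriptstyle AB}(x)\cap A$ by reflexivity of $\mathcal{N}$. The case $x\notin AB$ is also quick. Here $B\backslash x=B$, so $\Phi_\rho(A,B)=\Phi_\rho(A,B\backslash x)$ and $x\xrightarrow[]{B}A$ fails. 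On the other side, $\mathcal{N}^+_{\scriptscriptstyle AB}(x)$ only relates elements of $AB$, so the intersection with $A$ is empty (read literally, $\mathcal{N}^+_{\scriptscriptstyle AB}(x)=\emptyset$ for $x\notin AB$). So the real content is the case $x\in B$.

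For $x\in B$, we compare $H:=H_{\!\scriptscriptstyle \mathcal{N}}(A,B)$ with $H':=H_{\!\scriptscriptstyle \mathcal{N}}(A,B\backslash x)$. First we show $H'\subseteq H\backslash\{x\}$. Any path in $\mathcal{N}_{\scriptscriptstyle A(B\backslash x)}$ is also a path in $\mathcal{N}_{\scriptscriptstyle AB}$, and $x\notin A(B\backslash x)$.

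Next we show $H\backslash\{x\}\subseteq H'\cup\{y: \text{every path from } y \text{ to } A \text{ in } \mathcal{N}_{\scriptscriptstyle AB} \text{ passes through } x\}$. To make this precise, we prove the following claim: if $\mathcal{N}^+_{\scriptscriptstyle AB}(x)\cap A=\emptyset$, then $H=H'$. Take any $y\in H$ and a shortest path from $y$ to $A$ inside $AB$. If this path visited $x$, then its tail would be a path from $x$ to $A$, contradicting the hypothesis. So the path avoids $x$, it is a path in $\mathcal{N}_{\scriptscriptstyle A(B\backslash x)}$, and hence $y\in H'$. The same argument shows $x\notin H$. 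Therefore $\Phi_\rho(A,B)=\Phi_\rho(A,B\backslash x)$, and $x\xrightarrow[]{B}A$ fails. This proves the direction ``$x\xrightarrow[]{B}A\Rightarrow \mathcal{N}^+_{\scriptscriptstyle AB}(x)\cap A\neq\emptyset$'' by contraposition.

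For the converse, suppose $\mathcal{N}^+_{\scriptscriptstyle AB}(x)\cap A\neq\emptyset$. Then $x\in H$, while $x\notin H'$ because $x\notin A(B\backslash x)$. Combined with $H'\subseteq H$, this gives $H'\subsetneq H$ with $x\in H\backslash H'$. Since every $\pi_z\in(0,1)$, we have $\mathring{\pi}_{H}=\mathring{\pi}_{H'}\cdot\mathring{\pi}_{H\backslash H'}<\mathring{\pi}_{H'}$, so $\Phi_\rho(A,B)\neq\Phi_\rho(A,B\backslash x)$ and $x\xrightarrow[]{B}A$ holds.

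The only delicate point is the path-shortening step in the claim: removing $x$ from the observable set could in principle destroy associations of other alternatives and not just $x$'s own. We handle this by noting that any such destroyed path must route through $x$, which would give $x$ itself a path to $A$. With that settled, the strict positivity of $\pi$ finishes the argument.
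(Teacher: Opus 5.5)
Your proof is correct and follows the paper's argument. Both use Proposition \ref{prop:reformulation} to compare $H_{\mathcal{N}}(A,B)$ with $H_{\mathcal{N}}(A,B\backslash x)$, getting equality when $x$ has no path to $A$ (every path to $A$ then avoids $x$) and a strict inclusion otherwise, so the default probability strictly drops because $\pi$ takes values in $(0,1)$. Your explicit handling of the case $x\notin AB$, and of why the inequality is strict, fills in details the paper leaves implicit.
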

			
\begin{proof}[Proof of Lemma \ref{lm:necessary:arrow}] The case where $x \in A$ is trivial. For the nontrivial case,  assume $x \in B$. If $\mathcal{N}^+_{\scriptscriptstyle\! AB}(x) \cap A = \emptyset$, then for all $y \in AB$ such that $\mathcal{N}^+_{\scriptscriptstyle\! AB}(y) \cap A \neq \emptyset$, we can find a sequence $(y_k)_{k=0}^n$ in $AB$ such that $y_0=y$, $y_n \in A$, and for all $k=\{0,...,n-1\}$, $y_k \mathcal{N} y_{k+1}$. Note that any $y_k$ cannot be $x$, since otherwise $\mathcal{N}^+_{\scriptscriptstyle\! AB}(x) \cap A \neq \emptyset$. Thus, we have $\mathcal{N}^+_{\scriptscriptstyle\! (AB)\backslash x}(y) \cap A \neq \emptyset$. It then follows that $H_{\mathcal{N}}(A,B) \subseteq H_{\mathcal{N}}(A,B\backslash x)$. Since $H_{\mathcal{N}}(A,B\backslash x) \subseteq H_{\mathcal{N}}(A,B)$ trivially holds, we have $H_{\mathcal{N}}(A,B)  = H_{\mathcal{N}}(A,B\backslash x)$. By Proposition \ref{prop:reformulation}, we have $\Phi_{\rho}(A,B)=\Phi_{\rho}(A,B\backslash x)$, i.e., not $x \xrightarrow[]{B} A$.  Conversely, if $\mathcal{N}^+_{\scriptscriptstyle\! AB}(x) \cap A \neq \emptyset$, then $x \in H_{\mathcal{N}}(A,B)$ and $x \notin H_{\mathcal{N}}(A,B\backslash x)$. It follows that $H_{\mathcal{N}}(A,B\backslash x) \subsetneq H_{\mathcal{N}}(A,B)$. By Proposition \ref{prop:reformulation}, $\Phi_{\rho}(A,B)\neq \Phi_{\rho}(A,B\backslash x)$, and we have $x \xrightarrow[]{B} A$. 
\end{proof}

For Axiom 2, for any $(A,B) \in \mathcal{E}$, consider some $x \in B$ such that $x \xrightarrow[]{B} A$.  To show $\Phi_{\rho}(A,B)=\Phi_{\rho}(Ax,B\backslash x)$, by Proposition \ref{prop:reformulation}, it suffices to show $H_{\mathcal{N}}(A,B) = H_{\mathcal{N}}(Ax,B\backslash x)$. Since $H_{\mathcal{N}}(A,B) \subseteq H_{\mathcal{N}}(Ax,B\backslash x)$, it suffices to show $H_{\mathcal{N}}(Ax,B\backslash x) \subseteq H_{\mathcal{N}}(A,B)$. For any $y \in H_{\mathcal{N}}(Ax,B\backslash x)$, there exists a sequence $(y_k)_{k=0}^n$ in $AB$ such that $y_0=y$, $y_n \in Ax$, and for all $k \in \{0,...,n-1\}$, $y_{k}\mathcal{N}y_{k+1}$. If the sequence does not contain $x$, then $y\in H_{\mathcal{N}}(A,B)$. If $y_t=x$ for some $t \in \{0,...,n\}$, then $x \xrightarrow[]{B} A$ and Lemma \ref{lm:necessary:arrow} imply that that there exists a path in $AB$ from $x$ to some $z \in A$. Concatenating this path with the initial segment $(y_k)_{k=0}^t$ yields a path in $AB$ from $y$ to $z$. Hence, $y \in H_{\mathcal{N}}(A,B)$.

By Lemma \ref{lm:necessary:arrow}, Axiom 3 trivially holds. For Axiom 4, consider $x,y \in X$ and $(A,B) \in \mathcal{E}$ such that $x \neq y$, $x \xrightarrow[]{B} A$, and not $x\xrightarrow[]{B\backslash y} A\backslash y$. By Lemma \ref{lm:necessary:arrow}, we have $x \in B$, $\mathcal{N}^+_{\!\scriptscriptstyle AB}(x) \cap A \neq \emptyset$, and $\mathcal{N}^+_{\!\scriptscriptstyle (AB)\backslash y}(x) \cap (A \backslash y) = \emptyset$. Since $\mathcal{N}^+_{\!\scriptscriptstyle AB}(x) \cap A \neq \emptyset$, there is a sequence of mutually distinct alternatives $(x_k)_{k=0}^n$   in $AB$ such that $x_0 = x$, $\{x_0,...,x_n\} \cap A = \{x_n\}$, and for all $k \in \{0,...,n-1\}$, $x_k\mathcal{N}x_{k+1}$. However, since $\mathcal{N}^+_{\!\scriptscriptstyle (AB)\backslash y}(x) \cap (A\backslash y) = \emptyset$, for any such sequence, there exists $k \in \{1,...,n\}$ such that $x_k=y$. Therefore, by Lemma \ref{lm:necessary:arrow}, we have $x \xrightarrow[]{B\backslash y} \{y\}$ and  $y \xrightarrow[]{B\backslash x} A$.

For Axiom 5, it suffices to show that for all $(A,B) \in \mathcal{E}$ and $x,y \in A$, $x \succ y$ implies $\rho(x|A,B)=\rho(x|A\backslash y,By)$. Since $AB=(A\backslash y)\cup (By)$,  for all $C \subseteq AB$, $x \succ y$ implies that $x  = \max(\mathcal{N}^+_{\!\scriptscriptstyle AB}(C) \cap A; \succ)$ if and only if $x  = \max(\mathcal{N}^+_{\!\scriptscriptstyle (A\backslash y)\cup (By)}(C) \cap (A\backslash y); \succ)$. It then follows  from the definition of ABC that $\rho(x|A,B)=\rho(x|A\backslash y, By)$.

\bigskip

\noindent \textbf{(Sufficiency)} 
Consider a choice rule $\rho$. Throughout the proof of sufficiency, assume that $\rho$ satisfies Axioms 1-5. We first define the tuple $(\pi, \mathcal{N}, \succ)$ as follows. For each $x \in X$, let $\pi_{\scriptscriptstyle\! x}=\rho(x|\{x\},\emptyset) \in (0,1)$.  Define $\mathcal{N}$ such that $$\mathcal{N}= \mathcal{X} \cup  \{(x,y) \in X^2: x \neq y \text{~and~} \rho(y|\{y\},\emptyset) \neq \rho(y|\{y\},\{x\})\}.$$ Define $\succ$ such that for any distinct $x,y \in X$, $x \succ y$ if  $\rho(y|\{x,y\},\emptyset) \neq \rho(y|\{y\},\{x\})$. By Axiom 5, $x \succ y$ implies that for all menu $(A,B)$ with $x,y \in A$, $\rho(x|A,B)=\rho(x|A\backslash y,By).$

\begin{lemma}\label{lm:sufficient:side1} For all  $(A,B) \in \mathcal{E}$ and $x \in B$,   $x \xrightarrow[]{B} A$ implies $\mathcal{N}_{\!\scriptscriptstyle AB}^+(x) \cap A \neq \emptyset$. \end{lemma}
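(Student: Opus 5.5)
We argue by strong induction on $|AB|$. Recall that, by construction, $x\mathcal{N}y$ for distinct $x,y$ holds exactly when $\rho(y|\{y\},\emptyset)\neq\rho(y|\{y\},\{x\})$, i.e., when $x \xrightarrow[]{\{x\}} \{y\}$. So the base relation $\mathcal{N}$ is exactly the relation $\xrightarrow[]{}$ on two-element menus. The whole task is to decompose a general $x \xrightarrow[]{B} A$ into such binary links using Axiom 4 (Path Connectedness).

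\emph{Preliminary observation.} If $A=\emptyset$, then $\Phi_\rho(\emptyset,\cdot)\equiv 1$, so $x\xrightarrow[]{B}\emptyset$ never holds for $x\in B$. Hence we may assume $A\neq\emptyset$, and the statement is vacuous whenever the available set is empty.

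\emph{Main step.} Fix $(A,B)$ with $x\in B$ and $x\xrightarrow[]{B}A$, and assume the claim for all menus with strictly smaller union. We choose a $y\neq x$ in $AB$ as follows.

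\textbf{Case (a): $B\neq\{x\}$.} Pick $y\in B\backslash x$.
\begin{itemize}
\item If $x\xrightarrow[]{B\backslash y}A$, the induction hypothesis applied to $(A,B\backslash y)$ gives a path from $x$ to $A$ inside $(AB)\backslash y$. This is also a path in $AB$, so we are done. (Here $A\backslash y=A$ since $y\in B$.)
\item Otherwise, Axiom 4 yields $x\xrightarrow[]{B\backslash y}\{y\}$ and $y\xrightarrow[]{B\backslash x}A$. The first concerns the menu $(\{y\},B\backslash y)$, whose union is $B\subsetneq AB$ because $A\neq\emptyset$. The second concerns $(A,B\backslash x)$, with $y\in B\backslash x$ and union $(AB)\backslash x$. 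The induction hypothesis gives a path from $x$ to $y$ within $B$ and a path from $y$ into $A$ within $(AB)\backslash x$. Concatenating them gives a path in $AB$ from $x$ to $A$, so $\mathcal{N}^+_{AB}(x)\cap A\neq\emptyset$.
\end{itemize}

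\textbf{Case (b): $B=\{x\}$.}
\begin{itemize}
\item If $A=\{a\}$, then $x\xrightarrow[]{\{x\}}\{a\}$ is literally $x\mathcal{N}a$.
\item If $|A|\ge 2$, pick any $y\in A$. If $x\xrightarrow[]{\{x\}}A\backslash y$, induct on the smaller menu $(A\backslash y,\{x\})$. Otherwise Axiom 4 gives $x\xrightarrow[]{\{x\}}\{y\}$, i.e., $x\mathcal{N}y$ with $y\in A$, and we are done directly.
\end{itemize}

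\emph{Where the care is needed.} The main point is to choose the pivot $y$ so that both conclusions of Axiom 4 land in menus with strictly smaller union. Taking $y\in B\backslash x$ when possible guarantees this. The residual case $B=\{x\}$ is handled because Axiom 4's conclusion there is already a binary link.

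We also need to check that every induction application really has $x$ (or $y$) in the unavailable set, as the lemma requires. In case (a) the second application has $y\in B\backslash x$, and in case (b) the binary conclusion is read off the definition of $\mathcal{N}$. No other axiom seems needed for this direction.
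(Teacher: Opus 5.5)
Your proof is correct and takes essentially the same route as the paper. Axiom 4 is the only tool, the binary case is read off the definition of $\mathcal{N}$, and a pivot in $B\backslash x$ either can be dropped (so the induction applies) or splits the association into $x \to y$ and $y \to A$, whose paths are then concatenated. The one difference is organizational: the paper first applies Axiom 4 to elements of $A$ to shrink the target to a singleton $\{y\}$ and then inducts on $|B|$, whereas you run a single strong induction on $|AB|$ with a general target and pivot on $A$ only in the residual case $B=\{x\}$.
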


\begin{proof}[Proof of Lemma \ref{lm:sufficient:side1}]
Consider  $(A,B) \in \mathcal{E}$  and $x \in B$ such that $x  \xrightarrow[]{B} A$. It follows that $A \neq \emptyset$, since otherwise we have $\Phi_{\rho}(A,B) = \Phi_{\rho}(A,B\backslash x) = 1$. By repeated applications of Axiom 4, there is $y \in A$ such that $x \xrightarrow[]{B} \{y\}$. It then suffices to show $y \in \mathcal{N}_{\!\scriptscriptstyle By}^+(x)$.

First, if $|B|=1$, then $B=\{x\}$. In this case, $x  \xrightarrow[]{B} \{y\}$ means $x  \xrightarrow[]{\{x\}} \{y\}$, which 
implies $\Phi_{\rho}(\{y\},\{x\}) \neq \Phi_{\rho}(\{y\},\emptyset)$. It then follows that $\rho(y|\{y\},\{x\}) \neq \rho(y|\{y\}, \emptyset)$. By the definition of $\mathcal{N}$, we have $x\mathcal{N}y$. Thus, we have $y \in  \mathcal{N}_{\!\scriptscriptstyle \{x,y\}}^+(x) =\mathcal{N}_{\!\scriptscriptstyle By}^+(x)$.

				Next, assume by induction that when $|B| \le n$, $x  \xrightarrow[]{B} \{y\}$ implies $y \in \mathcal{N}_{\!\scriptscriptstyle By}^+(x)$. We show that it remains true when $|B| = n+1$. To see this, note that if there is $z \in B\backslash x$ such that $x  \xrightarrow[]{B\backslash z} \{y\}$, then by the induction hypothesis, $y \in \mathcal{N}_{\!\scriptscriptstyle (By)\backslash z}^+(x) \subseteq \mathcal{N}_{\!\scriptscriptstyle By}^+(x)$. Otherwise, for every $z \in B\backslash x$, not $x  \xrightarrow[]{B\backslash z} \{y\}$. By Axiom 4, for every $z \in B\backslash x$, $x \xrightarrow[]{B\backslash z} \{z\}$ and $z  \xrightarrow[]{B\backslash x} \{y\}$. By the induction hypothesis, for every $z \in B\backslash x$, we have $z \in \mathcal{N}_{\!\scriptscriptstyle B}^+(x) \subseteq \mathcal{N}_{\!\scriptscriptstyle By}^+(x)$ and $y \in  \mathcal{N}_{\!\scriptscriptstyle (By)\backslash x}^+(z) \subseteq \mathcal{N}_{\!\scriptscriptstyle By}^+(z)$. By the transitivity of $\mathcal{N}_{\!\scriptscriptstyle By}^+$, we have  $y \in \mathcal{N}_{\!\scriptscriptstyle By}^+(x)$. \end{proof}

\begin{lemma}\label{lm:sufficient:side2} For all $(A,B) \in \mathcal{E}$ and $x \in B$,   $\mathcal{N}(x) \cap A \neq \emptyset$ implies $x \xrightarrow[]{B} A$. 
\end{lemma}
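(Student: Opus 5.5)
The argument is short: we exhibit the association directly in a binary menu and then transport it to $(A,B)$ with Axiom 3 (Expansion). No induction is needed, because the hypothesis gives a \emph{direct} link $x\mathcal{N}y$ into $A$.

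\emph{Step 1: pick the link.} Fix $(A,B)\in\mathcal{E}$ and $x\in B$ with $\mathcal{N}(x)\cap A\neq\emptyset$, and choose $y\in\mathcal{N}(x)\cap A$. Since $A\cap B=\emptyset$, we have $y\neq x$. So $(x,y)$ is not a reflexive pair in $\mathcal{X}$, and by the definition of $\mathcal{N}$ in the sufficiency part,
\[
\rho(y|\{y\},\emptyset)\neq\rho(y|\{y\},\{x\}).
\]

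\emph{Step 2: read off the default probability in the binary menu.} In any menu whose available set is $\{y\}$, the definition of $\Phi_\rho$ gives $\Phi_\rho(\{y\},D)=1-\rho(y|\{y\},D)$. Hence Step 1 yields
\[
\Phi_\rho(\{y\},\{x\})\neq\Phi_\rho(\{y\},\emptyset)=\Phi_\rho(\{y\},\{x\}\backslash x).
\]
By Definition \ref{def:network}, this says exactly that $x\xrightarrow[]{\{x\}}\{y\}$.

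\emph{Step 3: apply Expansion.} Take $(C,D)=(\{y\},\{x\})\in\mathcal{E}$. Then $C=\{y\}\subseteq A$, because $y\in A$. Also $CD=\{x,y\}\subseteq AB$, because $x\in B$ and $y\in A$. Axiom 3 therefore turns $x\xrightarrow[]{\{x\}}\{y\}$ into $x\xrightarrow[]{B}A$, which is the claim.

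There is no real obstacle here. The only point requiring care is Step 1: the off-diagonal definition of $\mathcal{N}$ applies only because $x\neq y$, and this is guaranteed by $x\in B$ and $y\in A$ being disjoint.
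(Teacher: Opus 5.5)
Your proposal is correct and follows the paper's proof exactly: pick $y\in\mathcal{N}(x)\cap A$, read off $x\xrightarrow[]{\{x\}}\{y\}$ from the definition of $\mathcal{N}$, and transport it to $(A,B)$ with Axiom 3. You also make explicit two details the paper leaves implicit: $y\neq x$ follows from $A\cap B=\emptyset$, and $\Phi_\rho(\{y\},D)=1-\rho(y|\{y\},D)$.
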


\begin{proof}[Proof of Lemma \ref{lm:sufficient:side2}]
Since $\mathcal{N}(x) \cap A \neq \emptyset$, there exists $y \in A$ such that $x\mathcal{N}y$. It then follows by the definition of $\mathcal{N}$ that $x \xrightarrow[]{\{x\}} \{y\}$. By Axiom 3, we have $x \xrightarrow[]{B} A$. \end{proof}

\begin{lemma}\label{lm:sufficient:default:prob}
For all $(A,B) \in \mathcal{E}$, $\Phi_{\rho}(A,B) = \mathring{\pi}_{H_{\mathcal{N}}(A,B)}$.
\end{lemma}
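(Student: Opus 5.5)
The plan follows the proof sketch of Theorem \ref{thm:main}. Fix $(A,B)\in\mathcal{E}$ and let $C=\{y\in B:\mathcal{N}^+_{\!\scriptscriptstyle AB}(y)\cap A\neq\emptyset\}$, so that $AC=H_{\mathcal{N}}(A,B)$. The argument has three steps: show $\Phi_\rho(A,B)=\Phi_\rho(AC,B\backslash C)$, then $\Phi_\rho(AC,B\backslash C)=\Phi_\rho(AC,\emptyset)$, and finally $\Phi_\rho(D,\emptyset)=\mathring{\pi}_D$ for every $D\in\mathcal{M}$.

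The last step uses Axiom 1 and induction on $|D|$. For $D=\{x\}$ we have $\Phi_\rho(\{x\},\emptyset)=1-\rho(x|\{x\},\emptyset)=\mathring{\pi}_x$, and $\Phi_\rho(\emptyset,\emptyset)=1$. For larger $D$ and $x\in D$, Axiom 1 applied with the pair $D,\{x\}$ gives
\[
\frac{\Phi_\rho(D,\emptyset)}{\Phi_\rho(D\backslash x,\emptyset)}=\frac{\Phi_\rho(\{x\},\emptyset)}{\Phi_\rho(\emptyset,\emptyset)}=\mathring{\pi}_x .
\]
Induction then yields $\Phi_\rho(D,\emptyset)=\mathring{\pi}_D$.

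For the first step, move elements of $C$ from $B$ into $A$ one at a time. Order the elements of $C$ by their (shortest) path distance in $\mathcal{N}_{\!\scriptscriptstyle AB}$ to $A$: $c_1,\dots,c_m$, where each $c_j$ has a direct link $c_j\mathcal{N}w$ with $w\in A\cup\{c_1,\dots,c_{j-1}\}$. At stage $j$ the current menu is $(A\{c_1,\dots,c_{j-1}\},\,B\backslash\{c_1,\dots,c_{j-1}\})$. Since $c_j$ has a direct link into the current available set, Lemma \ref{lm:sufficient:side2} gives $c_j\xrightarrow[]{B\backslash\{c_1,\dots,c_{j-1}\}}A\{c_1,\dots,c_{j-1}\}$. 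Axiom 2 then lets us move $c_j$ into the available set without changing $\Phi_\rho$. Iterating gives $\Phi_\rho(A,B)=\Phi_\rho(AC,B\backslash C)$.

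For the second step, remove the elements $z\in B\backslash C$ from the observable set one by one. The current menu is $(AC,E)$ with $E\subseteq B\backslash C$. Suppose $\Phi_\rho(AC,E)\neq\Phi_\rho(AC,E\backslash z)$. Then $z\xrightarrow[]{E}AC$, and Lemma \ref{lm:sufficient:side1} gives $\mathcal{N}^+_{\!\scriptscriptstyle ACE}(z)\cap AC\neq\emptyset$. Since $ACE\subseteq AB$, there is a path within $AB$ from $z$ into $AC$. Concatenating it with a path from an element of $C$ into $A$ gives a path from $z$ into $A$, so $z\in C$, a contradiction. Hence each removal leaves $\Phi_\rho$ unchanged, and we reach $\Phi_\rho(AC,\emptyset)=\mathring{\pi}_{AC}$.

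The main obstacle is the ordering in the first step. We need each moved element to already be \emph{directly} linked to the current available set, because Lemma \ref{lm:sufficient:side2} only handles direct links; Lemma \ref{lm:sufficient:side1} goes in the wrong direction for this purpose. Ordering by shortest-path distance to $A$ guarantees this: the next vertex on a shortest path from $c_j$ to $A$ lies at smaller distance, so it is either in $A$ or has already been moved. We also need the intermediate pairs to remain valid menus, which holds since they stay disjoint.
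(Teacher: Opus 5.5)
Your proof is correct and follows the paper's argument. Your shortest-path ordering is exactly the paper's layered partition $\{A_t\}$ for moving $B\cap H_{\mathcal{N}}(A,B)$ into the available set via Lemma \ref{lm:sufficient:side2} and Axiom 2, followed by stripping $B\backslash H_{\mathcal{N}}(A,B)$ from the observable set and invoking Axiom 1. The only cosmetic difference is in the removal step: you apply Lemma \ref{lm:sufficient:side1} to each intermediate menu and concatenate paths, whereas the paper applies it once to $(H_{\mathcal{N}}(A,B), B\backslash H_{\mathcal{N}}(A,B))$ and passes to smaller observable sets via Axiom 3.
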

\begin{proof}[Proof of Lemma \ref{lm:sufficient:default:prob}]
Consider some $(A,B) \in \mathcal{E}$. Note that $H_{\mathcal{N}}(A,B)$ admits a partition $\{A_k\}_{k=0}^n$ such that $A_0=A$ and for all $t \ge 1$, $$A_t=\{x \in B\backslash (\cup_{k=0}^{t-1} A_k): \mathcal{N}(x) \cap (\cup_{k=0}^{t-1} A_k) \neq \emptyset\}.$$
We now sequentially move alternatives in $B\cap H_{\mathcal{N}}(A,B)$ from the unavailable but observable set to the available set.  Specifically, we move all elements of $A_1$ first, then all elements of $A_2$, and so on.  The construction of the partition ensures that at each step, letting the current menu be $(\hat{A},\hat{B})$ and the alternative to be shifted from $\hat{B}$ to $\hat{A}$ be $\hat{x}$, we have $\mathcal{N}(\hat{x})\cap \hat{A}\neq \emptyset$.  By Lemma \ref{lm:sufficient:side2}, $\hat{x}\xrightarrow[]{\hat{B}}\hat{A}$. 
Axiom 2 then implies that $\Phi_{\rho}(\hat{A},\hat{B})=\Phi_{\rho}(\hat{A}\hat{x},\hat{B}\backslash \hat{x})$. 
Therefore, $\Phi_{\rho}(A,B)=\Phi_{\rho}\!\left(H_{\mathcal{N}}(A,B),\, B\backslash H_{\mathcal{N}}(A,B)\right).$  Since for all $x \in B\backslash H_{\mathcal{N}}(A,B)$, $\mathcal{N}^+_{AB}(x) \cap H_{\mathcal{N}}(A,B) = \emptyset$, Lemma \ref{lm:sufficient:side1} implies that for all $x \in B\backslash H_{\mathcal{N}}(A,B)$, not $x \xrightarrow[]{B\backslash H_{\mathcal{N}}(A,B)} H_{\mathcal{N}}(A,B)$. By Axiom 3, for any $C \subseteq B\backslash H_{\mathcal{N}}(A,B)$ with $x \in C$, not $x \xrightarrow[]{C} H_{\mathcal{N}}(A,B)$. Inductively, we can show $\Phi_{\rho}(H_{\mathcal{N}}(A,B), B\backslash H_{\mathcal{N}}(A,B)) = \Phi_{\rho}(H_{\mathcal{N}}(A,B), \emptyset)$. By Axiom 1 and the definition of $\pi$, we have $\Phi_{\rho}(A,B) =\Phi_{\rho}(H_{\mathcal{N}}(A,B), \emptyset) = \mathring{\pi}_{H_{\mathcal{N}}(A,B)}$. 
\end{proof}

\begin{lemma}\label{lm:sufficient:preference}
The revealed preference relation $\succ$ is transitive and asymmetric, and satisfies that for all distinct $x,y \in X$, either $x \succ y$ or $y \succ x$. 
\end{lemma}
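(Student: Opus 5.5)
The plan is to apply Lemma \ref{lm:general:preference} with $D=\{x,y\}$ or $D=\{x,y,z\}$, after checking that the lemma's relation $\succ_D$ agrees with the revealed preference $\succ$ on $D$. The lemma then delivers completeness and transitivity at once. Its hypotheses are asymmetry of $\succ_D$ and a representation of $\Phi_\rho(A,D\backslash A)$ via a distribution $\sigma$ over subsets of $D$ that covers every subset of size at most three.

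For the representation, I would take $\sigma$ to be the independent product measure, $\sigma(B)=\pi_B\mathring{\pi}_{D\backslash B}$. Every $B\subseteq D$ then has $\sigma(B)>0$, because $\pi$ takes values in $(0,1)$, so the covering condition holds trivially. Lemma \ref{lm:sufficient:default:prob} gives $\Phi_\rho(A,D\backslash A)=\mathring{\pi}_{H_{\mathcal{N}}(A,D\backslash A)}$. This equals $\sum_{B\subseteq D:\,B\cap A=\emptyset}\sigma(B)=\mathring{\pi}_A$ only when $H_{\mathcal{N}}(A,D\backslash A)=A$, which generally fails. So I would instead set
\[
\sigma(B)=\sum_{C\subseteq D:\,\mathcal{N}^+_D(C)=B}\pi_C\mathring{\pi}_{D\backslash C},
\]
the distribution of final consideration sets in the menu with observable set $D$. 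Then $\sum_{B\cap A=\emptyset}\sigma(B)$ is the probability that $\mathcal{N}^+_D(C)\cap A=\emptyset$, which is exactly $\mathring{\pi}_{H_{\mathcal{N}}(A,D\backslash A)}$, matching Lemma \ref{lm:sufficient:default:prob}. Positivity also holds: for $|A|\le 3$, taking $C=A$ gives $\sigma(\mathcal{N}^+_D(A))>0$, and $A\subseteq\mathcal{N}^+_D(A)$ by reflexivity of $\mathcal{N}$.

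Next I compare $\succ_D$ with $\succ$. By definition, $x\succ y$ means the difference $\rho(y|\{x,y\},\emptyset)\neq\rho(y|\{y\},\{x\})$ is witnessed at $D=\{x,y\}$ with $A=\{x,y\}$, so $x\succ_D y$ for every $D\supseteq\{x,y\}$. Conversely, Axiom 5 says that if $x\succ_D y$ (a difference witnessed in some menu), then $\rho(x|C,E)=\rho(x|C\backslash y,Ey)$ for every menu with $x,y\in C$; hence $y\succ_{D'} x$ fails for every $D'$.

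Now apply the lemma with $D=\{x,y\}$. Asymmetry of $\succ_D$ is exactly the consequence of Axiom 5 just noted, so the lemma yields $x\succ_D y$ or $y\succ_D x$. The only witnessing sets $A\subseteq D$ containing both elements are $A=\{x,y\}$. Therefore $x\succ_D y$ means $\rho(y|\{x,y\},\emptyset)\neq\rho(y|\{y\},\{x\})$, i.e.\ $x\succ y$. This gives completeness of $\succ$, and asymmetry of $\succ$ follows from asymmetry of $\succ_D$.

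For transitivity, suppose $x\succ y$ and $y\succ z$, and apply the lemma with $D=\{x,y,z\}$. By the observation above, $x\succ_D y$ and $y\succ_D z$, so transitivity of $\succ_D$ gives $x\succ_D z$. Asymmetry then rules out $z\succ_D x$, hence $z\succ x$ fails, and completeness of $\succ$ gives $x\succ z$.

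The main obstacle is the representation step: verifying cleanly that the final-consideration-set distribution reproduces $\mathring{\pi}_{H_{\mathcal{N}}(A,D\backslash A)}$. This holds because $\mathcal{N}^+_D(C)\cap A\neq\emptyset$ if and only if $C\cap H_{\mathcal{N}}(A,D\backslash A)\neq\emptyset$, which is the same argument as in Proposition \ref{prop:reformulation}. It also requires that Lemma \ref{lm:sufficient:default:prob} is already established under Axioms 1--4 alone.
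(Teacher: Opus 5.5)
Your setup is the paper's: the same $\sigma$ (the distribution of final consideration sets when the observable set is $D$), checked against Lemma \ref{lm:sufficient:default:prob}. You also get completeness and asymmetry of $\succ$ from $D=\{x,y\}$ as the paper does, and transitivity from a larger $D$. You use $\{x,y,z\}$ where the paper uses $\hat D=X$, which makes no difference.

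The problem is the bridge in your transitivity step. You claim that $x\succ y$ yields $x\succ_D y$ ``for every $D\supseteq\{x,y\}$'' because the difference is witnessed at $D=\{x,y\}$. That does not follow from the definition. $\succ_D$ compares only menus of the form $(A,D\backslash A)$, in which every element of $D$ is observable. For $D=\{x,y,z\}$, the witnessing menus $(\{x,y\},\emptyset)$ and $(\{y\},\{x\})$ have observable set $\{x,y\}$, so they are not among the menus $\succ_D$ looks at. Nothing in the definition makes $\succ_D$ monotone in $D$. Both places where you use this transfer are therefore unsupported as written: deriving $x\succ_D y$ and $y\succ_D z$, and passing from ``not $z\succ_D x$'' to ``not $z\succ x$.''

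The claim is true, and you already have what you need to prove it.
\begin{itemize}
\item By your own Axiom 5 observation, $x\succ y$ implies that $y\succ_D x$ fails for every $D$.
\item Lemma \ref{lm:general:preference}, applied to $D=\{x,y,z\}$ (whose hypotheses you verified), makes $\succ_D$ complete.
\item Hence $x\succ_D y$. So for $a,b\in D$, $a\succ b$ implies $a\succ_D b$.
\item Since $\succ$ and $\succ_D$ are both complete and asymmetric on $D$, they coincide there, and your transitivity argument then goes through.
\end{itemize}
This is exactly the step the paper takes. It observes that, by Axiom 5, $x\succ y$ implies not $y\succ_{\hat D} x$, and combines this with completeness of $\succ_{\hat D}$ to conclude that $\succ$ and $\succ_{\hat D}$ coincide.
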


\begin{proof}[Proof of Lemma \ref{lm:sufficient:preference}]
The asymmetry of $\succ$ is implied by Axiom 5. Consider distinct $x,y \in X$. Let $D=\{x,y\}$, and $\sigma$ be a distribution over subsets of $D$ such that for all $A \subseteq D$, $\sigma(A)=\sum\limits_{B\subseteq A: \mathcal{N}^+_{\scriptscriptstyle\! D}(B)=A} \pi_B \mathring{\pi}_{D\backslash B}$. By Lemma \ref{lm:sufficient:default:prob}, $D$ and $\sigma$ satisfy the primitive conditions in Lemma \ref{lm:general:preference}. Let $\succ_D$ be defined as in Lemma \ref{lm:general:preference}. Since $\succ$ coincides with $\succ_D$ on $D$ and is asymmetric, by Lemma \ref{lm:general:preference}, either $x \succ y$ or $y \succ x$. Next, let $\hat{D}=X$ and $\hat{\sigma}$ be the distribution over subsets of $\hat{D}$ such that for all $A \subseteq \hat{D}$, $\hat{\sigma}(A)=\sum\limits_{B\subseteq A: \mathcal{N}^+_{\scriptscriptstyle\! \hat{D}}(B)=A} \pi_B \mathring{\pi}_{\hat{D} \backslash B}$.  Lemma \ref{lm:sufficient:default:prob} again ensures that $\hat{D}$ and $\hat{\sigma}$ satisfy the primitive conditions in Lemma \ref{lm:general:preference}. Define the binary relation $\succ_{\hat{D}}$ as in Lemma \ref{lm:general:preference}. By Axiom 5, $x\succ y$ implies not $y \succ_{\hat{D}} x$. Thus, $\succ_{\hat{D}}$ is asymmetric. By Lemma \ref{lm:general:preference}, $\succ_{\hat{D}}$ is transitive and for any distinct $x,y \in X$, either $x \succ_{\hat{D}} y$ or $y \succ_{\hat{D}} x$. It follows that $\succ$ coincides with $\succ_{\hat{D}}$ on $\hat D= X$, and thus $\succ$ is also transitive. 
\end{proof}

To finish the proof of the sufficiency part, consider a menu $(A,B)$. By Lemma \ref{lm:sufficient:preference}, $\succ$ is a preference ordering. We can enumerate alternatives in $A$ such that $A=\{x_i\}_{i=1}^n$ and $x_1 \succ ... \succ x_n$. By the definition of $\succ$ and Axiom 5, it holds for all $k \in \{1,...,n\}$ that
\begin{equation*}
\begin{split}
S_k:=\sum_{t=1}^k \rho(x_t|A,B) & =\sum_{t=1}^k \rho(x_t|A\backslash \{x_{k+1},...,x_n\},B\cup \{x_{k+1},...,x_n\}) \\
& = 1- \Phi_{\rho}(A\backslash \{x_{k+1},...,x_n\},B\cup \{x_{k+1},...,x_n\}). 
\end{split}
\end{equation*}
Denote by $A_k=A\backslash \{x_{k+1},...,x_n\}$ and $B_k=B\cup \{x_{k+1},...,x_n\}$.  By Lemma \ref{lm:sufficient:default:prob}, we have for each $k \in \{1,...,n\}$,
\begin{equation}\label{eq:sufficient:identification} 
S_k  =   1- \mathring{\pi}_{H_{\mathcal{N}}(A_k, B_k)}.
\end{equation} Note that the ABC $\hat{\rho}$ that is represented by $(\pi, \mathcal{N}, \succ)$ must satisfy the system of equations (\ref{eq:sufficient:identification}), and that this system uniquely pins down a random choice rule (i.e., for each $k \in \{1,...,n\}$, $\rho(x_k|A,B)=S_k-S_{k-1}$, where $S_0:=0$). Therefore, $\rho$ coincides with $\hat{\rho}$. That is, $\rho$ is the ABC represented by $(\pi, \mathcal{N}, \succ)$. The uniqueness of the parameters $\pi$, $\mathcal{N}$, and $\succ$ has already been shown in Section \ref{subsec:axiom} and is not repeated here.   \end{proof}

\bigskip

\begin{proof}[Proof of Theorem \ref{thm:main:limdata}]
The necessity of the axioms can be shown by analogy to Theorem \ref{thm:main}.  For sufficiency, fix a $\mathcal{E}^O$-choice rule $\rho$. Throughout the proof, assume that Axioms L1-L4 hold for $\rho$. We also write $\Phi_{\rho}(A)$ for $\Phi_{\rho}(A,A^c)$ throughout this proof for simplicity. Define the associative network $\mathcal{N}$ such that $x\mathcal{N}y$ if either $x=y$, or $x\neq y$ and $\Phi_{\rho}(\{x,y\})=\Phi_{\rho}(\{y\}).$ We first prove the following two lemmas.

\begin{lemma}\label{lm:limitdata:transitive}
The associative network $\mathcal{N}$ is transitive and satisfies that for all $A \in \mathcal{M}$ and $x \in A^c$, $\Phi_{\rho}(A)=\Phi_{\rho}(Ax)$ if and only if  $x\mathcal{N}y$ for some $y \in A$.
\end{lemma}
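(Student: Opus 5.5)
The plan is to prove the characterization first and then derive transitivity from it using Axiom L2. Throughout, $\Phi_{\rho}(A)$ abbreviates $\Phi_{\rho}(A,A^c)$, and the only tools needed are Axiom L3 (Decomposability) and Axiom L2 (Monotonicity).

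\emph{Characterization.} Taking the contrapositive of both sides of Axiom L3 gives, for $A=BD$ and $x\in A^c$,
\[
\Phi_{\rho}(A)=\Phi_{\rho}(Ax)\iff \Phi_{\rho}(B)=\Phi_{\rho}(Bx)\ \text{or}\ \Phi_{\rho}(D)=\Phi_{\rho}(Dx).
\]
Note that $x\in A^c$ implies $x\in B^c$ and $x\in D^c$, so the axiom applies again to the smaller sets. I will argue by induction on $|A|$.
\begin{itemize}
\item If $A=\emptyset$, then $\Phi_{\rho}(\emptyset)=1>\Phi_{\rho}(\{x\})$ by the definition of an $\mathcal{E}^O$-choice rule. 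So the equality fails, and no $y\in A$ exists either; both sides of the equivalence are false.
\item If $|A|=1$, say $A=\{y\}$, the claim $\Phi_{\rho}(\{y\})=\Phi_{\rho}(\{x,y\})\iff x\mathcal{N}y$ is exactly the definition of $\mathcal{N}$ for $x\neq y$.
\item If $|A|\ge 2$, pick $y\in A$ and apply the displayed equivalence with $B=\{y\}$ and $D=A\backslash y$. The induction hypothesis applied to $D$ then gives $\Phi_{\rho}(A)=\Phi_{\rho}(Ax)$ iff $x\mathcal{N}y$ or $x\mathcal{N}y'$ for some $y'\in A\backslash y$. That is, iff $x\mathcal{N}y'$ for some $y'\in A$.
\end{itemize}

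\emph{Transitivity.} Suppose $x\mathcal{N}y$ and $y\mathcal{N}z$. If any two of $x,y,z$ coincide, the claim $x\mathcal{N}z$ is immediate, using reflexivity when $x=z$. So assume $x,y,z$ are distinct.
\begin{itemize}
\item From $y\mathcal{N}z$ and the definition of $\mathcal{N}$, $\Phi_{\rho}(\{y,z\})=\Phi_{\rho}(\{z\})$.
\item Since $x\notin\{y,z\}$ and $x\mathcal{N}y$ with $y\in\{y,z\}$, the characterization just proved gives $\Phi_{\rho}(\{x,y,z\})=\Phi_{\rho}(\{y,z\})=\Phi_{\rho}(\{z\})$.
\item Axiom L2 gives the chain
\[
\Phi_{\rho}(\{z\})\ \ge\ \Phi_{\rho}(\{x,z\})\ \ge\ \Phi_{\rho}(\{x,y,z\})=\Phi_{\rho}(\{z\}).
\]
All terms are therefore equal, so $\Phi_{\rho}(\{x,z\})=\Phi_{\rho}(\{z\})$, which is $x\mathcal{N}z$.
\end{itemize}

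The only delicate step is the bookkeeping in the characterization. One must check that each application of L3 keeps $x$ outside the relevant sets, which holds since $B,D\subseteq A$. One must also check that the empty-set base case is consistent, which relies on condition (i) of the choice rule. Beyond that, both parts follow directly from L2 and L3.
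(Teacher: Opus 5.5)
Your proof is correct and follows essentially the same route as the paper. The contrapositive of Axiom L3 gives the characterization; you spell out the induction on $|A|$ and the $A=\emptyset$ case, which the paper simply asserts. Transitivity then comes from $\Phi_{\rho}(\{x,y,z\})=\Phi_{\rho}(\{y,z\})=\Phi_{\rho}(\{z\})$ together with the Axiom L2 chain $\Phi_{\rho}(\{z\})\ge\Phi_{\rho}(\{x,z\})\ge\Phi_{\rho}(\{x,y,z\})$. The only difference is order: the paper proves transitivity first by invoking L3 directly, whereas you route it through the characterization, which changes nothing of substance.
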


\begin{proof}[Proof of Lemma \ref{lm:limitdata:transitive}]
Consider mutually distinct $x,y,z \in X$ such that $x\mathcal{N}y$ and $y\mathcal{N}z$. By the definition of $\mathcal{N}$, we have $\Phi_{\rho}(\{x,y\}) = \Phi_{\rho}(\{y\})$ and $\Phi_{\rho}(\{y,z\}) = \Phi_{\rho}(\{z\})$. Applying Axiom L3 yields $\Phi_{\rho}(\{x,y,z\}) = \Phi_{\rho}(\{y,z\}) = \Phi_{\rho}(\{z\})$. By Axiom L2, $\Phi_{\rho}(\{z\}) \ge \Phi_{\rho}(\{x,z\}) \ge \Phi_{\rho}(\{x,y,z\})$ and thus $\Phi_{\rho}(\{z\}) = \Phi_{\rho}(\{x,z\})$. Hence $x\mathcal{N}z$, establishing the transitivity of $\mathcal{N}$. For the second part of the statement, note that Axiom L3 implies that for all $A \in \mathcal{M}$ and $x \in A^c$, $\Phi_{\rho}(A)=\Phi_{\rho}(Ax)$ if and only if for some $y \in A$, $\Phi_{\rho}(\{y\})=\Phi_{\rho}(\{x,y\})$, i.e., $x \mathcal{N} y$.
\end{proof}

\begin{lemma}\label{lm:limitdata:asso}
For all $A \in \mathcal{M}$, let $B=\{x \in A^c: \mathcal{N}(x) \cap A \neq \emptyset\}$. Then $\Phi_{\rho}(A)=\Phi_{\rho}(AB)$ and for all $x \in (AB)^c$, $\Phi_{\rho}(A) \neq \Phi_{\rho}(ABx)$. 
\end{lemma}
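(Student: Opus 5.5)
The plan rests on Lemma \ref{lm:limitdata:transitive} together with Axiom L2. Write $\Phi_{\rho}(A)$ for $\Phi_{\rho}(A,A^c)$ as before. Fix $A$ and let $B=\{x\in A^c:\mathcal{N}(x)\cap A\neq\emptyset\}$. The first claim is $\Phi_{\rho}(A)=\Phi_{\rho}(AB)$, and I will get it by adding the elements of $B$ to $A$ one at a time. Enumerate $B=\{b_1,\dots,b_m\}$ and set $A_k=A\cup\{b_1,\dots,b_k\}$. For each $k$, the element $b_{k+1}$ lies in $A_k^c$ and satisfies $b_{k+1}\mathcal{N}y$ for some $y\in A\subseteq A_k$. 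The ``if'' direction of Lemma \ref{lm:limitdata:transitive} then gives $\Phi_{\rho}(A_k)=\Phi_{\rho}(A_kb_{k+1})$. Chaining these equalities yields $\Phi_{\rho}(A)=\Phi_{\rho}(AB)$. If $B=\emptyset$ the claim is trivial.

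The second claim is that $\Phi_{\rho}(A)\neq\Phi_{\rho}(ABx)$ for every $x\in(AB)^c$. Suppose instead that $\Phi_{\rho}(A)=\Phi_{\rho}(ABx)$ for some such $x$. By the first claim this equals $\Phi_{\rho}(AB)$. Axiom L2 gives the monotone chain
\[
\Phi_{\rho}(A)\ge\Phi_{\rho}(Ax)\ge\Phi_{\rho}(ABx),
\]
and since the two ends are equal, $\Phi_{\rho}(A)=\Phi_{\rho}(Ax)$. Because $x\in A^c$, the ``only if'' direction of Lemma \ref{lm:limitdata:transitive} gives $x\mathcal{N}y$ for some $y\in A$. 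Then $\mathcal{N}(x)\cap A\neq\emptyset$ with $x\in A^c$, which puts $x$ in $B$. This contradicts $x\notin AB$.

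The main subtlety is that L2 must be invoked on the correct sets. The inclusions $A\subseteq Ax\subseteq ABx$ hold, so L2 applies exactly as used above. No transitivity of $\mathcal{N}$ is needed here: the second claim only asks whether $x$ links directly into $A$, and that is the biconditional supplied by Lemma \ref{lm:limitdata:transitive}.
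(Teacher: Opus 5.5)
Your proof is correct. The first claim follows the paper exactly: you move the elements of $B$ into the available set one at a time, using the ``if'' direction of Lemma \ref{lm:limitdata:transitive}.

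For the second claim you take a slightly different route.

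\emph{The paper's route.} It uses the transitivity of $\mathcal{N}$ to show that any $x\in(AB)^c$ satisfies $\mathcal{N}(x)\cap AB=\emptyset$. A link $x\mathcal{N}b$ with $b\in B$ would compose with some $b\mathcal{N}a$, $a\in A$, and give $x\in B$. The paper then applies the biconditional of Lemma \ref{lm:limitdata:transitive} to the set $AB$ and obtains $\Phi_{\rho}(AB)\neq\Phi_{\rho}(ABx)$.

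\emph{Your route.} You sandwich $\Phi_{\rho}(Ax)$ between $\Phi_{\rho}(A)$ and $\Phi_{\rho}(ABx)$ via Axiom L2. You then apply the biconditional to $A$ itself. This needs only monotonicity and the L3-derived biconditional, not transitivity. Your appeal to the first claim in that step is superfluous, since the sandwich alone forces $\Phi_{\rho}(A)=\Phi_{\rho}(Ax)$.

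\emph{What each buys.} The paper's argument also records the structural fact that no alternative outside $AB$ links directly into $AB$. Your argument is more economical. Combined with $\Phi_{\rho}(A)=\Phi_{\rho}(AB)$, it still yields $\Phi_{\rho}(AB)\neq\Phi_{\rho}(ABx)$ for every $x\notin AB$, which is the association-proofness of $AB$ that the subsequent argument relies on.
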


\begin{proof}[Proof of Lemma \ref{lm:limitdata:asso}]
Consider a menu $(A,A^c)$ and divide $A^c$ to a binary partition $\{B,C\}$ such that for all $x \in B$, $\mathcal{N}(x) \cap A \neq\emptyset$ and for all $y \in C$, $\mathcal{N}(y) \cap A = \emptyset$. 
Consider any $D \in \mathcal{M}$ such that $A \subseteq D \subseteq AB$. For all $x \in B$, we have $\mathcal{N}(x) \cap D \neq\emptyset$, and for all $y \in C$, by the transitivity of $\mathcal{N}$ (Lemma \ref{lm:limitdata:transitive}), we have $\mathcal{N}(y) \cap D = \emptyset$. By Lemma \ref{lm:limitdata:transitive}, starting from menu $(A,A^c)$, we can consecutively move alternatives from $B$ to the available set without affecting the choice probability of the default option. This leads to $\Phi_{\rho}(A)=\Phi_{\rho}(AB)$. For any $x \in C = (AB)^c$, by  Lemma \ref{lm:limitdata:transitive} and the fact that $\mathcal{N}(x) \cap AB = \emptyset$, we have  $\Phi_{\rho}(A) = \Phi_{\rho}(AB) \neq \Phi_{\rho}(ABx)$.
\end{proof}

To proceed, let $\mathcal{I}$ be the symmetric part of $\mathcal{N}$. Since $\mathcal{N}$ is transitive and reflexive, $\mathcal{I}$ is also transitive and reflexive. For each $x \in X$, define a subset $[x] \subseteq X$ such that $y \in [x]$  if $x \mathcal{I} y$. Note that for all $x \in X$, $x \in [x]$. By the symmetry of $\mathcal{I}$, we have $y \in [x]$ if and only if $[x]=[y]$, and whenever $[x] \neq [z]$, $[x] \cap [z] = \emptyset$. The network $\mathcal{N}$ induces a binary relation $\rhd$ on $\{[x]\}_{x \in X}$ such that $[x] \rhd [y]$ if and only if $[x] \neq [y]$ and $\hat{x} \mathcal{N} \hat{y}$ for some $\hat{x} \in [x]$ and $\hat{y} \in [y]$. Note that $[x] \rhd [y]$ also implies $\hat{x} \mathcal{N} \hat{y}$ for all $\hat{x} \in [x]$ and $\hat{y} \in [y]$. By definition, $\rhd$ is asymmetric. 
The transitivity of $\mathcal{N}$ implies that $\rhd$ is transitive.

Now we define $\pi$. For each $A \in \mathcal{M}$, let $\overleftarrow{\mathcal{N}}(A)= \{x \in X: \mathcal{N}(x) \cap A \neq \emptyset\}$. Note that $A \subseteq \overleftarrow{\mathcal{N}}(A)$. For each $x \in X$, define $$\pi_x=1-\left(\frac{\Phi_{\rho}(\overleftarrow{\mathcal{N}}([x]))}{\Phi_{\rho}(\overleftarrow{\mathcal{N}}([x])\backslash [x])} \right)^{\frac{1}{|[x]|}}.$$  By the definition, we have  \begin{equation}\label{eq:sufficiency:limitdata:pi}
\mathring{\pi}_{[x]}=\frac{\Phi_{\rho}(\overleftarrow{\mathcal{N}}([x]))}{\Phi_{\rho}(\overleftarrow{\mathcal{N}}([x])\backslash [x])}.   
\end{equation} 
We show that $\pi_x \in (0,1)$ for every $x \in X$. By the asymmetry of $\rhd$, for all $y \in [x]$, we have $\mathcal{N}(y) \cap \left( \overleftarrow{\mathcal{N}}([x])\backslash [x] \right) = \emptyset$. Hence, by Lemma \ref{lm:limitdata:transitive}, for any $y \in [x]$, $\Phi_{\rho}(\overleftarrow{\mathcal{N}}([x])\backslash [x]) \neq \Phi_{\rho}((\overleftarrow{\mathcal{N}}([x])\backslash [x]) \cup \{y\})$. Axiom L2 then implies that $\Phi_{\rho}(\overleftarrow{\mathcal{N}}([x])\backslash [x]) > \Phi_{\rho}( (\overleftarrow{\mathcal{N}}([x])\backslash [x] ) \cup \{y\})$, which further implies $\Phi_{\rho}(\overleftarrow{\mathcal{N}}([x])\backslash [x]) > \Phi_{\rho}(\overleftarrow{\mathcal{N}}([x]))>0$.  This ensures $\pi_x \in (0,1)$.

While we have defined a specific $\pi$, the remaining proof works for any $\pi: X \rightarrow (0,1)$  that satisfies (\ref{eq:sufficiency:limitdata:pi}) for all $x \in X$. For any $A \in \mathcal{M}$, by Lemma \ref{lm:limitdata:transitive},  $\overleftarrow{\mathcal{N}}(A)$ is association-proof. Since $\rhd$ is asymmetric and transitive, there exists $x \in \overleftarrow{\mathcal{N}}(A)$ such that for all $y \in \overleftarrow{\mathcal{N}}(A)$, not $[x] \rhd [y]$. The set $\overleftarrow{\mathcal{N}}(A)\backslash [x]$ is also association-proof. By Axiom L1, we have \begin{equation*}
\begin{split}
\Phi_{\rho}(\overleftarrow{\mathcal{N}}(A))    & = \frac{\Phi_{\rho}(\overleftarrow{\mathcal{N}}(A))}{\Phi_{\rho}(\overleftarrow{\mathcal{N}}(A)\backslash [x])} \Phi_{\rho}(\overleftarrow{\mathcal{N}}(A)\backslash [x]) \\ & =  \frac{\Phi_{\rho}(\overleftarrow{\mathcal{N}}(x))}{\Phi_{\rho}(\overleftarrow{\mathcal{N}}(x)\backslash [x])} \Phi_{\rho}(\overleftarrow{\mathcal{N}}(A)\backslash [x])  = \mathring{\pi}_{[x]}   \Phi_{\rho}(\overleftarrow{\mathcal{N}}(A)\backslash [x]). 
\end{split}
\end{equation*} By a simple induction, we have $\Phi_{\rho}(\overleftarrow{\mathcal{N}}(A)) = \mathring{\pi}_{\overleftarrow{\mathcal{N}}(A)}$.

Next, we define the preference relation $\succ$ such that for distinct $x,y \in X$, $x \succ y$ if for some $A \in \mathcal{M}$, $\rho(y|A,A^c) \neq \rho(y|A\backslash x, (A\backslash x)^c)$. By Axiom L4, $\succ$ is asymmetric. Similar to the proof of Lemma \ref{lm:sufficient:preference},  we can apply Lemma \ref{lm:general:preference} to show that $\succ$ is a preference ordering.  Again by Axiom L4, for all distinct $x,y \in X$, $x\succ y$ implies $\rho(x|A,A^c)=\rho(x|A\backslash y, (A\backslash y)^c)$ for all $A \in \mathcal{M}$.

Finally, we show that $\rho$ can be represented by $(\pi, \mathcal{N}, \succ)$ as a transitive $\mathcal{E}^O$-ABC. For any nonempty $A \in \mathcal{M}$, let $A=\{x_1,...,x_n\}$ such that $x_1 \succ x_2\succ...\succ x_n$. Similar to the proof of the sufficiency part of Theorem \ref{thm:main}, it suffices to show that for all $k \in \{1,...,n\}$, $\sum_{t=1}^k \rho(x_t|A,A^c)= 1-\Phi_{\rho}(\overleftarrow{\mathcal{N}}(\{x_1,...,x_k\}))$. This can be shown by   Lemma \ref{lm:limitdata:asso} and the fact that $\sum_{t=1}^k \rho(x_t|A,A^c) = \sum_{t=1}^k \rho(x_t|\{x_1,...,x_k\},\{x_1,...,x_k\}^c)$.   \medskip

\noindent \textbf{Uniqueness.} The uniqueness of $\succ$ and $\mathcal{N}$ has already been shown in Section \ref{subsec:limit:data:observability}. 
It remains to show the uniqueness property of $\pi$. As shown by the proof above, given $\mathcal{N}$ and $\succ$, any $\pi: X \rightarrow (0,1)$ that satisfies equation (\ref{eq:sufficiency:limitdata:pi}) for all $x \in X$ works for the representation. Each such $\pi$ yields the same value for $\mathring{\pi}_{[x]}$. Conversely, the value of $\mathring{\pi}_{[x]}$ is uniquely pinned by equation (\ref{eq:sufficiency:limitdata:pi}). This finishes the proof of Theorem \ref{thm:main:limdata}. \end{proof}

\bibliographystyle{ecta}
\setstretch{0.8}
\bibliography{association}
	
\newpage 
\baselineskip=18pt
\appendix

\section*{Online Appendix (for online publication only)}

\setcounter{page}{1} 

This online appendix to ``\textit{Associative Networks in Decision Making}'' is organized as follows. In Section OA-1, we characterize ABCs restricted on the domain $\mathcal{E}^F=\{(A,B) \in \mathcal{E}: B = \emptyset\}$ where every
observable alternative is available. Section OA-2 
revisits the issue that unavailable but observable alternatives may be imperfectly recorded in the data and provides formal analysis of two  assumptions discussed in Section \ref{subsec:unobserve}. In Section OA-3, we characterize properties of the random attention rule implied by our ABC model and compare it with other random attention models in the literature. Section OA-4 studies the extension  in which the DM's associative network can be random. 
In Section OA-5, we discuss more general models of initial attention distributions. In Section OA-6, we discuss the identification of our model without the default option. Section OA-7 studies the application of platform network design. The omitted proofs in Section \ref{sec:app} are contained in Section OA-8.

\subsection*{OA-1. Restricted Domain}
\textbf{Identification of $\mathcal{E}^F$-ABCs.} Following the terminology introduced in Section \ref{subsec:limit:data:observability}, we consider a $\mathcal{E}^F$-choice rule $\rho$ where $\mathcal{E}^F := \{(A,B)\in\mathcal{E} : B=\emptyset\}$. Suppose that $\rho$ is a $\mathcal{E}^F$-ABC represented by $(\pi, \mathcal{N}, \succ)$. We first show that $\pi$ and $\succ$ can be uniquely identified.  For all $x \in X$, we have $\pi(x)=\rho(x|\{x\}, \emptyset)$. For all $x,y \in X$,  $x\succ y$  implies $\rho(x|\{x,y\},\emptyset) \ge \pi_x = \rho(x|\{x\},\emptyset)$ and 
$\rho(y|\{x,y\},\emptyset) \le \mathring{\pi}_x\pi_y < \rho(y|\{y\},\emptyset)$. Therefore,  $x \succ y$ if and only if  \begin{equation}\label{eq:identify_preference_F}
\rho(y|\{x,y\},\emptyset) < \rho(y|\{y\},\emptyset).
\end{equation}

However, the associative network $\mathcal{N}$ may not be uniquely identified. In what follows, we provide a partial identification of $\mathcal{N}$ by showing that we can identify the minimum associative network which is valid for representing the random choice rule on $\mathcal{E}^F$. We illustrate the idea of identification by the following examples.

\begin{example} \label{eg:two-alternatives}
\emph{Let $X=\{x,y\}$. If $\rho(y|\{x,y\},\emptyset)> 0 = \rho(x|\{x,y\},\emptyset)$, then any  $(\pi, \mathcal{N}, \succ)$ that represents $\rho$ as a $\mathcal{E}^F$-ABC must satisfy $x\mathcal{N}y$: Since $x$ is never chosen in  menu $(\{x,y\}, \emptyset)$, the attention to $x$ must prompt the DM to consider some better alternative, which has to be $y$.  \qed}
\end{example}

Note that Example \ref{eg:two-alternatives} also demonstrates that the associative network cannot be fully identified: Whether $y\mathcal{N}x$ is unclear as it does not affect the DM's choice probabilities in any menu. The next example generalizes the identification strategy in Example \ref{eg:two-alternatives}.

\begin{example} \label{eg:four-alternatives}
\emph{Consider mutually distinct $x,y,z,w \in X$ and a $\mathcal{E}^F$-ABC $\rho$ such that:} 
\smallskip 

\emph{\noindent (i) In the menu $(\{x,y,z,w\}, \emptyset)$, only $x$ is chosen with positive probability;}

\emph{\noindent (ii) In the menu $(\{x,y,w\}, \emptyset)$, only $x$ and $w$ are chosen with positive probability;}

\emph{\noindent (iii) In the menu $(\{x,y\}, \emptyset)$, only $x$ is chosen with positive probability. \smallskip}

\noindent \emph{Condition (i) implies that $x$ is better than $y,z$ and $w$. By conditions (ii) and (iii),  one can infer that $x$ is directly associated with $y$, and neither $y$ nor $x$ is associated with $w$, since otherwise the attention to $w$ would prompt the consideration of $x$ and thus blocks the choice of $w$. Now, by adding $z$ to menu $(\{x,y,w\}, \emptyset)$, $w$ becomes unchosen. Hence, $z$ must be directly associated with $w$. That is, for any tuple $(\pi, \mathcal{N}, \succ)$ that represents $\rho$ as a  $\mathcal{E}^F$-ABC, we must have $w\mathcal{N}z$.  \qed}
\end{example}

The two examples above suggest the following identification of the associative network. For a given random choice rule $\rho$, define 
\begin{equation}\label{eq:def:Nrho}
\begin{split}
&	\mathcal{N}[\rho]:= \mathcal{X} \cup \bigl\{(x,y) \in X^2: x \neq y \text{ and }  \{y\}=\{w \in X: \rho(w|\{x,y\},\emptyset) > 0\} \bigl\}~ \cup  \\ 
&	\Bigl\{(x,y) \in X^2: \exists A \subseteq X \text{ and } z \in X\backslash x \text{~such that~} \{x,z\}= \{w \in X: \rho(w|Ax,\emptyset) > 0\} \\ 
&\text{and~}  \{z\} = \{w \in X: \rho(w|A\cup \{x,y\},\emptyset)> 0\} =\{w \in X: \rho(w|A,\emptyset)> 0\} \Bigl\}.
\end{split}
\end{equation}  Consider any $A\in\mathcal{M}$ and $z\in X$ that satisfy the condition in (\ref{eq:def:Nrho}). Then $z$ is revealed to be the best alternative in $A\cup\{x,y\}$, and, moreover, attention to any alternative induces eventual consideration of $z$ when the menu is either $(A\cup\{x,y\},\emptyset)$ or $(A,\emptyset)$.  Since $x$ is chosen with positive probability in the menu $(A\cup\{x\},\emptyset)$, attention to $x$ cannot trigger consideration of $z$ in that menu. It follows that no alternative in $A$ is directly associated with $x$. Therefore, to ensure that attention to $x$ can nevertheless lead to consideration of $z$ in $(A\cup\{x,y\},\emptyset)$, it must be that $y$ is directly associated with $x$. Hence, the DM's associative network must contain $\mathcal{N}[\rho]$.\footnote{An equivalent way of identifying the set $\mathcal{N}[\rho]$ is to observe changes in the choice frequencies of the alternatives. For two distinct alternatives $x$ and $y$, parallel to the definition of $\mathcal{N}[\rho]$, we can include $(x,y)$ in $\mathcal{N}[\rho]$ if one of the following two situations occurs. The first situation is that $\rho(y|\{y\})<\rho(y|\{x, y\})$.  The second situation is that there exists a set $A \subseteq X\backslash \{x,y\}$ and some $z \in A$ such that (1) $\rho(z|A)>\rho(z|A\backslash \{w\})$ for each $w \in A\backslash \{z\}$, (2) $\rho(z|A\cup\{y\})>\rho(z|A)$, (3) $\rho(z|A\cup\{x\}) = \rho(z|A)$, and (4) $\rho(z|A\cup\{x,y\}) > \rho(z|A\cup \{y\})$. The latter situation can be interpreted in exactly the same way as the second part of the definition of $\mathcal{N}[\rho]$. This method complements the original definition of $\mathcal{N}[\rho]$ for empirically identifying the network when the data sample is small.}

Indeed, $\mathcal{N}[\rho]$ is the \emph{minimal} associative network such that $(\pi,\mathcal{N}[\rho],\succ)$ represents $\rho$ as a $\mathcal{E}^F$-ABC. We state this as a proposition below.

\begin{proposition}\label{prop:oa:EF:id}
For any $\mathcal{E}^F$-choice rule $\rho$ that can be represented by $(\pi,\mathcal{N},\succ)$ as a $\mathcal{E}^F$-ABC, $\pi$ and $\succ$ can be uniquely identified. Furthermore, $\mathcal{N}[\rho] \subseteq \mathcal{N}$, and $(\pi,\mathcal{N}[\rho],\succ)$ also represents $\rho$ as a $\mathcal{E}^F$-ABC.    
\end{proposition}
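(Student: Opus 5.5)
The identification of $\pi$ and $\succ$ is already done in the text: $\pi_x=\rho(x|\{x\},\emptyset)$, and (\ref{eq:identify_preference_F}) pins down $\succ$. So I would only prove the two claims about $\mathcal{N}[\rho]$, and in both I would use Proposition \ref{prop:chosen_alternative} to translate choice data into statements about reachability.

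\textbf{Containment $\mathcal{N}[\rho]\subseteq\mathcal{N}$.} Take $(x,y)\in\mathcal{N}[\rho]$ with $x\neq y$. In the first clause only $y$ is chosen from $(\{x,y\},\emptyset)$. By Proposition \ref{prop:chosen_alternative}, $x\neq\max(\mathcal{N}^+_{\{x,y\}}(x)\cap\{x,y\};\succ)$. Hence $y\in\mathcal{N}^+_{\{x,y\}}(x)$, which forces $x\mathcal{N}y$.

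For the second clause, with $A$ and $z$ as in (\ref{eq:def:Nrho}), the argument runs as follows.
\begin{enumerate}
\item Since only $z$ is chosen from $(A,\emptyset)$, and since $\{x,z\}$ is exactly the set of alternatives chosen from $(Ax,\emptyset)$, $z$ is the $\succ$-best element of $Axy$ and every $w\in A$ reaches $z$ in $\mathcal{N}_A$.
\item Since $x$ is chosen from $(Ax,\emptyset)$, no path from $x$ inside $Ax$ reaches $z$. Because everything in $A$ reaches $z$, this means $\mathcal{N}(x)\cap A=\emptyset$.
\item Since $x$ is not chosen from $(Axy,\emptyset)$, $x$ reaches something better inside $Axy$. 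Its only possible exit from $\{x\}$ is $y$, so $x\mathcal{N}y$.
\end{enumerate}
I would write these steps out carefully, but they follow the paragraph after (\ref{eq:def:Nrho}).

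\textbf{Representation by $(\pi,\mathcal{N}[\rho],\succ)$.} On $\mathcal{E}^F$ the chosen alternative from an initial set $C\subseteq A$ is $\max_{w\in C} f_A(w)$, where $f_A(w):=\max(\mathcal{N}^+_A(w);\succ)$; note $\mathcal{N}^+_A(w)\subseteq A$ automatically. Define $g_A$ in the same way using $\mathcal{N}[\rho]$. The two rules therefore coincide on $\mathcal{E}^F$ as soon as $f_A(w)=g_A(w)$ for all $A$ and $w\in A$. Since $\mathcal{N}[\rho]\subseteq\mathcal{N}$, we always have $g_A(w)\preceq f_A(w)$, so only the reverse inequality is needed.

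I would argue by contradiction, choosing a counterexample $(A,w)$ with $|A|$ minimal and, within that, $w$ with the fewest steps to $z:=f_A(w)$. Fix a shortest $\mathcal{N}$-path $w=y_0,y_1,\ldots,y_n=z$ in $A$. By minimality of $A$ and of the distance to $z$, every vertex after $y_0$ already satisfies $g_A=f_A$, so the failing link must be the first one, $(y_0,y_1)$. If $n=1$ and $A=\{y_0,y_1\}$, then by Proposition \ref{prop:chosen_alternative} only $y_1$ is chosen from $(\{y_0,y_1\},\emptyset)$, so the first clause puts $(y_0,y_1)$ in $\mathcal{N}[\rho]$. Otherwise I would invoke the second clause with $x=y_0$, $y=y_1$, and the witness set
\[
A':=\{v\in A\backslash\{y_0,y_1\}: v\in\mathcal{N}^+_A(y_1)\ \text{and no $\mathcal{N}$-link from $y_0$ lands in $\{v\}\cup\mathcal{N}^+_A(v)$}\},
\]
possibly trimmed so that $z$ is the only alternative chosen from $(A',\emptyset)$. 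Then $y_0$ has no direct link into $A'$, and $z$ is reachable in $A'y_1$. Verifying the clause means checking, via Proposition \ref{prop:chosen_alternative}, that the chosen sets in $(A',\emptyset)$, $(A'y_0,\emptyset)$ and $(A'y_0y_1,\emptyset)$ are $\{z\}$, $\{y_0,z\}$ and $\{z\}$. That in turn uses minimality: in the smaller menus, reachability under $\mathcal{N}$ and $\mathcal{N}[\rho]$ agree.

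\textbf{Main obstacle.} The hard part is building $A'$ so that all three conditions hold at once. $A'$ must be large enough that everything in it reaches $z$ even without $y_0,y_1$, yet it must contain no direct $\mathcal{N}$-successor of $y_0$. If the naive choice fails because a successor of $y_0$ is needed as an intermediate step toward $z$, I would reroute: pick a different first link on a path to $z$ that does belong to $\mathcal{N}[\rho]$, showing that $g_A(w)\succeq z$ anyway. This may require strengthening the induction to the statement that $g_A(w)=f_A(w)$ for all menus of size at most $k$.
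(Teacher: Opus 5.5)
Your containment argument and the reduction in the second part are sound. With $f_A,g_A$ as you define them, take a counterexample $(A,y_0)$ with the fewest steps to $z=f_A(y_0)$. Then the first link $(y_0,y_1)$ of a shortest $\mathcal{N}$-path $y_0,\dots,y_n=z$ in $A$ must lie outside $\mathcal{N}[\rho]$, so everything hinges on showing $(y_0,y_1)\in\mathcal{N}[\rho]$.

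\textbf{The gap.} That is exactly the step you leave open, and your witness set does not deliver it. As defined, $A'$ discards every $v$ from which $y_1$ (or any other $\mathcal{N}$-successor of $y_0$) is reachable, so a single backward link into $y_1$ destroys it. Take $A=\{y_0,y_1,z\}$ with $z$ the $\succ$-best element and non-trivial links $y_0\mathcal{N}y_1$, $y_1\mathcal{N}z$, $z\mathcal{N}y_1$. Then $y_1\in\mathcal{N}^+_A(z)$, so $z\notin A'$ and $A'=\emptyset$. Trimming cannot repair this, and there is no other first link out of $y_0$ to reroute through.

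\textbf{The missing idea.} A shortest path is automatically forward-chordless: $\mathcal{N}(y_t)\cap\{y_t,\dots,y_n\}=\{y_t,y_{t+1}\}$, since any chord would shorten it. Moreover, $z$ is the $\succ$-maximum of the path, because the whole path lies in $\mathcal{N}^+_A(y_0)$. Hence for $n\ge 2$ the tail $\{y_2,\dots,y_n\}$ itself is the witness in the second clause of (\ref{eq:def:Nrho}), with $x=y_0$ and $y=y_1$. By Proposition \ref{prop:chosen_alternative}, the alternatives chosen with positive probability are:
\begin{itemize}
\item $\{z\}$ in $(\{y_2,\dots,y_n\},\emptyset)$, since every $y_j$ reaches $z$ along the path;
\item $\{y_0,z\}$ in $(\{y_0,y_2,\dots,y_n\},\emptyset)$, since $y_0$ has no outgoing link in this menu;
\item $\{z\}$ in $(\{y_0,\dots,y_n\},\emptyset)$.
\end{itemize}
This is precisely the paper's proof, phrased there as an induction along such a path.

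\textbf{Smaller points.}
\begin{itemize}
\item These chosen sets are read directly off $\rho$, i.e.\ computed with the true network $\mathcal{N}$. So you need neither minimality of $|A|$ nor any agreement between $\mathcal{N}$ and $\mathcal{N}[\rho]$ on smaller menus.
\item The case $n=1$ always falls under the first clause, whatever $|A|$ is: only $y_1$ is chosen from $(\{y_0,y_1\},\emptyset)$ whenever $y_0\mathcal{N}y_1$ and $y_1\succ y_0$. Your split sends $n=1$, $|A|>2$ to the second clause, where the tail witness would be empty.
\item In the containment step, $z\succ x$ should be read off from the fact that only $z$ is chosen from $(Axy,\emptyset)$, because the $\succ$-best element of a menu is always chosen. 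It does not follow from $(A,\emptyset)$ and $(Ax,\emptyset)$ alone.
\end{itemize}
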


\begin{proof}[Proof of Proposition \ref{prop:oa:EF:id}.] 
Consider a $\mathcal{E}^F$-ABC $\rho$  represented by $(\pi, \mathcal{N}, \succ)$. By the arguments before Proposition \ref{prop:oa:EF:id}, it remains to show that $\rho$ can also be represented by $(\pi, \mathcal{N}[\rho], \succ)$. To simplify the notation, let $\mathcal{W}=\mathcal{N}[\rho]$.  Since $\mathcal{W}\subseteq \mathcal{N}$, it suffices to show that for all  $(A,\emptyset) \in \mathcal{E}^F$, $x \in A$ and $C \subseteq A$, if  $x= \max(\mathcal{N}^+_{\scriptscriptstyle\! A}(C) \cap A; \succ)$, then $x \in \mathcal{W}^+_{\scriptscriptstyle\! A}(C)$. The case $x \in C$ is trivial. Suppose that $x \notin C$. Since $x = \max(\mathcal{N}^+_{\scriptscriptstyle\! AB}(C) \cap A; \succ)$, there exists a sequence of mutually distinct alternatives $(x_k)_{k=0}^{n}$, where $n \in \mathbb{N}_+$, such that $x_0 \in C$, $x_{n} = x = \max(\{x_k\}_{k=0}^{n}; \succ)$, and for all $t \le n-1$, $\mathcal{N}(x_t) \cap \{x_k\}_{k=t}^{n} = \{x_t, x_{t+1}\}.$ It then suffices to show that for all $k \le n-1$, $x_k \mathcal{W} x_{k+1}.$ We prove this by induction. If $n=1$, then $x_1 \succ x_0$ and $x_0 \mathcal{N} x_1$ imply $\rho(x_0|\{x_0, x_1\}, \emptyset) = 0$, which further implies $x_0 \mathcal{W} x_1$ by the definition of $\mathcal{W}$. Suppose that our hypothesis is true for all $n \le m$. Consider the case where $n = m +1$. If the second best alternative in $\{x_k\}_{k=0}^{n}$ is not $x_0$, then applying our induction hypothesis twice delivers the result. If the second best alternative in $\{x_k\}_{k=0}^{n}$ is $x_0$, then by our assumption on $\mathcal{N}$, $x_{n}$ is the only alternative chosen with positive probabilities in menus $(\{x_k\}_{k=0}^{n}, \emptyset)$ and $(\{x_k\}_{k=2}^{n}, \emptyset)$, and $x_0$ and $x_{n}$ are the only two alternatives chosen with positive probabilities in the menu $(\{x_0\} \cup \{x_t\}_{t=2}^{n}, \emptyset)$. Therefore, by the definition of $\mathcal{W}$, we have $x_0 \mathcal{W} x_1$. By applying the induction hypothesis to $(x_k)_{k=1}^{n}$,  we are done. 
\end{proof}  \medskip

\noindent \textbf{Axiomatization.} Since the domain $\mathcal{E}^F$ is the standard one studied in the literature, we axiomatize $\mathcal{E}^F$-ABCs in the remainder of this section for completeness. The first axiom is Default Independence, introduced in Section \ref{subsec:axiom}.

\bigskip
\noindent\textbf{Axiom A1\textemdash Default Independence:} For all $x \in X$ and $A,B\in \mathcal{M}$ with $x \in A\cap B$:  $$\frac{\Phi_{\rho}(A,\emptyset)}{\Phi_{\rho}(A\backslash x,\emptyset)} = \frac{\Phi_{\rho}(B,\emptyset)}{\Phi_{\rho}(B\backslash x,\emptyset)}.$$  
\smallskip
		
		For a given choice rule $\rho$ and menu $(A,\emptyset)$, let $c_{\scriptscriptstyle\!\rho}(A,\emptyset):=\{x\in A: \rho(x|A,\emptyset) > 0\}$ be the {set of chosen alternatives} in $A$, i.e., those in $A$ that are chosen with positive probabilities. We impose the next two axioms on the set of chosen alternatives.


		\bigskip	
		\noindent\textbf{Axiom A2\textemdash Sen's $\alpha$:} For all $A,B\in \mathcal{M}$,   $B\subseteq A$ implies $c_{\scriptscriptstyle\!\rho}(A,\emptyset) \cap B \subseteq c_{\scriptscriptstyle\! \rho}(B,\emptyset)$.

		\noindent\textbf{Axiom A3\textemdash Reducibility:} For all $A \in \mathcal{M}$, if for every $x \in A$, $c_{\scriptscriptstyle\!\rho}(A, \emptyset)\neq c_{\scriptscriptstyle\! \rho}(A\backslash x, \emptyset)$, then $c_{\scriptscriptstyle\! \rho}(A, \emptyset) = A$. \bigskip


		Axiom A2 states that if an alternative is selected from a larger menu, it must also be selected from any smaller menu that contains it. To interpret, if a particular alternative  $x$  is not chosen in a smaller menu, then given the presence of more competitive alternatives in a larger menu, it should also remain unselected. In our context, if an alternative is not chosen, its consideration must lead to the consideration of a better alternative. Consequently, in a larger menu, the superior alternative remains to be associated with $x$ and thus blocks the choice of $x$.   \medskip

	The contrapositive of Axiom A3 states that if not all alternatives are selected, then there exists an unselected alternative whose removal does not alter the set of chosen alternatives.  To illustrate this axiom, consider a menu $(\{x,y,z\}, \emptyset)$ where only $x$ is chosen. As both $y$ and $z$ are unselected, their consideration must lead to the consideration of a superior alternative in this  menu, which has to be $x$. If the removal of $y$ results in a change in the set of chosen alternatives such that $z$ becomes chosen, then $x$ must be associated with $z$ through $y$, and $x$ must be directly associated with $y$. In this scenario, the deletion of $z$ does not alter the association relation between $x$ and $y$, and thus does not affect the set of chosen alternatives. In summary, Axiom A3 establishes the existence of an unselected alternative (if not all alternatives are chosen) whose removal does not impact the association relation among the remaining alternatives, thereby preserving the set of chosen alternatives.  \medskip

		For any  $x \in X$ and $A \in \mathcal{M}$, we say that $x$ is \textit{associatively independent of $A$}, denoted by $x \vdash A$, if $x \notin A$ and for all $y \in A$, $\rho(y|A,\emptyset)=\rho(y|Ax, \emptyset)$. Note that according to this definition, for all $x \in X$, we have $x \vdash \emptyset$. \bigskip

		\noindent\textbf{Axiom A4\textemdash Weak I-Independence:} For all $x \in X$ and $A, B \in \mathcal{M}$,  if $x \vdash A$ and $x \vdash B$, then $x \vdash A\cup B$. \bigskip

		Axiom A4 posits that if $x$ is associatively independent of both $A$ and $B$, then it is also associatively independent of their union. Notably, this axiom can be implied by the I-Independence axiom of \citeAppendix{OA-ecta2014consideration}.  According to the I-Independence axiom, if $x$ does not affect the probability of selecting alternative $y$ in a particular menu,  then it should not impact the probability of choosing $y$ in \textit{every} menu.   \medskip

For any two alternatives $x$ and $y$, we say that $x$ \textit{weakly dominates} $y$, denoted by $x \trianglerighteq y$, if there is a menu $A$ such that $y \in A$ and $c_{\scriptscriptstyle\! \rho}(A,\emptyset)=\{x\}$.  Note that every alternative weakly dominates itself. \bigskip


		\noindent\textbf{Axiom A5\textemdash Dominance Asymmetry:} For all  $x, y, z \in X$ and $A,B \in \mathcal{M}$ such that $x \neq z$,  $y \in c_{\scriptscriptstyle\!\rho}(A,\emptyset)$ and $z \in c_{\scriptscriptstyle\!\rho}(B,\emptyset)$, if $x \trianglerighteq y$, then  $$\rho(z|A,\emptyset) \neq \rho(z|A\backslash y,\emptyset) \Rightarrow \rho(x|B,\emptyset)  = \rho(x|B\backslash z,\emptyset).$$

To understand Axiom A5, observe that since $y$ is chosen in $(A, \emptyset)$, the attention to it cannot prompt the DM to consider any better alternative in $A$. Hence, the change in the choice probability of $z$ by the deletion of $y$ can only be attributed to the fact that  the presence of $y$ hinders the choice of $z$, i.e., $y$ is better than $z$.  It follows that the alternative $x$ that weakly dominates $y$ must also be better than $z$. Therefore, the inverse cannot occur, that is, deleting $z$ from any menu in which $z$ is chosen will not affect the choice probability of $x$. \bigskip

\begin{theorem}\label{thm:oa:EF:axiom}
A $\mathcal{E}^F$-choice rule $\rho$ is a  $\mathcal{E}^F$-ABC if and only if it satisfies Axioms A1-A5.   
\end{theorem}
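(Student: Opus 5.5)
Necessity is checked directly from the model; sufficiency constructs $(\pi,\mathcal{N},\succ)$ and verifies the choice probabilities through a cumulative‑sum argument, as in the proof of Theorem \ref{thm:main}.

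\textbf{Necessity.} Axiom A1 holds because $\Phi_\rho(A,\emptyset)=\mathring{\pi}_A$.

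Axiom A2 follows from Proposition \ref{prop:chosen_alternative}. If $x$ is unchosen in $B$, there is a path inside $B$ from $x$ to a better alternative. That path survives in any $A\supseteq B$, so $x$ is unchosen in $A$.

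For Axiom A3, suppose $c_\rho(A,\emptyset)\neq A$. Among the unchosen alternatives, take $y$ to be $\succ$-minimal. Then $y$ is never an intermediate node on a shortest path from an unchosen alternative to a better one. Indeed, such a path can be truncated at the first node that beats its start, and paths ending at chosen alternatives avoid $y$. Hence removing $y$ leaves $c_\rho$ unchanged.

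For Axiom A4, I use Proposition \ref{prop:reformulation}. It shows that $x\vdash A$ holds iff no element of $A$ is reachable from $x$ within $Ax$, i.e., $\mathcal{N}_{Ax}^+(x)\cap A=\emptyset$; if reachability holds, the choice probabilities in $A$ change. Since $\mathcal{N}$ is reflexive, this is equivalent to $\mathcal{N}(x)\cap A=\emptyset$, which is additive over unions.

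For Axiom A5, note that $x\trianglerighteq y$ implies $x\succeq y$. If deleting $y$ changes $\rho(z|A,\emptyset)$ while $y$ is chosen, then $y$'s consideration does not lead upward, so $y$ must beat $z$. Hence $x\succ z$, and deleting $z$ cannot affect the choice probability of the better alternative $x$.

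\textbf{Sufficiency, parameters.} Set $\pi_x=\rho(x|\{x\},\emptyset)$ and let $\mathcal{N}=\mathcal{N}[\rho]$ from (\ref{eq:def:Nrho}). I would define $\succ$ first on pairs, by $x\succ y$ iff $\rho(y|\{x,y\},\emptyset)<\rho(y|\{y\},\emptyset)$, which is (\ref{eq:identify_preference_F}). I would then extend the revealed relation using the A5 pattern: $x\succ z$ whenever deleting some chosen $y$ with $x\trianglerighteq y$ changes the choice probability of $z$. Asymmetry comes from A5.

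Completeness and transitivity come from Lemma \ref{lm:general:preference}, applied with $D=X$ on the domain $\mathcal{E}^F$. This requires the representation $\Phi_\rho(A,\emptyset)=\sum_{B\cap A=\emptyset}\sigma(B)$, and A1 gives exactly that with $\sigma(B)=\pi_B\mathring{\pi}_{X\backslash B}$, where $B$ is the initial attention set. One caveat: the lemma is stated on the $(A,D\backslash A)$ domain, so I would rerun its inclusion–exclusion argument on $(A,\emptyset)$. Only the default probabilities and the pairwise‑preference implications are used there, so this should go through.

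\textbf{Sufficiency, key lemma.} I need a structural lemma: for every $A$, $c_\rho(A,\emptyset)=\{x\in A: x=\max(\mathcal{N}^+_A(x)\cap A;\succ)\}$. I would prove it by induction on $|A|$, using A2 and A3.

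By A3, if $c_\rho(A,\emptyset)\neq A$, some unchosen $y$ can be removed without changing $c_\rho$, and the induction hypothesis applies to $A\backslash y$. I must then show that $y$ itself has an $\mathcal{N}$-path to a better element, and that removing $y$ destroys no path used by other unchosen alternatives. The clauses of $\mathcal{N}[\rho]$ are designed to certify exactly such links, mirroring Examples \ref{eg:two-alternatives} and \ref{eg:four-alternatives}.

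\textbf{Sufficiency, probabilities.} Given the lemma, I compute probabilities by cumulative sums, as in the proof of Theorem \ref{thm:main}. A4 together with the I‑independence flavour yields that the top‑$k$ chosen mass equals $1-\mathring{\pi}$ of the set of alternatives leading into the top $k$, which matches (\ref{eq:prop:reform:3}). Concretely, I would show that adding an element $w$ with no link into the top $k$ does not change their cumulative probability; A4 lets me add such elements one at a time. This reduces the cumulative mass to a menu where every element reaches the top $k$, whose mass is $1-\Phi_\rho$ by A1.

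\textbf{Main obstacle.} I expect the hardest step to be the structural lemma. A3 only guarantees the existence of a removable unchosen alternative, and converting that into exact path reachability under the minimal network $\mathcal{N}[\rho]$ requires careful induction. I would try inducting on the $\succ$-rank of the best element of $A$.
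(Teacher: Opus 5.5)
\textbf{The proposal has genuine gaps.} The main one is in sufficiency: your structural lemma $c_{\rho}(A,\emptyset)=\{x\in A: x=\max(\mathcal{N}^+_A(x)\cap A;\succ)\}$ is where essentially all the work lies, and you leave it unproved. Worse, the route you indicate (induction on $|A|$ using A2 and A3, with $\mathcal{N}=\mathcal{N}[\rho]$) cannot succeed, because A4 is indispensable at exactly this step.

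A counterexample shows this. Let $X=\{x,y,z\}$ and $\pi\equiv\frac12$. Let every menu of size at most two be generated by the link-free ABC with $z\succ y\succ x$, and set $\rho(z|X,\emptyset)=\frac12$, $\rho(y|X,\emptyset)=\frac38$, $\rho(x|X,\emptyset)=0$.
\begin{itemize}
\item A1, A2 and A3 hold.
\item A5 holds: $\trianglerighteq$ is the identity, deleting a chosen $z$ changes only the probabilities of $x$ and $y$, deleting a chosen $y$ changes only that of $x$, and deleting $x$ changes nothing.
\item A4 fails: $x\vdash\{y\}$ and $x\vdash\{z\}$, but not $x\vdash\{y,z\}$.
\end{itemize}
Here $\mathcal{N}[\rho]$ is trivial and your $\succ$ is $z\succ y\succ x$. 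So your lemma predicts $c_{\rho}(X,\emptyset)=X$, whereas in fact $c_{\rho}(X,\emptyset)=\{y,z\}$.

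The paper makes A4 usable by defining the network through it: $x\mathcal{N}y$ iff $x\vdash A$ and $x\notin c_{\rho}(A\cup\{x,y\},\emptyset)$ for some $A$. It then proceeds in three steps.
\begin{itemize}
\item Claim 1: from A1, $x\vdash A$ implies $x\in c_{\rho}(Ax,\emptyset)$.
\item Claim 4: suppose $x_1$ were chosen on a path $x_1\mathcal{N}x_2\mathcal{N}\cdots\mathcal{N}x_n$ with $x_n$ best. The A5 consequence (deleting a chosen alternative never changes the probability of a better one) gives $x_1\vdash\{x_2,\dots,x_n\}$. A4 merges this with the certifying $x_1\vdash A$, and Claim 1 with A2 then contradicts $x_1\notin c_{\rho}(A\cup\{x_1,x_2\},\emptyset)$.
\item Claim 3, the converse direction, likewise needs A4 alongside A3.
\end{itemize}

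\textbf{Several further steps fail as written.}

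\emph{Transitivity of $\succ$.} This does not follow by rerunning Lemma \ref{lm:general:preference}. For a putative cycle $x\succ y\succ z\succ x$, the inclusion--exclusion needs $\rho(x|\{x,y,z\},\emptyset)=\rho(x|\{x,z\},\emptyset)$ and its two analogues. On $\mathcal{E}^F$ the only source of such equalities is the A5 consequence, which requires the deleted alternative to be chosen in $\{x,y,z\}$. So only the case $c_{\rho}(\{x,y,z\},\emptyset)=\{x,y,z\}$ is covered. The cases with unchosen elements need A2 and the $\trianglerighteq$-clause of A5 (the paper's cases 2 and 3).

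\emph{Probability step.} The top-$k$ mass is preserved when you add alternatives that do not lead into the top $k$ because of the A5 consequence, applied along an enumeration in which each added alternative is chosen on entry. It does not come from A4.

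\emph{Necessity of A3.} The $\succ$-minimal unchosen alternative need not be removable. Take $a\succ b\succ c$ with links $b\mathcal{N}c$ and $c\mathcal{N}a$ only. Then $c_{\rho}(\{a,b,c\},\emptyset)=\{a\}$, but deleting $c$ makes $b$ chosen, since $c$ is an interior node of $b$'s truncated path $b,c,a$. The paper instead deletes a leaf of an in-tree rooted at $x_k$ inside the block $B_k$, the alternatives whose best reachable chosen alternative is $x_k$.

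\emph{Necessity of A4.} Your characterization of $x\vdash A$ is false. With $x\succ y$ and no links, $\mathcal{N}(x)\cap\{y\}=\emptyset$, yet $\rho(y|\{x,y\},\emptyset)=\mathring{\pi}_x\pi_y\neq\pi_y$. The correct characterization also requires every element of $c_{\rho}(A,\emptyset)$ to beat $x$. Closure under unions then uses $c_{\rho}(AB,\emptyset)\subseteq c_{\rho}(A,\emptyset)\cup c_{\rho}(B,\emptyset)$, which follows from Sen's $\alpha$.
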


\begin{proof}[Proof of Theorem \ref{thm:oa:EF:axiom}]
\textbf{Necessity.} Consider a $\mathcal{E}^F$-ABC $\rho$ represented by $(\pi, \mathcal{N}, \succ)$. Axiom A1 holds trivially. For Axiom A2, consider $A,B \in \mathcal{M}$ with $B \subseteq A$. If for some $x \in B$,  $x \notin c_{\scriptscriptstyle\! \rho}(B,\emptyset)$, then we have $x \neq \max(\mathcal{N}^+_{\scriptscriptstyle\! B}(x); \succ)$. Since $\mathcal{N}^+_{\scriptscriptstyle\! B}(x) \subseteq \mathcal{N}^+_{\scriptscriptstyle\! A}(x)$, we have $x \neq \max(\mathcal{N}^+_{\scriptscriptstyle\! A}(x); \succ)$. Thus, $x \notin c_{\scriptscriptstyle\! \rho}(A,\emptyset)$.\medskip

			For Axiom A3, consider $A \in \mathcal{M}$ such that $c_{\scriptscriptstyle\!\rho}(A,\emptyset) \neq A$. Let $c_{\scriptscriptstyle\!\rho}(A,\emptyset)=\{x_1,...,x_n\}$ such that for all $k \in \{1,...,n-1\}$, $x_k \succ x_{k+1}$. 
			Consider a partition $\{B_k\}_{k=1}^n$ of $A$ such that  for every $k$, $B_k=\{y\in A:  \mathcal{N}^+_{\scriptscriptstyle\! A}(y) \cap \{x_1,...,x_k\}=\{x_k\}\}.$ Note that each $B_k$ contains $x_k$, and for all $y \in B_k \backslash x_k$, we have $y \notin c_{\scriptscriptstyle\! \rho}(A,\emptyset)$, $x_k \succ y$ and $x_k \in \mathcal{N}^+_{\scriptscriptstyle\! B_k}(y)$. Consider some $k$ such that $B_k \backslash x_k \neq \emptyset$. It is easy to show that there is an alternative $y \in B_k\backslash x_k$ such that for all $z \in B_k \backslash y$, $x_k \in \mathcal{N}^+_{\scriptscriptstyle\! B_k\backslash y}(z)$. Deleting $y$ from menu $(A,\emptyset)$ will not affect the choices. \medskip


			For Axiom A4, note that $x\vdash A$ if and only if  for all $y \in c_{\scriptscriptstyle\! \rho}(A,\emptyset)$, $y \succ x$, and for all $z \in A$, $(x,z) \notin \mathcal{N}$. Since $c_{\rho}(AB) \subseteq c_{\rho}(A) \cup c_{\rho}(B)$, it follows that $x \vdash A$ and $x \vdash B$ imply $x \vdash AB$.  \medskip

			For Axiom A5, it suffices to show that for two distinct alternatives $x$ and $y$, if $x \in c_{\scriptscriptstyle\! \rho}(A,\emptyset)$ and $\rho(y|A,\emptyset) \neq \rho(y|A\backslash x,\emptyset)$, then $x \succ y$. Let $c_{\scriptscriptstyle\!\rho}(A,\emptyset)=\{x_1,...,x_n\}$ such that for all $k \in \{1,...,n-1\}$, $x_k \succ x_{k+1}$.  Consider the partition $\{B_k\}_{k=1}^n$ constructed in the proof for Axiom A3. We have $x=x_k$ for some $k$. If $y \notin c_{\scriptscriptstyle\! \rho}(A,\emptyset)$, then $\rho(y|A,\emptyset) \neq \rho(y|A\backslash x,\emptyset)$ implies $y \in B_k$, and thus $x \succ y$. If $y \in c_{\scriptscriptstyle\! \rho}(A,\emptyset)$, then $\rho(y|A,\emptyset) \neq \rho(y|A\backslash x,\emptyset)$ implies $y=x_{t}$ for some $t > k$, and thus $x \succ y$. 
			\bigskip

			\noindent \textbf{Sufficiency.} Through out the proof of sufficiency, we assume that Axioms A1-A5 hold. For the attention probability function $\pi$, let $\pi_{x} = \rho(x|\{x\}, \emptyset)$ for every $x\in X$. For the preference ordering $\succ$, let $x \succ y$ if $x \neq y$, $x \in c_{\scriptscriptstyle\! \rho}(\{x,y\}, \emptyset)$ and $\rho(y|\{x,y\},\emptyset) \neq \rho(y|\{y\},\emptyset)$.  For the associative network $\mathcal{N}$, let $(x,y) \in \mathcal{N}$ if and only if either (i) $x=y$, or (ii) $x\neq y$ and there exists $A \in \mathcal{M}$ such that $x \vdash A$ and $x \notin c_{\scriptscriptstyle\! \rho}(A\cup \{x,y\}, \emptyset)$. We proceed with a sequence of claims.\bigskip

\noindent \textbf{Claim 1.} \textit{For all $x \in X$ and $A \in \mathcal{M}$, if $x \vdash A$, then $x \in c_{\scriptscriptstyle\! \rho}(Ax, \emptyset)$.}

\begin{proof}
By $x \vdash A$, we have $\sum_{y \in A}\rho(y|A,\emptyset)=\sum_{y \in A}\rho(y|Ax,\emptyset)$. Since $\Phi_{\rho}(Ax,\emptyset)<\Phi_{\rho}(A, \emptyset)$, we have $\rho(x|Ax,\emptyset)\neq 0$, i.e., $x \in c_{\scriptscriptstyle\! \rho}(Ax,\emptyset)$. 
\end{proof}

\noindent \textbf{Claim 2.} \textit{The binary relation $\succ$ is a preference ordering and satisfies that for all $x,y \in X$ and $A \in \mathcal{M}$, if $x \succ y$ and $y \in c_{\scriptscriptstyle\! \rho}(A,\emptyset)$, then $\rho(x|A,\emptyset)=\rho(x|A\backslash y, \emptyset)$.}

\begin{proof}
The claim that $x  \succ y$ and $y \in c_{\scriptscriptstyle\! \rho}(A,\emptyset)$	imply $\rho(x|A,\emptyset)=\rho(x|A\backslash y, \emptyset)$ follows from the definition of $\succ$ and Axiom A5. Showing that $\succ$ is well-defined for each distinct pair of alternatives and asymmetric is trivial. To see that $\succ$ is transitive, suppose to the contrary that there are three mutually distinct alternatives $x, y$ and $z$ such that $x \succ y$, $y \succ z$, and $z \succ x$. By symmetry, we can focus on three representative cases, where in case 1, $c_{\scriptscriptstyle\! \rho}(\{x,y,z\}, \emptyset)=\{x,y,z\}$, in case 2, $c_{\scriptscriptstyle\! \rho}(\{x,y,z\}, \emptyset)=\{x,y\}$, and in case 3, $c_{\scriptscriptstyle\! \rho}(\{x,y,z\}, \emptyset)=\{x\}$. We want to show that all the three cases lead to contradiction.   
				
For case 1, we have $1-\mathring{\pi}_{\scriptscriptstyle\! \{x,y,z\}}=1-\Phi_{\rho}(\{x,y,z\}, \emptyset)=\sum_{w \in \{x,y,z\}}\rho(w|\{x,y,z\},\emptyset)$ $= \rho(x|\{x,z\},\emptyset)+\rho(y|\{x,y\},\emptyset) + \rho(z|\{y,z\},\emptyset) = 1-\Phi_{\rho}(\{x,z\},\emptyset)- \rho(z|\{x,z\},\emptyset) + 1-\Phi_{\rho}(\{x,y\},\emptyset)- \rho(x|\{x,y\},\emptyset) + 1-\Phi_{\rho}(\{y,z\},\emptyset)- \rho(y|\{y,z\},\emptyset)$ $= 3 - \Phi_{\rho}(\{x,z\},\emptyset)- \rho(z|\{z\},\emptyset)  -\Phi_{\rho}(\{x,y\},\emptyset)- \rho(x|\{x\},\emptyset)  -\Phi_{\rho}(\{y,z\},\emptyset)- \rho(y|\{y\},\emptyset)$ $= \pi_{x} + \pi_{y} + \pi_{z} - \pi_{\scriptscriptstyle\! \{x,y\}} - \pi_{\scriptscriptstyle\! \{y,z\}} - \pi_{\scriptscriptstyle\! \{x,z\}} < 1-\mathring{\pi}_{\scriptscriptstyle\! \{x,y,z\}},$ which is a contradiction.

For case 2, since $c_{\scriptscriptstyle\! \rho}(\{x,y,z\},\emptyset)=\{x,y\}$ and $z \succ x$, we have $c_{\scriptscriptstyle\! \rho}(\{y,z\},\emptyset)=\{y\}$. It follows that $y \trianglerighteq z$, and thus by Axiom A5 and $z \succ x$, we have for all $A \in \mathcal{M}$ with $x \in c_{\scriptscriptstyle\! \rho}(A,\emptyset)$, $\rho(y|A,\emptyset)=\rho(y|A\backslash x,\emptyset)$. By Axiom A2, $x \in c_{\scriptscriptstyle\! \rho}(\{x,y\},\emptyset)$, and thus   $\rho(y|\{x,y\},\emptyset)=\rho(y|\{y\},\emptyset)$, which contradicts to the fact that $x \succ y$.

For case 3, since $c_{\scriptscriptstyle\! \rho}(\{x,y,z\}, \emptyset)=\{x\}$, we have  $x \trianglerighteq z$. By $z \succ x$, we have $z \in c_{\scriptscriptstyle\! \rho}(\{x,z\},\emptyset)$ and $\rho(x|\{x,z\},\emptyset) \neq \rho(x|\{x\},\emptyset)$. By Axiom A5, this  contradicts to $x \trianglerighteq z$. 
\end{proof}

\noindent \textbf{Claim 3.} \textit{
For all $x \in X$ and $A \in \mathcal{M}$ with $x \in A$, if $x \notin c_{\scriptscriptstyle\! \rho}(A,\emptyset)$, then there exists $y \in \mathcal{N}^+_{\scriptscriptstyle\! A}(x)$ such that $y \succ x$. }

\begin{proof}
We prove by induction on $|A|$. First, if $|A|=2$, then $A=\{x,y\}$ and $x \notin c_{\scriptscriptstyle\!\rho}(\{x,y\},\emptyset)$. Then by the construction of $\mathcal{N}$ and $\succ$, we have $(x,y)\in \mathcal{N}$ and $y \succ x$. Therefore, the lemma holds when $|A|=2$.

Assume that the lemma holds whenever $|A|\le n$, where $n \ge 2$. Consider the case  where $|A|=n+1$. Since $x \notin c_{\scriptscriptstyle\!\rho}(A,\emptyset)$, by Axiom A3, there exists $y \in A$ such that $c_{\scriptscriptstyle\!\rho}(A,\emptyset)=c_{\scriptscriptstyle\!\rho}(A\backslash y, \emptyset)$. If $y \neq x$, then $x \notin c_{\scriptscriptstyle\!\rho}(A\backslash y, \emptyset)$, and we are done by our induction hypothesis. Hence, consider the case where $x$ is the only alternative in $A$ such that $c_{\scriptscriptstyle\!\rho}(A,\emptyset)=c_{\scriptscriptstyle\!\rho}(A\backslash x, \emptyset)$. By a similar argument, we can additionally assume that for all $z \in c_{\scriptscriptstyle\!\rho}(A, \emptyset)$, we have $x \in c_{\scriptscriptstyle\!\rho}(A\backslash z, \emptyset)$. Thus, by Claim 2, for all $z \in c_{\scriptscriptstyle\!\rho}(A,\emptyset)$, $z \succ x$.

To proceed, consider $(A\backslash x, \emptyset)$. We first show that if $c_{\scriptscriptstyle\!\rho}(A\backslash x, \emptyset)=A\backslash x$, then $|A\backslash x|=1$, and we are done. To see this, suppose to the contrary that  $|A\backslash x| \ge 2$, and let $y$ and $z$ be two distinct alternatives in $A\backslash x$. By Axiom A2 and the assumptions we impose on the case we consider, we have $x \in c_{\scriptscriptstyle\!\rho}(A\backslash y, \emptyset) = A\backslash y$ and $x \in c_{\scriptscriptstyle\!\rho}(A\backslash z, \emptyset)=A\backslash z$. Since for all $\hat{x} \in A\backslash x$, we have $\hat{x} \succ x$, by Claim 2, we have $x \vdash A\backslash \{x,y\}$ and $x \vdash A\backslash \{x,z\}$. It follows from Axiom A4 that $x \vdash A\backslash x$, which by Claim 1 is a contradiction since $x \notin c_{\scriptscriptstyle\!\rho}(A,\emptyset)$.

By the above argument, we can additionally assume that $c_{\scriptscriptstyle\!\rho}(A\backslash x,\emptyset) \neq A\backslash x$. By Axiom A3,  there exists $z \in A\backslash x$ such that $c_{\scriptscriptstyle\!\rho}(A\backslash \{x,z\}, \emptyset)=c_{\scriptscriptstyle\!\rho}(A\backslash x, \emptyset)=c_{\scriptscriptstyle\!\rho}(A,\emptyset)$. By Axiom A2, we have $c_{\scriptscriptstyle\!\rho}(A\backslash z)=\{x\} \cup c_{\scriptscriptstyle\!\rho}(A)$. Since $x \in c_{\scriptscriptstyle\!\rho}(A\backslash z, \emptyset)= \{x\} \cup c_{\scriptscriptstyle\!\rho}(A)$ and for all $w \in c_{\scriptscriptstyle\!\rho}(A)$, $w \succ x$, by Claim 2, we have $x\vdash A\backslash \{x,z\}$. Since $x \notin c_{\scriptscriptstyle\!\rho}(A,\emptyset)$, we have $(x,z) \in \mathcal{N}$. Since $z \notin c_{\scriptscriptstyle\!\rho}(A\backslash x, \emptyset)$, by the induction hypothesis, there exists $y \in A\backslash x$ such that $y \succ z$ and $y \in \mathcal{N}^+_{\scriptscriptstyle\!A\backslash x}(z)$. Repeatedly applying this argument, we can assume without loss of generality that $y \in c_{\scriptscriptstyle\!\rho}(A\backslash x, \emptyset)$.  It then follows that $y \in \mathcal{N}^+_{\scriptscriptstyle\!A}(x)$. Since $y \in c_{\scriptscriptstyle\!\rho}(A\backslash x, \emptyset)$, we have $y \in c_{\scriptscriptstyle\!\rho}(A, \emptyset)$, and thus $y \succ x$. 
\end{proof}

\noindent \textbf{Claim 4.} \textit{For all $x \in X$ and $A \in \mathcal{M}$, if $x \in c_{\scriptscriptstyle\!\rho}(A,\emptyset)$, then  $x=\max(\mathcal{N}^+_{\scriptscriptstyle\!A}(x); \succ)$.}

\begin{proof}
We show that in the menu $(A,\emptyset)$, if there is an alternative that is associated with $x$ and $\succ$-better than $x$, then $x$ is not chosen. By Axiom A2, it suffices to show that for any sequence of alternatives $(x_k)_{k=1}^n$, where $n \ge 2$, if for all $k \in \{1,...,n-1\}$, $x_n \succ x_{k}$ and $(x_k, x_{k+1}) \in \mathcal{N}$,  then $x_1 \notin c_{\scriptscriptstyle\!\rho}(\{x_1,...,x_n\},\emptyset)$. We show this by induction on $n$. First, let $n =2$. We have $x_1\mathcal{N} x_2$ and $x_2 \succ x_1$. Suppose to the contrary that $x_1 \in c_{\scriptscriptstyle\!\rho}(\{x_1, x_2\}, \emptyset)$, then by Claim 2 and the construction of $\succ$, we have $c_{\scriptscriptstyle\!\rho}(\{x_1, x_2\}, \emptyset)=\{x_1, x_2\}$ and $\rho(x_2|\{x_1, x_2\},\emptyset)=\rho(x_2|\{x_2\},\emptyset)$, i.e., $x_1 \vdash \{x_2\}$. However, since $x_1 \mathcal{N} x_2$, by the construction of $\mathcal{N}$, we can find $A \in \mathcal{M}$ such that $x_1 \vdash A$ and $x_1 \notin c_{\scriptscriptstyle\!\rho}(A\cup\{x_1, x_2\}, \emptyset)$. By Axiom A4, we have $x_1 \vdash Ax_2$, and by Claim 1, we have $x_1 \in c_{\scriptscriptstyle\!\rho}(A\cup \{x_1, x_2\}, \emptyset)$, which is a contradiction. Thus, we must have $x_1 \notin c_{\scriptscriptstyle\!\rho}(\{x_1, x_2\}, \emptyset)$.

Next, suppose that the induction hypothesis holds for all $n \le m$ ($m\ge 2$). Consider the case where $n=m+1$. Since for all $k \in \{1,...,n-1\}$, $x_{n} \succ x_k$, we have $c_{\scriptscriptstyle\!\rho}(\{x_2,...,x_n\},\emptyset)=\{x_n\}$ by our induction hypothesis. Suppose to the contrary that $x_1 \in c_{\scriptscriptstyle\!\rho}(\{x_1,...,x_n\}, \emptyset)$, by Axiom A2, we have $c_{\scriptscriptstyle\!\rho}(\{x_1,...,x_n\}, \emptyset)=\{x_1, x_n\}$. By Claim 2, we have $\rho(x_n|\{x_1,...,x_n\}, \emptyset)=\rho(x_n|\{x_2,...,x_n\}, \emptyset)$. Thus $x_1 \vdash \{x_2,...,x_n\}$. Since $x_1\mathcal{N}x_2$, we can find $A\in \mathcal{M}$ such that $x_1 \vdash A$ and $x_1 \notin c_{\scriptscriptstyle\!\rho}(A\cup \{x_1, x_2\},\emptyset)$. By Axiom A4, we have $x_1 \vdash A\cup \{x_2,...,x_n\}$, and by Claim 1, we have $x_1 \in c_{\scriptscriptstyle\!\rho}(A\cup\{x_1,...,x_n\}, \emptyset)$. It follows from Axiom A2 that $x_1 \in c_{\scriptscriptstyle\!\rho}(A\cup \{x_1, x_2\}, \emptyset)$, which is a contradiction. Therefore, we have $x_1 \notin c_{\scriptscriptstyle\!\rho}(\{x_1,...,x_n\}, \emptyset)$. \end{proof}

With Claims 3 and 4, we have for all $A \in \mathcal{M}$ and $x \in A$, $x \in c_{\scriptscriptstyle\!\rho}(A,\emptyset)$ if and only if $x=\max(\mathcal{N}^+_{\scriptscriptstyle\!A}(x); \succ)$. Let $c_{\scriptscriptstyle\!\rho}(A,\emptyset)=\{x_1,...,x_n\}$ such that for all $k\in \{1,...,n-1\}$, $x_k \succ x_{k+1}$. We can have a partition $\{B_k\}_{k=1}^n$ of $A$ such that  for every $k$, $B_k=\{y\in A:  \mathcal{N}^+_{\scriptscriptstyle\! A}(y) \cap \{x_1,...,x_k\}=\{x_k\}\}.$ Note that for each $k$, $x_k \in B_k$, and for all $y \in B_k \backslash x_k$, $y \notin c_{\scriptscriptstyle\! \rho}(A,\emptyset)$, $x_k \succ y$ and $x_k \in \mathcal{N}^+_{\scriptscriptstyle\! B_k}(y)$. To show that $\rho$ can be represented by $(\pi, \mathcal{N}, \succ)$ as a $\mathcal{E}^F$-ABC, it suffices to show that for all $k \in \{1,...,n\}$, \begin{equation}\label{eq:appendix}
				\sum_{t=1}^k\rho(x_t|A,\emptyset) = 1-\mathring{\pi}_{\scriptscriptstyle\!C_k},    
			\end{equation} where $C_k=\cup_{t=1}^k B_t.$ Note that equation (\ref{eq:appendix}) holds when $k=n$. Consider some $k < n$. Let $D_k = A\backslash C_k$. It follows that $\mathcal{N}^+_{\scriptscriptstyle\!A}(D_k) \cap C_k = \emptyset$. Thus, for all $D \subseteq D_k$,  $c_{\scriptscriptstyle\!\rho}(C_k \cup D, \emptyset) \cap D \neq \emptyset$. Therefore, we can enumerate $D_k=\{y_1,...,y_m\}$ such that for all $t \in \{1,...,m\}$, $y_t \in c_{\scriptscriptstyle\!\rho}(C_k \cup \{y_1,...,y_t\}, \emptyset)$. Note that for all $D \subseteq D_k$,  $\{x_1,...,x_k\} = c_{\scriptscriptstyle\!\rho}(C_k \cup D, \emptyset) \cap C_k$.  It  follows from Claim 2 that for all $t \in \{1,...,k\}$ and $s \in \{1,...,m\}$, $\rho(x_t|C_k, \emptyset) = \rho(x_t|C_k\cup \{y_1,...,y_s\}, \emptyset) = \rho(x_t|A, \emptyset)$. Therefore, $$\sum_{t=1}^k\rho(x_t|A,\emptyset) =  \sum_{t=1}^k\rho(x_t|C_k,\emptyset) = 1-\Phi_{\rho}(C_k, \emptyset) = 1-\mathring{\pi}_{\scriptscriptstyle\!C_k}.$$ The sufficiency is thus shown. 
		\end{proof}

\subsection*{OA-2. Imperfectly Recorded Set of Observable Alternatives}
In this section, we revisit the issue that unavailable but observable alternatives may be imperfectly recorded in the data. We provide formal analysis of two specific assumptions discussed in Section \ref{subsec:unobserve}.
\bigskip

\noindent\textbf{Independent observability.} 
When extra observable alternatives arise from exogenous sources, a natural specification is an \emph{independent observability} model. Let $\eta:X \rightarrow (0,1)$ be an observability probability function such that each alternative $x$ becomes observable with constant probability $\eta(x)$, independently across $x$. Fixing a recorded menu $(A,B)$, the actual menu faced by the DM takes the form $(A,C)$ where $C\in\mathcal{M}$ satisfies $B\subseteq C$, and the probability of $(A,C)$ is
$$
\eta(A,C|A,B)
\;:=\;
\left(\prod_{x\in C\backslash B}\eta(x)\right)
\left(\prod_{y\in X\backslash (AC)} (1-\eta(y))\right).
$$
Thus, if the DM's \emph{actual} choice behavior follows an ABC $\hat{\rho}$ represented by $(\pi,\mathcal{N},\succ)$, then the \emph{documented} choice rule $\rho$ satisfies, for all $(A,B)\in\mathcal{E}$ and $x\in A$,
$$
\rho(x|A,B)
\;=\;
\sum_{C\subseteq X\backslash A:\, B\subseteq C}
\eta(A,C|A,B)\,\hat{\rho}(x|A,C).
$$

As discussed in Section \ref{subsec:unobserve}, the DM's preference $\succ$ remains uniquely identified. For the associative network, consider any distinct $x,y\in X$. We have
$$
x\mathcal{N}y
\quad\Longleftrightarrow\quad
\Phi_{\rho}(\{y\},\{x\})=\Phi_{\rho}(\{x,y\},\emptyset).
$$
To see this, first suppose $x\mathcal{N}y$. Then whenever $x$ is incorporated into consideration, $y$ is also eventually considered, so the default option is blocked whenever $x$ is attended to, regardless of whether $x$ is available or merely observable. Hence, we have $\Phi_{\rho}(\{y\},\{x\})=\Phi_{\rho}(\{x,y\},\emptyset)$. Conversely, if $y \notin \mathcal{N}(x)$, then given menu $(\{y\},\{x\})$, there is an event with positive probability in which only $x$ and $y$ are observable and the DM attends only to $x$. In this case, the default option is chosen since $x$ is unavailable and does not trigger the attention to $y$. When the menu is $(\{x,y\},\emptyset)$,  the default option is blocked whenever $x$ is considered. Therefore,
$$
\Phi_{\rho}(\{y\},\{x\})>\Phi_{\rho}(\{x,y\},\emptyset),
$$
so equality can hold only if $x\mathcal{N}y$. This yields a unique  identification of $\mathcal{N}$.

Since $\pi$ and $\eta$ both affect initial attention, disentangling them can be difficult. In certain  cases where the associative network is sparse, one can partially identify $\pi$. For instance, if an alternative $x$ satisfies $\mathcal{N}(x)=\{x\}$, then for any menu $(A,\emptyset)$ with $x\notin A$,
$$
\frac{\Phi_{\rho}(Ax,\emptyset)}{\Phi_{\rho}(A,\emptyset)}
\;=\;
\mathring{\pi}(x).
$$
In another special case, if there is no $y\in X\backslash\{x\}$ such that $y\mathcal{N}x$, then
$$
\pi(x)=\rho(x|\{x\},\emptyset).
$$

When the associative network is dense, however, separating $\pi$ from $\eta$ becomes substantially harder. Consider the extreme case $\mathcal{N}=X\times X$. Then for any recorded menu $(A,B)$,
$$
\Phi_{\rho}(A,B)
=
\Phi_{\rho}(AB,\emptyset)
=
\left(\prod_{x\in AB}\mathring{\pi}(x)\right)
\left(\prod_{y\in X\backslash (AB)}\bigl(1-\eta(y)+\eta(y)\mathring{\pi}(y)\bigr)\right).
$$
In this case, for each $x\in X$ the data identify at most the ratio
\begin{equation}\label{eq:indepdnentobserve:ratio}
\frac{1-\eta(x)+\eta(x)\mathring{\pi}(x)}{\mathring{\pi}(x)}    
\end{equation}
together with the aggregate term $$\prod_{y\in X}\mathring{\pi}(y).$$

If we further assume that the observability probability is the same for all alternatives (i.e., $\eta(x)=\eta^*\in (0,1)$ for all $x\in X$), then $\eta^*$ and the function $\pi$ are uniquely identified when $\mathcal{N}=X\times X$.   To see this, observe that the ratio in equation (\ref{eq:indepdnentobserve:ratio}) implies that $\mathring{\pi}(x)$ can be expressed as a function of $\eta^*$ and is strictly decreasing in $\eta^*$. Hence, the aggregate term $\prod_{y\in X}\mathring{\pi}(y)$ pins down $\eta^*$ and therefore each $\mathring{\pi}(x)$.

The same logic applies when the alternatives can be partitioned into categories $\{X_k\}_{k=1}^n$ such that: (i) any two alternatives in the same category are mutually linked under $\mathcal{N}$; (ii) alternatives from different categories are not linked under $\mathcal{N}$; and (iii) $\eta$ is constant within each category. In this case, all model parameters can be uniquely identified. We leave for future work a systematic investigation of additional conditions under which sharper identification can be obtained in this model beyond the cases considered here. \bigskip

\noindent \textbf{Privately observable alternatives: identification.} There may be alternatives that are privately observable to the DM but not recorded by an outside analyst. Such extra observable alternatives can arise, for instance, from the DM's past experiences or memory. We capture this possibility by introducing a set $P\subseteq X$ of privately observable alternatives. Below, we formally define the generalized model.

\begin{definition}
A random choice rule $\rho$ is a \textbf{p}rivate \textbf{a}ssociation-\textbf{b}ased \textbf{c}onsideration rule (PABC) if there exists a tuple $(P, \pi, \mathcal{N}, \succ)$, where $P \in \mathcal{M}$, $\pi$ is an attention probability function, $\mathcal{N}$ is an associative network, and $\succ$ is a preference ordering, such that for all $(A,B) \in \mathcal{E}$ and $x \in A$,
$\rho(x|A,B)=\hat{\rho}(x|A, (BP)\backslash A),$ 
where $\hat{\rho}$ is the ABC represented by $(\pi, \mathcal{N}, \succ)$. The tuple $(P, \pi, \mathcal{N}, \succ)$ is said to represent $\rho$ as a PABC.
\end{definition}

The PABC captures the idea that the DM may have a set $P$ of privately observable alternatives which the analyst cannot observe.  Accordingly, when the documented menu is $(A,B)$, the actual menu faced by the DM is $(A, (BP)\backslash A)$. To study the model's identification properties, we introduce the following notation.

For a given choice rule $\rho$, define sets $P^{u}_{\rho}$ and $P^{l}_{\rho}$ such that
\begin{equation*}
    \begin{split}
        P^{u}_{\rho} &= \bigl\{x \in X : \forall y \in X\backslash x,\; \Phi_{\rho}(\{y\},\emptyset) = \Phi_{\rho}(\{y\},\{x\})\bigr\}, \\[2pt]
        P^{l}_{\rho} &= \bigl\{x \in P^{u}_{\rho} : \exists y \in X\backslash x,\; \Phi_{\rho}(\{y\},\{x\}) = \Phi_{\rho}(\{x,y\},\emptyset)\bigr\}.
    \end{split}
\end{equation*}
The set $P^{u}_{\rho}$ consists of all alternatives that do not affect the DM's choices when they are made unavailable but observable. Any alternative that is privately observable to the DM must belong to $P^{u}_{\rho}$, since such an alternative is always observable to the DM regardless of whether it is documented as observable or not. The set $P^{l}_{\rho}$ refines $P^{u}_{\rho}$ by collecting those alternatives that are \emph{revealed} to be privately observable. Specifically, an alternative $x$ belongs to $P^{l}_{\rho}$ if $x\in P^{u}_{\rho}$, and there exists some $y\in X$ such that attention to $x$ can trigger attention to $y$. For such an $x$, if it were \emph{not} privately observable, then making $x$ unavailable but observable in the documented menu would increase the likelihood that $y$ is considered, thereby raising the choice probability of $y$ and lowering that of the default option. This would contradict the fact that $x\in P^{u}_{\rho}$.

For any binary relation $\mathcal{R}$ on $X$ and $A\in \mathcal{M}$, we say that $\mathcal{R}$ is \emph{$A$-transitive} if for all $x,y,z \in X$ with $y \in A$, $x\mathcal{R}y$ and $y\mathcal{R}z$ imply $x\mathcal{R}z$. The \emph{$A$-transitive closure} $\mathcal{R}^*$ of $\mathcal{R}$ is defined such that $x\mathcal{R}^*y$ if and only if $x\mathcal{R}y$ or there is a sequence $(x_t)_{t=1}^n$ in $A$ such that $x\mathcal{R}x_1\mathcal{R}\cdots\mathcal{R}x_n\mathcal{R}y$.

\begin{proposition}\label{prop:oa:pabc:id}
Consider a PABC $\rho$ and a tuple $(P, \pi, \mathcal{N}, \succ)$ such that $P \in \mathcal{M}$, $\pi$ is an attention probability function, $\mathcal{N}$ is an associative network, and $\succ$ is a preference ordering. Let $\mathcal{W}$ be the $P^{l}_{\rho}$-transitive closure of $\mathcal{N}$ with its symmetric part denoted by $\mathcal{I}$. The tuple $(P, \pi, \mathcal{N}, \succ)$ represents $\rho$ as a PABC if and only if

\noindent \emph{(1)} $P^l_{\rho} \subseteq P \subseteq P^u_{\rho}$,

\noindent \emph{(2)}  for all distinct $x,y \in X$,  $x\mathcal{W}y$ if and only if $\Phi_{\rho}(\{y\}, \{x\})=\Phi_{\rho}(\{x,y\}, \emptyset)$,

\noindent \emph{(3)} for all distinct  $x, y \in X$, $x \succ y$ if and only if $\rho(x|\{x,y\},\emptyset)=\rho(x|\{x\},\{y\})$,

\noindent \emph{(4)} for all $x \in P^l_{\rho}$, $$\mathring{\pi}_{A}=\frac{\Phi_{\rho}(B,\emptyset)}{\Phi_{\rho}(C,\emptyset)},$$ where $A= \mathcal{I}(x) \cap P^l_{\rho}$, $B=\{y\in P^l_{\rho}: y \mathcal{W} x\}$, and $C=B\backslash A$,

\noindent \emph{(5)}  for all $x \in X\backslash P^l_{\rho}$, $$\mathring{\pi}_{x} = \frac{\Phi_{\rho}(D,\emptyset)}{\Phi_{\rho}(E,\emptyset)},$$ where $E=\{z\in P_{\rho}^l: z\mathcal{W}x\}$ and $D=Ex$. 
\end{proposition}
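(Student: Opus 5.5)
The plan is to reduce everything to a formula for the default probability and for the cumulative choice probabilities of an arbitrary represented tuple. Then I would check that this formula depends on the tuple only through the objects pinned down in (1)--(5).

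Fix any tuple $(P,\pi,\mathcal{N},\succ)$ and a documented menu $(A,B)$. The actual menu is $(A,(BP)\backslash A)$, so Proposition \ref{prop:reformulation} gives
\[
\Phi_{\rho}(A,B)=\mathring{\pi}_{H_{\mathcal{N}}(A,(BP)\backslash A)},
\]
together with the formula (\ref{eq:prop:reform:3}) for the individual choice probabilities.

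\textbf{Step 1: collapsing paths through $P$.} The key preliminary lemma is that, for $Y\supseteq P$, the relevant reachability relation $\mathcal{N}^+_{Y}$ restricted to the documented alternatives coincides with the reachability generated by $\mathcal{W}$. Every element of $P$ is always observable, so it can serve as an intermediate node in any menu. Elements of $P\backslash P^l_\rho$ should be shown to have no outgoing links to other alternatives in any relevant menu. Hence only intermediates in $P^l_\rho$ matter, and this is exactly the $P^l_\rho$-transitive closure $\mathcal{W}$.

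\textbf{Step 2: necessity of (1)--(5).} For (1), I would compute $\Phi_\rho(\{y\},\{x\})$ versus $\Phi_\rho(\{y\},\emptyset)$ in the represented model.
\begin{itemize}
\item If $x\in P$, documenting $x$ changes nothing, so $x\in P^u_\rho$.
\item If $x\notin P$ but $x\in P^l_\rho$, then $x$ reaches some $y$ through $P$. Making $x$ observable would then strictly lower the default probability in $(\{y\},\emptyset)$, which contradicts $x\in P^u_\rho$.
\end{itemize}
For (2), Lemma \ref{lm:necessary:arrow}-type reasoning gives $\Phi_\rho(\{y\},\{x\})=\Phi_\rho(\{x,y\},\emptyset)$ if and only if $y\in\mathcal{N}^+_{P\cup\{x,y\}}(x)$, which by Step 1 is $x\mathcal{W}y$. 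For (3), the observable set is identical in the two menus, so the Axiom 5 argument from Theorem \ref{thm:main} and the positivity of the event that both $x$ and $y$ are initially considered apply verbatim. For (4) and (5), I would evaluate $\Phi_\rho(B,\emptyset)$ and $\Phi_\rho(C,\emptyset)$ as $\mathring{\pi}$ of the corresponding $H$-sets. These are $B$ together with the alternatives of $P$ that reach $B$, and similarly for $C$. Their difference is exactly the $\mathcal{I}$-class $A$ (respectively $\{x\}$), so the ratio is the stated product.

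\textbf{Step 3: sufficiency.} Take any tuple satisfying (1)--(5). By (2) and Step 1, it induces the same reachability on documented menus as the true tuple, and by (3) the same preference. Hence, via Proposition \ref{prop:reformulation}, both tuples generate the same sets $H$ and $C_k$ in every menu. It then remains to show that $\mathring{\pi}_{H}$ and the $\mathring{\pi}_{C_k}$ coincide. I would decompose such a set into single elements outside $P^l_\rho$, whose $\mathring{\pi}$ is fixed by (5), and whole $\mathcal{I}$-classes inside $P^l_\rho$, whose product is fixed by (4).

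\textbf{Main obstacle.} I expect the hardest step to be proving that these sets are unions of full $\mathcal{I}\cap P^l_\rho$ classes. The argument I would try first: mutually linked elements of $P^l_\rho\subseteq P$ are always observable and reach one another. So if one element of a class belongs to $H$ (or to a given $C_k$), so do all the others. Alternatives in $P\backslash P^l_\rho$ must also be handled carefully, using that they have no outgoing links. Once this decomposition is established, equation (\ref{eq:prop:reform:3}) shows that the two tuples produce identical $\rho$.
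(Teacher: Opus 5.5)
Your proposal is correct and follows essentially the same route as the paper. Necessity of (1)--(3) comes from binary-menu computations together with the fact that alternatives in $P\backslash P^l_\rho$ have no outgoing links, (4)--(5) come from evaluating the relevant $H$-sets, and sufficiency comes from comparing with a true representation whose $\mathcal{W}$, $\succ$, and class-wise products $\mathring{\pi}_{[x]}$ must coincide. The differences are cosmetic: the paper first shrinks $P$ to $P^l_\rho$ and treats the class-wise products as sufficient statistics for the distribution of final consideration sets, whereas you keep a general $P$ and match $\mathring{\pi}_{H}$ and $\mathring{\pi}_{C_k}$ from Proposition~\ref{prop:reformulation} directly. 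In your version, the no-outgoing-links property of $P\backslash P^l_\rho$ for the candidate tuple follows immediately from (1) and (2).
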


By Proposition \ref{prop:oa:pabc:id}, the exact set of privately observable alternatives $P$ is not precisely identified but is bounded between the revealed lower bound $P^l_\rho$ and the upper bound $P^u_\rho$. Notably, any such set works for the representation regardless of the other parameters. The associative network is identified  up to its $P^{u}_{\rho}$-transitive closure. As in the baseline ABC model, the preference ordering is uniquely identified through choices in binary menus. For attention probabilities, alternatives outside $P^l_\rho$ are point-identified, whereas those inside $P^l_\rho$ can only be partially identified: If two such alternatives are mutually associated, they are always considered jointly, so only their aggregate attention probability is recoverable. In summary, even when the data does not document all alternatives observable to the DM, key components of the decision process---preferences, the  associative structure, and attention probabilities---remain at least partially recoverable.

\begin{proof}[Proof of Proposition \ref{prop:oa:pabc:id}]
Consider a PABC $\rho$ represented by $(P, \pi, \mathcal{N}, \succ)$. It trivially holds that $P_{\rho}^l \subseteq P \subseteq P_{\rho}^u$. For every $x \in P_{\rho}^u\backslash P_{\rho}^l$, observe that for all $y \in X\backslash x$, $\Phi_{\rho}(\{y\}, \{x\}) \neq \Phi_{\rho}(\{x,y\}, \emptyset)$. It follows that $\mathcal{N}(x)=\{x\}$. Hence, for any $\hat{P}$ with $P_{\rho}^l \subseteq \hat{P} \subseteq P_{\rho}^u$, $(\hat{P}, \pi, \mathcal{N}, \succ)$ also represents $\rho$ as a PABC. In particular, the tuple $(P_{\rho}^l, \pi, \mathcal{N}, \succ)$ represents $\rho$ as a PABC. It can be then easily verified that the tuple $(P_{\rho}^l, \pi, \mathcal{N}, \succ)$ satisfies (1)-(5), and so does the tuple $(P, \pi, \mathcal{N}, \succ)$.

Conversely, consider a tuple $(P, \pi, \mathcal{N}, \succ)$ that satisfies (1)-(5). We want to show that it represents $\rho$ as a PABC. It suffices to show that $(P^l_{\rho}, \pi, \mathcal{N}, \succ)$ represents $\rho$ as a PABC.  Consider another tuple $(P_{\rho}^l, \hat{\pi}, \succ', \mathcal{N}')$ that represents $\rho$ as a PABC. By (2), $\mathcal{N}'$ and $\mathcal{N}$ have the same $P^l_{\rho}$-transitive closure. This ensures that in any menu $(A,B)$, if the DM starts with an initial consideration set $C$, the two associative networks will lead the DM's attention to the same final consideration set. By (3), $\succ =\succ'$. For each $x \in X$, if $x \in X\backslash P_{\rho}^l$, define $[x]=\{x\}$; otherwise, define $[x]=\{y \in P_{\rho}^l: x \mathcal{I} y\}$. 
By (4) and (5), $\pi$ and $\hat{\pi}$ agree on each $[x]$. Since alternatives in $[x]$ are always considered together or ignored together due to mental association, $\mathring{\pi}_{[x]}$ and $\mathring{\hat{\pi}}_{[x]}$ are sufficient statistics for the DM's distribution of final consideration sets. Therefore, $(P^l_{\rho}, \pi,  \mathcal{N}, \succ)$ also represents $\rho$ as a PABC.
\end{proof}

\medskip

\noindent \textbf{Privately observable alternatives: axioms.} We proceed to develop an axiomatic foundation for PABCs. Fix a choice rule $\rho$. We begin with the following definition.

\begin{definition}
A set $A\in \mathcal{M}$ is privately association-proof if for all $x \in X\backslash A$, $$\Phi_{\rho}(A,\emptyset) \neq \Phi_{\rho}(Ax,\emptyset).$$ Denote by $\mathcal{P}$ the collection of all privately association-proof sets. 
\end{definition}

With privately observable alternatives, it could happen that $\Phi_{\rho}(A,\emptyset)=\Phi_{\rho}(Ax,\emptyset)$ for some $x \in X\backslash A$, in which case $x$ is in the privately observable set and some alternative in $A$ is associated with $x$. If a set $A$ is privately association-proof, then every alternative in $A$ is not associated with any privately observable alternative. Hence, whether the DM chooses the default option in the menu $(A,\emptyset)$ depends solely on whether she pays initial attention to some alternative in $A$. This leads to the following axiom. 
\bigskip

\noindent \textbf{Axiom R1---(Private Independence):} For all $A,B,C,D \in \mathcal{P}$ with $B\subseteq A$, $D \subseteq C$ and $A\backslash B = C\backslash D$, $$\frac{\Phi_{\rho}(A,\emptyset)}{\Phi_{\rho}(B,\emptyset)} = \frac{\Phi_{\rho}(C,\emptyset)}{\Phi_{\rho}(D,\emptyset)}.$$

  \bigskip

The second axiom slightly modifies Axiom 2.
\bigskip

\noindent \textbf{Axiom R2---(Monotone Idempotence):} For all $(A,B) \in \mathcal{E}$ and $x \in A$,  $\Phi_{\rho}(A\backslash x,B) \ge \Phi_{\rho}(A\backslash x,Bx) \ge  \Phi_{\rho}(A,B)$ with at most one strict inequality. \bigskip 

The next axiom characterizes the key feature of private observability. 

\bigskip

\noindent \textbf{Axiom R3---(Revealed Private Observability):} For all $(A,B) \in \mathcal{E}$ and $x \in A$, if $\Phi_{\rho}(A,B) = \Phi_{\rho}(A\backslash x, B)$, then for all $(C,D) \in \mathcal{E}$,  $\Phi_{\rho}(C,D)=\Phi_{\rho}(C,D\backslash x).$ \bigskip

To understand Axiom R3, note that if $x$ is \emph{not} in the DM's privately observable set, then removing $x$ from both the available and observable sets must strictly increase the probability of choosing the default option. This is because the presence of $x$ attracts attention and deters the choice of the default option. Hence, the condition $\Phi_{\rho}(A,B) = \Phi_{\rho}(A\backslash x, B)$ reveals that $x$ must in fact be privately observable. In this case, its documented observability in any menu no longer affects choices as it is always observable to the DM.

Recall that we define $x \xrightarrow{B} A$ if either $x \in A$ or $\Phi_{\rho}(A,B) \neq \Phi_{\rho}(A,B\backslash x)$. This captures the idea that attention to $x$ can prompt attention to some alternative in $A$ via an association path in $B$. When $x$ is privately observable, however, making it  unavailable but observable in any menu does not affect the DM's choices. Consequently, we may have $x \in B$ and $\Phi_{\rho}(A,B) = \Phi_{\rho}(A,B\backslash x)$ even though some alternative in $A$ is associated with $x$ through $B$. To account for this possibility, we revise this definition as follows.

For any $x \in X$ and $(A,B) \in \mathcal{E}$, write $x \xrightarrow{B}_* A$ if either (1) $x \in A$ or (2) $x \in B$ and $\Phi_{\rho}(Ax,B\backslash x) =\Phi_{\rho}(A,B)$.  We note that $x \xrightarrow{B}_* A$ is equivalent to $x \xrightarrow{B} A$ when $x$ is not privately observable; the former notion, however, additionally captures the case where $x$ is privately observable yet its consideration leads to the consideration of some alternative in $A$ through $B$. The next two axioms parallel Axioms 3 and 4, with $\xrightarrow{B}_*$ replacing $\xrightarrow{B}$.

\bigskip

\noindent \textbf{Axiom R4---(Expansion$^*$):} For all $x \in X$ and $(A,B), (C,D) \in \mathcal{E}$, if $C\subseteq A$ and $CD \subseteq AB$, then $x \xrightarrow{D}_* C$ implies $x \xrightarrow{B}_* A$.  \bigskip

\noindent \textbf{Axiom R5---(Path Connectedness$^*$):} For all $x,y \in X$ and $(A,B) \in \mathcal{E}$ with $x \neq y$, if $x \xrightarrow{B}_* A$ and not $x \xrightarrow{B\backslash y}_* A\backslash y$, then $x \xrightarrow{B\backslash y}_* \{y\}$ and $y \xrightarrow{B\backslash x}_* A$. \bigskip

We have the following characterization result for PABCs.

\begin{theorem}\label{thm:oa:pabc:axiom}
A choice rule $\rho$ is a PABC if and only if it satisfies Axioms R1-R5 and Axiom 5.    
\end{theorem}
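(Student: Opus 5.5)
The plan is to prove Theorem \ref{thm:oa:pabc:axiom} by reducing to the proof of Theorem \ref{thm:main}. We first neutralize the private set $P$ and then rerun Steps 1--3 of that argument on the modified arrow $\xrightarrow{B}_*$.

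\textbf{Necessity.} Take a PABC represented by $(P,\pi,\mathcal{N},\succ)$ and write $\hat\rho$ for the underlying ABC. Then $\Phi_\rho(A,B)=\mathring{\pi}_{H_{\mathcal{N}}(A,(BP)\backslash A)}$ by Proposition \ref{prop:reformulation}.

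\begin{enumerate}[label=(\roman*)]
\item For Axiom R3: if removing $x$ entirely leaves $\Phi_\rho$ unchanged, then $x\in P$, since otherwise $x$ leaves $H_{\mathcal{N}}$ and the factor $\mathring{\pi}_x<1$ changes. In that case $x$ is observable in every menu, so $\Phi_\rho(C,D)=\Phi_\rho(C,D\backslash x)$.
\item For Axiom R2: by monotonicity of $H_{\mathcal{N}}$ under enlarging the observable set, and then the available set,
\[
H_{\mathcal{N}}(A\backslash x,B)\subseteq H_{\mathcal{N}}(A\backslash x,Bx)\subseteq H_{\mathcal{N}}(A,B).
\]
The second inclusion fails to be an equality only when $x\notin H_{\mathcal{N}}(A\backslash x,Bx)$. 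In that case adding $x$ as an observable node changes nothing, so the first inclusion is an equality. Hence at most one inequality is strict.
\item Lemma \ref{lm:necessary:arrow} adapts as follows. For $x\in B$, $x\xrightarrow{B}_*A$ holds iff $\mathcal{N}^+_{A(BP)}(x)\cap A\neq\emptyset$, by the Axiom 2 argument applied to $H_{\mathcal{N}}$ with observable set $(BP)\backslash A$. This holds even when $x\in P$. Axioms R4 and R5 then follow as Axioms 3 and 4 did.
\item For Axiom R1: for $A\in\mathcal{P}$, no $x\notin A$ has a path into $A$ within $AP$, by the same computation. Hence $\Phi_\rho(A,\emptyset)=\mathring{\pi}_A$, and the ratios in R1 equal $\mathring{\pi}_{A\backslash B}$.
\item Axiom 5 holds verbatim, since the observable set is held fixed.
\end{enumerate}

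\textbf{Sufficiency.} Set $P:=P^l_\rho$, or any set between $P^l_\rho$ and $P^u_\rho$ forced by R3. R3 is used to show that every $x$ with $\Phi_\rho(A,B)=\Phi_\rho(A\backslash x,B)$ for some menu lies in $P^u_\rho$.

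\begin{enumerate}[label=(\roman*)]
\item Define $\succ$ as in the proof of Theorem \ref{thm:main}. Asymmetry comes from Axiom 5.
\item Define $\mathcal{N}$ by $x\mathcal{N}y$ iff $\Phi_\rho(\{y\},\{x\})=\Phi_\rho(\{x,y\},\emptyset)$, i.e.\ $x\xrightarrow{\{x\}}_*\{y\}$.
\item Prove the analogues of Lemmas \ref{lm:sufficient:side1} and \ref{lm:sufficient:side2} for $\xrightarrow{}_*$, with paths allowed to pass through $P$. The induction on $|B|$ uses R5 exactly as Axiom 4 was used, and Lemma \ref{lm:sufficient:side2} uses R4.
\item Prove the analogue of Lemma \ref{lm:sufficient:default:prob}: $\Phi_\rho(A,B)=\Phi_\rho(H,\emptyset)$, where $H=H_{\mathcal{N}}(A,(BP)\backslash A)\cap(AB)$. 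Alternatives are shifted into the available set by the definition of $\xrightarrow{}_*$. Irrelevant observable ones are dropped using R2, which gives weak monotonicity, together with R4.
\item Define $\pi$ via R1 on privately association-proof sets. Mirroring the proof of Theorem \ref{thm:main:limdata}, take $\mathring{\pi}$ of an $\mathcal{I}$-class in $P$ as the ratio given in Proposition \ref{prop:oa:pabc:id}(4)--(5), and show that $\Phi_\rho(H,\emptyset)=\mathring{\pi}_{H'}$ with $H'=H_{\mathcal{N}}(A,(BP)\backslash A)$.
\item Obtain completeness and transitivity of $\succ$ from Lemma \ref{lm:general:preference}, with $D=X$ and $\sigma$ the induced final consideration distribution. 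Finally, telescope as in the end of the proof of Theorem \ref{thm:main}.
\end{enumerate}

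\textbf{Main obstacle.} The hard step is the default-probability lemma in the presence of $P$. Alternatives in $P$ that are mutually associated cannot be separated, so the product formula for $\Phi_\rho(H,\emptyset)$ has to be built class by class along the acyclic quotient $\rhd$. This is the Theorem \ref{thm:main:limdata} technique, but now with only partially observable sets. Checking that the required intermediate sets are privately association-proof, so that R1 applies, is where R3 and the definition of $P^l_\rho$ must be used carefully. I would first handle that step via the sets $\{y\in P^l_\rho:y\mathcal{W}x\}$ from Proposition \ref{prop:oa:pabc:id}.
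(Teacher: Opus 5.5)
Your proposal is correct and follows essentially the paper's proof. The paper likewise uses the binary revealed $\mathcal{N}$. Your $P^l_\rho$ coincides, under the axioms, with the paper's $P$, the set of alternatives whose removal from $A$ can leave $\Phi_\rho$ unchanged. The path lemmas come from R4--R5, and the default-probability reduction uses R3 together with the ``at most one strict inequality'' clause of R2 (not mere weak monotonicity). The product formula comes from R1 applied class by class on privately association-proof sets, followed by telescoping via Axiom 5 and Lemma \ref{lm:general:preference}.

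The one step you leave implicit is the paper's Claim~5: the revealed $\mathcal{N}$ is already $P$-transitive, proved from R3 and the Claim~3 analogue. This lets paths through $P$ collapse into direct links, so you need not pass to $\mathcal{W}$. It is also what makes sets such as $\{y\in P: y\mathcal{N}x\}$ and their augmentations privately association-proof in the R1 step.
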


\begin{proof}[Proof of Theorem \ref{thm:oa:pabc:axiom}]
The necessity of the axioms can be shown similarly as the proof of Theorem \ref{thm:main} and is therefore omitted. We prove only sufficiency. Consider a choice rule $\rho$, and throughout the proof, we assume that Axioms R1–R5 and Axiom 5 hold for $\rho$.

Define $P$ such that $x \in P$ if there is $(A,B) \in \mathcal{E}$ with $x \in A$ and $\Phi_{\rho}(A,B)=\Phi_{\rho}(A\backslash x, B)$. Define $\mathcal{N}$ such that 
for all distinct $x,y \in X$, $x\mathcal{N}y$ if $\Phi_{\rho}(\{y\}, \{x\})=\Phi_{\rho}(\{x, y\}, \emptyset)$. Let $\mathcal{I}$ be the symmetric part of $\mathcal{N}$. Define $\succ$ such that for all distinct $x,y \in X$, $x\succ y$ if there is $A \in \mathcal{M}$ such that $x,y \in A$ and $$\rho(y|A,A^c) \neq \rho(y|A\backslash x, (A\backslash x)^c).$$ Define $\pi$ such that (1) for all $x \in X\backslash P$, $$\pi_x=1-\frac{\Phi_{\rho}(Ax,\emptyset)}{\Phi_{\rho}(A,\emptyset)},$$ where $A=\{y \in P: y \mathcal{N}x\}$, and (2) for all $x \in P$, $$\pi_x=1-\left(\frac{\Phi_{\rho}(B,\emptyset)}{\Phi_{\rho}(C,\emptyset)}\right)^{\frac{1}{|B\backslash C|}},$$ where $B=\{y \in P: y \mathcal{N}x\}$ and $C=B\backslash \{y \in P: x \mathcal{I} y\}$. We break down the proof into the following sequence of claims. \medskip

\noindent \textbf{Claim 1.}  For all  $(A,B) \in \mathcal{E}$ and $x \in B$,   $x \xrightarrow{B}_* A$ implies $\mathcal{N}_{\!\scriptscriptstyle AB}^+(x) \cap A \neq \emptyset$.

\begin{proof} 
Consider  $(A,B) \in \mathcal{E}$  and $x \in B$ such that $x  \xrightarrow{B}_* A$. It follows that $A \neq \emptyset$, since otherwise we have $x \notin A$ and $\Phi_{\rho}(A,B) = \Phi_{\rho}(\emptyset,B)  = \Phi_{\rho}(\{x\},B\backslash x) = 1$, which is a contradiction. By repeated applications of Axiom R5, there is $y \in A$ such that $x \xrightarrow{B}_* \{y\}$. It then suffices to show $y \in \mathcal{N}_{\!\scriptscriptstyle By}^+(x)$.

First, if $|B|=1$, then $B=\{x\}$. In this case, $x  \xrightarrow{B}_* \{y\}$ means $x  \xrightarrow{\{x\}}_* \{y\}$, which 
implies $\Phi_{\rho}(\{y\},\{x\}) = \Phi_{\rho}(\{x, y\},\emptyset)$. It then follows from the definition of $\mathcal{N}$ that $x\mathcal{N}y$. Thus, we have $y \in  \mathcal{N}_{\!\scriptscriptstyle \{x,y\}}^+(x) =\mathcal{N}_{\!\scriptscriptstyle By}^+(x)$.

Next, assume by induction that when $|B| \le n$, $x  \xrightarrow{B}_{*} \{y\}$ implies $y \in \mathcal{N}_{\!\scriptscriptstyle By}^+(x)$. We show that it remains true when $|B| = n+1$. To see this, note that if there is $z \in B\backslash x$ such that $x  \xrightarrow{B\backslash z}_* \{y\}$, then by the induction hypothesis, $y \in \mathcal{N}_{\!\scriptscriptstyle (By)\backslash z}^+(x) \subseteq \mathcal{N}_{\!\scriptscriptstyle By}^+(x)$. Otherwise, for every $z \in B\backslash x$, not $x  \xrightarrow{B\backslash z}_* \{y\}$. By Axiom R5, for every $z \in B\backslash x$, $x \xrightarrow{B\backslash z}_* \{z\}$ and $z  \xrightarrow{B\backslash x}_* \{y\}$. By the induction hypothesis, for every $z \in B\backslash x$, we have $z \in \mathcal{N}_{\!\scriptscriptstyle B}^+(x) \subseteq \mathcal{N}_{\!\scriptscriptstyle By}^+(x)$ and $y \in  \mathcal{N}_{\!\scriptscriptstyle (By)\backslash x}^+(z) \subseteq \mathcal{N}_{\!\scriptscriptstyle By}^+(z)$. By the transitivity of $\mathcal{N}_{\!\scriptscriptstyle By}^+$, we have  $y \in \mathcal{N}_{\!\scriptscriptstyle By}^+(x)$. \end{proof}

\noindent \textbf{Claim 2.}  For all $(A,B) \in \mathcal{E}$ and $x \in B$,   $\mathcal{N}(x) \cap A \neq \emptyset$ implies $x \xrightarrow{B}_* A$.

\begin{proof} 
Since $\mathcal{N}(x) \cap A \neq \emptyset$, there exists $y \in A$ such that $x\mathcal{N}y$. It then follows by definition of $\mathcal{N}$ that $x \xrightarrow{\{x\}}_* \{y\}$. By Axiom R4, we have $x \xrightarrow{B}_* A$. \end{proof}

\noindent \textbf{Claim 3.} For all $(A,B) \in \mathcal{E}$ and $x \in B$, $$\Phi_{\rho}(A,B)=\Phi_{\rho}(Ax,B\backslash x) \quad \Longleftrightarrow  \quad \mathcal{N}_{AB}^+(x) \cap A \neq \emptyset.$$

\begin{proof}
Note that  $\Phi_{\rho}(A,B)=\Phi_{\rho}(Ax,B\backslash x)$ is equivalent to $x \xrightarrow{B}_* A$. By Claim 1, we have $\mathcal{N}_{AB}^+(x) \cap A \neq \emptyset$. Conversely, if $\mathcal{N}_{AB}^+(x) \cap A \neq \emptyset$, then we can find a sequence $(x_k)_{k=1}^n$ in $B$ such that $x_1 = x$, $\mathcal{N}(x_n) \cap A \neq \emptyset$, and for all $k \in \{1,...,n-1\}$, $x_k \mathcal{N} x_{k+1}$. It follows from Claim 2 that $\Phi_{\rho}(Ax_n, B\backslash x_n)=\Phi_{\rho}(A,B)$. By repeated applications of Claim 2, we have $\Phi_{\rho}(A\cup \{x_k\}_{k=1}^n, B\backslash \{x_k\}_{k=1}^n) = \Phi_{\rho}(A, B)$. By Axiom R2 and $x=x_1$, we have $\Phi_{\rho}(A\cup \{x_k\}_{k=1}^n, B\backslash \{x_k\}_{k=1}^n) \le \Phi_{\rho}(Ax,B\backslash x) \le \Phi_{\rho}(A, B)$, which implies $\Phi_{\rho}(Ax,B\backslash x) =  \Phi_{\rho}(A, B)$. 
\end{proof}

\noindent \textbf{Claim 4.} For all $(A,B) \in \mathcal{E}$, $\Phi_{\rho}(A,B)=\Phi_{\rho}(H_{\mathcal{N}}(A, (BP)\backslash A), \emptyset)$.  

\begin{proof}
Recall that $H_{\mathcal{N}}(A, (BP)\backslash A)$ is defined to be the set $\{x \in ABP: \mathcal{N}^+_{ABP}(x) \cap A \neq \emptyset\}$. Let $\hat{H}=H_{\mathcal{N}}(A, (BP)\backslash A)$ for simplicity. By the definition of $P$ and Axiom R3, we have $\Phi_{\rho}(A,B)=\Phi_{\rho}(A,(BP)\backslash A)$. It then follows from Claim 3 that $\Phi_{\rho}(A,B)=\Phi_{\rho}(\hat{H},  (BP)\backslash \hat{H})$. Since for all $x \in (BP)\backslash \hat{H}$ and $y \in \hat{H}$, $y \notin \mathcal{N}(x)$, by Claim 3 and Axiom R2, we have $\Phi_{\rho}(A,B)=\Phi_{\rho}(\hat{H}, \emptyset)$.    
\end{proof}

\noindent \textbf{Claim 5.} $\mathcal{N}$ is $P$-transitive.   
\begin{proof}
Consider three pairwise distinct  $x,y,z \in X$ with $y \in P$, $x \mathcal{N} y$ and $y \mathcal{N} z$. We have $\Phi_{\rho} (\{z\}, \{x\}) = \Phi_{\rho} (\{z\}, \{x,y\}) = \Phi_{\rho} (\{z,y\}, \{x\}) = \Phi_{\rho} (\{x,y,z\}, \emptyset) = \Phi_{\rho} (\{x,z\}, \{y\}) = \Phi_{\rho} (\{x,z\}, \emptyset)$, where the first and last equalities are by Axiom R3, and the rest follow from Claim 3. Hence, we have $x \mathcal{N} z$.    
\end{proof}

\noindent \textbf{Claim 6.} For all $A \in \mathcal{M}$, $A$ is privately association-proof if and only if for all $x \in P\backslash A$, there is no $y \in A$ such that $x\mathcal{N}y$.  

\begin{proof}
First assume that $A$ is privately association-proof. If for some $x \in P\backslash A$ and $y \in A$, we have $x \mathcal{N} y$, then by  Axioms R3, Claim 3, and the definition of $P$,  $\Phi_{\rho}(A,\emptyset)= \Phi_{\rho}(A,\{x\}) = \Phi_{\rho}(Ax, \emptyset)$, which is a contradiction. Conversely, assume that for all $x \in P\backslash A$ and $y \in A$, not $x\mathcal{N}y$. If for some $x \in X\backslash A$, $\Phi_{\rho}(A,\emptyset) = \Phi_{\rho}(Ax, \emptyset)$, then $x \in P$ by definition. By Axiom R3, we have $\Phi_{\rho}(A,\emptyset) = \Phi_{\rho}(A, \{x\}) = \Phi_{\rho}(Ax, \emptyset)$. By Claim 3, we have $x \mathcal{N}y$  for some $y\in A$, which is a contradiction. 
\end{proof}

\noindent \textbf{Claim 7.} For all $x \in X$, $\pi_x \in (0,1)$.

\begin{proof}
If $x \in X\backslash P$, then $\mathring{\pi}_x= \Phi_{\rho}(Ax, \emptyset)/\Phi_{\rho}(A, \emptyset)$, where $A = \{y \in P: y \mathcal{N}x\}$. Since $x \notin P$, $\Phi_{\rho}(Ax, \emptyset) \neq \Phi_{\rho}(A, \emptyset)$. By Axiom R2, we have $\Phi_{\rho}(Ax, \emptyset) < \Phi_{\rho}(A, \emptyset)$, and thus $\mathring{\pi}_x \in (0,1)$, implying $\pi_x \in (0,1)$. If $x \in P$, then $\left( \mathring{\pi}_x \right)^{k} = \Phi_{\rho}(B, \emptyset)/\Phi_{\rho}(B\backslash C, \emptyset)$, where $B=\{y \in P: y \mathcal{N} x\}$, $C=\{y \in P: x \mathcal{I} y\}$, and $k=|C|$. If $\Phi_{\rho}(B,\emptyset)=\Phi_{\rho}(B\backslash C, \emptyset)$, then Axiom R2 implies $\Phi_{\rho}(B,\emptyset) = \Phi_{\rho}(B\backslash C, C)$. It follows from the $P$-transitivity of $\mathcal{N}$ and Claim 3 that for each $z \in C$, there is $z' \in B\backslash C$ such that $z \mathcal{N} z'$, which further implies $x \mathcal{N} z'$---a contradiction. By Axiom R2,  $\Phi_{\rho}(B,\emptyset) < \Phi_{\rho}(B\backslash C, \emptyset)$, and thus $\pi_x \in (0,1)$. 
\end{proof}

\noindent \textbf{Claim 8.} For all $A \in \mathcal{M}$ that is privately association-proof, $\Phi_{\rho}(A, \emptyset) = \mathring{\pi}_{A}$.

\begin{proof}
For any $x \in A\cap (X\backslash P)$,  by Claim 6, $A\backslash x$ is also privately association-proof. By the $P$-transitivity of $\mathcal{N}$, the two sets $B=\{y \in P: y \mathcal{N}x\}$ and $Bx$ are both privately association-proof. By Axiom R1, $\Phi_{\rho}(A, \emptyset)/\Phi_{\rho}(A\backslash x, \emptyset) = \Phi_{\rho}(Bx, \emptyset)/\Phi_{\rho}(B, \emptyset) = \mathring{\pi}_x$.

For any $x \in P \cap A$, define $[x]=\{y \in P: x \mathcal{I} y\}$ and $[x]^+=\{y \in P: y \mathcal{N} x\}$. Note that $[x] \subseteq [x]^+ \subseteq A$ since $A$ is privately association-proof. If $P \cap A \neq \emptyset$, then we can find $x \in P \cap A$ such that for all $y \in P \cap A$, $x \mathcal{N} y$ implies $y \in [x]$. Therefore, for such an alternative $x$,   $A\backslash [x]$, $[x]^+$, and $[x]^+\backslash [x]$ are all privately association-proof. We have $\Phi_{\rho}(A, \emptyset)/\Phi_{\rho}(A\backslash [x], \emptyset) = \Phi_{\rho}([x]^+, \emptyset)/\Phi_{\rho}([x]^+\backslash [x], \emptyset) = \left(\mathring{\pi}_x\right)^{|[x]|} = \mathring{\pi}_{|[x]|}.$ Hence, inductively, we conclude that $\Phi_{\rho}(A,\emptyset)=\mathring{\pi}_A$. 
\end{proof}

By Claim 6, for each $(A,B) \in \mathcal{E}$, $H_{\mathcal{N}}(A,(BP)\backslash A)$ is privately association-proof. Claim 4 and Claim 8 imply $\Phi_{\rho}(A,B) = \mathring{\pi}_{H_{\mathcal{N}}(A,(BP)\backslash A)}$. By Axiom 5 and Lemma \ref{lm:general:preference}, we can show that $\succ$ is a preference ordering and satisfies that for all $x,y \in X$,  $x \succ y$ implies that for all $(A,B) \in \mathcal{E}$ with $y \in A$, $\rho(x|A,B)=\rho(x|A\backslash y, By)$. Following a similar proof procedure as the one for Theorem \ref{thm:main}, we can obtain the desired choice probability $\rho(x|A,B)$ for each $x \in A$.  
\end{proof}
\bigskip

\noindent \textbf{Privately observable alternatives: limited data.} We end this section by discussing identification of PABCs when the choice domain is restricted to
\[
\mathcal{E}^F=\{(A,\emptyset): A\in\mathcal{M}\}
\]
as in Section OA-1. That is, in each menu, the set of observable alternatives coincides with the set of available alternatives. Formally, we define $\mathcal{E}^F$-PABCs as follows.

\begin{definition}
A $\mathcal{E}^{F}$-choice rule $\rho$ is $\mathcal{E}^{F}$-PABC if there exists tuple $(P, \pi,  \mathcal{N}, \succ)$ such that for all $A \in \mathcal{M}$ and all $x \in A$, $$\rho(x|A,\emptyset)=\hat{\rho}(x|A, P\backslash A),$$ where $\hat{\rho}$ is the ABC represented by $(\pi,  \mathcal{N}, \succ)$. The $\mathcal{E}^{F}$-choice rule $\rho$ is  said to be represented by $(P, \pi, \mathcal{N}, \succ)$ as a $\mathcal{E}^{F}$-PABC.     
\end{definition}

First, following the argument in Section OA-1, we can uniquely identify the preference ordering $\succ$ via 
\[
x\succ y \quad \Longleftrightarrow \quad \rho(y|\{x,y\},\emptyset) < \rho(y|\{y\},\emptyset).
\]


For the remaining parameters, we focus on identifying minimal components that suffice for the representation. For the privately observable set, note that if there exist $A \in \mathcal{M}$ and $x\in A$ such that \begin{equation}\label{eq:unobserve:partialdata:id:private}
    \Phi_{\rho}(A,\emptyset)=\Phi_{\rho}(A\backslash x,\emptyset),
\end{equation} then $x$ must belong to the DM's privately observable set. The intuition is that, absent private observability, removing $x$ from the menu would also remove it from what is observable, in which case $x$ could no longer deter the choice of the default option. Let $P_{\rho}$ denote the set of all alternatives $x$ for which there exists some $A \in \mathcal{M}$ with $x\in A$ satisfying equation (\ref{eq:unobserve:partialdata:id:private}). Clearly, $P_{\rho}$ is contained in the DM's true privately observable set. That is, if $\rho$ is represented by $(P, \pi, \mathcal{N}, \succ)$ as an $\mathcal{E}^F$-PABC, then $P_{\rho}\subseteq P$. Moreover, the representation remains valid if we replace $P$ by $P_{\rho}$, as stated by the following proposition.

\begin{proposition}\label{prop:oa:pabc:EF:id:P}
If a $\mathcal{E}^{F}$-choice rule $\rho$ is represented by $(P, \pi, \mathcal{N}, \succ)$ as a $\mathcal{E}^{F}$-PABC, then $P_{\rho} \subseteq P$, and $\rho$ is also represented by $(P_{\rho}, \pi, \mathcal{N}, \succ)$ as a $\mathcal{E}^{F}$-PABC.
\end{proposition}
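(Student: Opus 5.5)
Two things must be shown: first, that $P_\rho \subseteq P$; second, that the tuple $(P_\rho,\pi,\mathcal{N},\succ)$ generates the same $\mathcal{E}^F$-choice rule as $(P,\pi,\mathcal{N},\succ)$. Throughout, $\hat\rho$ denotes the ABC represented by $(\pi,\mathcal{N},\succ)$. By Proposition \ref{prop:reformulation}, for every menu,
\[
\Phi_{\rho}(A,\emptyset)=\mathring{\pi}_{H_{\mathcal{N}}(A,P\backslash A)}.
\]

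\textbf{Step 1: $P_\rho\subseteq P$.} Fix $A$ and $x\in A$ with $x\notin P$. Then $x$ is not observable in $(A\backslash x,\,P\backslash(A\backslash x))$, so the corresponding $H$-set $H'$ does not contain $x$. Every association path in the smaller observable set $(A\backslash x)\cup P$ is also a path in $A\cup P$, so $H'\subseteq H_{\mathcal{N}}(A,P\backslash A)$. The larger set also contains $x\in A$. Since $\pi_x>0$, this gives $\Phi_\rho(A,\emptyset)<\Phi_\rho(A\backslash x,\emptyset)$. Hence equation (\ref{eq:unobserve:partialdata:id:private}) fails for every such pair, and therefore $x\notin P_\rho$.

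\textbf{Step 2: characterize $P\backslash P_\rho$.} Take $x\in P\backslash P_\rho$; I claim $\mathcal{N}(x)=\{x\}$. Suppose instead $x\mathcal{N}y$ with $y\neq x$, and consider $A=\{x,y\}$. In the menu $(\{x,y\},P\backslash\{x,y\})$ and in the menu $(\{y\},P\backslash y)$, the alternative $x$ is observable in both cases. So $H$ is computed on the same observable set $\{x,y\}\cup P$. Whenever $x$ is considered, $y$ is reached by a direct link, so $x$ lies in both $H$-sets and the two $H$-sets coincide. Hence $\Phi_\rho(\{x,y\},\emptyset)=\Phi_\rho(\{y\},\emptyset)$, which puts $x\in P_\rho$, a contradiction.

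\textbf{Step 3: remove the elements of $P\backslash P_\rho$.} Remove them from the private set one at a time. Let $x\in P\backslash P_\rho$ and fix any $A$. If $x\in A$, it is observable under both tuples and nothing changes. If $x\notin A$, then $x$ is an unavailable observable with $\mathcal{N}(x)=\{x\}$. Such an $x$ is never an intermediate node on any association path: a path can enter $x$, but it cannot leave $x$ except by the reflexive loop. Moreover, $x$ itself never leads to an available alternative. Therefore, for every initial consideration set $C$,
\[
\mathcal{N}^+_{A\cup P}(C)\cap A=\mathcal{N}^+_{A\cup (P\backslash x)}(C\backslash x)\cap A .
\]
Summing over whether $x\in C$ (the two cases carry weights $\pi_x$ and $\mathring{\pi}_x$, which add to one) shows that $\hat\rho(\cdot\mid A,P\backslash A)=\hat\rho(\cdot\mid A,(P\backslash x)\backslash A)$. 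Iterating over all of $P\backslash P_\rho$ yields the representation with $P_\rho$.

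The main obstacle is Step 2, specifically ensuring that equality of the $H$-sets really follows: it must be that $x$ enters both $H$-sets and that no other element's membership changes. This holds because the observable set $\{x,y\}\cup P$ is literally the same in the two menus, and $H$ can only grow when the available set grows from $\{y\}$ to $\{x,y\}$, with $x$ being the only new member.
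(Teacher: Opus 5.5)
Your proof is correct and takes essentially the same route as the paper. The paper takes $P_\rho\subseteq P$ as evident, notes that $x\in P\setminus P_\rho$ forces $\mathcal{N}(x)=\{x\}$, and concludes that dropping such $x$ from $P$ leaves the association process unchanged; your Steps 1--3 supply the omitted details (the strict inequality via the $H$-sets, the test menu $A=\{x,y\}$, and the path argument with the marginalization $\pi_x+\mathring{\pi}_x=1$).
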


\begin{proof}[Proof of Proposition \ref{prop:oa:pabc:EF:id:P}]
It remains to show that $\rho$ is  represented by $(P_{\rho}, \pi, \mathcal{N}, \succ)$ as a $\mathcal{E}^{F}$-PABC. For any $x \in P \backslash P_{\rho}$, by the definition of $P_{\rho}$, we have for all $A\in \mathcal{M}$ with $x \in A$, $\Phi_{\rho}(A,\emptyset) \neq \Phi_{\rho}(A\backslash x,\emptyset)$. This implies  $\mathcal{N}(x)=\{x\}$. For such an alternative $x$, removing it from $P$ does not affect the DM's association process, and thus the representation remains valid. 
\end{proof}

Next, we turn to identification of the attention probability function $\pi$. Fix a $\mathcal{E}^F$-choice rule  $\rho$, and assume that it is represented by $(P, \pi, \mathcal{N}, \succ)$ as a $\mathcal{E}^{F}$-PABC. Define an associative network $\mathcal{U}[\rho]$ such that $$ \mathcal{U}[\rho]= \mathcal{X} \cup \{(x,y)\in X\times X: x \neq y \text{~and~} \Phi_{\rho}(\{x,y\}, \emptyset) = \Phi_{\rho}(\{y\}, \emptyset)\}.$$ When there is no confusion about $\rho$, we write $\mathcal{U}$ instead of $\mathcal{U}[\rho]$. By the definition of $P_{\rho}$, for any distinct $x,y \in X$,  $x\mathcal{U}y$ if and only if there is a sequence of alternatives $(x_k)_{k=1}^n$ in $P_{\rho}$ such that $x_1=x$ and $x_1 \mathcal{N} \cdots \mathcal{N} x_n \mathcal{N} y$. As a result, the network $\mathcal{U}$ is $P_{\rho}$-transitive.

Using $\mathcal{U}$, we can partially identify $\pi$. For each $x \in X$, define $$[x]^+=\{y \in P_{\rho}: y \mathcal{U}x\} \cup \{x\}.$$ We have $\Phi_{\rho}(\{x\}, \emptyset) = \Phi_{\rho}([x]^+, \emptyset) = \mathring{\pi}_{[x]^+}$. If $x \in X\backslash P_{\rho}$, then we have $\Phi_{\rho}([x]^+\backslash x, \emptyset) = \mathring{\pi}_{[x]^+\backslash x}$. It follows that \begin{equation}\label{eq:unique:PABC:limitdata:1}
\mathring{\pi}_x= \frac{\Phi_{\rho}([x]^+, \emptyset)}{\Phi_{\rho}([x]^+\backslash x, \emptyset)}.    
\end{equation}
If $x \in P_{\rho}$, define $[x]=\{y \in [x]^+: x \mathcal{U} y\}$. Since $\Phi_{\rho}(\{[x]^{+}\backslash [x]\}, \emptyset) = \mathring{\pi}_{[x]^{+}\backslash [x]}$, we have  
\begin{equation}\label{eq:unique:PABC:limitdata:2}
\mathring{\pi}_{[x]}=\frac{\Phi_{\rho}([x]^+, \emptyset)}{\Phi_{\rho}([x]^+\backslash [x], \emptyset)}.  
\end{equation}
Note that equations (\ref{eq:unique:PABC:limitdata:1}) and (\ref{eq:unique:PABC:limitdata:2}) (for $x\in X\backslash P_{\rho}$ and $x\in P_{\rho}$, respectively) characterize the necessary and sufficient conditions that $\pi$ must satisfy. In particular, the sufficiency stems from the fact that when $x\in P_{\rho}$, any two alternatives in $[x]$ must be either both considered or both ignored. Consequently, there is no point identification of $\pi_y$ for each $y\in [x]$, and the aggregate term $\mathring{\pi}_{[x]}$ suffices to rationalize the DM's choices. We summarize this discussion in the proposition below.

\begin{proposition}\label{prop:oa:pabc:EF:id:pi}
Suppose that $(P, \pi, \mathcal{N}, \succ)$ represents a $\mathcal{E}^{F}$-choice rule $\rho$ as a $\mathcal{E}^{F}$-PABC, then $\pi$ satisfies equation (\ref{eq:unique:PABC:limitdata:1}) for all $x \in X\backslash P_{\rho}$ and equation (\ref{eq:unique:PABC:limitdata:2}) for all $x \in P_{\rho}$. Furthermore, for any $\pi'$ that satisfies equation (\ref{eq:unique:PABC:limitdata:1}) for all $x \in X\backslash P_{\rho}$ and equation (\ref{eq:unique:PABC:limitdata:2}) for all $x \in P_{\rho}$, the tuple $(P, \pi', \mathcal{N}, \succ)$ also represents $\rho$ as a $\mathcal{E}^{F}$-PABC.
\end{proposition}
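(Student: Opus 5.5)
The plan is to handle the two halves separately, working throughout with the representation of Proposition \ref{prop:reformulation}. By Proposition \ref{prop:oa:pabc:EF:id:P} I may replace $P$ by $P_{\rho}$ without changing $\rho$, since every $x\in P\backslash P_{\rho}$ has $\mathcal{N}(x)=\{x\}$. So in a menu $(A,\emptyset)$ the DM effectively faces $(A,P_{\rho}\backslash A)$, and
\[
\Phi_{\rho}(A,\emptyset)=\mathring{\pi}_{H_{\mathcal{N}}(A,\,P_{\rho}\backslash A)}.
\]
The basic tool is the stated fact that, for distinct $x,y$, $x\mathcal{U}y$ holds if and only if there is an $\mathcal{N}$-path from $x$ to $y$ all of whose intermediate nodes lie in $P_{\rho}$. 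From this, $\mathcal{U}$ is $P_{\rho}$-transitive.

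\emph{Necessity.} I will compute the relevant $H$-sets directly.
\begin{enumerate}
\item[(a)] $H_{\mathcal{N}}(\{x\},P_{\rho}\backslash x)=[x]^+$. A node $y\in P_{\rho}$ reaches $x$ through observable nodes exactly when such a path has intermediate nodes in $P_{\rho}$, that is, when $y\mathcal{U}x$.
\item[(b)] For $x\notin P_{\rho}$, $H_{\mathcal{N}}([x]^+\backslash x,\,P_{\rho}\backslash [x]^+)=[x]^+\backslash x$. If $z\in P_{\rho}$ reaches some $y\in[x]^+\backslash x$, then $z\mathcal{U}y$ and $y\mathcal{U}x$. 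Since $y\in P_{\rho}$, $P_{\rho}$-transitivity gives $z\mathcal{U}x$, so $z\in[x]^+$. Here $x$ is not observable, so no path passes through it.
\item[(c)] For $x\in P_{\rho}$, $H_{\mathcal{N}}([x]^+\backslash[x],\,\cdot)=[x]^+\backslash[x]$. If some $w\in[x]$ reached $y\in[x]^+\backslash[x]$, then $x\mathcal{U}w\mathcal{U}y$, so $x\mathcal{U}y$. Together with $y\mathcal{U}x$ this puts $y\in[x]$, a contradiction. Nodes outside $[x]^+$ are handled as in (b).
\end{enumerate}
Taking ratios of the resulting products of $\mathring{\pi}$ then yields equations (\ref{eq:unique:PABC:limitdata:1}) and (\ref{eq:unique:PABC:limitdata:2}).

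\emph{Sufficiency.} Fix $\pi'$ satisfying both equations. I will show that for every menu $(A,\emptyset)$, the distribution of final consideration sets under $(P_{\rho},\pi',\mathcal{N},\succ)$ equals the one under $\pi$. Then the choice probabilities coincide by Definition \ref{def:ABC}. Partition $X$ into the following blocks:
\begin{itemize}
\item singletons $\{x\}$ for $x\notin P_{\rho}$;
\item the classes $[x]$ for $x\in P_{\rho}$, which partition $P_{\rho}$ because the symmetric part of $\mathcal{U}$ restricted to $P_{\rho}$ is an equivalence relation.
\end{itemize}
In any menu $(A,P_{\rho}\backslash A)$ all of $P_{\rho}$ is observable. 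Two members of a class are linked by $\mathcal{N}$-paths through $P_{\rho}$, so whenever any member of a class is initially considered, the whole class lies in the final consideration set. Hence the final set is determined by which blocks the initial set hits. The blocks are hit independently, with probabilities $\pi_x$ and $1-\mathring{\pi}_{[x]}$, and these are exactly the quantities fixed by the two equations. Summing over blocks then gives the claim.

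\emph{Main obstacle.} The delicate step is the path bookkeeping in (b), (c) and the class argument. There I must check that restricting observability to $A\cup P_{\rho}$ never creates or destroys the paths used. The risk is a path that passes through an available alternative not in $P_{\rho}$; such a node is a genuine endpoint and is not covered by $\mathcal{U}$-transitivity. My first attempt would be to show the final set is the union of block closures by induction on path length, treating available non-private nodes explicitly.
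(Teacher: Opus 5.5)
Your proposal is correct and is essentially the paper's argument: necessity comes from evaluating $\Phi_{\rho}$ at $[x]^+$, $[x]^+\backslash x$ and $[x]^+\backslash [x]$ as $\mathring{\pi}$ of those same sets via the $P_{\rho}$-transitivity of $\mathcal{U}$. Sufficiency comes from observing that each class $[x]\subseteq P_{\rho}$ is always considered or ignored jointly, so only $\mathring{\pi}_{[x]}$ (and $\mathring{\pi}_x$ for $x\notin P_{\rho}$) enters the distribution of final consideration sets.

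There are two small points to tidy up. First, the equations involve $\Phi_{\rho}([x]^+,\emptyset)$, which your step (a) does not compute directly (it gives $\Phi_{\rho}(\{x\},\emptyset)$); it equals $\mathring{\pi}_{[x]^+}$ by the same argument as in (b). Second, passing back from $P_{\rho}$ to $P$ at the end is harmless because $\mathcal{N}(x)=\{x\}$ for $x\in P\backslash P_{\rho}$. The ``obstacle'' you flag does not actually arise, since a final consideration set is just the union of the closures of the initially considered elements, so no induction on path length is needed.
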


Finally, we discuss identification of the associative network. We note that it is without loss of generality to assume that the DM's associative network $\mathcal{N}$ is $P_{\rho}$-transitive. Since all alternatives in $P_{\rho}$ are privately observable, if $x \mathcal{N} y$ and $y \mathcal{N} z$ for some $y\in P_{\rho}$, then consideration of $x$ can always lead the DM to further consider $z$ via the privately observable alternative $y$. This is observationally equivalent to the case in which $z$ is directly associated with $x$. Accordingly, if $\mathcal{N}$ is not $P_{\rho}$-transitive, then we can simply replace it with its $P_{\rho}$-transitive closure, which leaves observed choice behavior unchanged.

Recall that we define the associative network $\mathcal{N}[\rho]$ in equation (\ref{eq:def:Nrho}). It turns out the combination of $\mathcal{N}[\rho]$ and  the $P_{\rho}$-transitive network $\mathcal{U}[\rho]$ is the minimal associative network that works for representing a $\mathcal{E}^F$-PABC. We have the following proposition.

\begin{proposition}\label{prop:oa:pabc:EF:id:network}
If a $\mathcal{E}^{F}$-choice rule $\rho$ is represented by $(P, \pi, \mathcal{N}, \succ)$ as a $\mathcal{E}^{F}$-PABC in which $\mathcal{N}$ is $P_{\rho}$-transitive, then $\mathcal{U}[\rho] \cup \mathcal{N}[\rho] \subseteq \mathcal{N}$, and $(P, \pi, \mathcal{U}[\rho] \cup \mathcal{N}[\rho], \succ)$ also represents $\rho$ as a $\mathcal{E}^{F}$-PABC. 
\end{proposition}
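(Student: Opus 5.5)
The proof has two parts: first show $\mathcal{U}[\rho]\cup\mathcal{N}[\rho]\subseteq\mathcal{N}$, then show that replacing $\mathcal{N}$ by $\mathcal{W}:=\mathcal{U}[\rho]\cup\mathcal{N}[\rho]$ leaves every choice probability on $\mathcal{E}^F$ unchanged. Throughout, by Proposition \ref{prop:oa:pabc:EF:id:P}, we may take $P=P_\rho$ whenever convenient, since this does not affect representability.

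\textbf{Containment.} For $\mathcal{U}[\rho]\subseteq\mathcal{N}$, use the characterization stated just before the proposition: $x\mathcal{U}y$ holds if and only if there is a chain $x=x_1\mathcal{N}\cdots\mathcal{N}x_n\mathcal{N}y$ with all $x_k\in P_\rho$. Since $\mathcal{N}$ is $P_\rho$-transitive, such a chain collapses to $x\mathcal{N}y$.

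For $\mathcal{N}[\rho]\subseteq\mathcal{N}$, rerun the argument preceding Proposition \ref{prop:oa:EF:id} for the effective menu $(A,P\backslash A)$.
\begin{itemize}
\item Binary clause: $\{y\}$ is the set of chosen alternatives in $(\{x,y\},\emptyset)$. Then attention to $x$ must reach $y$, either directly or through privately observable intermediates, and $y\succ x$. By $P_\rho$-transitivity this gives $x\mathcal{N}y$.
\item Second clause: $x$ is chosen in $(Ax,\emptyset)$, so $x$ does not reach $z$ there. Yet $x$ reaches $z$ in $(A\cup\{x,y\},\emptyset)$. Hence the path must pass through $y$, with privately observable intermediates absorbed by transitivity, and so $x\mathcal{N}y$.
\end{itemize}
Private alternatives can create paths that sit outside $A$. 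This part needs care, and the $P_\rho$-transitivity hypothesis is exactly what handles it.

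\textbf{Sufficiency of $\mathcal{W}$.} Since $\mathcal{W}\subseteq\mathcal{N}$, every final consideration set under $\mathcal{W}$ is contained in the one under $\mathcal{N}$. It therefore suffices to show the following. For each $A$ and initial set $C\subseteq A\cup P$, if $x=\max(\mathcal{N}^+_{AP}(C)\cap A;\succ)$, then $x\in\mathcal{W}^+_{AP}(C)$, and no better available alternative enters under $\mathcal{W}$. The latter holds automatically, by the containment just noted.

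The plan mirrors the proof of Proposition \ref{prop:oa:EF:id}.
\begin{enumerate}
\item Take a shortest $\mathcal{N}$-path $x_0\in C,\dots,x_n=x$ inside $A\cup P$. Minimality makes links non-skipping: $\mathcal{N}(x_t)\cap\{x_t,\dots,x_n\}=\{x_t,x_{t+1}\}$.
\item Treat maximal runs of path elements lying in $P_\rho\backslash A$. By $P_\rho$-transitivity, any run of such private alternatives together with its successor is already a $\mathcal{U}$-link. So it remains to handle links between alternatives of $A$, where the available set behaves as in the $\mathcal{E}^F$ case.
\item For those links, induct on $n$ exactly as before. If the second-best element of the path is not $x_0$, split the path and apply the induction hypothesis twice. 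If it is $x_0$, verify the chosen-set conditions in the definition of $\mathcal{N}[\rho]$ using the menus $(\{x_k\}_{k=0}^n,\emptyset)$, $(\{x_k\}_{k\ge2},\emptyset)$ and $(\{x_0\}\cup\{x_k\}_{k\ge2},\emptyset)$; this yields $x_0\,\mathcal{N}[\rho]\,x_1$.
\end{enumerate}
Throughout, $\pi$ and $\succ$ are unchanged, so the representation formula carries over.

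\textbf{Main obstacle.} The hard step is the inductive case in the presence of private alternatives. The menus built from path elements are evaluated with $P\backslash A$ also observable, so chosen sets may shrink because of extra private routes to better options. The plan is to pick a path that is shortest modulo $\mathcal{U}$-collapses, and to restrict the argument to path elements outside $P_\rho$. If that fails, a fallback is to first replace $\mathcal{N}$ by the network with all $P_\rho$ paths collapsed, and then apply the $\mathcal{E}^F$ argument to the reduced problem.
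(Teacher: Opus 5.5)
\paragraph{Assessment.} Your overall route is the paper's. You show $\mathcal{U}[\rho]\subseteq\mathcal{N}$ by collapsing private chains. For sufficiency you take a shortest $\mathcal{N}$-path to the best considered available alternative, dispose of a private starting node via $\mathcal{U}[\rho]$, and certify the remaining links with the induction from Proposition~\ref{prop:oa:EF:id}.

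\paragraph{The gap.} You leave the step you flag as the main obstacle as a plan plus a fallback. It is the only place where private observability matters, and the paper's one-line appeal to Proposition~\ref{prop:oa:EF:id} rests on exactly this point.

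The missing observation is that your fallback already \emph{is} the hypothesis. Take $P=P_\rho$. By $P_\rho$-transitivity, for every $S\in\mathcal{M}$ and $w\in S$ you have $\mathcal{N}^+_{S\cup P}(w)\cap S=\mathcal{N}^+_S(w)$, obtained by collapsing private intermediates one at a time.

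By Proposition~\ref{prop:chosen_alternative}, the chosen sets $c_\rho(S,\emptyset)$ therefore coincide with those of the plain $\mathcal{E}^F$-ABC $(\pi,\mathcal{N},\succ)$. Since $\mathcal{N}[\rho]$ depends only on chosen sets, it is literally the network of Proposition~\ref{prop:oa:EF:id} for that rule. That proposition then gives $\mathcal{N}[\rho]\subseteq\mathcal{N}$ for free, which makes your hand argument for the second clause unnecessary; as written, that argument also skips ruling out intermediates in $A$ between $x$ and $y$. Its induction also transfers verbatim.

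Concretely, the only support condition that extra private reachability could break is that $x_0$ be chosen in $\{x_0\}\cup\{x_t\}_{t\ge2}$. A route from $x_0$ to some $x_t$ with $t\ge2$ through private nodes collapses to $x_0\,\mathcal{N}\,x_t$, which contradicts non-skipping. The other conditions are monotone in reachability because $x_n$ is best. These are the binary case and $x_n$ being the unique chosen element of $\{x_k\}_{k\ge0}$ and $\{x_k\}_{k\ge2}$.

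Neither ``shortest modulo $\mathcal{U}$-collapses'' nor ``restrict to path elements outside $P_\rho$'' addresses your worry. That worry concerns \emph{off-path} private alternatives that are observable in the auxiliary menus.

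\paragraph{Your Step 2.} This step also needs repair. On a shortest path, $P_\rho$-transitivity excludes interior private nodes altogether, so only $x_0$ can be private, and then $x_0\,\mathcal{U}[\rho]\,x_1$. This is the paper's argument that the last node in $\hat B$ lies in $A\backslash P_\rho$.

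If interior private runs could occur, your decomposition would leave the incoming link $a\,\mathcal{N}\,p_1$, from an available $a$ into the run, uncovered. That link is neither a $\mathcal{U}[\rho]$-link nor a link between alternatives of $A$. So you should collapse such runs rather than decompose them.
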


\begin{proof}[Proof of Proposition \ref{prop:oa:pabc:EF:id:network}]
Without loss of generality, we can let $P=P_{\rho}$. It can be shown similar to the proof of Proposition \ref{prop:oa:EF:id} that $\mathcal{U}[\rho] \cup \mathcal{N}[\rho] \subseteq \mathcal{N}$. We focus on showing that  $(P, \pi,  \mathcal{U}[\rho] \cup \mathcal{N}[\rho], \succ)$ also represents $\rho$ as a $\mathcal{E}^{F}$-PABC. It suffices to show that in any given menu $(A,\emptyset)$, any initial consideration set $B\subseteq A\cup P_{\rho}$ leads to the same final choice through either $\mathcal{N}$ or $\mathcal{U}[\rho] \cup \mathcal{N}[\rho]$.   Let $\hat{B}$ and $\bar{B}$ be the final consideration set of $B\subseteq A\cup P_{\rho}$ through the two networks $\mathcal{U}[\rho] \cup \mathcal{N}[\rho]$ and $\mathcal{N}$, respectively. Since $\hat{B} \subseteq \bar{B}$, it suffices to show that if $x$ is the $\succ$-optimal choice in $\bar{B} \cap A$, then $x \in \hat{B}$. Suppose to the contrary that $x \notin \hat{B}$. There exists a shortest path $(x_k)_{k=1}^{n+1}$ in $A\cup P_{\rho}$ such that $x_1 \in \hat{B}$, $x_{n+1}=x$, and for all $k\in \{1,...,n\}$, $x_k \mathcal{N} x_{k+1}$. Furthermore, we may assume that $x_{k+1}\notin \hat{B}$ for all $k\in\{1,...,n\}$. Otherwise, we can shorten the path by deleting any intermediate nodes that belong to $\hat{B}$, while keeping as $x_1$ the last alternative in $\hat{B}$ along the original path. Since for all $x \in P_{\rho}$, $x\mathcal{N}y$ implies $x\,\mathcal{U}[\rho]\,y$, we must have $x_1 \in A\backslash P_{\rho}$. Since $\mathcal{N}$ is $P_\rho$-transitive, without loss of generality, we can assume that for all $k \in \{2,...,n\}$, $x_k \in A\backslash P_{\rho}$.  Hence, for all $k \in \{1,...,n\}$, we have $x_{n+1} \succ x_k$. It follows from the proof of Proposition \ref{prop:oa:EF:id} that for all $k\in \{1,...,n\}$, $(x_k, x_{k+1}) \in \mathcal{N}[\rho]$, which contradicts the fact that $x_{n+1}=x\notin \hat{B}$.
\end{proof}

\subsection*{OA-3. Connection to Other Random Attention Models}
This section compares our model with other random attention models in the case where all observable alternatives are available. We first characterize our attention rule, that is, the distribution over the DM's final consideration sets induced by the attention and association procedure. We then show that this rule satisfies monotonicity \citepAppendix{OA-jpe2020random} but can violate attention overload \citepAppendix{OA-wp2021attention}.

A function $\sigma: \mathcal{M}\times \mathcal{M} \rightarrow [0,1]$ is an \textit{attention rule} if for all $S \in \mathcal{M}$, (i) $\sum_{S'\subseteq S}\sigma(S'|S)=1$,  (ii) for all $T \not\subseteq S$, $\sigma(T|S)=0$, and (iii) $\sigma(\{x\}|\{x\}) \in (0,1)$ for all $x\in X$. For each $S\in\mathcal{M}$, interpret $\sigma(T|S)$ as the probability that the DM's final consideration set is $T$ when $S$ is the set of observable alternatives. We assume that when $S=\{x\}$, the DM pays attention to $x$ with probability between $0$ and $1$.
Recall that in our ABC model, availability does not affect the formation of the final consideration set.

Consider an ABC $\rho$ represented by $(\pi,\mathcal{N},\succ)$. Then the tuple $(\pi,\mathcal{N})$ induces an attention rule $\sigma$. Specifically, for each $S\in\mathcal{M}$ and each $T\subseteq S$, define \begin{equation}\label{eq:oa:attentionrule:def}
    \sigma(T|S)= \sum_{C\subseteq S:\:T=\mathcal{N}_{\!\scriptscriptstyle S}^+(C)} \pi_{\scriptscriptstyle\! C} \mathring{\pi}_{\scriptscriptstyle\! S\backslash C}.
\end{equation} That is, the probability that $T$ is ultimately considered in $S$ equals the total probability of all initial consideration sets $C$ that, through the associative network $\mathcal{N}$, generate the final consideration set $T$. Denote by $\sigma_{\pi, \mathcal{N}}$ the attention rule induced by $(\pi, \mathcal{N})$.  Below, we characterize all such attention rules.  \bigskip

\noindent \textbf{Property 1.} For all $S,\hat{S} \in \mathcal{M}$, $T \subseteq S$, $\hat{T} \subseteq \hat{S}$, and $x \in (S\backslash T) \cap (\hat{S}\backslash \hat{T})$,  $$\sigma(T|S)> 0 \text{~and~} \sigma(\hat{T}|\hat{S})> 0 \quad \Longrightarrow \quad  \frac{\sigma(T|S\backslash x)}{\sigma(T|S)} = \frac{\sigma(\hat{T}|\hat{S}\backslash x)}{\sigma(\hat{T}|\hat{S})}.$$


For the DM to finally consider a subset $T$ when the observable set of alternatives is $S$, she must ignore all alternatives in $S\backslash T$. Thus, the alternatives in $S\backslash T$ affect the probability that the DM ultimately considers $T$ through their ability to attract the DM's attention. Property 1 states that, for a given alternative $x$, this effect---i.e., $x$'s ability to attract attention---is constant and does not depend on the set of observable alternatives. \bigskip

\noindent \textbf{Property 2.} For all mutually disjoint $S_1,S_2,T_1,T_2 \in \mathcal{M}$
  $$\sigma(T|T\cup S)\neq 0 \quad \Longleftrightarrow  \quad \text{for all~}  i,j \in \{1,2\},\, \sigma(T_i|T_i \cup S_j) \neq 0,$$ where $S=S_1\cup S_2$ and $T=T_1 \cup T_2$. \bigskip

Property 2 captures the network structure of the DM's attention. Note that $\sigma(T|T\cup S) \neq 0$ if and only if no alternative in $T$ can direct the DM's attention to any alternative in $S$. This holds if and only if neither the alternatives in $T_1$ nor those in $T_2$ direct the DM's attention to alternatives in $S_1$ or $S_2$.

\begin{proposition}\label{prop:oa:attention:characterization}
An attention rule $\sigma$ can be induced by some tuple $(\pi, \mathcal{N})$ if and only if it satisfies Properties 1 and 2.  
\end{proposition}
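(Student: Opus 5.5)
Necessity is checked directly from the formula for $\sigma_{\pi,\mathcal N}$, and sufficiency builds $(\pi,\mathcal N)$ from singleton and binary data and then shows every $\sigma(T|S)$ is forced.

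\textbf{Necessity.} The key observation is that $T=\mathcal N^+_S(C)$ for some $C\subseteq S$ exactly when $T$ is closed under $\mathcal N_S$. In that case $\sigma(T|S)=\mathring\pi_{S\backslash T}\,\sigma(T|T)$.

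To see this, note that any initial set generating $T$ must lie inside $T$. For $C\subseteq T$, if $T$ is $\mathcal N_S$-closed then $\mathcal N^+_S(C)=\mathcal N^+_T(C)$. Both restricted closures agree because no path leaves $T$.

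Property 1 then follows. If $\sigma(T|S)>0$ and $x\in S\backslash T$, then $T$ is also closed in $S\backslash x$. Hence the ratio in Property 1 equals $1/\mathring\pi_x$, whatever $S$ and $T$ are.

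For Property 2, the same fact shows that $\sigma(T|T\cup S)\neq0$ iff $\sigma(T|T)>0$ and no $\mathcal N$-link runs from $T$ into $S$. We also need $\sigma(T|T)>0$ for every $T$, which holds since $C=T$ has positive probability. So the condition reduces to ``no direct link from $T$ to $S$''. This splits over $T_1,T_2$ and $S_1,S_2$, which is exactly Property 2.

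\textbf{Sufficiency.} Given $\sigma$, I set $\pi_x=1-\sigma(\emptyset|\{x\})$, which lies in $(0,1)$ by (iii). I put $x\mathcal N y$ (for $x\neq y$) iff $\sigma(\{x\}|\{x,y\})=0$.

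First I use Property 2 to show that $\sigma(T|S)>0$ iff $T$ is closed under $\mathcal N_S$. Writing $T$ and $S\backslash T$ as unions of singletons, repeated application of Property 2 reduces the claim to pairs $\sigma(\{x\}|\{x,y\})$. For the degenerate cases I take some $S_i$ or $T_j$ empty, which requires $\sigma(T|T)\neq0$. That follows from Property 2 with $S=\emptyset$ together with (iii).

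Next, Property 1 with $T=\emptyset$ gives $\sigma(\emptyset|S)=\prod_{x\in S}\mathring\pi_x$. More generally, for a closed $T$ I peel off $S\backslash T$ one element at a time; positivity is preserved at each step because $T$ stays closed. Each step contributes the factor $\sigma(T|S\backslash x)/\sigma(T|S)$, and Property 1 makes this constant in $x$. Comparing with $T=\emptyset$ identifies the constant as $1/\mathring\pi_x$. This yields
\[
\sigma(T|S)=\mathring\pi_{S\backslash T}\,\sigma(T|T).
\]

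\textbf{Main obstacle.} It remains to pin down $\sigma(T|T)$. I will use the identity $\sum_{T'\subseteq T}\sigma(T'|T)=1$. Every proper $T'$ with positive mass is closed, so its mass is $\mathring\pi_{T\backslash T'}\,\sigma(T'|T')$. Hence $\sigma(T|T)$ is determined by induction on $|T|$.

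The model $\sigma_{\pi,\mathcal N}$ satisfies the same recursion with the same $\pi$ and the same closed-set structure. It also satisfies the same zero pattern, because $\mathcal N$ was defined from exactly those pairwise zeros. Therefore the two agree on every $(T,S)$.

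The delicate point is checking that the model's positivity pattern coincides with the one derived from Property 2. In particular, closedness under the possibly non-transitive $\mathcal N_S$ must be exactly the condition ``no direct link from $T$ to $S\backslash T$''.
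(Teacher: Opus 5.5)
Your proposal is correct and follows essentially the same route as the paper: the same $\pi$ and $\mathcal{N}$ built from singletons and pairwise zeros, Property 2 (via repeated splitting) for the support, Property 1 compared against $(\emptyset,\{y\})$ to factor out $\mathring{\pi}_{S\backslash T}$ when $T\neq S$, and normalization to pin down $\sigma(T|T)$; the paper packages this as a single induction on $|S|$ and omits the necessity check that you spell out. One small bookkeeping point: splitting with $T_2=\emptyset$ in Property 2 also requires $\sigma(\emptyset|S)\neq 0$, not just $\sigma(T|T)\neq 0$, but this follows in the same way from Property 2 with $T=\emptyset$ and from $\sigma(\emptyset|\{x\})=1-\sigma(\{x\}|\{x\})>0$.
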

\begin{proof}[Proof of Proposition \ref{prop:oa:attention:characterization}]
The necessity part is trivial and omitted. To prove the sufficiency, define $\pi: X \rightarrow (0,1)$ such that for all $x \in X$, $\pi_x=\sigma(\{x\}|\{x\})$. Define $\mathcal{N}$ such that $x \mathcal{N} y$ if $x = y$ or $x \neq y$ and $\sigma(\{x\}|\{x,y\})=0$. By Property 2, a simple induction shows that for all $S\in \mathcal{M}$ and $T \subseteq S$, $\sigma(T|S) \neq 0$ if and only if for all $x \in T$ and $y \in S\backslash T$,  $(x,y) \notin \mathcal{N}$.

To proceed, we show by induction that $\sigma$ satisfies equation (\ref{eq:oa:attentionrule:def}) for all $S \in \mathcal{M}$ and $T \subseteq S$. Suppose that for all $S \in \mathcal{M}$ with $|S| \le k$ and for all $T \subseteq S$, equation (\ref{eq:oa:attentionrule:def}) holds for $\sigma(T|S)$. Consider some $S\in \mathcal{M}$ with $|S|=k+1$ and $T \subseteq S$. If $\sigma(T|S)=0$, then we are done. If $\sigma(T|S) > 0$, then there are two cases to consider. In case 1, $T \neq S$. It follows that for all $y \in S\backslash T$ and $x \in T$, $(x,y) \notin \mathcal{N}$. Pick any $y \in S\backslash T$. Since $\sigma(\emptyset|\{y\})>0$, Property 1 implies that $\frac{\sigma(T|S)}{\sigma(T|S\backslash y)} = \frac{\sigma(\emptyset|\{y\})}{\sigma(\emptyset|\emptyset)} = \sigma(\emptyset|\{y\}) = \mathring{\pi}_y$. By the induction hypothesis, $\sigma(T|S\backslash y)$ satisfies equation (\ref{eq:oa:attentionrule:def}), and thus $\sigma(T|S)$ satisfies it as well. In case 2, we have $T=S$. It follows that $\sigma(S|S)=1-\sum\limits_{\hat{T} \subsetneq S} \sigma(\hat{T}|S)$. Since for all $\hat{T} \subsetneq S$, $\sigma(\hat{T}|S)$ satisfies equation (\ref{eq:oa:attentionrule:def}), we conclude that $\sigma(S|S)$ also satisfies equation (\ref{eq:oa:attentionrule:def}). 
\end{proof}

A key feature of our attention rule $\sigma_{\pi, \mathcal{N}}$ is that it permits the possibility that $\sigma_{\pi, \mathcal{N}}(T|S)=0$ for some $S \in \mathcal{M}$ and $\emptyset \neq T \subseteq S$. This occurs when  some alternative in $S\backslash T$ is associated with some alternative in $T$, making it impossible for the DM to stop expanding her consideration set at $T$. This feature is important because it generates \emph{testable zero restrictions} on which final consideration sets can arise, and thus sharply distinguishes our attention rule from others in which any subset of $S$ may, in principle, be attainable as a final consideration set. In particular, these zeros reflect an endogenous ``attention spillover'' mechanism driven by associations, rather than by exogenous feasibility constraints.

To see how our attention rule differs from others, consider the case where, for mutually distinct $x,y,z\in X$, we have
$$
\mathcal{N}\cap\left(\{x,y,z\}\times\{x,y,z\} \right)=\{(x,x),(y,y),(z,z),(x,y)\}.
$$
That is, the only effective associative link among the three alternatives is from $x$ to $y$. It follows that for all $S\subseteq\{x,y,z\}$ with $x\in S$, $\sigma_{\pi,\mathcal{N}}(\{x\}|S)=0$ if and only if $y\in S$, while for $y$ and $z$, they can always be considered solely with positive probability in any $S\subseteq\{x,y,z\}$ that includes them. What is hard to replicate without an association mechanism is precisely the implied \emph{menu-dependent} and \emph{directional} support restriction: The mere presence of $y$ makes the singleton $\{x\}$ impossible, yet the presence of $x$ does not make $\{y\}$ impossible (and $z$ never plays such a role). In attention models driven only by alternative-specific salience or fixed weights, adding $y$ to a menu typically changes the {attention probabilities} assigned to subsets of alternatives but does not naturally force $\sigma(\{x\}|S)$ to drop \emph{exactly} to zero.  Achieving this would require  an ad hoc, pair-specific exclusion rule that depends on which other options are present. Our associative network provides an intuitive and transparent microfoundation for these zeros: A directed link $x\mathcal{N}y$ endogenously eliminates $\{x\}$ from the support of the attention rule exactly in those menus  containing $y$.


In what follows, we examine whether our attention rule satisfies general properties: monotone attention and attention overload.
Following \citeAppendix{OA-jpe2020random}, we say an attention rule $\sigma$ satisfies monotone attention if for all $S \in \mathcal{M}$, $T \subseteq S$, and  $x\in S\backslash T$, $$\sigma(T|S)\leq \sigma(T|S\backslash x).$$ This notion captures the idea that each \emph{consideration set} competes for the DM's attention.  An alternative property on attention rules  proposed by \citeAppendix{OA-wp2021attention} highlights the competition among each  alternative. Given an attention rule $\sigma$, define the attention probability map $\phi_\sigma: X\times (\mathcal{M}\backslash\{\emptyset\}) \rightarrow[0,1]$ such that for all $S \in \mathcal{M}\backslash \{\emptyset\}$ and $x \in S$,
$$\phi_\sigma(x|S) = \sum_{T\subseteq S:x\in T} \sigma (T|S).$$  The attention rule $\sigma$ is said to satisfy \textit{attention overload} if for any $x\in T\subseteq S$, $$\phi_\sigma(x|S)\leq \phi_\sigma(x|T).$$

Fix a tuple $(\pi,\mathcal{N})$. We first show that $\sigma_{\pi,\mathcal{N}}$ satisfies monotone attention. To see this, note that for each $T\subseteq S\neq \emptyset$, either $\sigma_{\pi,\mathcal{N}}(T|S)=0$, in which case removing some $x\in S\backslash T$ from $S$ can only weakly boost the probability that $T$ is finally considered, or $\sigma_{\pi,\mathcal{N}}(T|S)>0$, in which case Property 1 implies that removing some $x\in S\backslash T$ from $S$ strictly boosts the probability that $T$ is finally considered by a factor of $\frac{1}{\mathring{\pi}_x}$. Hence attention monotonicity holds for $\sigma_{\pi,\mathcal{N}}$.

By contrast, the following simple example shows that mental association generically leads to violations of attention overload. Consider $T=\{x\}$, $S=\{x,y\}$, and $(y,x)\in \mathcal{N}$. Then $\phi_\sigma(x|S) = \pi_x +\pi_y(1-\pi_x) > \phi_\sigma(x|T) = \pi_x$. The intuition is that larger menus may contain alternatives whose consideration further prompts attention toward $x$, leading to a higher attention probability of $x$.

\subsection*{OA-4. Random Associative Network} 
In this section, we study the extension in which the DM's associative network can be random. Allowing for randomness in the associative network captures the idea that associative links need not be stable over time: The links that are activated can depend on transient cues (e.g., context, framing, or recent exposure) or idiosyncratic shocks that affect which associations are retrieved at the moment of choice. This extension also accommodates environments in which consumers have heterogeneous association procedures, or firms can influence associations only probabilistically---for instance, because advertising exposure and recall are noisy, or because competing firms exert opposing forces on the same link (e.g., one firm seeks to strengthen it while another seeks to weaken it).

Let $\mathscr{N}$ be the set of all possible associative networks over $X$. A \textit{random associative network} is a probability distribution $\mu$ over $\mathscr{N}$. We say that a choice rule $\rho$ is a \textbf{R}andom \textbf{A}ssociation \textbf{B}ased  \textbf{C}onsideration rule (RABC) if there exists a tuple $(\pi, \mu, \succ)$, where $\pi$ is an attention probability function, $\mu$ is a random associative network, and $\succ$ is a preference ordering, such that for all  $(A,B) \in \mathcal{E}$ and $x \in A$, \begin{equation}\label{eq:def:RABC}
\rho(x|A,B) =  \sum_{\mathcal{N} \in \mathscr{N}} \mu(\mathcal{N}) \left( \sum_{C \subseteq AB:\, x=\max (\mathcal{N}_{\!\scriptscriptstyle AB}^+(C) \cap A; \,\succ) }  \pi_{C} \mathring{\pi}_{(AB)\backslash C} \right). 
\end{equation} The choice rule $\rho$ is also said to be represented by $(\pi, \mu, \succ)$ as a RABC.


For a given RABC $\rho$, the unique identification of the attention probability function $\pi$ and the preference $\succ$ proceeds exactly as in our baseline ABC model. However, the random associative network $\mu$ need not be uniquely identified. The following example illustrates that, holding $\pi$ and $\succ$ fixed, two distinct random associative networks can generate the same random choice rule.

\begin{example} \label{eg:random_network}
\emph{Let $X=\{x,y\}$.  Consider the following choice rule $\rho$.   
\begin{equation*}
\begin{split}
	\rho(y|\{x,y\}, \emptyset)=\frac{1}{8},~ & \rho(x|\{x,y\}, \emptyset)=\rho(x|\{x\}, \{y\})= \rho(y|\{y\}, \{x\})=\frac{5}{8}, \\ & \rho(x|\{x\},\emptyset)=\rho(y|\{y\},\emptyset)=\frac{1}{2}.
\end{split}
\end{equation*} It can be revealed that DM's preference  $\succ$ satisfies $x \succ y$, and her attention probability function is given by $\pi_x=\pi_y=\frac{1}{2}$. However, the random choice rule is consistent with more than one random associative networks: Any random associative network $\mu$ can represent it (together with $\pi$ and $\succ$) as a RABC if the probability of $y$ being associated with $x$ and that of $x$ being associated with $y$ are both equal to $\frac{1}{2}$ under $\mu$. Thus, with the four  associative networks $\mathcal{N}=\mathcal{X},$ $\mathcal{W}=\mathcal{X} \cup \{(x,y)\}$,  $\mathcal{V}=\mathcal{X} \cup \{(y,x)\}$, and $\mathcal{U}=\mathcal{X} \cup \{(x,y), (y,x)\}$, we can construct two random associative networks $\mu_1$ and $\mu_2$ such that $\mu_1(\mathcal{N})=\mu_1(\mathcal{U})=\frac{1}{2}$ and $\mu_2(\mathcal{W})=\mu_2(\mathcal{V})=\frac{1}{2}$, and both represent $\rho$ as a RABC.  \qed}
\end{example}

Although it is an intriguing open question whether there exists a canonical class of random associative networks such that each RABC $\rho$ admits a \emph{unique} representation with a random associative network from that class, addressing this question is beyond the scope of the present paper and is left for future research. We conclude this section by focusing on a special class of random associative networks that we find particularly relevant. We show that within this class the representation is uniquely identified.

A random associative network $\mu$ is said to be \textit{link-independent} if there is a link-formation probability function (LPF) $\theta: X^2 \rightarrow [0,1]$ such that (i) for all $z \in X$, $\theta(z,z)=1$,  and (ii) for all $\mathcal{N} \in \mathscr{N}$, $$\mu(\mathcal{N})=\left(\prod_{(x,y) \in \mathcal{N}} \theta(x,y) \right) \left(\prod_{(x,y) \not\in \mathcal{N}} (1-\theta(x,y)) \right)$$  The value of $\theta(x,y)$ is the probability that the DM can associate $y$ with $x$. Condition (ii) indicates that the DM forms each associative link independently.

Link-independent random associative networks provide a parsimonious and tractable way to model stochastic association formation. They capture the idea that, while the DM’s ability to retrieve a particular association $(x,y)$ may be noisy---because of limited memory, fluctuating attention, or recent exposure---these retrieval events only depend on the intrinsic attributes (such as similarity, thematic connection, or prompted exposure) of $x$ and $y$ and   each associative link is not predetermined by the presence of other connections. 
The LPF $\theta$ then summarizes, in reduced form, the ``strength'' or ``salience'' of each potential association: Higher $\theta(x,y)$ means that $y$ is more likely to come to mind when $x$ is encountered. This specification is also empirically convenient, as it replaces an unrestricted distribution over networks with a low-dimensional object $\theta$, and it provides a natural benchmark against which richer models with correlated link formation (e.g., clustering or common shocks) can be compared.
Therefore, we view this special case as a natural starting point for capturing heterogeneity in individuals' associative networks and
studying the joint effects of random association and random attention. 

This framework parallels the well-known Erd\H{o}s--R\'enyi random network model, where each undirected edge between any two nodes is formed independently with a fixed probability $p$. Our setting generalizes this by allowing directed links as well as heterogeneous link probabilities: Depending on relevant contextual or behavioral factors, each link has a specific probability to be activated. 

Below, we show that the LPF $\theta$ is uniquely identified. Consider a choice rule $\rho$ that can be represented by $(\pi,\mu,\succ)$ as a RABC, where $\mu$ is link-independent. To identify the associated LPF $\theta$, fix two distinct alternatives $x$ and $y$. We have $$\Phi_{\rho}(\{y\}, \{x\})= (1-\pi_y)(1-\pi_x \theta(x,y)),$$ i.e., the probability for $y$  being unselected in the menu $(\{y\}, \{x\})$ is equal to the probability that $y$ is not initially paid attention to and not considered through the initial consideration of $x$. Thus, the probability for $y$ being associated with $x$ is $$\theta(x,y)=\frac{1}{\pi_x}-\frac{\Phi_{\rho}(\{y\}, \{x\})}{\pi_x-\pi_x\pi_y}.$$

\subsection*{OA-5. General Models of Initial Attention} 
In this section, we study two relaxations of our assumption regarding how the DM's initial attention set is formed. First, we consider more general initial attention distributions by relaxing the assumption of independent attention. Second, we maintain the assumption of independent attention but allow the DM's attention probability for an alternative to depend on its availability. \bigskip

\noindent \textbf{General attention distributions.} Consider a general attention distribution function $\sigma: \mathcal{M} \times \mathcal{M} \rightarrow [0,1]$ such that for all $A \in \mathcal{M}$,  $\sum_{B \subseteq A} \sigma(B,A)=1$, and $\sigma(B,A) > 0$ if and only if $B \subseteq A$. To interpret, $\sigma(B,A)$ is the probability that the DM's initial consideration set is $B$ when  $A$ is the set of all observable alternatives. With $\sigma$, if the DM's associative network and preference ordering are given by $\mathcal{N}$ and $\succ$ respectively, then for all menu $(A,B)$ and $x \in A$, we have $$ \rho(x|A,B) = \sum_{C\subseteq AB:\, x = \max(\mathcal{N}^+_{\scriptscriptstyle\! AB}(C) \cap A;\, \succ)} \sigma(C,AB).$$

In what follows, we demonstrate that both $\mathcal{N}$ and $\succ$ can be uniquely identified. 		
To see this, consider two distinct alternatives $x$ and $y$. The identification of $\mathcal{N}$ is exactly the same as that in our baseline model.  If $x\mathcal{N}y$, then   $$\Phi_{\rho}(\{y\}, \{x\}) = \sigma(\emptyset, \{x,y\}) = \Phi_{\rho}(\{x,y\}, \emptyset),$$ and if not $x\mathcal{N}y$, then $$\Phi_{\rho}(\{y\}, \{x\}) = \sigma(\emptyset, \{x,y\}) + \sigma(\{x\}, \{x,y\}) > \Phi_{\rho}(\{x,y\}, \emptyset).$$ Therefore, $x\mathcal{N}y$ if and only if $\Phi_{\rho}(\{y\}, \{x\}) =   \Phi_{\rho}(\{x,y\}, \emptyset)$.  For the identification of the preference ordering, note that if $x \succ y$, then we have \begin{equation*}
	\begin{split}
		& \rho(x|\{x,y\}, \emptyset) = \rho(x|\{x\}, \{y\}), \text{and}\\
		& \rho(y|\{x,y\},\emptyset) \le \sigma(\{y\}, \{x,y\}) \\ < \,& \sigma(\{y\}, \{x,y\}) + \sigma(\{x, y\}, \{x,y\}) \le \rho(y|\{y\},\{x\}).
	\end{split}
\end{equation*} Therefore, $x \succ y$ if and only if $\rho(x|\{x,y\}, \emptyset) = \rho(x|\{x\}, \{y\})$. \bigskip

While the general attention distribution may be hard to identify, certain parametric assumptions on $\sigma$ can lead to a unique identification. For instance, consider the attention distribution $\sigma$ introduced by \citeAppendix{OA-ecta2016feasibility}: There is a function $\zeta: \mathcal{M} \rightarrow (0,1)$ such that for all $B \subseteq A$, $\sigma(B,A)=\frac{\zeta(B)}{\sum_{C\subseteq A} \zeta(C)}$. \citeAppendix{OA-ecta2016feasibility} show that such an attention distribution generalizes the independent attention distribution in MM14. In fact, for all $A\subseteq X$, $\frac{\zeta(A)}{\zeta(\emptyset)}$ can be pinned down inductively via $\Phi_{\rho}(A,\emptyset)$.\footnote{Specifically, suppose that for all $C \subsetneq A$, the ratio $\frac{\zeta(C)}{\zeta(\emptyset)}$ is already pinned down. Then we have $\frac{\zeta(A)}{\zeta(\emptyset)} = \frac{\sum_{B \subseteq A: B \neq \emptyset} \zeta(B)}{\zeta(\emptyset)}-\frac{\sum_{C \subsetneq A: C \neq \emptyset} \zeta(C)}{\zeta(\emptyset)} = \frac{1-\Phi_{\rho}(A,\emptyset)}{\Phi_{\rho}(A,\emptyset)}-\frac{\sum_{C \subsetneq A: C \neq \emptyset} \zeta(C)}{\zeta(\emptyset)}.$} Therefore, $\zeta$ is unique up to  rescaling. \bigskip

\noindent \textbf{Availability-dependent attention.} Our baseline model assumes that the attention probability of an observable alternative is the same regardless of its availability. However, this assumption may not hold in certain contexts. For example, in some online shopping platforms, products that are sold out are explicitly labeled as ``out of stock'' on the display page. It is plausible that such labeling may result in excessive or less attention from the consumer. Therefore, a natural extension of our baseline model is to incorporate the availability of alternatives as a factor that influences the attention probability assigned to them.

Formally, let $\pi^{f}: X \rightarrow (0,1)$ be the attention probability function for available alternatives, and $\pi^{n}: X \rightarrow (0,1)$ be the attention probability function for unavailable but observable alternatives.\footnote{The definitions of $\pi_A^f$, $\pi_A^n$, $\mathring{\pi}_A^f$ and $\mathring{\pi}_A^n$ are similar to those of $\pi_A$ and $\mathring{\pi}_A$ in Section \ref{sec:model}.} Let $\mathcal{N}$ be the associative network and $\succ$ be the DM's preference ordering. For each  menu $(A,B)$ and $x \in A$, we have $$\rho(x|A,B)=\sum_{C\subseteq AB: \, x=\max(\mathcal{N}^+_{AB}(C) \cap A; \,\succ)} \pi^f_{C\cap A} \pi^n_{C\cap B} \mathring{\pi}^f_{A\backslash C}   \mathring{\pi}^n_{B\backslash C}.$$

We argue that all relevant parameters of the model above can be uniquely identified. First, the preference ordering $\succ$ can be identified similarly as the case in Section OA-1, and the associative network $\mathcal{N}$ can be identified similarly as in our baseline model. Second, the attention probability for each alternative $x$ when it is available is given by $\pi^f_x=\rho(x|\{x\},\emptyset)$. Finally, the attention probabilities for unavailable but observable alternatives can be identified through the extent to which they boost the choice probabilities of other alternatives. To see this, consider alternative $x$ and assume that there exists a distinct alternative $y$ such that $x\mathcal{N}y$. We have $\rho(y|\{y\}, \{x\})= 1-(1-\pi^n_x)(1-\pi_y^f)$, which implies $$\pi^n_x=1-\frac{1-\rho(y|\{y\}, \{x\})}{1-\pi_y^f}.$$ We note that $\pi^n$ cannot be fully identified: For a given alternative $x$, if every other alternative is not associated with it, then we are unable to identify $\pi^n_x$. Nevertheless, in this case, since the consideration of $x$ does not prompt the consideration of any other alternative, the value of $\pi^n_x$ is irrelevant.

\subsection*{OA-6. Identification without the Default Option}
In this section, we discuss the identification of the parameters of our model when there is no default option. Following \citeAppendix{OA-ecta2014consideration}, we assume that the DM will continue to pay attention to extra alternatives if her final consideration set does not contain any available alternative. Specifically, we say that a choice rule $\rho$ is a \textbf{A}ssociation \textbf{B}ased \textbf{C}onsideration with \textbf{N}o Default Option rule (ABCN) if there is tuple $(\pi, \mathcal{N}, \succ)$ such that for all menu $(A,B)$  with $A\neq \emptyset$ and $x \in A$, $$ \rho(x|A,B) = \frac{\hat{\rho}(x|A,B)}{1-\Phi_{\hat{\rho}}(A,B)},$$ where $\hat{\rho}$ is the ABC  that is represented by $(\pi, \mathcal{N}, \succ)$. Such a tuple $(\pi, \mathcal{N}, \succ)$ is also said to represent $\rho$ as an ABCN. Note that for an ABCN  $\rho$, we have for every menu $(A,B)$ with $A\neq \emptyset$, $\sum_{x \in A} \rho(x|A,B)=1$.

Similar to \citeAppendix{OA-ecta2014consideration}, the attention probability function $\pi$ \textit{cannot} be uniquely identified. To see this, let the space of alternatives be $X=\{x,y,z\}$. Consider two tuples $(\pi, \mathcal{N}, \succ)$ and $(\pi', \mathcal{W}, \succ')$ where $\mathcal{N}=\mathcal{W}=X\times X$, $x \succ y \succ z$, $x \succ' y \succ' z$,  $$ \pi(x)=\pi(y)=\pi(z)=\frac{1}{2}, ~\pi'(x)= \frac{1}{3}, \pi'(y)= \frac{1}{4},  \text{~and~}
\pi'(z)=\frac{1}{6}. $$ It can be easily verified that the two tuples represent the same ABCN.

To ensure unique identification of the DM's  associative network and preference, we impose a mild condition on the choice rule. Formally, we say that a choice rule $\rho$ is \textit{associatively non-dense} if for all distinct $x,y \in X$, there exists $z \in X\backslash \{x,y\}$ such that: \begin{equation}\label{eq:positive}
	\rho(z|\{x,z\}, \emptyset) > 0,   \rho(x|\{x,z\}, \emptyset) > 0,  \rho(z|\{y,z\}, \emptyset) > 0,  \rho(y|\{y,z\}) > 0,
\end{equation} 
\begin{equation}\label{eq:nonboost}
	\rho(z|\{x,z\}, \emptyset) \geq \rho(z|\{x,z\}, \{y\}) \text{, and } \rho(z|\{y,z\}, \emptyset) \geq \rho(z|\{y,z\}, \{x\}).
\end{equation} To understand this condition, consider an ABCN $\rho$ represented by $(\pi, \mathcal{N}, \succ)$. A sufficient condition for $\rho$ to be associatively non-dense is that for all distinct $x,y \in X$, there exists $z \in X\backslash\{x,y\}$  such that:

$$\{(x,z), (z,x), (y,z), (z,y)\} \cap \mathcal{N} = \emptyset.$$ A natural setting where this condition holds is when alternatives can be categorized into three or more distinct groups, and the DM forms associative links only among alternatives within the same category. Note that when there are sufficiently many alternatives and the DM faces cognitive constraints in forming associative links, the non-denseness condition is typically satisfied.

We show that for a given ABCN $\rho$ that is associatively non-dense, the associative network and preference can be uniquely identified. To see this, consider two distinct alternatives $x, y \in X$. We first demonstrate the identification of the preference $\succ$. If $\rho(x|\{x,y\}, \emptyset) = 0$, we must have $y \succ x$. If both  $\rho(x|\{x,y\}, \emptyset)$ and $\rho(y|\{x,y\}, \emptyset)$ are positive, then by the non-denseness condition, we can find a third alternative $z$ such that condition (\ref{eq:positive}) holds. Since $\rho$ is an ABCN, it can be shown that in the menu $(\{x,y,z\}, \emptyset)$, all the three alternatives are chosen with positive probabilities. Let $a,b,c \in \{x,y,z\}$ be such that $a \succ b \succ c$, then we have $$ \frac{\rho(b|\{b,c\}, \emptyset)}{\rho(b|\{a, b,c\}, \emptyset)} = \frac{\rho(c|\{b,c\}, \emptyset)}{\rho(c|\{a, b,c\}, \emptyset)},$$ 
$$ \frac{\rho(a|\{a,b\}, \emptyset)}{\rho(a|\{a, b,c\}, \emptyset)} = \frac{\rho(b|\{a,b\}, \emptyset)}{\rho(b|\{a, b,c\}, \emptyset)},$$ 
$$\frac{\rho(a|\{a,c\}, \emptyset)}{\rho(a|\{a,b,c\}, \emptyset)} < \frac{\rho(c|\{a,c\}, \emptyset)}{\rho(c|\{a, b,c\}, \emptyset)}.$$ This leads to a unique identification of the preference relation between $x$ and $y$.

For the associative network $\mathcal{N}$, by the non-denseness condition, consider some $z \in X\backslash \{x,y\}$ such that conditions (\ref{eq:positive}) and (\ref{eq:nonboost}) hold. It then follows that $y\mathcal{N}x$ if and only if $\rho(x|\{x,z\},\emptyset) < \rho(x|\{x,z\}, \{y\})$, and $x\mathcal{N}y$ if and only if $\rho(y|\{y,z\},\emptyset) < \rho(y|\{y,z\}, \{x\})$. By this, $\mathcal{N}$ is uniquely identified.

\subsection*{OA-7. Platform Network Design}
In this section, we concretize the discussion in Section \ref{subsec:moreapp} on platform network design.
Consider a multi-product firm or a platform that sells a set of products $X$. We assume that all products are always observable, but some products are sometimes out of stock. With this assumption, we consider a probability distribution $\kappa$ over menus of the form $(A, X\backslash A)$, and let $\kappa_A$ denote the probability of menu $(A,X\backslash A)$. The firm desires to maximize the sales volume of a particular alternative $x^*\in X$. This analysis is particularly pertinent to modern digital platforms---such as Amazon---that serve the dual role of matching third-party buyers and sellers while simultaneously selling their own private-label offerings. We assume that the demand faced by the firm is captured by the RUMABC $\rho$ represented by $(\pi, \mathcal{N}, \tau)$. That is, when the menu is $(A,B)$, the sales volume of good $x$ is given by $\rho(x|A,B)$. 

We analyze a scenario in which the firm seeks to strategically augment the existing network, $\mathcal{N}$, by adding a single associative link to maximize the choice probability of the target product. We briefly discuss the feasibility of adding such links and the rationale behind limiting it to a single additional link. This added link could be literal or metaphorical. For instance, platforms like Amazon might create an association from $x$ to $y$ by recommending $y$ on the product page of $x$. Alternatively, firms could run advertisements that strengthen consumers' mental association from $x$ to $y$. However, the firm faces constraints, such as limited space on product pages or the high costs of advertising, which prevent it from adding multiple links indiscriminately.

While our analysis below focuses on the optimal design of an associative network when only one additional link is to be added,  our framework  allows us to explore more general problems beyond this baseline application, such as the optimal associative network when multiple links are allowed to be added or when the firm is concerned with maximizing the total profits across all products. While all these applications are of potential interest, to formally deal with them is beyond the scope of the current paper, and we leave them for future research.

We define the notation used in this section. Let $\overleftarrow{\mathcal{N}}$ be the inverse of $\mathcal{N}$ such that $x\mathcal{N}y$ if and only if $y\overleftarrow{\mathcal{N}}x$. Let $\overleftarrow{\mathcal{N}}^+$ be the transitive closure of $\overleftarrow{\mathcal{N}}$. For a given menu $(A,X\backslash A)$ and preference ordering $\succ$, let $\bar{A}_{\succ}=\{y \in X: y\mathcal{N}^+x^*  \text{~or~} y\mathcal{N}^+z \text{~for some~} z \in A \text{~with~} z \succ x^*\}$. In words, $\bar{A}_{\succ}$ contains alternatives such that the attention to any of them prompts the consideration of an available alternative that is weakly better than $x^*$ in the menu $(A, X\backslash A)$ under the preference $\succ$. For each preference ordering $\succ$, denote by $\rho_{\!\scriptscriptstyle\succ}$ the ABC represented by $(\pi, \mathcal{N} ,\succ)$. The following proposition characterizes the optimal link to be added.

\begin{proposition} \label{prop:application_ABC}
The associative link $(y,x^*)$ that satisfies the following condition is the link to be added that maximally increases the sales volume of $x^*$:
$$y \in \arg\max_{z \in X}   \left( \sum_{\succ \in \mathscr{P}} \tau(\succ) \left(\sum_{A \subseteq X:\, \rho_{\scriptscriptstyle\succ}(x^*|A,X\backslash A) > 0} \kappa_{\!\scriptscriptstyle A} \mathring{\pi}_{\!\scriptscriptstyle\bar{A}_{\succ}}  \left(1-\mathring{\pi}_{\!\scriptscriptstyle\overleftarrow{\mathcal{N}}^+(z)\backslash \bar{A}_{\succ}}\right) \right) \right) \!.$$ 
\end{proposition}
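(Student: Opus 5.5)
By linearity of the RUMABC choice rule in $\tau$ and $\kappa$, the expected sales of $x^*$ are
\[
\sum_{\succ}\tau(\succ)\sum_{A}\kappa_A\,\rho_{\succ}(x^*|A,X\backslash A),
\]
so I will fix a preference $\succ$ and a menu $(A,X\backslash A)$ and compute the increment in $\rho_\succ(x^*|A,X\backslash A)$ from adding the link $(z,x^*)$. I only need links pointing into $x^*$. A link into some other node $w$ either raises attention to $x^*$ only through a path that ends at $x^*$, which is weakly dominated by linking directly to $x^*$, or it raises attention to better alternatives, which hurts $x^*$. I will state this dominance briefly and then restrict attention to links of the form $(z,x^*)$.

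On $\mathcal{E}^O$ every alternative is observable, so the final consideration set is $\mathcal{N}^+(C)$. By Proposition \ref{prop:chosen_alternative}, if $\rho_\succ(x^*|A,X\backslash A)=0$ then some better available alternative is associated with $x^*$. That remains true after adding $(z,x^*)$, because the new network still contains $\mathcal{N}$. Such menus therefore contribute zero before and after, which explains the restriction to menus with $\rho_\succ(x^*|A,X\backslash A)>0$. For the remaining menus I use the reformulation in Proposition \ref{prop:reformulation}. The alternative $x^*$ is chosen exactly when the initial set $C$ meets the set of nodes that reach $x^*$ and does not meet the set of nodes that reach an available alternative better than $x^*$.

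Under $\mathcal{N}$, the set of nodes reaching $x^*$ or a better available alternative is $\bar A_\succ$, and the set reaching a better available alternative is a subset $G\subseteq \bar A_\succ$. Hence
\[
\rho_\succ(x^*)=\mathring{\pi}_G\bigl(1-\mathring{\pi}_{\bar A_\succ\backslash G}\bigr)=\mathring{\pi}_G-\mathring{\pi}_{\bar A_\succ}.
\]
After adding $(z,x^*)$, the nodes that reach $x^*$ become $\overleftarrow{\mathcal{N}}^+(z)\cup\overleftarrow{\mathcal{N}}^+(x^*)$. The key claim is that $G$ does not change. A new path to a better available alternative would have to pass through $z\to x^*$ and then continue from $x^*$ to that alternative. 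Such a continuation would already have existed in $\mathcal{N}$, and it would make $\rho_\succ(x^*)=0$, contradicting the restriction. The new union set is therefore $\bar A_\succ\cup\overleftarrow{\mathcal{N}}^+(z)$, and the increment is
\[
\mathring{\pi}_{\bar A_\succ}-\mathring{\pi}_{\bar A_\succ\cup\overleftarrow{\mathcal{N}}^+(z)}
=\mathring{\pi}_{\bar A_\succ}\bigl(1-\mathring{\pi}_{\overleftarrow{\mathcal{N}}^+(z)\backslash\bar A_\succ}\bigr).
\]
Summing this increment over $\kappa$ and $\tau$ gives exactly the objective in the statement, so the maximizer $y$ identifies the optimal link.

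The main obstacle is the invariance of $G$ together with the precise meaning of $\overleftarrow{\mathcal{N}}^+(z)$. I need $\overleftarrow{\mathcal{N}}^+(z)$ to include $z$ itself, which holds because $\mathcal{N}$ is reflexive. I also need to check that nodes which reach $z$ only through the new link are counted correctly, which they are because the new link leaves $z$ and does not enter it. I would verify the invariance of $G$ first, by a short path argument using Proposition \ref{prop:chosen_alternative}, and then treat the product formula as routine.
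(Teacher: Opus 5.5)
Your proposal is correct and follows essentially the paper's route. The paper likewise reduces to a degenerate $\tau$ and to a single menu with $\rho_{\succ}(x^*|A,X\backslash A)>0$, splits $\bar A_{\succ}$ into the nodes reaching a better available alternative (your $G$) and the rest, and obtains the increment $\mathring{\pi}_{\bar A_{\succ}}\bigl(1-\mathring{\pi}_{\overleftarrow{\mathcal{N}}^+(y)\backslash \bar A_{\succ}}\bigr)$ by conditioning on neither set being initially attended, which is the same computation as your closed-form difference; your explicit check that $G$ is unchanged by the new link, and your dominance remark for links not ending at $x^*$, spell out points the paper leaves implicit (its proof only compares links of the form $(y,x^*)$).
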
 

We defer the proof of the proposition to the end of this section. To simplify the illustration of Proposition \ref{prop:application_ABC}, consider the case in which $\tau$ is degenerate and assigns probability $1$ to a single preference ordering $\succ$.  For a given menu $(A,X\backslash A)$ if $\rho_{\scriptscriptstyle\succ}(x^*|A,X\backslash A)=0$, then $x^*$ is either unavailable or prompts the attention to some better alternative in $A$. In this case, $x^*$ remains unchosen no matter what link we add. Hence, we restrict attention to menus $(A,X\backslash A)$ such that $\rho_{\scriptscriptstyle\succ}(x^*|A,X\backslash A)> 0$.

For such a menu, the consideration of any alternative in $\bar{A}_{\succ}$ prompts the consideration of $x^*$ or some better  available alternative. In either case, adding one more link does not affect the choice probability of $x^*$. By contrast, if the set $\overleftarrow{\mathcal{N}}^+(z)\backslash \bar{A}_{\succ}$ is enlarged, the link is more likely to boost the consideration and the choice of $x^*$ through the intermediate alternative $z$. Thus, the value of $\kappa_{\!\scriptscriptstyle A} \mathring{\pi}_{\!\scriptscriptstyle\bar{A}_{\succ}}$ can be regarded as the marginal benefit of boosting the attention of $x^*$ in the menu $(A, X\backslash A)$. The size of the menu $\overleftarrow{\mathcal{N}}^+(y)\backslash \bar{A}_{\succ}$ can be interpreted as the additional connectedness of $x^*$ in the menu $(A, X\backslash A)$ brought by the new link, and the value of $1-\mathring{\pi}_{\!\scriptscriptstyle\overleftarrow{\mathcal{N}}^+(z)\backslash \bar{A}_{\succ}}$ captures how the attention of $x^*$ can be boosted by the new link.  Accordingly, Proposition \ref{prop:application_ABC} states that the optimal link is the one that maximizes the  $\tau$-weighted average of the product of this marginal benefit and the resulting attention boost.

Notably, if the distribution of menus is not exogenous  but can be determined by the firm, then the firm would choose the deterministic menu $(\{x^*\}, X\backslash x^*)$ to maximize the sales of product $x^*$. In this case, by Proposition \ref{prop:application_ABC}, the optimal link to be added is given by the following corollary.

\begin{corollary}\label{coro:specialcase}
Suppose that the firm can choose the distribution of the menus. To maximize the sales volume of $x^*$, the firm can optimally choose menu $(\{x^*\},X\backslash x^*)$ and add a link $(y,x^*)$ such that $y \in \argmin_{z \in X} \mathring{\pi}_{\scriptscriptstyle \overleftarrow{\mathcal{N}}^{+}(z)\backslash \overleftarrow{\mathcal{N}}^{+}(x^*)}.$ In particular, if $\pi(\cdot) \equiv \alpha \in (0,1)$, then the link    $(y,x^*)$ satisfies  $y \in \argmax_{z\in X} |\overleftarrow{\mathcal{N}}^{+}(z)\backslash \overleftarrow{\mathcal{N}}^{+}(x^*)|.$
\end{corollary}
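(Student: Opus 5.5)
The plan is to deduce Corollary \ref{coro:specialcase} from Proposition \ref{prop:application_ABC}, taking $\tau$ degenerate on a single ordering $\succ$ (which I assume is the paper's setting for this corollary; for general $\tau$ the same argument goes through term by term inside the $\tau$-average). There are two optimization problems: choose the menu distribution $\kappa$, then the link. I would treat them jointly, arguing that the deterministic menu $(\{x^*\},X\backslash x^*)$ is optimal whatever link is added. Then Proposition \ref{prop:application_ABC} specializes to that one menu.

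\emph{Step 1: optimality of the menu $(\{x^*\},X\backslash x^*)$.} Fix any network $\mathcal{N}'\supseteq\mathcal{N}$, for example $\mathcal{N}$ plus one link. Sales are $\sum_A\kappa_A\rho'(x^*|A,A^c)$, which is linear in $\kappa$. So it suffices to show $\rho'(x^*|A,A^c)\le\rho'(x^*|\{x^*\},X\backslash x^*)$ for every $A\ni x^*$. Under full observability the final consideration set $\mathcal{N}'^+(C)$ does not depend on $A$. Choosing $x^*$ in $(A,A^c)$ requires $x^*\in\mathcal{N}'^+(C)$, and that event is exactly the event of choosing $x^*$ in $(\{x^*\},X\backslash x^*)$. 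Summing over $C$ gives the inequality. The same pointwise argument works for each $\succ$ in the support of $\tau$.

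\emph{Step 2: specialize the criterion.} With $\kappa$ degenerate on $A=\{x^*\}$, we have $\rho_\succ(x^*|\{x^*\},X\backslash x^*)=\pi$-probability that $x^*$ is reached, which is positive, so this menu is kept in the sum. Compute $\bar A_\succ$. The condition ``$y\mathcal{N}^+z$ for some $z\in A$ with $z\succ x^*$'' is vacuous because $A=\{x^*\}$, so $\bar A_\succ=\{y:y\mathcal{N}^+x^*\}=\overleftarrow{\mathcal{N}}^+(x^*)$. This set does not depend on $\succ$. Here I take $\mathcal{N}^+$ and $\overleftarrow{\mathcal{N}}^+$ to include $x^*$ itself, using reflexivity of $\mathcal{N}$; I will need to check that this convention matches the definition.

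The objective then becomes $\mathring{\pi}_{\overleftarrow{\mathcal{N}}^+(x^*)}\bigl(1-\mathring{\pi}_{\overleftarrow{\mathcal{N}}^+(z)\backslash\overleftarrow{\mathcal{N}}^+(x^*)}\bigr)$. The first factor is a positive constant, so maximizing over $z$ is the same as minimizing $\mathring{\pi}_{\overleftarrow{\mathcal{N}}^+(z)\backslash\overleftarrow{\mathcal{N}}^+(x^*)}$. This gives the first claim.

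\emph{Step 3: constant attention.} If $\pi\equiv\alpha$, then $\mathring{\pi}_S=(1-\alpha)^{|S|}$. This is strictly decreasing in $|S|$, so the minimizer maximizes $|\overleftarrow{\mathcal{N}}^+(z)\backslash\overleftarrow{\mathcal{N}}^+(x^*)|$.

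I expect Step 1 to be the main obstacle. The issue is that the joint optimum over (menu, link) is obtained by first fixing the menu, so the domination argument must hold for the augmented network, not just the original one. The pointwise-in-$C$ argument handles this for any network. A secondary check is the convention for $x^*\in\overleftarrow{\mathcal{N}}^+(x^*)$, since it affects whether $\mathring{\pi}_{x^*}$ appears in the constant factor.
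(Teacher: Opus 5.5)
Your proposal is correct and follows the paper's route. The paper likewise specializes Proposition \ref{prop:application_ABC} to $\kappa$ degenerate on $\{x^*\}$. There $\bar{A}_{\succ}=\overleftarrow{\mathcal{N}}^{+}(x^*)$ for every $\succ$, and by reflexivity this set contains $x^*$, as you suspected. The paper then drops the positive factor $\mathring{\pi}_{\overleftarrow{\mathcal{N}}^{+}(x^*)}$ and uses $\mathring{\pi}_S=(1-\alpha)^{|S|}$. The only difference is that the paper merely asserts that the menu $(\{x^*\},X\backslash x^*)$ is optimal. Your Step 1 supplies a valid pointwise-in-$C$ justification that holds for any augmented network, which is exactly what the joint choice of menu and link requires.
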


By Corollary \ref{coro:specialcase}, when each product has the same chance to attract the attention of the consumers,  the connectedness of alternative $x^*$ is given by $\overleftarrow{\mathcal{N}}^{+}(x^*)$. The link we add is simply the one that increases the connectedness of $x^*$ the most. 

\bigskip

\begin{proof}[Proof of Proposition \ref{prop:application_ABC}]
It suffices to prove the degenerate case where $\tau(\succ) = 1$ for some preference $\succ$, as the objective is simply a $\tau$-weighted average over each preference in the support of $\tau$. In this case, we have $\rho=\rho_{\succ}$. Write $\bar{A}$ for $\bar{A}_{\succ}$ for simplicity.

Let $\hat{\rho}(x^*|A, X\backslash A)$  be the choice probability of $x^*$ in the menu $(A,X\backslash A)$ after the link $(y,x^*)$ is added. 	It remains to show that when $\rho(x^*|A,X\backslash A)>0$, we have $\hat{\rho}(x^*|A, X\backslash A)-\rho(x^*|A, X\backslash A)=\mathring{\pi}_{\!\scriptscriptstyle\bar{A}}  \left(1-\mathring{\pi}_{\!\scriptscriptstyle\overleftarrow{\mathcal{N}}^+(y)\backslash \bar{A}}\right)$.  To see this, consider a partition $\{B, C\}$ of $\bar{A}$ with $B=\{w \in X: \text{for some~} z \in A, z \succ x^* \text{~and~} w\mathcal{N}^+z\}$ and $C = \bar{A} \backslash B$. Let events 1 and 2 denote that no alternative in $B$ and $C$  is initially considered, respectively. If event 1 does not occur, then some alternative better than $x^*$ in $A$ will be considered and always blocks the choice $x^*$. If event 2 does not occur, then $x^*$ already appears in the final consideration set, and adding one more link cannot further boost the choice of $x^*$. Hence, the new link only affects the choice of $x^*$ when both events occur, of which the probability is $\mathring{\pi}_{\!\scriptscriptstyle \bar{A}}$. Conditional on the two events, 
the extra choice probability of $x^*$ by the new link equals the chance that some alternative prompts the attention to $x^*$ through the new link, i.e., $1-\mathring{\pi}_{\!\scriptscriptstyle\overleftarrow{\mathcal{N}}^+(y)\backslash \bar{A}}$. 	
\end{proof}




 \subsection*{OA-8. Proofs Omitted in Section \ref{sec:app}}

In this section, we provide detailed proofs for Propositions \ref{prop_branding} and \ref{prop:app:imitation}   and statements that appear in Section \ref{subsec:imitation}.  Throughout this section we focus on the pure-strategy equilibrium and call it an equilibrium for short.

\begin{proof}[Proof of Proposition \ref{prop_branding}]
If the firm chooses not to introduce product $H$, then it sells product $L$ at price $p_L$, generating profit $\Pi^0 = \pi_Lp_L$.

If the firm introduces product $H$ by brand extension, then it is optimal to sell product $L$ to type-2 consumers at price $p_L$ and product $H$ to type-1 consumers at price $p_H= v_H-(v_L-p_L)$. This leads to profit 
\[\Pi^E = \left(\pi_H+\pi_L-\pi_H\pi_L\right)(\alpha (v_H-v_L+p_L) + (1-\alpha)p_L) -C.\]

If the firm introduces product $H$ by sub-branding, it suffices to focus on the case where $p_H=v_H$ with profit 
\[\Pi^S = \alpha \pi_H(1-\pi_L)v_H + \pi_Lp_L-C.\]

The   statements in the proposition directly follow from comparing $\Pi^0, \Pi^E$, and $\Pi^S$.
\end{proof}

\medskip

\begin{proof}[Proof of Proposition \ref{prop:app:imitation}]
\textbf{Equilibrium  under normal competition.} 
When firm 2 chooses to enter with normal competition, the two firms choose prices $p_1$ and $p_2$ simultaneously. Equivalently, they choose the cutoff consumer types $\theta_1, \theta_2 \in [0,1]$ such that each type-$\theta_i$ ($i \in \{1,2\}$) consumer obtains is indifferent between product $i$ and the outside option. The corresponding prices are pinned down by $p_1=\theta_1$ and $p_2=\theta_2 q_2$. 

First, consider the case where $\theta_1 < \theta_2$. It follows that all consumers prefer product 1 to product 2. Firm 1 operates as a monopoly for all consumers who consider product 1, and firm 2 operates as a monopolist for all consumers who only considers product 2. It follows that each firm $i$ can strictly benefit from making their cutoff type $\theta_i$ closer to $\frac{1}{2}$, and thus there must be one firm desiring to deviate. This cannot be an equilibrium.

Next, consider the case where $\theta_1=\theta_2$. By the analysis above, a necessary condition for this to be an equilibrium is that $\theta_1=\theta_2=\frac{1}{2}$. We show that firm 2 can slightly lower its price   to be strictly better off. Its profit by choosing some $\theta_2$ lower but close to $\theta_1$ is $$(1-\pi_1) \pi_2 q_2 \theta_2(1-\theta_2)+\pi_1\pi_2q_2\theta_2\left( \frac{\theta_1-\theta_2}{1-q_2} \right).$$  
Firm 2's first order condition at $\theta_1=\theta_2=\frac{1}{2}$ is given by  \begin{equation*}
    \begin{split}
        & \pi_2 q_2 \left( (1-\pi_1)  (1-2\theta_2) + \pi_1  \left(  \frac{\theta_1-2\theta_2}{1-q_2} \right) \right)  =  - \frac{\pi_1\pi_2 q_2}{2-2q_2} < 0.
    \end{split}
\end{equation*}  Thus, this case cannot be an equilibrium.

We proceed to consider the case where $\theta_2 < \theta_1$. If $(1-\theta_2)q_2 > 1-\theta_1$, then all consumers strictly prefer product 2 to product 1. By an argument analogous to the case where $\theta_1<\theta_2$, no equilibrium exists. Similar logic also rules out the case where $(1-\theta_2)q_2 = 1-\theta_1$ with $\theta_2 > \frac{1}{2}$ or $\theta_1 < \frac{1}{2}$. If instead $(1-\theta_2)q_2 = 1-\theta_1$ with $\theta_2 \le \frac{1}{2} \le \theta_1$, then the existence of an equilibrium requires the following local first order condition of firm 1
 $$\pi_1 \left( (1-\pi_2) (1-2\theta_1) + \pi_2 \left( 1-\frac{2\theta_1-q_2\theta_2}{1-q_2} \right) \right) \ge 0,$$ which contradicts $\theta_1 \ge \frac{1}{2}$ and $\theta_2 < 1$.

The only remaining case is when $\theta_2 < \theta_1$ and $(1-\theta_2)q_2 < 1-\theta_1$. In this case, the profits of the two firms are given respectively by: 
$$\pi_1 (1-\pi_2) \theta_1 (1-\theta_1)+\pi_1\pi_2 \theta_1\left(1- \frac{\theta_1-q_2\theta_2}{1-q_2} \right).$$ 
$$(1-\pi_1) \pi_2 q_2 \theta_2(1-\theta_2)+\pi_1\pi_2q_2\theta_2\left( \frac{\theta_1-\theta_2}{1-q_2} \right).$$
A pair $(\theta^*_1, \theta^*_2)$  that constitutes an equilibrium must satisfy the first-order conditions:  \begin{equation}\label{eq:app:foc1}
    \pi_1 \left( (1-\pi_2) (1-2\theta_1^*) + \pi_2 \left( 1-\frac{2\theta_1^*-q_2\theta_2^*}{1-q_2} \right) \right) = 0,
\end{equation}
\begin{equation}\label{eq:app:foc2}
\pi_2 q_2 \left( (1-\pi_1)  (1-2\theta_2^*) + \pi_1  \left(  \frac{\theta_1^*-2\theta_2^*}{1-q_2} \right) \right) = 0.    
\end{equation} Let $r=1-q_2$, and the two equations above yield \begin{equation*} 
    \theta_1^*  = \frac{2r^2+2\pi_1  q_2r +\pi_2 q_2 r-\pi_1\pi_2 q_2r}{4r^2+4\pi_1 q_2r+4\pi_2 q_2r + 4\pi_1\pi_2 q_2^2 -\pi_1\pi_2 q_2},
\end{equation*}
\begin{equation*} 
\theta_2^*  = \frac{2r^2-\pi_1 r+2\pi_1 q_2r +2\pi_2 q_2 r - 2\pi_1\pi_2 q_2 r}{4r^2+4\pi_1 q_2 r+4\pi_2 q_2 r + 4\pi_1\pi_2 q_2^2 -\pi_1\pi_2 q_2}.
\end{equation*}
It can be easily verified that $\theta_1^*, \theta_2^* \in (0, \frac{1}{2})$ in this case. The two conditions $\theta_2^* < \theta_1^*$ and $(1-\theta_2^*)q_2 < 1-\theta_1^*$ are equivalent to \begin{equation}\label{eq:app:eqexist1}
\frac{\pi_1}{1-\pi_1} > \pi_2 q_2,
\end{equation}
\begin{equation}\label{eq:app:eqexist2}
2 - q_2(4 - \pi_1 - 3\pi_2) + 2q_2^2(1-\pi_1)(1-\pi_2) > 0.
\end{equation} Note that the  profits of the two firms at $(\theta_1^*, \theta_2^*)$ are given by $$ \Pi^N_1 (\theta_1^*, \theta_2^*) =\left( \pi_1(1-\pi_2)+\frac{\pi_1\pi_2}{1-q_2} \right) {\theta^*_1}^2,$$
$$ \Pi^N_2 (\theta_1^*, \theta_2^*) =\frac{\pi_2 q_2 (1 - q_2(1-\pi_1))}{1-q_2} {\theta_2^*}^2.$$
To ensure $(\theta_1^*, \theta_2^*)$ to be an equilibrium, we need further ensure that no firm $i$ wants to deviate to  $\theta_i =\frac{1}{2}$ (i.e., the monopoly price). This leads to  two additional conditions: 
\begin{equation}\label{eq:app:eqexist3}
\left( \pi_1(1-\pi_2)+\frac{\pi_1\pi_2}{1-q_2} \right) {\theta^*_1}^2 \ge \Pi_1^n(\frac{1}{2}, \theta^*_2),
\end{equation}
\begin{equation}\label{eq:app:eqexist4}
\frac{\pi_2 q_2 (1 - q_2(1-\pi_1))}{1-q_2} {\theta^*_2}^2 \ge  \frac{(1-\pi_1)\pi_2 q_2}{4}, 
\end{equation} where $\Pi_1^n(\frac{1}{2}, \theta^*_2)$ denotes the profit of firm 1 under  normal competition when it sets the cutoff $\theta_1$ as $\frac{1}{2}$ and firm 2 sets the cutoff $\theta_2$ as $\theta_2^*$. It follows that conditions (\ref{eq:app:eqexist1})-(\ref{eq:app:eqexist4}) are sufficient and necessary to ensure the existence (and thus uniqueness) of the pure-strategy equilibrium under normal competition.

Note that given $\pi_1$ and $\pi_2$, when $q_2$ is small enough, conditions (\ref{eq:app:eqexist1}) and (\ref{eq:app:eqexist2}) are satisfied. When $q_2 \le \frac{1}{2}$, we have $(1-\theta^*_2)q_2 < \frac{1}{2}$, meaning that even if firm 1 deviates to set the cutoff $\theta_1 = \frac{1}{2}$, we still have $(1-\theta_2^*)q_2<1-\theta_1$, and condition (\ref{eq:app:foc1}) guarantees   condition (\ref{eq:app:eqexist3}). Finally, when $q_2$ is sufficiently small, the left-hand side of condition (\ref{eq:app:eqexist4}) divided by its right-hand side converges to $\frac{(2-\pi_1)^2}{4-4\pi_1} > 1$. Hence, it holds when $q_2$ is small.

\medskip

\noindent \textbf{Equilibrium under imitation.} 
Under imitation, a similar analysis  implies that the unique equilibrium $(\theta_1^{**}, \theta_2^{**})$ can be pinned down by the following first-order conditions: 
\begin{equation*}
(\pi_1+\pi_2-\pi_1\pi_2) \left( 1-\frac{2\theta_1^{**}-\bar q_2\theta_2^{**}}{1-\bar q_2} \right)  = 0, 
\end{equation*}
\begin{equation*}
(\pi_1+\pi_2-\pi_1\pi_2) \bar q_2   \left(  \frac{\theta_1^{**}-2\theta_2^{**}}{1-\bar q_2} \right)   = 0.
\end{equation*} It follows that 
\begin{equation*}
\theta_1^{**}= \frac{2(1-\bar q_2)}{4-\bar q_2}  \text{\,~and~\,} \theta_2^{**}=\frac{1-\bar q_2}{4-\bar q_2}.
\end{equation*} The equilibrium profits of the two firms are given respectively by \begin{equation}
   \Pi_1^{I}(\theta_1^{**}, \theta_2^{**}) = (\pi_1+\pi_2-\pi_1\pi_2)\frac{4(1-\bar q_2)}{(4-\bar q_2)^2},
\end{equation} 
\begin{equation} 
\Pi_2^{I}(\theta_1^{**}, \theta_2^{**}) =  (\pi_1+\pi_2-\pi_1\pi_2)\frac{\bar q_2(1-\bar q_2)}{(4-\bar q_2)^2}. 
\end{equation}

Now we can finish the proof of Proposition \ref{prop:app:imitation}.  First,  the monopoly profit of firm 1 is $\pi_1/4$. Thus, it is lower than firm 1's equilibrium payoff under imitation if and only if condition (\ref{eq:app:xiaomi1}) holds.

Second, when $q_2=\bar q_2$, the profits of firm 2 under normal competition and imitation are approximated by $(2-\pi_1)^2\pi_2q_2/16$ and $(\pi_1+\pi_2-\pi_1\pi_2)q_2/16$, respectively.  The comparison reduces to the comparison between $(2-\pi_1)^2\pi_2$ and  $\pi_1+\pi_2-\pi_1\pi_2$, and we have   $(2-\pi_1)^2\pi_2 >  \pi_1+\pi_2-\pi_1\pi_2$ if and only if $J(\pi_1, \pi_2)>0$. Thus, when $q_2 = \bar{q}_2$ is small, firm 2 prefers normal competition   if $J(\pi_1, \pi_2)>0$ and prefers imitation otherwise.   The conclusion of Proposition \ref{prop:app:imitation} follows from the fact that $\Pi_2^{I}(\theta_1^{**}, \theta_2^{**})$ is first increasing then decreasing to zero with respect to $\bar{q}_2$. Notably, when $J(\pi_1, \pi_2)>0$ and $q_2$ is sufficiently small enough, the profit of normal competition is close to $0$ and hence there must exist some intermediate value of $\bar q_2$ such that firm 2 strictly prefers imitation.
\end{proof}

\medskip

\noindent \textbf{Numerical example with intermediate values of $q_2$.} Let $q_2=\bar{q}_2=0.29$, $\pi_1=0.48$, and $\pi_2=0.2$. Under normal competition, we have $$\frac{\pi_1}{1-\pi_1} \approx 0.92 > 0.058 = \pi_2 q_2,$$  $$2 - q_2(4 - \pi_1 - 3\pi_2) + 2q_2^2(1-\pi_1)(1-\pi_2) \approx 1.22 > 0,$$ $$ \theta_1^*\approx 0.48,~  \theta_2^*\approx 0.35, \text{~and}$$  $$\Pi_{2}^{N}(\theta_1^*, \theta_2^*) \approx 0.0086 > 0.0075 \approx \frac{(1-\pi_1)\pi_2 q_2}{4}.$$  Thus, conditions (\ref{eq:app:eqexist1})-(\ref{eq:app:eqexist4}) hold, where condition (\ref{eq:app:eqexist3}) holds because $q_2 \le \frac{1}{2}$.  Under imitation, we have 
$$ \theta_1^{**}\approx 0.3827, ~ \theta_2^{**}\approx 0.1914,$$  $$\Pi_1^{I}(\theta_1^{**},\theta_2^{**}) \approx 0.121 > 0.12 = \frac{\pi_1}{4}, \,\text{and}~ \Pi_2^{I}(\theta_1^{**},\theta_2^{**})\approx 0.0087 > \Pi_{2}^{N}(\theta_1^*, \theta_2^*).$$ This verifies that firm 2 prefers imitation and firm 1 prefers to accommodate.

\bibliographystyleAppendix{ecta}
\setstretch{0.85}
\bibliographyAppendix{association-online}	
\end{document}